\begin{document}

\title{Algorithmic complexity of $\beta$-expansions and application to A/D conversion}
\author{Valentin Abadie \and Helmut Boelcskei}
\date{}
\maketitle


We establish diverse relationships between the algorithmic (Kolmogorov) complexity of the prefixes of any binary expansion and $\beta$-expansions. These relationships allow to develop intuitions on the complexity behavior of $\beta$-expansions, and raise problems related to compressibility of binary sequences generated in the context of A/D conversion relying on $\beta$-expansions. Our last contribution is to solve these problems.

\tableofcontents

\begingroup
\let\clearpage\relax

\section{Introduction}

\noindent Measuring the complexity of real numbers is of major importance in computer science. Consider a non-computable real number $s$, i.e., a real number which cannot be stored on a computer \cite[Section 10]{copeland2004computable}. One can store only an approximation of $s$, for instance by considering a finite length bitstring $y$ representing a prefix $x$ of the binary expansion $\xf$ of $s$. For a fixed approximation error $\varepsilon>0$, the required length of $y$, as a function of $\varepsilon$, depends on the \textit{algorithmic complexity} of the prefix $x$ of the binary expansion achieving error $\varepsilon$. The \textit{algorithmic complexity} of a binary sequence $x$, often referred to as \textit{Kolmogorov complexity}, is the length of the smallest binary sequence $y$, for which there exists an algorithm, such that when presented with $y$ as input, delivers $x$ as output \cite[Definition 10.1]{balcazar2012structural}\cite[Section 14.2]{cover1999elements}\cite[Definition 2.1.2]{li2008introduction}. The algorithmic complexity of the binary expansions of real numbers has been widely studied, but the algorithmic complexity of expansions in bases other than $2$ remains poorly understood. Several papers have established an equivalence between the algorithmic complexity of the expansions in different bases $q \in \N$ \cite[Theorem 6.1]{calude2005randomness}\cite[Theorem 5.1]{hertling1998randomness}\cite[Theorem 3]{staiger2002kolmogorov}. Here, we study the algorithmic complexity of expansions in noninteger bases, which display a much more sophisticated behavior. This type of expansions are often referred to as $\beta$-expansions \cite[(1)]{parry1960beta}\cite[Section 4]{renyi1957representations}. $\beta$-expansions display some redundancy properties that are used to design robust A/D converters \cite{daubechies2002beta}\cite{daubechies2006imperfect}\cite{daubechies2010GoldenRatioEncoder}. However, we discover that the algorithmic complexity of $\beta$-expansions can be much larger than the algorithmic complexity of binary expansions of the same number. This engenders problems of compressibility of the sequences generated by the aforementioned A/D converters. Fortunately, we find a fast algorithm to fix this issue, converting any $\beta$-expansion of potentially large algorithmic complexity into another $\beta$-expansion which algorithmic complexity is equal to the algorithmic complexity of the binary expansions.

\endgroup

\begingroup
\let\clearpage\relax

\subsection{Notation}

$\N$ denotes the set of natural numbers, $\N_0 := \N \cup \{0\}$, and $\N_2 := \N \backslash \{1\}$. $\Q$ stands for the set of rational numbers, and $\R$ for the set of real numbers. For a set $A$, $\card A$ denotes its cardinality. If $\card X = \infty$, we may write $\card X = \aleph_0$ if $X$ is countable, and $\card X = 2^{\aleph_0}$ if $X$ is uncountable. 
Let $f : X \to Y$ be a map from a set $X$ into a set $Y$. For a given subset $A \subset Y$, we let $f(A) := \{ f(x) : x \in A\}$ and $f^{-1}(A) := \{ x \in X : f(x) \in A\}$.

$\{0,1\}^\N$ denotes the set of infinite binary sequences, $\{0,1\}^\ast$ denotes the set of binary sequences of finite, but otherwise arbitrary length, and $\{0,1\}^n$ denotes the set of binary sequences of length $n \in \N$. 
$\len x$ refers to the length of $x$, i.e. for $x \in \{0,1\}^n$, we have $\len x = n$, and by convention, we set $\len x = \infty$ for $x \in \{0,1\}^\N$.
$\epsilon$ denotes the empty sequence. 
For a sequence $x \in \{0,1\}^\ast \cup \{0,1\}^\N$, $x_n$ denotes the $n$-th element of $x$, and $x_{1:n}$ is defined to be the $n$-prefix $x_1 \ldots x_n$ of $x$. 
For $x \in \{0,1\}^\ast$ and $y \in \{0,1\}^\ast \cup \{0,1\}^\N$, we write $x \sqsubset y$ if $x$ is a prefix of $y$. 
For $x,y \in \{0,1\}^\ast \cup \{0, 1\}^\N$, we define the lexicographical ordering as $x <_L y$ to express that there exists $j \in \N$ such that $x_i = y_i$ for all $i < j$, and $y_j < x_j$.
For a subset $A \subseteq \{0,1\}^\ast \cup \{0, 1\}^\N$, we define $\lexmax(A)$ to be the lexicographically largest sequence in $A$, and $\lexmin(A)$ to be the lexicographically smallest sequence in $A$.
For $x \in \{0,1\}^\ast$ and $y \in \{0,1\}^\ast \cup \{0,1\}^\N$, $x y$ denotes the concatenation of $x$ and $y$. 
For $x^{(1)},\ldots, x^{(n)} \in \{0,1\}^\ast$, $\coprod_{i=1}^n x^{(i)}$ stands for the concatenation $x^{(1)} x^{(2)} \ldots \hspace{0.2mm} x^{(n)}$. 
For an infinite family $(x^{(i)})_{i\in \N} \in \{0,1\}^\ast$, $\coprod_{i=1}^\infty x^{(i)} \in \{0,1\}^\N$ denotes the concatenation of the elements in $(x^{(i)})_{i \in \N}$. Finally, for $x \in \{0,1\}^\ast$, we set $x^0 = \epsilon$, $x^n := \coprod_{i=1}^n x$, for $n \in \N$, and $x^\infty := \coprod_{i=1}^\infty x$, to denote the $n$-fold and infinite repetitions, respectively. 
For instance, $0^\infty$ denotes the infinite sequence containing only zeroes, and $1^5 = 11111$. 
We may combine repetitions and concatenations in a single expression, as for example $xy^5 = xyyyyy$, for $x,y \in \{0,1\}^\ast$. So that there is no confusion, we may use parentheses. For instance, $(10)^5 = 1010101010$, while $10^5 = 100000$. 
Following \cite[Ch. 1.4]{li2008introduction}, we map $\{0,1\}^\ast$ one-to-one onto $\N_0$ by indexing each string through the quasi-lexicographical ordering \cite[Section 1.1]{calude2002information}, i.e.,
\begin{equation}\label{eq:quasilexicographic enumeration}
	(\epsilon,0), (0,1), (1,2), (00,3), (01,4), (10,5), (11,6), \ldots,
\end{equation}
where each of the above pairs contains first an element $x \in \{0,1\}^\ast$ and second the index of $x$ in the quasi-lexicographical ordering of $\{0,1\}^\ast$. For $n \in \N$, we denote by $\bin(n) \in \{0,1\}^\ast$ the binary sequence associated this way. For $x \in \{0,1\}^\ast$, we define $\bar x := 1^{|x|}0x$, which is often referred to as prefixing $x$, i.e. $\{\bar x : x \in \{0,1\}^\ast\}$ forms a prefix-free set. For $x,y \in \{0,1\}^\ast$, we set $\langle x,y \rangle := \bar x y$. Note that $|\langle x,y \rangle| = 2 |x| + |y| + 1$. We generalise $\langle \cdot \rangle$ as follows. For $n \in \N$, $n \geq 3$, and $x_1, x_2, \ldots, x_n \in \{0,1\}^\ast$, we define $\langle x_1, \ldots, x_n \rangle := \langle \langle x_1, \ldots, x_{n-1} \rangle , x_n \rangle$. For a finite set $A = \{a_1, \ldots, a_k \} \subseteq \{0,1\}^\ast$, we define $\langle A \rangle := \langle a_1, \ldots, a_k \rangle$. 

In order to make a clear distinction between infinite and finite sequences, we adopt boldface fonts for infinite sequences $\xf \in \{0,1\}^\N$, and normal fonts for finite sequences $x \in \{0,1\}^\ast$, for $n \in \N$.

\endgroup

\begingroup
\let\clearpage\relax

\subsection{Binary expansions}
\label{sec:binary expansion}
  
Representing real numbers in terms of sequences of bits is of major interest in computer science and electrical engineering. The most common way to effect such a conversion is the binary expansion. Indeed, every real number $s \in [0,1]$ can be written as a sum of negative powers of $2$. For example, $s = 0.75$ can be represented as
  \begin{equation}\label{eq:greedy binary expansion example}
      s = 0.75 =  \frac{1}{2} + \frac{1}{4} = 2^{-1} + 2^{-2} = 1 \times 2^{-1} + 1 \times 2^{-2} + 0 \times 2^{-3} + 0 \times 2^{-4} + \ldots
  \end{equation}
Here, the sequence $110^\infty$ is called a binary expansion of $s=0.75$. We conspicuously say ``a binary expansion'' rather than ``the binary expansion'' as the expansion is not unique. Indeed, with 
  \begin{equation}
      \frac{1}{4} = \frac{\frac{1}{8}}{\frac{1}{2}} = \frac{\frac{1}{8}}{1 - \frac{1}{2}} = \sum_{i=3}^\infty 2^{-i},
  \end{equation}
  one has equivalently
  \begin{equation}\label{eq:lazy binary expansion example}
      s = 0.75 = \frac{1}{2} + \frac{1}{4} = 2^{-1} + \sum_{i=3}^\infty 2^{-i} = 1\times 2^{-1} +  0\times 2^{-2} + \sum_{i=3}^\infty 1\times 2^{-i},
  \end{equation}
  so that $\xf = 101^\infty$ constitutes an alternative binary representation of $s$. Generally speaking, for every given $s \in [0,1]$, there exist either one or two binary expansions. Following
  \cite[Section 1]{staiger2002kolmogorov}, we refer to it as 2-ambiguity of the binary expansion. The real numbers with precisely two binary expansions form a subset of the rational numbers called the dyadic numbers \cite[Section 1.5, Problem 44]{royden1968real}. Dyadic numbers are defined to be the real numbers that have a binary expansion ending with $0^\infty$. Dyadic numbers do not include, for instance, $s = 1/3$, which indeed displays only one binary expansion. The definition of dyadic numbers immediately implies that they have two expansions: one that ends with $0^\infty$, and another one that ends with $1^\infty$. This can be seen as follows. Consider a dyadic number $s \neq 0$. By definition, $s$ has a binary expansion that ends with $0^\infty$. Then, there exists a binary sequence $x \in \{0,1\}^\ast$, of length $n = |x|$, such that $\xf = x 1 0^\infty$ is a binary expansion of $s$, i.e.,
  \begin{equation}\label{eq:greedy expansion ending with infinitetail of zeros}
	s = x_1 \times 2^{-1} + \ldots + x_n \times 2^{-n} + 1 \times 2^{-(n+1)} + 0 \times 2^{-(n+2)} + 0 \times 2^{-(n+3)} + \ldots
  \end{equation}
As $1$ satisfies the following algebraic equation
\begin{equation}\label{eq:algebraic relation between 1 and powers of one half}
	1 = \sum_{i=1}^\infty 2^{-i},
\end{equation}
one can replace the last digit $1$ of the binary expansion of $s$ according to
\begin{align}
	s &= x_1 \times 2^{-1} + \ldots + x_n \times 2^{-n} + \left(\sum_{i=1}^\infty 2^{-i}\right) \times 2^{-(n+1)}\\
	&= x_1 \times 2^{-1} + \ldots + x_n \times 2^{-n} + \sum_{i=n+2}^\infty 2^{-i}\\
	&= x_1 \times 2^{-1} + \ldots + x_n \times 2^{-n} + 0 \times 2^{-(n+1)} + 1 \times 2^{-(n+2)} +  1 \times 2^{-(n+3)} + \ldots \label{eq:lazy expansion ending with infinitetail of ones}
  \end{align}
This establishes that $\xf' = x 0  1^\infty$ is also a binary expansion of $s$. The lexicographically larger of these two expansions  $\xf = x  1  0^\infty$ is referred to as the \newname{greedy binary expansion of $s$}, and the lexicographically smaller $\xf' = x  0  1^\infty$ is called the \newname{lazy binary expansion}. When the binary expansion of $s$ is unique, the greedy and the lazy binary expansion coincide. Both the greedy and the lazy binary expansion have an associated algorithm generating them. We detail the algorithm generating the first $n$ bits of the greedy binary expansion of $s$ as follows \cite[Section 2]{dajani2002greedy}.
\begin{algorithm}[ht]
	\caption{Greedy algorithm for binary expansion}\label{alg:greedy algorithm binary expansion}
	\begin{algorithmic}[1]
		\Require $s \in [0,1]$, $n \in \N$
		\State $r \gets s$
		\State Initialize $x$ as the empty string $\epsilon$
		\For {$i=1, \ldots,n$}\label{alg1:line 3}
			\If{ $r < 0.5$,} $b \gets 0$ \label{alg1:line 4}
			\Else{ $b \gets 1$}
			\EndIf
			\State $x \gets xb$
			\State $r \gets 2 \, r - b$
		\EndFor
		\State \Return $x$
	\end{algorithmic}
\end{algorithm}

\noindent By way of example, we consider the case $s = 0.75$, $n = 4$ to illustrate the algorithm.
  \begin{enumerate}[label=(\arabic*)]
      \item $r \gets 0.75$, $x$ is initialized as the empty string $\epsilon$, the \textbf{for} loop (line \ref{alg1:line 3}) is entered, with $i=1$.
      \item $r =0.75 \geq 0.5$, so $b \gets 1$, $x \gets \epsilon b = 1$, $r \gets 2 r -  b = 2 \times 0.75 - 1 = 0.5$, the algorithm continues the \textbf{for} loop with $i = 2$.
      \item $r = 0.5 \geq 0.5$, so $b \gets 1$, $x \gets 1 b = 11$, $r \gets 2 r - b = 2\times 0.5 - 1 = 0$, the algorithm continues the \textbf{for} loop with $i = 3$.
      \item $r = 0 < 0.5$, so $b \gets 0$, $x \gets 11 b = 110$, $r \gets 2 r - b = 2\times 0 - 0 = 0$,  the algorithm continues the \textbf{for} loop with $i = 4$.
      \item $r = 0 < 0.5$, so $b \gets 0$, $x \gets 110 b = 1100$, $r \gets 2 r - b = 2\times 0 - 0 = 0$, the algorithm exits the \textbf{for} loop, as $i = 4 = n$.
      \item The algorithm returns $x = 1100$.
  \end{enumerate}
Algorithm \ref{alg:greedy algorithm binary expansion} is therefore seen to, indeed, generate the first $n=4$ bits of the greedy binary expansion of $s=0.75$. The variable $r$ in Algorithm \ref{alg:greedy algorithm binary expansion} is seen to be the approximation error of the real number $s$ by the successive finite prefixes of its greedy expansion. To generate the lazy binary expansion, instead one only needs to change the condition $r < 0.5$ in line \ref{alg1:line 4} of Algorithm \ref{alg:greedy algorithm binary expansion} to $r \leq 0.5$ \cite[Section 2]{dajani2002greedy}.
  
\subsection{\texorpdfstring{$\beta$-expansions}{}}

\label{sec:beta expansions}
  
Binary expansions are not the only way to represent real numbers as infinite sequences of bits. One can replace the base 2 in $s = \sum_{i=1}^\infty \xf_i 2^{-i}$ by $\beta \in (1,2]$ to get so-called $\beta$-expansions, i.e., representations of real numbers as sums of negative powers of $\beta$, originally introduced in \cite[Section 4, Example 4]{renyi1957representations}. The set of real numbers that can be represented by $\beta$-expansions is larger: while every real number in $[0,1]$ can be represented by a binary expansion, it turns out that for $\beta \in (1,2]$, every number in $I_\beta := [0, (\beta-1)^{-1}]$ can be represented by a $\beta$-expansion \cite[Section 1]{dajani2002greedy}. Note that $I_2 = [0,1]$, so this is consistent with the binary case. 

$\beta$-expansions exhibit a much richer structure than binary expansions, specifically,  the 2-ambiguity can turn into an $\infty$-ambiguity. An example illustrating this aspect is the $G$-expansion of $s=1$, where $G$ is the golden ratio, satisfying $G^2 = G + 1$. Indeed, it follows that
  \begin{equation}
      1 = \frac{1}{G} + \frac{1}{G^2} = 1 \times \frac{1}{G} + 1 \times \frac{1}{G^2} + 0 \times \frac{1}{G^3} + 0 \times \frac{1}{G^4} + \cdots
  \end{equation}
 and hence $\xf = 1100\ldots$ is a $G$-expansion of $s=1$. An alternative representation of $s=1$ in base $\beta=G$ is obtained by noting that $\frac{1}{G^2} = \frac{1}{G^3} + \frac{1}{G^4}$, and hence
  \begin{equation}
      1 = \frac{1}{G} + \frac{1}{G^3} + \frac{1}{G^4} = 1 \times \frac{1}{G} + 0 \times \frac{1}{G^2} + 1 \times \frac{1}{G^3} + 1 \times \frac{1}{G^4} + 0 \times \frac{1}{G^5} + \cdots
  \end{equation}
  so that $\xf'=101100\ldots$ is also a $G$-expansion of $s=1$. By iterating this heuristic, one can show that for every $n \geq 0$, the sequence $\xf^{(n)} := (10)^{n}110^\infty$ is a $G$-expansion of $s=1$. There are hence infinitely many different $G$-expansions of $s=1$, all obtained by exploiting the algebraic equation
  \begin{equation}
	1 = \frac{1}{G} + \frac{1}{G^2},
  \end{equation}
  which is of the same flavor as the algebraic equation (\ref{eq:algebraic relation between 1 and powers of one half}).

  We can generalize the idea underlying the example $G = \frac{1+\sqrt{5}}{2}$ to all pairs $(s,\beta)$ satisfying
  \begin{enumerate}[label = (\alph*)]
	\item $\beta$ satisfies an equation of the form 
	\begin{equation}
		1 = \sum_{i=1}^N \beta^{-n_i},
	\end{equation}
	where $N, n_1, \ldots, n_N \in \N$ and the $n_i$ are pairwise distinct.
	\item there is a $\beta$-expansion of $s$ ending with $0^\infty$.
  \end{enumerate}
  Let $s, \beta$ satisfying (a) and (b), and let $\xf$ be a $\beta$-expansion of $s$ ending with $0^\infty$. Then, there exists $j \in \N$ and $x \in \{0,1\}^{j-1}$ such that $\xf = x10^\infty$. Also, note that as there exists $N,n_1, \ldots,n_N \in \N$ such that $1 = \sum_{i=1}^N \beta^{-n_i}$, there exists $y \in \{0,1\}^{n_N}$, satisfying $y_{n_i} = 1$ for all $i \in \{1, \ldots, N\}$, and $y_k = 0$ if $k \notin \{n_1, \ldots, n_N\}$, such that $1 = \sum_{i=1}^{n_N} y_i \beta^{-i}$. Therefore,
  \begin{align}
	s &= \sum_{i=1}^\infty \xf_i \beta^{-i} = \sum_{i=1}^{j-1} x_i \beta^{-i} + \beta^{-j} + \sum_{i=1}^j 0 \beta^{-i}\\
		&=\sum_{i=1}^{j-1} x_i \beta^{-i} + \beta^{-j} = \sum_{i=1}^{j-1} x_i \beta^{-i} + \beta^{-j} \sum_{i=1}^{n_N} y_i \beta^{-i}\\
		&=\sum_{i=1}^{j-1} x_i \beta^{-i} + \sum_{i=1}^{n_N} y_i \beta^{-i-j}.
  \end{align}
  We get that $\xf' = x_{1:j-1}y0^\infty$ is also a $\beta$-expansion of $s$. By iterating the heuristic, we get that $x_{1:j-1}y_{1:n_M-1}^{n-1}y0^\infty$ is a $\beta$-expansion of $s$ for every $n \in \NN$. Hence, $s$ has infinitely many $\beta$-expansions.

  With different proof techniques, the $\infty$-ambiguity property has been established far beyond the scope of the proof above. It was notably proven that for all $\beta \in (1,2)$ and Lesbesgue-almost all $s \in I_\beta$, $s$ has $2^{\aleph_0}$ different $\beta$-expansions \cite[Theorem 1]{sidorov2003almost}, and that for all $\beta \in (1,G)$, every $s \in \mathring I_\beta$ has $2^{\aleph_0}$ different $\beta$-expansions \cite[Theorem 3]{erdos1990characterization}.
  
  One can now define the notions of greedy and lazy expansions for arbitrary $\beta \in (1,2]$, in the same manner as in the binary case. Specifically, motivated by the insight in \cite[IV-A]{daubechies2006imperfect}, we implicitely define the greedy $\beta$-expansion of $s \in I_\beta$ as the lexicographically largest $\beta$-expansion, and the lazy $\beta$-expansion as the lexicographically smallest $\beta$-expansion. In contrast to the binary expansion case, there are, in general, infinitely many $\beta$-expansion between these two extreme cases. Both the greedy and the lazy $\beta$-expansion have an associated algorithm generating them. Algorithm 
  \ref{alg:greedy algorithm beta expansion} below generalizes Algorithm \ref{alg:greedy algorithm binary expansion} to deliver the first $n \in \N$ bits of the greedy $\beta$-expansion of $s$ for $\beta \in (1,2]$ \cite[Section 1]{dajani2002greedy}.

  \begin{algorithm}[ht]
	\caption{Greedy algorithm for $\beta$-expansion}\label{alg:greedy algorithm beta expansion}
	\begin{algorithmic}[1]
		\Require $\beta \in (1,2]$, $s \in I_\beta$, $n\in\N$
		\State $r \gets s$, 
		\State Initialize $x$ as the empty string $\epsilon$
		\For {$i = 1,\ldots,n$}
			\If{ $r < \beta^{-1}$ } $b \gets 0$  \label{alg2:line 4}
			\Else{ $b \gets 1$}
			\EndIf
			\State $x \gets xb$
			\State $r \gets \beta \, r -  b$
		\EndFor
		\State \Return $x$
	\end{algorithmic}
	\end{algorithm}

	To generate the lazy $\beta$-expansion, instead one only needs to change the condition $r < \beta^{-1}$ in line \ref{alg2:line 4} of Algorithm \ref{alg:greedy algorithm beta expansion} to $r \leq \beta^{-1}(\beta - 1)^{-1}$ \cite[Section 2]{dajani2002greedy}. As any other $\beta$-expansion is lexicographically between these two extreme cases, we can come up with an algorithm that covers them all. This algorithm is referred to as the random $\beta$-expansion algorithm \cite[Section 1]{dajani2003random}, and is summarized in Algorithm \ref{alg:random algorithm beta expansion}.


	\begin{algorithm}[ht]
		\caption{Random $\beta$-expansion algorithm}\label{alg:random algorithm beta expansion}
		\begin{algorithmic}[1]
			\Require $\beta \in (1,2]$, $s \in I_\beta$, $n\in\N$
			\State $r \gets s$, 
			\State Initialize $x$ as the empty string $\epsilon$
			\For {$i = 1,\ldots,n$}
				\If{ $r < \beta^{-1}$ } $b \gets 0$  \label{alg3:line 4}
				\ElsIf{$r > \beta^{-1}(\beta-1)^{-1}$}  $b \gets 1$
				\Else{ $b \gets 0$ or $1$},  \label{alg3:line 6}
				\EndIf
				\State $x \gets xb$
				\State $r \gets \beta \, r -  b$
			\EndFor
			\State \Return $x$
		\end{algorithmic}
	\end{algorithm}
	Note that in the case $\beta=2$, line \ref{alg3:line 6} can occur only if $r = 0.5$, which is the source of the 2-ambiguity if binary expansions. This algorithm gives a remarkable illustration on how multiplicity of $\beta$-expansions can be turned into a tractable nondeterministic algorithm.

  This redundancy of $\beta$-expansions of a given real number can be exploited in practical applications, e.g. in A/D-conversion. Specifically, it was shown in \cite{daubechies2006imperfect} that A/D-conversion based on $\beta$-expansions can yield arbitrary precision even in the presence of imperfect quantizers. To illustrate this, consider the problem of representing $s \in [0,1]$ as a binary sequence, i.e., we turn the analog quantity $s$ into a binary sequence $x$, for $\beta = 2$, say in a greedy manner. If one builds a physical device to accomplish this through Algorithm \ref{alg:greedy algorithm binary expansion}, the operation $r < 0.5$ is by virtue of requiring infinite precision impossible to realize in practice. Any physical device that has to realize the thresholding operation $r<0.5$ will fail from time to time. These failure can happen when $r \in [0.5 - \varepsilon, 0.5 + \varepsilon]$, for some small $\varepsilon > 0$, which correspond to the precision limit of the physical device. However, the random $\beta$-expansion delivered by Algorithm \ref{alg:random algorithm beta expansion} can be used to overcome this issue. Suppose in more general terms that we dispose of a device $T$ that can be tuned to compare an input $r \geq 0$ to a threshold $t>0$ fixed by the user, within precision $\varepsilon$, i.e., the device returns $0$ if $r < t - \varepsilon$, $1$ if $r > t + \varepsilon$, and either $0$ or $1$ if $r \in [t - \varepsilon, t + \varepsilon]$. We denote by $T(r) \in \{0,1\}$ the output of the device. Then, one can conceive an algorithm that uses this physical device that attempts to generate a $\beta$-expansion, as follows.

  \begin{algorithm}[ht]
	\caption{Physical algorithm for $\beta$-expansion}\label{alg:physical algorithm beta expansion}
	\begin{algorithmic}[1]
		\Require $\beta \in (1,2]$, $s \in I_\beta$, $n\in\N$
		\State $r \gets s$, 
		\State Initialize $x$ as the empty string $\epsilon$
		\For {$i = 1,\ldots,n$}
			\State $b \gets T(r)$  \label{alg4:line 4}
			\State $x \gets xb$
			\State $r \gets \beta \, r -  b$
		\EndFor
		\State \Return $x$
	\end{algorithmic}
	\end{algorithm}
  
  One can choose $\beta \in (1,2)$ and $t > 0$ so that $[t - \varepsilon, t + \varepsilon] = [\beta^{-1}, \beta^{-1}(\beta-1)^{-1}]$. Then, line \ref{alg4:line 4} in Algorithm \ref{alg:physical algorithm beta expansion} is precisely equivalent to the lines \ref{alg3:line 4}-\ref{alg3:line 6} of Algorithm \ref{alg:random algorithm beta expansion}. It follows that Algorithm \ref{alg:physical algorithm beta expansion} indeed delivers a valid $\beta$-expansion, provided that $\beta$ and $t$ satisfy the above mentionned constraints. Note that $\beta$ cannot be chosen equal to $2$, as in this case $[\beta^{-1}, \beta^{-1}(\beta-1)^{-1}] = \{0.5\}\neq [t - \varepsilon, t + \varepsilon]$ for any $t,\varepsilon > 0$, thus this technique cannot be achieved in the binary expansion framework.
  
  We have seen that $\beta$-expansions lead to robust binary representations. However, these binary representations can be very complex and very long compared to the binary representation. Indeed, line \ref{alg4:line 4} in Algorithm \ref{alg:physical algorithm beta expansion} appears to generate randomness, which points to that fact that the generated $\beta$-expansion is structureless. In this case $\beta$-expansions to represent real numbers would be of poor interest, as the binary expansions would be much more efficient in terms of storage. Even the apparently simple greedy $\beta$-expansion empirically generates sequences that do not present obvious regularities, as depicted on Table \ref{tab:different beta expansions of}.

  \begin{table}[H]
	\centering
		\begin{tabular}{c|l}
			$\beta$ & Greedy $\beta$-expansions of $s = 0.75$\\
			\hline
			2 & 11000000000000000000000000000000000000000000000000\\
			1.01 & 00000000000000000000000000001000000000000000000000\\
			1.2 & 01000000000000010000000000000000000100000000000000\\
			1.5 & 10000010010010100000000010000001000010000001001001\\
			1.8 & 10100010101000000110101000011000011010011000010000\\
			1.99 & 10111110001001001001010001100011010000100000111010
		\end{tabular}
		\caption{First 50 bits of the greedy $\beta$-expansion of $s=0.75$ generated by Algorithm \ref{alg:greedy algorithm beta expansion}, for different values of $\beta$.}\label{tab:different beta expansions of}
	\end{table}
  
  On the above table, while the greedy $\beta$-expansion of $0.75$ appears to be very structured for $\beta=2$, it is not so obvious for any other value of $\beta$. To ask whether one can quantify the differences of algorithmic complexity, also known as algorithmic complexity, between $\beta$-expansions and binary expansions, for different values of $\beta \in (1,2]$.

\subsection{Algorithmic complexity}

\label{subsec:algorithmic complexity}

We proceed to building up the definitions related to algorithmic complexity. We will remain on the surface of the topic, but we refer the interested reader to \cite{li2008introduction} for an in-depth study. An algorithm is a finite sequence of nonambiguous elementary mathematical operations. We consider a special class of algorithms, called \newname{effective algorithms}, that are defined precisely by the formalism of Turing machines \cite{turing1936computable}. Effective algorithms are essentially algorithms that can be run on digital computers, assuming infinite memory. As a counter-example, Algorithm \ref{alg:greedy algorithm binary expansion} is not an effective algorithm, as line \ref{alg1:line 4} uses the comparison of the real variable ``$r$'' to the threshold $0.5$, which in full generality requires infinite precision, the latter being impossible to realize in practice, and a fortiori not possible on digital computers. Effective algorithms can manipulate finite sequences of nonnegative integers and perform operations such as shifts, copying and pasting which can then be used to define more complex operations on integers and rationals, such as addition, multiplication, divison and comparison. We can equip an effective algorithm $E$ with an infinite binary sequence $\xf \in \{0,1\}^\N$. The effective algorithm can then use this sequence $\xf \in \{0,1\}^\N$ as a help to perform its computations. We write $E^\xf$ to denote that $E$ is equipped with $\xf$. In practice, such an infinite binary $\xf \in \{0,1\}^\N$ is the mathematical model of a dataset.

Effective algorithms lead to the definition of a special class of functions. Let $X,Y$ be two countable sets, such as $\{0,1\}^\ast$, $\N$, $\Q$, or a finite product of these sets. A function $\varphi : X \times \{0,1\}^\N \to Y$ is called \newname{computable} if there exists an effective algorithm $E_\varphi$ that, when presented with input $x \in A$ and equipped with $\xf \in \{0,1\}^\N$, delivers $\varphi(x, \xf)$ as an output. A function $\psi : X \to Y$ are said to be \newname{computable relatively to} $\xf \in \{0,1\}^\N$ if there exists a computable function $\varphi : X \times \{0,1\}^\N \to Y$ such that $\psi(x) = \varphi(x, \xf)$, for all $x \in X$. In particular, we say that $\psi : X \to Y$ is \newname{computable} if $\psi$ is computable relatively to $0^\infty$.

A universal computable function is a computable function $U : \{0,1\}^\ast \times \{0,1\}^\N \to \{0,1\}^\ast$ such that for every given computable function $\varphi: \{0,1\}^\ast \times \{0,1\}^\N \to \{0,1\}^\ast$, there exists a $p \in \{0,1\}^\ast$, called program for $\varphi$, so that $\varphi(\cdot, \cdot) = U(\langle p, \cdot \rangle, \cdot)$. The existence of universal computable functions was established in \cite{turing1936computable}. The concept of universal computable function is the base to the definition of algorithmic complexity, that we now introduce. 
\begin{definition}
	Let $U$ be a universal computable function, and $x \in \{0,1\}^\ast$. The algorithmic complexity of $x$ with respect to $U$ relatively to $\xf \in \{0,1\}^\N$ is defined as the length of the shortest (finite) binary sequence $x^\ast$, called canonial sequence for $x$, which satisfies $U(x^\ast, \xf) = x$. It is denoted $K_U[x|\xf] = |x^\ast|$.
\end{definition}
Note that the above definition of algorithmic complexity is given with respect to some universal computable function, and relatively to some sequence $\xf \in \{0,1\}^\N$. Following \cite[Definition 2.1.2]{li2008introduction}, we fix a universal computable function $\Uref$ throughout the paper, and we define the \newname{algorithmic complexity of $x \in \{0,1\}^\ast$ relatively to $\xf \in \{0,1\}^\N$} as
\begin{equation}\label{eq:definition of relative algorithmic complexity}
	K[x|\xf] := K_{\Uref}[x|\xf], \ \text{for all} \ x \in \{0,1\}^\ast, \ \xf \in \{0,1\}^\N.
\end{equation}
Moreover, we can define the absolute measure of \newname{algorithmic complexity of $x \in \{0,1\}^\ast$} by
\begin{equation}
	K[x] := K[x | 0^\infty].
\end{equation}
Algorithmic complexity reflects the structure (or the absence of structure) of a binary sequence $x \in \{0,1\}^\ast$. High algorithmic complexity denotes the absence of structure, while low algorithmic complexity is an indicator of strong interdependencies between the bits of $x$. Moreover, two sequences that are structurally related display similar algorithmic complexities. More precisely, the following very simple standard lemma expresses that computable functions can only reduce algorithmic complexity. This lemma can be found as \cite[Exercise 2.1.6.a]{li2008introduction}, and will turn out useful for the rest of the paper.
\begin{lemma}\label{lem:computable functions decrease complexity}
	Let $\varphi : \{0,1\}^\ast \to \{0,1\}^\ast$ be a computable function. Then,
	\begin{equation}\label{eq:main:lem:computable functions decrease complexity}
		K[\varphi(x)] \leq K[x] + \bigOx[1]{|x|}, \ \forall x \in \{0,1\}^\ast.
	\end{equation}
\end{lemma}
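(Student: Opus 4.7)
The plan is to use the universality of $\Uref$ to compose a ``program for $\varphi$'' with a canonical description of $x$, showing that $\varphi(x)$ has a description whose length exceeds $K[x]$ by only a constant depending on $\varphi$.

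First I would fix a canonical sequence $x^\ast \in \{0,1\}^\ast$ for $x$ relative to $0^\infty$, so that $\Uref(x^\ast, 0^\infty) = x$ and $|x^\ast| = K[x]$. Then I would consider the auxiliary map $\psi : \{0,1\}^\ast \times \{0,1\}^\N \to \{0,1\}^\ast$ defined by $\psi(y, \mathbf{z}) := \varphi(\Uref(y, \mathbf{z}))$ (viewed as a partial computable function, only defined when $\Uref(y, \mathbf{z})$ is defined). Because both $\varphi$ and $\Uref$ are computable, so is $\psi$: one effective algorithm for $\psi$ simply runs the one for $\Uref$ on $(y, \mathbf{z})$ and pipes its output into the one for $\varphi$.

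By universality of $\Uref$, there is a program $p_\psi \in \{0,1\}^\ast$, depending only on $\varphi$ and the fixed universal machine $\Uref$ but not on $x$, such that
\begin{equation*}
    \Uref(\langle p_\psi, y \rangle, \mathbf{z}) \;=\; \psi(y, \mathbf{z}) \;=\; \varphi(\Uref(y, \mathbf{z})), \quad \text{for all } y \in \{0,1\}^\ast, \; \mathbf{z} \in \{0,1\}^\N.
\end{equation*}
Instantiating this identity at $(y, \mathbf{z}) = (x^\ast, 0^\infty)$ gives $\Uref(\langle p_\psi, x^\ast \rangle, 0^\infty) = \varphi(x)$. Hence $\langle p_\psi, x^\ast \rangle$ is a valid description of $\varphi(x)$ with respect to $\Uref$, so by the definition of $K[\,\cdot\,]$,
\begin{equation*}
    K[\varphi(x)] \;\leq\; |\langle p_\psi, x^\ast \rangle| \;=\; 2|p_\psi| + |x^\ast| + 1 \;=\; K[x] + (2|p_\psi| + 1).
\end{equation*}
Since the constant $2|p_\psi| + 1$ depends only on $\varphi$ (through the fixed program $p_\psi$), this yields the claim \eqref{eq:main:lem:computable functions decrease complexity}.

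There is no real obstacle here; the only subtlety to be mindful of is making sure the pairing $\langle \cdot, \cdot \rangle$ is used so that $\Uref$ can parse $p_\psi$ off the front of its input (which is precisely why the prefix-free encoding $\bar p_\psi$ was built into the definition of $\langle \cdot, \cdot \rangle$ and costs only $2|p_\psi|+1$ bits). Everything else is bookkeeping: defining $\psi$ via composition, invoking universality once, and reading off the length of the resulting description.
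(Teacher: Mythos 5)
Your proof is correct and follows essentially the same route as the paper: compose $\varphi$ with the universal machine, invoke universality once to obtain a fixed program $p$, and bound $K[\varphi(x)]$ by $|\langle p, x^\ast\rangle| = K[x] + 2|p| + 1$. The only cosmetic difference is that you keep the oracle argument $\mathbf{z}$ explicit throughout, whereas the paper first specializes to $U := \Uref(\cdot, 0^\infty)$; the substance is identical.
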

\begin{proof}
	Recall that $\Uref$ is the universal computable function fixed throughout the paper, and let $U := \Uref(\cdot, 0^\infty)$. Let $\varphi : \{0,1\}^\ast \to \{0,1\}^\ast$ be a computable function. $\varphi \circ U$ is also a computable function (the composition of two computable functions is again a computable function \cite[Exercise 2.1.2]{lovasz2018notes}). Further, as $U$ is a universal computable function, there exists a program $p\in \{0,1\}^\ast$ such that $U(\langle p, x \rangle) = \varphi \circ U(x)$, for all $x \in \{0,1\}^\ast$. Let now $x \in \{0,1\}^\ast$, and $x^\ast \in \{0,1\}^\ast$ be such that $U(x^\ast) = x$ and $|x^\ast| = K[x]$. Then, $U(\langle p, x^\ast \rangle) = \varphi \circ U(x^\ast) = \varphi(x)$. By definition, it immediately follows that $K[\varphi(x)] \leq |\langle p, x^\ast\rangle| = 2|p| + |x^\ast| + 1 = K[x] + 2|p| + 1$.
\end{proof}

In this paper, we manipulate infinite binary sequences. Hopefully, algorithmic complexity naturally extends from finite to infinite sequences. For an infinite sequence $\xf \in \{0,1\}^\N$, we study the behavior of $K[\xf_{1:n}]$ in function of $n \in \N$. We also extend this concept to relative algorithmic complexity. Namely, for two infinite sequences $\xf, \yf \in \{0,1\}^\N$, the algorithmic complexity of $\xf$ relatively to $\yf$ is understood through the study of the behavior of $K[\xf_{1:n}|\yf]$ in function of $n \in \N$.

The difference of the algorithmic complexity between prefixes of the greedy and lazy binary expansions is well understood. For $s \in [0,1]$, the greedy binary expansion $\xf$ of $s$ and the lazy binary expansion $\xf'$ of $s$ satisfy
\begin{equation}\label{eq:comparison between the kolmogorov complexity of the greedy and lazy expansions}
	|K[\xf_{1:n}] - K[\xf'_{1:n}]| \leq c, \ \text{for all} \ n \in \N,
\end{equation}
where $c > 0$ is independant of $n$ and $s$. This follows directly from the fact that one can generate the lazy binary expansion from the greedy binary expansion by following the procedure specified through (\ref{eq:greedy expansion ending with infinitetail of zeros})-(\ref{eq:lazy expansion ending with infinitetail of ones}). In particular, the operations (\ref{eq:greedy expansion ending with infinitetail of zeros})-(\ref{eq:lazy expansion ending with infinitetail of ones}) can be carried out by an effective algorithm. 

To the best of our knowledge, there is no literature at all relating the algorithmic complexity of the $\beta$-expansions to the algorithmic complexity of the binary expansions, even though their rich structure suggests interesting relationships. In particular, as discussed in the previous section, for a given $s\in [0,1]$, its $\beta$-expansions could have a different algorithmic complexity than its binary expansions. We first make important remarks that will shape our study.
	\begin{enumerate}[label=(\alph*)]
		\item As mentionned before, the set of real numbers that can be represented by $\beta$-expansions is larger that the set of real numbers that can be represented by a binary expansion: while every real number in $[0,1]$ can be represented by a binary expansion, every number in $I_\beta := [0, (\beta-1)^{-1}]$ can be represented by a $\beta$-expansion. In the sequel, we will consider only the real numbers in $[0,1]$. Therefore, the sequences that can be used as $\beta$-expansions are only those sequences $\xf \in \{0,1\}^\N$ that satisfy
		\begin{equation}
			\sum_{i=1}^\infty \xf_i \beta^{-i} \leq 1.
		\end{equation}
		We denote by $\Omega_\beta$ the set of such sequences.
		\item $\beta$-expansions do not approximate real numbers with the same rate as binary expansions. This has a consequence on how to interprete their relative algorithmic complexities. Indeed, for typical $s \in [0,1]$, the first $n$ bits of the binary expansion approximate $s$ with a rate of $\sim 2^{-n}$, while the first $n$ bits of $\beta$-expansions approximate $s$ with a rate of $\propto \beta^{-n} = 2^{-n\log_\beta(2)}$. We need to compare the algorithmic complexities of binary expansions and $\beta$-expansions which approximate $s$ with the same order of magnitude. Therefore, if $\xf$ is a $\beta$-expansion of $s$ and $\yf$ is a binary expansion of $s$ we need to compare $K[\yf_{1:n}]$ to $K[\xf_{1:\nlogbeta}]$, where $n(\beta) := \lceil n \log_\beta(2) \rceil$, for $n \in \N$ and $\beta \in (1,2]$.
		\item By (\ref{eq:comparison between the kolmogorov complexity of the greedy and lazy expansions}), the greedy and lazy binary expansions have similar algorithmic complexity. Therefore, without loss of generality, we will focus on comparing the algorithmic complexity of the $\beta$-expansions of $s$ to the algorithmic complexity of the greedy binary expansion of $s$.
		\item The base $\beta \in (1,2)$ itself might contain a lot of information, which might be reflected in terms of algorithmic complexity. To avoid artifacts due to the information content of $\beta$, we evaluate the algorithmic complexity relative to $\beta$, that is defined based on the algorithmic complexity relative to infinite sequences. For $\beta \in (1,2)$, let $\yf_\beta$ be the greedy binary expansion of $\beta -1$. Then, we define 
		\begin{equation}\label{eq:algorithmic complexity relatively to beta}
			K[x|\beta] := K[x | \yf_\beta], \ \text{for all} \ x \in \{0,1\}^\ast.
		\end{equation}
	\end{enumerate}
	In what follows, we will hence consider $\xf$ a $\beta$-expansion of $s$, and the greedy binary expansion $\yf$ of $s$, and compare $K[\yf_{1:n}|\beta]$ to $K[\xf_{1:\nlogbeta}|\beta]$. This comparison is made formal through the evaluation of the following quantity.
	\begin{definition}
		For $\xf \in \Omega_\beta$, we define
		\begin{equation}
			\underline\Delta_\beta(\xf) = \liminf_{n \to \infty}\frac{K[\xf_{1:\nlogbeta}|\beta] - K[\yf_{1:n}|\beta]}{n},
		\end{equation}
		and
		\begin{equation}
			\bar\Delta_\beta(\xf) = \limsup_{n \to \infty}\frac{K[\xf_{1:\nlogbeta}|\beta] - K[\yf_{1:n}|\beta]}{n},
		\end{equation}
		where $\yf$ is the greedy binary expansion of $s = \sum_{i=1}^\infty \xf_i \beta^{-i}$.
	\end{definition}
	$\bar\Delta_\beta(\xf)$ (resp. $\underline\Delta_\beta(\xf)$) evaluates how more complex is the $\beta$-expansion $\xf$ of $s$ as compared to the greedy binary expansion of $s$, in a worst-case (resp. best-case) fashion. Our work will be first dedicated to derive lower bounds on $\underline\Delta_\beta(\xf)$ and upper bounds on $\bar\Delta_\beta(\xf)$. Then, we will focus on certain specific values of $\beta$, for which we will derive a fine-grained distribution of algorithmic complexity. Finally, we will show that our findings suggest that the robustness displayed by A/D conversion algorithms based on $\beta$-expansions is at the cost of generating more complex sequences, but we will give a concrete solution on how to fix this problem.
	
	\subsection{Main results}
	Our first fundamental result shows that $\beta$-expansions are at least as complex as the greedy binary expansion.
	\begin{theorem}\label{thm:binary expansions are at least as compressible as as beta expansions}
		Let $\beta \in (1,2)$. Then,
		\begin{equation}
			0 \leq \underline\Delta_\beta(\xf) \leq \bar\Delta_\beta(\xf),
		\end{equation}
		for all $\xf \in \Omega_\beta$.
	\end{theorem}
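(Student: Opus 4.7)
The right inequality $\underline\Delta_\beta(\xf) \leq \bar\Delta_\beta(\xf)$ is immediate from the general fact that $\liminf_{n \to \infty} a_n \leq \limsup_{n \to \infty} a_n$, applied to $a_n := (K[\xf_{1:\nlogbeta}|\beta] - K[\yf_{1:n}|\beta])/n$. The content of the theorem is therefore the left inequality $\underline\Delta_\beta(\xf) \geq 0$. I plan to establish the stronger uniform bound
\begin{equation}\label{eq:plan:main}
	K[\yf_{1:n}|\beta] \leq K[\xf_{1:\nlogbeta}|\beta] + c_\beta, \quad \forall n \in \N, \ \forall \xf \in \Omega_\beta,
\end{equation}
with a constant $c_\beta$ depending only on $\beta$. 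Dividing \eqref{eq:plan:main} by $n$ and taking $\liminf_{n \to \infty}$ then yields $\underline\Delta_\beta(\xf) \geq 0$.

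To build the reduction, I would first observe that the finite partial sum $q_n := \sum_{i=1}^{\nlogbeta} \xf_i \beta^{-i}$ is computable from $\xf_{1:\nlogbeta}$ and $\beta$ to arbitrary precision, and approximates $s := \sum_{i=1}^\infty \xf_i \beta^{-i}$ uniformly in $\xf \in \Omega_\beta$ via
\begin{equation}
	0 \leq s - q_n \leq \sum_{i > \nlogbeta} \beta^{-i} = \frac{\beta^{-\nlogbeta}}{\beta - 1} \leq \frac{2^{-n}}{\beta - 1},
\end{equation}
where the last inequality uses $\nlogbeta \geq n \log_\beta 2$. On the other hand, a direct unpacking of Algorithm \ref{alg:greedy algorithm binary expansion} shows that the greedy binary expansion $\yf$ of $s$ satisfies $\sum_{i=1}^n \yf_i 2^{-i} = m_n \cdot 2^{-n}$ for the unique integer $m_n$ with $s \cdot 2^n - m_n \in [0, 1]$, so $\yf_{1:n}$ is just the $n$-bit binary representation of $m_n$. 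Scaling the previous estimate by $2^n$ yields $|s \cdot 2^n - q_n \cdot 2^n| \leq 1/(\beta - 1)$, placing $m_n$ within $1/(\beta-1) + 1$ of $q_n \cdot 2^n$; hence the set of candidates for $m_n$, computable from $q_n$ and $\beta$, has cardinality bounded by a constant $C_\beta$ depending only on $\beta$.

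To conclude, I would package these steps into a finite family of functions $\{\varphi_b^{(\beta)} : \{0,1\}^\ast \to \{0,1\}^\ast\}_{b \in \{0,1\}^{\leq L_\beta}}$, where $L_\beta := \lceil \log_2 C_\beta \rceil$, each defined by: on input $x$, recover the unique $n$ with $\nlogbeta = |x|$ from $|x|$ and $\beta$ (uniqueness uses $\log_2 \beta < 1$, and inputs with lengths not in the range of $n \mapsto \nlogbeta$ may be mapped to $\epsilon$), compute $q := \sum_{i=1}^{|x|} x_i \beta^{-i}$ to sufficient precision, and return the $n$-bit binary representation of the candidate for $m_n$ indexed by $b$. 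Each $\varphi_b^{(\beta)}$ is computable relative to $\yf_\beta$, and for every $\xf$ and $n$ there exists some $b^\ast = b^\ast(n, \xf)$ with $\varphi_{b^\ast}^{(\beta)}(\xf_{1:\nlogbeta}) = \yf_{1:n}$. A relativized version of Lemma \ref{lem:computable functions decrease complexity}, whose proof transports verbatim by replacing $\Uref(\cdot, 0^\infty)$ with $\Uref(\cdot, \yf_\beta)$ throughout, yields $K[\varphi_b^{(\beta)}(x)|\beta] \leq K[x|\beta] + c_{b,\beta}$ for each $b$; setting $c_\beta := \max_{b \in \{0,1\}^{\leq L_\beta}} c_{b,\beta}$, which is finite since the max is taken over finitely many $b$, then gives \eqref{eq:plan:main}.

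The main obstacle will be verifying that $c_\beta$ truly depends only on $\beta$, not on $n$ or $\xf$. This rests on three uniform estimates: the approximation bound $s - q_n \leq 2^{-n}/(\beta-1)$ is uniform over $\xf \in \Omega_\beta$; the candidate count $C_\beta$ is independent of $n$ and $\xf$; and the advice alphabet $\{0,1\}^{\leq L_\beta}$ is finite, which lets the max absorb all $b$-dependent constants. Boundary effects from dyadic $s$ or from $s = 1$ inflate $C_\beta$ only by a fixed additive constant and so do not leak any $n$- or $\xf$-dependence into $c_\beta$.
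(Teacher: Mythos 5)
Your proposal is correct and follows essentially the same route as the paper: the $\nlogbeta$-prefix of the $\beta$-expansion localizes $s$ to an interval of width $2^{-n}/(\beta-1)$, which pins down the $n$-prefix of the greedy binary expansion up to a number of candidates bounded by a constant depending only on $\beta$, and this bounded-multiplicity reduction costs only $O(1)$ bits of complexity relative to $\beta$. The paper packages the candidate set as a computable multivalued function $f_{\beta\to 2}$ with $\#f_{\beta\to 2}(x)\leq \frac{1}{\beta-1}+3$ and invokes Theorem \ref{thm:relatively computable multivalued functions and complexity}, whereas you use a finite advice-indexed family of single-valued functions plus a relativized Lemma \ref{lem:computable functions decrease complexity}; these are equivalent formalizations of the same argument.
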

	Our second result is a trivial upper bound, which is mere consequence of the definition of algorithmic complexity: the identity function $\operatorname{id} : \{0,1\}^\ast \to \{0,1\}^\ast$ is computable, so by Definition of $\Uref$, there exists a word $p \in \{0,1\}^\ast$ so that $\Uref(\langle x, p \rangle, \yf_\beta) = \operatorname{id}(x) = x$ for every $x \in \{0,1\}^\ast$ (recall that $\yf_\beta$ is the greedy binary expansion of $\beta - 1$). Further, the Definition of relative algorithmic complexity implies that $K[x|\beta] \leq |\langle x,p\rangle| = |x| + 2|p|$ for all $x \in \{0,1\}^\ast$. As a consequence, we get that
	\begin{equation}
		\underline\Delta_\beta(\xf) \leq \bar\Delta_\beta(\xf) \leq  \limsup_{n \to \infty} \frac{K[\xf_{1:\nlogbeta}|\beta]}{n} \leq \limsup_{n \to \infty} \frac{|\xf_{1:\nlogbeta}| + 2|p|}{n} = \log_\beta(2).
	\end{equation}
	This results in the following Lemma.
	\begin{lemma}\label{lem:trivial upper bound on the relative compressibility of beta expansions}
		Let $\beta \in (1,2)$. Then,
		\begin{equation}
			0 \leq \underline\Delta_\beta(\xf) \leq \bar\Delta_\beta(\xf) \leq \log_\beta(2),
		\end{equation}
		for all $\xf \in \Omega_\beta$.
	\end{lemma}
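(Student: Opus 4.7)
The plan is to combine three simple facts: the left inequality is supplied by Theorem \ref{thm:binary expansions are at least as compressible as as beta expansions}, the middle inequality is a general property of $\liminf$ and $\limsup$, and only the right inequality $\bar\Delta_\beta(\xf) \leq \log_\beta(2)$ requires a short argument of its own. That argument is in fact sketched in the paragraph preceding the lemma, so I would simply turn that sketch into a short, self-contained proof.

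First I would drop the term $-K[\yf_{1:n}|\beta]$, which is nonnegative, to get
\begin{equation*}
\bar\Delta_\beta(\xf) \;=\; \limsup_{n \to \infty}\frac{K[\xf_{1:\nlogbeta}|\beta] - K[\yf_{1:n}|\beta]}{n} \;\leq\; \limsup_{n \to \infty}\frac{K[\xf_{1:\nlogbeta}|\beta]}{n}.
\end{equation*}
Next I would apply the standard upper bound $K[x|\beta] \leq |x| + O(1)$. To justify this, note that the identity map $\operatorname{id} : \{0,1\}^\ast \to \{0,1\}^\ast$ is computable, hence by the definition of $\Uref$ as a universal computable function there exists a program $p \in \{0,1\}^\ast$ such that $\Uref(\langle p, x \rangle, \yf_\beta) = x$ for every $x \in \{0,1\}^\ast$. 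The definition of relative algorithmic complexity then yields $K[x|\beta] \leq |\langle p, x \rangle| = |x| + 2|p| + 1$, where the constant $2|p|+1$ depends only on the fixed program $p$ and not on $x$ or on $n$.

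Applying this bound to $x = \xf_{1:\nlogbeta}$ gives $K[\xf_{1:\nlogbeta}|\beta] \leq \nlogbeta + 2|p|+1 \leq n\log_\beta(2) + 1 + 2|p|+1$, so dividing by $n$ and taking the limit superior yields $\limsup_{n\to\infty} K[\xf_{1:\nlogbeta}|\beta]/n \leq \log_\beta(2)$, which is the desired bound on $\bar\Delta_\beta(\xf)$.

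There is no real obstacle here; the only things to be careful about are that the program $p$ realising the identity must be chosen once and for all (so its length is a genuine constant, independent of $n$) and that the ceiling $\lceil n\log_\beta(2)\rceil$ contributes only an additive $O(1)$ which is washed out after dividing by $n$.
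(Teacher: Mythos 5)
Your proof is correct and follows essentially the same route as the paper: drop the nonnegative subtracted term $K[\yf_{1:n}|\beta]$, bound $K[\cdot|\beta]$ by length plus a constant via a fixed program for the identity, and absorb the ceiling in $\nlogbeta$ into the $O(1)$. (Your bookkeeping $K[x|\beta]\leq|\langle p,x\rangle|=|x|+2|p|+1$ is in fact slightly more careful than the paper's, which writes the pair in the order $\langle x,p\rangle$ and would then get $2|x|+|p|+1$ from its own length formula.)
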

	We will identify classes of bases $\beta \in (1,2)$ for which the upper bound can be improved. We will cover two results of different nature: a result that holds for almost all $\beta \in (1,2)$, and a result for $\beta$ satisfying certain algebraic properties. The latter result will yield interesting corollaries. We now state the result on the improvement of the bound for almost all $\beta \in (1,2)$.
	\begin{theorem} \label{thm:upper bound on the relative compressibility of beta expansions for almost all beta in 1 sqrt 2}
		For almost all $\beta \in (1,2)$,
		\begin{equation}
			0 \leq \underline\Delta_\beta(\xf) \leq \bar\Delta_\beta(\xf) \leq \log_\beta\left(\frac{2}{\beta}\right),
		\end{equation}
		for all $\xf \in \Omega_\beta$.
	\end{theorem}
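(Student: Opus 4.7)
The lower bound $0 \leq \underline\Delta_\beta(\xf)$ is already furnished by Theorem \ref{thm:binary expansions are at least as compressible as as beta expansions}, so the work lies entirely in sharpening the upper bound of Lemma \ref{lem:trivial upper bound on the relative compressibility of beta expansions} from $\log_\beta(2)$ to $\log_\beta(2)-1 = \log_\beta(2/\beta)$ for almost every $\beta\in(1,2)$. The plan is to construct, from $\yf_{1:n}$ and $\beta$, an effective short description of $\xf_{1:n(\beta)}$ that uses $n\log_\beta(2/\beta)+o(n)$ auxiliary bits. Set $N:=n(\beta)=\lceil n\log_\beta 2\rceil$ and $\tilde s:=\sum_{i=1}^n \yf_i 2^{-i}$; then $|\tilde s-s|\leq 2^{-n}\leq C_1\beta^{-N}$ for some constant $C_1$ depending only on $\beta$. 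Any length-$N$ prefix of any $\beta$-expansion of $s$ necessarily satisfies $\sum_{i=1}^N x_i\beta^{-i}\in\bigl[s-(\beta-1)^{-1}\beta^{-N},\ s\bigr]$, because the tail contributes a value in $[0,(\beta-1)^{-1}\beta^{-N}]$. Enlarging this window slightly to absorb the error $|\tilde s-s|$, $\xf_{1:N}$ lies in the effectively enumerable set
\begin{equation*}
	A(\beta,\tilde s,N) := \left\{ x \in \{0,1\}^N : \textstyle\sum_{i=1}^N x_i \beta^{-i} \in I_{\tilde s} \right\},
\end{equation*}
where $I_{\tilde s}$ is an interval of length at most $C_2\beta^{-N}$ centered near $\tilde s$. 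Enumeration is effective given $\yf_\beta$, so a universal machine supplied with the canonical sequence of $\yf_{1:n}$ relative to $\beta$, a $O(\log n)$-bit encoding of $n$, and the index of $\xf_{1:N}$ inside $A(\beta,\tilde s,N)$ can reconstruct $\xf_{1:N}$, yielding
\begin{equation*}
	K[\xf_{1:N}\mid\beta] \leq K[\yf_{1:n}\mid\beta] + \log_2 |A(\beta,\tilde s,N)| + O(\log n).
\end{equation*}

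The core of the proof is thus the cardinality bound on $A(\beta,\tilde s,N)$. Note that $|A(\beta,\tilde s,N)|=2^N\cdot\mu_{N,\beta}(I_{\tilde s})$, where $\mu_{N,\beta}$ is the distribution of $\sum_{i=1}^N X_i\beta^{-i}$ for i.i.d.\ uniform Bernoulli $X_i$; the sequence $(\mu_{N,\beta})_N$ converges weakly to the classical Bernoulli convolution $\mu_\beta$. What is needed is the uniform concentration estimate
\begin{equation*}
	\mu_{N,\beta}(I) \leq e^{o(N)}\,|I|\quad\text{for all intervals $I$ of length $O(\beta^{-N})$,}
\end{equation*}
which gives $|A(\beta,\tilde s,N)|\leq e^{o(N)}(2/\beta)^N$. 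This is the principal obstacle and is where the restriction to almost every $\beta$ enters: one invokes Solomyak's theorem on the absolute continuity of $\mu_\beta$ for Lebesgue-a.e.\ $\beta\in(1,2)$, together with its quantitative refinements ruling out superpolynomial concentration of $\mu_{N,\beta}$ at scale $\beta^{-N}$ uniformly in the center of the interval. Obtaining the bound uniformly in $\tilde s$, at exactly the critical scale $\beta^{-N}$, is the delicate part of the argument.

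Combining the two ingredients and using $N\log_2\beta=n+O(1)$,
\begin{equation*}
	\log_2 |A(\beta,\tilde s,N)| \leq N(1-\log_2\beta) + o(N) = n\log_\beta(2/\beta) + o(n),
\end{equation*}
so $K[\xf_{1:N}\mid\beta]\leq K[\yf_{1:n}\mid\beta]+n\log_\beta(2/\beta)+o(n)$. Dividing by $n$ and taking the limsup delivers $\bar\Delta_\beta(\xf)\leq\log_\beta(2/\beta)$, while the lower bound is inherited from Theorem \ref{thm:binary expansions are at least as compressible as as beta expansions}. The role of Lemma \ref{lem:computable functions decrease complexity} in this picture is implicit: it ensures that the arithmetic operations needed to compute $\tilde s$, $N$, and the approximations of $\sum_{i=1}^N x_i\beta^{-i}$ from $\yf_\beta$ contribute only $O(1)$ overhead to the description length.
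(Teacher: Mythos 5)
Your overall architecture matches the paper's: pass from $\yf_{1:n}$ to the finite set of length-$\nlogbeta$ candidate prefixes whose values lie in an interval of length $O(2^{-n})$ around $\tilde s$, bound the cardinality of that set by roughly $2^{n\log_\beta(2/\beta)}$, and feed this into the relatively-computable multivalued-function bound of Theorem \ref{thm:relatively computable multivalued functions and complexity}. The lower bound and the computability bookkeeping (approximate membership, the $O(\log n)$ index overhead) are handled correctly, if tersely.

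The gap is the counting estimate itself, which you rightly call ``the principal obstacle'' but then do not prove. You invoke Solomyak's theorem ``together with its quantitative refinements ruling out superpolynomial concentration of $\mu_{N,\beta}$ at scale $\beta^{-N}$ uniformly in the center of the interval.'' No such off-the-shelf refinement is available in the form you need: absolute continuity of the limit measure $\nu_\beta$ says nothing directly about how many of the $2^N$ atoms of the \emph{discrete} level-$N$ measure can cluster in a single window of the critical width $\beta^{-N}$, let alone uniformly over the window's position --- and uniformity is in fact more than the theorem requires. The paper circumvents this entirely in Theorem \ref{thm:asymptotic behavior of card f x when the Bernoulli convolution is absolutely continuous} (following \cite[Lemma 3.4]{kempton2013counting}): each length-$\nlogbeta$ candidate is extended to all $2^{m-\nlogbeta}$ continuations of length $m$, whose values still land in a slightly enlarged copy of the \emph{same fixed} interval, giving $2^{-\nlogbeta}\#A \leq \nu_{\beta,m}(J)$ for every $m \geq \nlogbeta$; letting $m\to\infty$ and using $\limsup_m \nu_{\beta,m}(J)\leq\nu_\beta(J)$ for compact $J$ (Lemma \ref{app:lem:sequence of weakly convergent measures can be upper bounded by the limit on compact sets}) transfers the count to the limit measure, and only then is absolute continuity used, pointwise at $\tilde s$ via $\nu_\beta(J_n)/|J_n|\to h_\beta(\tilde s)$ as $n\to\infty$. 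A pointwise, non-uniform bound suffices because $\bar\Delta_\beta(\xf)$ is a limsup taken for each fixed $\xf$ separately. Without this device, or an actual proof of your uniform anti-concentration claim for the finite-level measures, your argument does not close.
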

	We now move to the result that holds for $\beta \in (1,2)$ satisfying some specific algebraic relationships. We first recall some facts about algebraic numbers. A real number $\beta$ is called algebraic if there exists a polynomial $P = a_d X^d + \ldots + a_1X + a_0 \in \Z[X]$ such that $P(\beta)=0$. Note that there may be several such polynomials, we denote by $P_\beta$ such a polynomial uniquely defined by
	\begin{enumerate}[label = (\alph*)]
		\item $P_\beta$ has minimal degree $d$, i.e. for every polynomial $Q$ of degree less than $P$, $Q(\beta) \neq 0$.
		\item the leading coefficient $a_d$ of $P_\beta$ is positive.
		\item  the leading coefficient $a_d$ of $P_\beta$ is minimal, i.e. for every other polynomial $Q$ of degree $d$ and positive leading coefficient $b_d > 0$ such that $Q(\beta) = 0$, $b_d > a_d$.
	\end{enumerate}
	$P_\beta$ is called the \newname{minimal polynomial} of $\beta$. We denote by $L_\beta$ the leading coefficient $a_d$ of $P_\beta$ and $T_\beta$ the tail coefficient $a_0$ of $P_\beta$. The roots of $P_\beta$, not including $\beta$, are called the \newname{Galois conjugates of $\beta$} (by root we mean any complex number $z \in \C$ such that $P_\beta(z) = 0$). We denote by $G_\beta$ the set of Galois conjugates of $\beta$. Following \cite[Lemma 1.51]{garsia1962ArithmeticPropertiesBernoulli}, we the set $G^+_\beta = \{z \in G_\beta : |z| > 1 \}$ of Galois conjugates that lay outside of the unit circle. $G^+_\beta$ will have a drastic impact of the algorithmic complexity of $\beta$-expansions.
	The question of algorithmic complexity of $\beta$-expansions with $\beta$ algebraic is slightly more complicated to handle, because of the following fact. When $\beta$ is algebraic, there might be multiplicities of finite $\beta$-expansions, i.e., for a given $n \in \N$, there might be different $x,y \in \{0,1\}^n$ such that $\sum_{i=1}^n x_i \beta^{-i} = \sum_{i=1}^n y_i \beta^{-i}$. The existence of at least two such $x,y \in \{0,1\}^n$ only happens when $\beta$ is algebraic. Indeed,
	\begin{align}
		&\exists x,y \in \{0,1\}^n \ s.t. \ \sum_{i=1}^n x_i \beta^{-i} = \sum_{i=1}^n y_i \beta^{-i}\\
		\iff & \exists x,y \in \{0,1\}^n \ s.t. \ \beta^{-n}\sum_{i=1}^n (x_i - y_i) \beta^{n-i} = 0\\
		\iff & \exists x,y \in \{0,1\}^n \ s.t. \ \sum_{i=0}^{n-1} (x_{n-i-1} - y_{n-i-1}) \beta^{i} = 0\\
		\iff & \exists P = \sum_{i=0}^{n-1} (x_{n-i-1} - y_{n-i-1}) X^{i} \in \Z[X] \ s.t. \ P(\beta) = 0.
	\end{align}
	These multiplicities naturally define an equivalence relationship on $\{0,1\}^\ast$, by $x \sim_\beta y \iff \len x = \len y =: n \ \text{and} \ \sum_{i=1}^n x_i \beta^{-i} = \sum_{i=1}^n y_i \beta^{-i}$. We denote by $[x]_\beta$ the equivalence class associated to $\sim_\beta$. To every $x \in \{0,1\}^\ast$, we define $M_\beta(x)$ to be the lexicographically maximal element of $[x]_\beta$. Moreover, given an infinite sequence $\xf$, we define $M_\beta \xf$ as being the sequence such that $(M_\beta \xf)_{1:n} := M_\beta(x_{1:n})$. The introduction of $M_\beta\xf$ allows to derive the main result on the algorithmic complexity of the $\beta$-expansions when $\beta$ is algebraic.
	\begin{theorem}\label{thm:upper bound on the relative compressibility of beta expansions for algebraic beta}
		Let $\beta \in (1,2)$ be an algebraic number. Then,
		\begin{equation}
			0 \leq \underline\Delta_\beta(M_\beta\xf) \leq \bar\Delta_\beta(M_\beta\xf) \leq \log_\beta\left(L_\beta \prod_{z \in G^\ast_\beta} \left|z \right|\right),
		\end{equation}
		for all $\xf \in \Omega_\beta$.
	\end{theorem}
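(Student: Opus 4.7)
The lower bound $\underline\Delta_\beta(M_\beta\xf) \geq 0$ is immediate from Theorem~\ref{thm:binary expansions are at least as compressible as as beta expansions} applied to $M_\beta \xf \in \Omega_\beta$, since $M_\beta \xf$ represents the same $s = \sum_{i=1}^\infty \xf_i \beta^{-i}$. For the upper bound, the plan is to construct an effective algorithm that, given the prefix $\yf_{1:n}$ and access to $\yf_\beta$, reconstructs $M_\beta \xf_{1:n(\beta)}$ from a short side string of length roughly $n \log_\beta(L_\beta \prod_{z \in G^+_\beta}|z|)$; concatenating this side string prefix-freely with a canonical sequence realizing $K[\yf_{1:n}|\beta]$ and invoking the $K[\cdot|\beta]$-analogue of Lemma~\ref{lem:computable functions decrease complexity} will then deliver the conclusion after dividing by $n$ and passing to $\limsup$.

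Set $m := n(\beta) = \lceil n \log_\beta(2) \rceil$ and $v^\star := \sum_{i=1}^m (M_\beta \xf)_i \beta^{-i}$. Since $v^\star$ depends only on $[\xf_{1:m}]_\beta$ and $|v^\star - s| \leq \beta^{-m}/(\beta-1)$, while $\yf_{1:n}$ determines $s$ within $2^{-n} = O(\beta^{-m})$, the value $v^\star$ lies in an explicit interval $I \subset \R$ of length $O(\beta^{-m})$ that the algorithm computes from $\yf_{1:n}$ and $\yf_\beta$. The side string specifies the rank of $v^\star$ inside the finite set $S_m \cap I$, where $S_m := \{\sum_{i=1}^m x_i \beta^{-i} : x \in \{0,1\}^m\}$. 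Once $v^\star$ is identified, $M_\beta \xf_{1:m}$ is recovered by enumerating $\{0,1\}^m$ in decreasing lexicographic order and returning the first $x$ with $\sum_{i=1}^m x_i \beta^{-i} = v^\star$; the equality is tested exactly in $\Q(\beta)$ using the minimal polynomial $P_\beta$, which is itself computable relatively to $\yf_\beta$.

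The core step is a separation bound for $S_m$. For distinct $v = \beta^{-m} p_x(\beta)$ and $v' = \beta^{-m} p_{x'}(\beta)$, the polynomial $Q := p_x - p_{x'} \in \Z[T]$ has coefficients in $\{-1,0,1\}$, degree at most $m-1$, and is not divisible by $P_\beta$ in $\Q[T]$. Combining the identity $\mathrm{Res}(Q, P_\beta) = L_\beta^{\deg Q} \prod_{j=0}^{d-1} Q(\beta_j)$ (with $\beta_0 = \beta$ and $\beta_1, \ldots, \beta_{d-1}$ its Galois conjugates) with the integrality estimate $|\mathrm{Res}(Q, P_\beta)| \geq 1$ and the crude bounds $|Q(\beta_j)| \leq |\beta_j|^m/(|\beta_j|-1)$ for $|\beta_j|>1$ and $|Q(\beta_j)| \leq m$ for $|\beta_j| \leq 1$ at each $\beta_j \neq \beta$ yields
\begin{equation*}
|v - v'| = \frac{|Q(\beta)|}{\beta^m} \geq \frac{c_\beta \, \beta^{-m}}{m^{d-1}\left(L_\beta \prod_{z \in G^+_\beta} |z|\right)^{m}}.
\end{equation*}
Hence $|S_m \cap I| \leq C_\beta\, m^{d-1} (L_\beta \prod_{z \in G^+_\beta}|z|)^m$, so the rank of $v^\star$ admits a fixed-length encoding of $m \log_2(L_\beta \prod_{z \in G^+_\beta}|z|) + O(\log n)$ bits, which equals $n \log_\beta(L_\beta \prod_{z \in G^+_\beta}|z|) + O(\log n)$ after substituting $m = n \log_\beta(2) + O(1)$. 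Assembling the pieces gives $K[M_\beta\xf_{1:m}|\beta] \leq K[\yf_{1:n}|\beta] + n \log_\beta(L_\beta \prod_{z \in G^+_\beta}|z|) + O(\log n)$, whence the claimed bound on $\bar\Delta_\beta(M_\beta\xf)$.

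The main obstacle I anticipate is the separation estimate itself, together with the careful disentangling of the two sources of loss (the leading coefficient $L_\beta$ coming from $|\mathrm{Res}(Q,P_\beta)| \geq 1$ and the unstable Galois conjugates in $G^+_\beta$ coming from the trivial bound on $|Q(\beta_j)|$); the degenerate case $\deg Q < d$ needs attention so that the resultant identity remains informative. The remaining ingredients—the exact arithmetic in $\Q(\beta)$, the lex-max enumeration, and the prefix-free bookkeeping of the side string and the canonical sequence for $\yf_{1:n}$—are routine.
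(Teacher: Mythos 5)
Your proposal is correct and follows essentially the same route as the paper: both rest on the resultant-based (Garsia-style) separation lemma to bound the number of distinct values $\sum_{i=1}^m x_i\beta^{-i}$ in the length-$O(\beta^{-m})$ interval that $\yf_{1:n}$ determines, and then charge $K[(M_\beta\xf)_{1:\nlogbeta}|\beta]$ for the index of the relevant element in that bounded set plus $K[\yf_{1:n}|\beta]$ — which is exactly what the paper's multivalued-function machinery ($M_\beta\circ f_{2\to\beta}$ together with Theorem \ref{thm:relatively computable multivalued functions and complexity} and Lemmas \ref{lem:generalization of the garsia separation lemma}, \ref{lem:bounding the size of the output set of the multivalued function projected on lexicographically maximal sequences}) packages. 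The only cosmetic deviations are that you recover $M_\beta(\xf_{1:m})$ by exact arithmetic in $\Q(\beta)$ where the paper uses an $\varepsilon$-approximate comparison whose threshold is again justified by the separation lemma, and that your polynomial factor $m^{d-1}$ is cruder than the paper's $n^{k_\beta}$; neither affects the stated bound.
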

	The above theorem allows to derive several corollaries. First, we will establish a class of algebraic numbers for which there are no multiplicities, and hence for which $M_\beta\xf$ can be replaced by $\xf$ in the above result. This class is a generalization of the class of \newname{Garsia numbers} introduced in \cite[Section 1.7]{garsia1962ArithmeticPropertiesBernoulli}. Second, we study another family of algebraic numbers, called \newname{Pisot numbers} introduced in \cite{pisotRepartitionModulo1}, for which the above equation yields $\underline\Delta_\beta(M_\beta\xf) = \bar\Delta_\beta(M_\beta\xf) = 0$. In the next section, we exploit this property to modify slightly the A/D converter of Daubechies in order to control the algorithmic complexity of the $\beta$-expansion it delivers.

	We proceed to introduce a class of algebraic numbers, inspired and generalized from the Garsia numbers. First, we restrict the class of algebraic numbers as follows. A real number $\beta$ is said to be an \newname{algebraic integer} if $\beta$ is an algebraic number, and the leading coefficient $L_\beta$ of its minimal polynomial $P_\beta$ satisfies $L_\beta = 1$. Further, we define a subset $\mathscr C$ of algebraic integers as follows. An algebraic integer $\beta$ belongs to $\mathscr C$ if and only if the tail coefficient $T_\beta$ of its minimal polynomial $P_\beta$ satisfies $T_\beta \geq 2$. The class $\mathscr C$ yields a very important property: for $\beta \in \mathscr C$, there are no multiplicities, i.e., $M_\beta \xf = \xf$ for all $\xf \in \{0,1\}^\N$. This results in the following corollary.
	\begin{corollary}\label{thm:upper bound on the relative compressibility of beta expansions for garsia beta}
		Let $\beta \in (1,2)$ be an algebraic integer in $\mathscr C$. Then,
		\begin{equation}
			0 \leq \underline\Delta_\beta(\xf) \leq \bar\Delta_\beta(\xf) \leq \sum_{z \in G^+_\beta} \log_\beta\left( \left|z \right|\right),
		\end{equation}
		for all $\xf \in \Omega_\beta$.
	\end{corollary}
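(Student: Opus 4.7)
The plan is to reduce the corollary to Theorem~\ref{thm:upper bound on the relative compressibility of beta expansions for algebraic beta} by establishing that no multiplicities occur when $\beta \in \mathscr C$, i.e. $M_\beta \xf = \xf$ for every $\xf \in \{0,1\}^\N$. Once this is in hand, the bound will follow directly: since $\beta$ is an algebraic integer, $L_\beta = 1$, and the product inside the logarithm in Theorem~\ref{thm:upper bound on the relative compressibility of beta expansions for algebraic beta} becomes $\prod_{z \in G^+_\beta} |z|$, which I will then push inside the logarithm as a sum via $\log_\beta\!\left(\prod_{z} |z|\right) = \sum_z \log_\beta(|z|)$.

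To prove the absence of multiplicities, I will argue by contradiction. Assume there exist $x \neq y$ in $\{0,1\}^n$ with $\sum_{i=1}^n x_i \beta^{-i} = \sum_{i=1}^n y_i \beta^{-i}$, and let $j \in \{1,\dots,n\}$ be the largest index at which $x$ and $y$ differ, so that $x_j - y_j \in \{-1,+1\}$. Multiplying the equation by $\beta^j$ and dropping the vanishing tail, I obtain a polynomial
\begin{equation}
P(X) := \sum_{i=1}^{j} (x_i - y_i) X^{j-i} \in \Z[X],
\end{equation}
whose coefficients lie in $\{-1,0,1\}$, whose constant term is $x_j - y_j = \pm 1$, and which satisfies $P(\beta) = 0$.

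The core algebraic step is then to invoke the minimal polynomial $P_\beta$. Since $\beta$ is an algebraic integer, $P_\beta$ is monic, so polynomial division gives $P = P_\beta \cdot R$ for some $R \in \Z[X]$ (monic division preserves integrality). Evaluating at $0$ yields $P(0) = T_\beta \cdot R(0)$, so that $T_\beta$ divides $\pm 1$ in $\Z$, contradicting $T_\beta \geq 2$. I expect this step, the use of Gauss's lemma/monic division plus the divisibility of the constant term, to be the main (but standard) obstacle; it is exactly where the hypothesis $\beta \in \mathscr C$ is used essentially.

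Having shown $M_\beta \xf = \xf$ for all $\xf$, I conclude by substituting into Theorem~\ref{thm:upper bound on the relative compressibility of beta expansions for algebraic beta}:
\begin{equation}
\bar\Delta_\beta(\xf) \;=\; \bar\Delta_\beta(M_\beta \xf) \;\leq\; \log_\beta\!\left( L_\beta \prod_{z \in G^+_\beta} |z| \right) \;=\; \sum_{z \in G^+_\beta} \log_\beta(|z|),
\end{equation}
and similarly for $\underline\Delta_\beta(\xf)$, while the lower bound $0 \leq \underline\Delta_\beta(\xf)$ is inherited from Theorem~\ref{thm:binary expansions are at least as compressible as as beta expansions}. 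This completes the corollary.
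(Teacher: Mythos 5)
Your proposal is correct and follows the same route as the paper: reduce to Theorem~\ref{thm:upper bound on the relative compressibility of beta expansions for algebraic beta} by showing that $\beta\in\mathscr C$ forces $M_\beta\xf=\xf$, then use $L_\beta=1$ and $\log_\beta\prod_{z\in G^+_\beta}|z|=\sum_{z\in G^+_\beta}\log_\beta|z|$. In fact your divisibility argument (monic division by $P_\beta$ in $\Z[X]$ forces $T_\beta\mid\pm1$, contradicting $T_\beta\geq 2$) supplies a proof of the no-multiplicity claim that the paper only asserts, and it is sound.
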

	A simple example of such a number is $\sqrt{2}$. Indeed, the minimal polynomial of $\sqrt{2}$ is $X^2 - 2$, hence $\sqrt{2}$ belongs to $\mathscr C$. Moreover, the only Galois conjugate of $\sqrt{2}$ is $-\sqrt{2}$, which yields
	\begin{equation}
		0 \leq \underline\Delta_{\sqrt{2}}(\xf) \leq \bar\Delta_{\sqrt{2}}(\xf) \leq  \log_{\sqrt{2}}\left( \sqrt{2}\right) = 1,
	\end{equation}
	for all $\xf \in \Omega_\beta$. More generally, the $m$-th root of any integer $k \geq 2$ belongs to $\mathscr C$, since its minimal polynomial is $P_{\sqrt[m]{k}} = X^m - k$. The Galois conjugates of $\sqrt[m]{k}$ are $\sqrt[m]{k} \exp\left(i \frac{2\pi \ell}{m}\right)$, $\ell \in 1, \ldots, m - 1$. This yields
	\begin{align}
		0 \leq \underline\Delta_{\sqrt[m]{k}}(\xf) \leq \bar\Delta_{\sqrt[m]{k}}(\xf) &\leq \sum_{\ell \in \{1,\ldots,m-1\}} \log_{\sqrt[m]{k}}\left( \left|\sqrt[m]{k} \exp\left(i \frac{2\pi \ell}{m}\right)\right|\right)\\
		& \leq \sum_{\ell \in \{1,\ldots,m-1\}} \log_{\sqrt[m]{k}}\left( \sqrt[m]{k}\right) = m-1,
	\end{align}
	for all $\xf \in \Omega_\beta$. On can appreciate the tradeoff between redundancy and complexity: larger $m$ is, the closer $\sqrt[m]{k}$ is to $1$, i.e., the more redundant the $\sqrt[m]{k}$-expansions are, the higher the upper bound $m-1$ on their relative compressibility is.

	We close this section with a note on Pisot numbers, that are defined as algebraic integers having all their Galois conjugates located strictly inside the unit disk of $\C$. Pisot numbers include notably the golden ratio $G = \frac{1 +\sqrt{5}}{2}$, since its only Galois conjugate is the other root of the polynomial $X^2 - X - 1$, which is $\frac{1 - \sqrt{5}}{2} \simeq -0.61$, that is strictly inside of the unit disk. Note that if $\beta$ is a Pisot number, $G^+_\beta = \emptyset$. This yields the following immediate corollary.
	\begin{corollary}\label{cor:upper bound on the relative compressibility of beta expansions for pisot beta}
		Let $\beta \in (1,2)$ be a Pisot number. Then,
		\begin{equation}
			\underline\Delta_\beta(M_\beta\xf) = \bar\Delta_\beta(M_\beta\xf) = 0,
		\end{equation}
		for all $\xf \in \Omega_\beta$.
	\end{corollary}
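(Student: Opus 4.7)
The plan is to derive this corollary as a direct specialization of Theorem \ref{thm:upper bound on the relative compressibility of beta expansions for algebraic beta}. Since a Pisot number is by definition an algebraic integer whose Galois conjugates all lie strictly inside the open unit disk of $\C$, two simplifications occur in the upper bound appearing in that theorem.

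First, because $\beta$ is an algebraic integer, the leading coefficient of its minimal polynomial satisfies $L_\beta = 1$. Second, the set $G^+_\beta = \{z \in G_\beta : |z| > 1\}$ of Galois conjugates outside the closed unit disk is empty, so the product $\prod_{z \in G^+_\beta} |z|$ is the empty product, equal to $1$ by convention. Substituting these two facts into the bound of Theorem \ref{thm:upper bound on the relative compressibility of beta expansions for algebraic beta} collapses the right-hand side to $\log_\beta(1 \cdot 1) = \log_\beta(1) = 0$.

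Combining this with the lower bound $0 \leq \underline\Delta_\beta(M_\beta \xf)$ from the same theorem, one obtains
\begin{equation}
    0 \leq \underline\Delta_\beta(M_\beta \xf) \leq \bar\Delta_\beta(M_\beta \xf) \leq 0,
\end{equation}
for every $\xf \in \Omega_\beta$, which forces $\underline\Delta_\beta(M_\beta \xf) = \bar\Delta_\beta(M_\beta \xf) = 0$ and proves the corollary.

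Since the argument is a one-line invocation of a previously established theorem, there is no genuine obstacle; the only potential subtlety is the empty product convention, which I would briefly flag for clarity, and the fact that $L_\beta = 1$ is part of the very definition of ``algebraic integer'' used in the paper (as recalled just before Corollary \ref{thm:upper bound on the relative compressibility of beta expansions for garsia beta}).
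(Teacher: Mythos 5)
Your proof is correct and is exactly the argument the paper intends: the corollary is obtained by specializing Theorem \ref{thm:upper bound on the relative compressibility of beta expansions for algebraic beta} to a Pisot $\beta$, using $L_\beta = 1$ (since Pisot numbers are algebraic integers) and $G^+_\beta = \emptyset$ so that the upper bound collapses to $\log_\beta(1) = 0$. Your remark on the empty-product convention is a reasonable clarification but not a substantive difference from the paper's one-line derivation.
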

	Interestingly, this further has a consequence on the algorithmic complexity of the greedy $\beta$-expansion.
	\begin{corollary}\label{cor:upper bound on the relative compressibility of greedy beta expansions for pisot beta}
		Let $\beta \in (1,2)$ be a Pisot number, $s\in [0,1]$ and $\xf$ be the greedy $\beta$-expansion of $s$. Then,
		\begin{equation}
			\underline\Delta_\beta(\xf) = \bar\Delta_\beta(\xf) = 0.
		\end{equation}
	\end{corollary}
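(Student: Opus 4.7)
The plan is to reduce Corollary \ref{cor:upper bound on the relative compressibility of greedy beta expansions for pisot beta} directly to Corollary \ref{cor:upper bound on the relative compressibility of beta expansions for pisot beta} by establishing that $M_\beta\xf = \xf$ whenever $\xf$ is the greedy $\beta$-expansion of $s$. Once this identity is in place, applying Corollary \ref{cor:upper bound on the relative compressibility of beta expansions for pisot beta} to $\xf \in \Omega_\beta$ immediately yields $\underline\Delta_\beta(\xf) = \bar\Delta_\beta(\xf) = 0$. The work therefore reduces to verifying that $M_\beta(\xf_{1:n}) = \xf_{1:n}$ for every $n \in \N$.

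I would prove this by contradiction. Assume for some $n$ that $M_\beta(\xf_{1:n}) \neq \xf_{1:n}$, and set $y := M_\beta(\xf_{1:n}) \in \{0,1\}^n$, so that $y >_L \xf_{1:n}$ and $\sum_{i=1}^n y_i \beta^{-i} = \sum_{i=1}^n \xf_i \beta^{-i}$. Splice $y$ into $\xf$ by defining $\yf' := y \, \xf_{n+1:\infty} \in \{0,1\}^\N$. Because the partial sums through index $n$ coincide,
\begin{equation*}
	\sum_{i=1}^\infty \yf'_i \beta^{-i} = \sum_{i=1}^n y_i \beta^{-i} + \sum_{i=n+1}^\infty \xf_i \beta^{-i} = \sum_{i=1}^\infty \xf_i \beta^{-i} = s \in [0,1],
\end{equation*}
so $\yf' \in \Omega_\beta$ is a legitimate $\beta$-expansion of $s$. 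Since $\yf'$ coincides with $\xf$ beyond position $n$ while $\yf'_{1:n} = y >_L \xf_{1:n}$, we obtain $\yf' >_L \xf$, contradicting the defining property of the greedy $\beta$-expansion as the lexicographically largest $\beta$-expansion of $s$. Hence $M_\beta(\xf_{1:n}) = \xf_{1:n}$ for all $n$, i.e., $M_\beta\xf = \xf$.

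There is essentially no substantive obstacle here; the proof is a short extremality argument. The only item worth flagging carefully is the consistency between the lexicographic order used to characterize the greedy $\beta$-expansion and that used in defining $M_\beta$: both select the lex-maximal element on their respective domains, which is exactly what makes the splicing argument go through. With $M_\beta\xf = \xf$ established, Corollary \ref{cor:upper bound on the relative compressibility of beta expansions for pisot beta} supplies the conclusion.
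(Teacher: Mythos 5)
Your proof is correct and follows exactly the route the paper intends (the paper states this corollary without proof, as an immediate consequence of Corollary \ref{cor:upper bound on the relative compressibility of beta expansions for pisot beta}): the splicing argument showing that each prefix of the greedy expansion is lexicographically maximal in its $\sim_\beta$-class, hence $M_\beta\xf = \xf$, is the natural justification the paper leaves implicit. Your remark about matching the two lexicographic conventions is well taken, since the paper's formal definition of $<_L$ is stated with the inequality reversed relative to its evident intended use.
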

	In the next section, we exploit these considerations on Pisot numbers to control the complexity of the sequences delivered by the A/D conversion algorithm.
	

	\subsection{Application: a denoising algorithm for A/D conversion}

	The algorithm of A/D conversion presented before uses the redundant property of $\beta$-expansions to generate arbitrarily precise representations of the input, even in the presence of imperfect quantizers. However, as raised previously in this paper, the generated $\beta$-expansions could be more complex than the binary expansion of the same input, following a redundancy versus complexity tradeoff. The successive theorems presented above indeed suggest that the $\beta$-expansions generated could indeed be more complex, hence harder to store on a computer. However, we also have seen that when $\beta$ is a Pisot number, we can identify a class of $\beta$-expansions that display the same complexity as the binary expansion. It turns out that Pisot $\beta$ is a usual choice in A/D conversion, notably the golden ratio $\beta = \frac{1+\sqrt{5}}{2}$ \cite{daubechies2010GoldenRatioEncoder}. Here, we first expose a statement that strengthens the idea that the $\beta$-expansions generated by the A/D conversion algorithm can be more complex than the binary expansions, and then we fix this problem by establishing an algorithm that converts any $\beta$-expansion into another $\beta$-expansion of minimal complexity, in linear time.

	In the previous section, we have established that for all $\beta \in (1,2)$, 
	\begin{equation}
		\underline{\Delta}_\beta(\xf) = 0, \ \forall \xf \in \Omega_\beta.
	\end{equation}
	The proof of this statement actually yields an even stronger result, namely 
	\begin{equation}
		K[\yf_{1:n}|\beta] \leq K[\xf_{1:\nlogbeta}|\beta] + \bigOx[1]{n},
	\end{equation}
	for all $n \in \N$, and all $\xf \in \Omega_\beta$, where $\yf$ is the greedy binary expansion of $s := \sum_{i=1}^\infty \xf_i \beta^{-i}$. The question is now to estimate the distribution of $K[\xf_{1:\nlogbeta}|\beta] - K[\yf_{1:n}|\beta]$, in order to understand if for typical $\xf \in \Omega_\beta$, $K[\xf_{1:\nlogbeta}|\beta]$ tends to be close or far from $K[\yf_{1:n}|\beta]$. This results in the study of the following set
	\begin{align}
		\KK_\beta[s,n,k] := & \left\{ \xf_{1:\nlogbeta}: \xf \ \text{is a} \ \beta\text{-expansion of} \ s,\right. \nonumber\\
		& \hspace{0.86in}\left. K[\xf_{1:\nlogbeta}|\beta] \leq K[\yf_{1:n}|\beta] + k\right\},
	\end{align}
	for $s \in [0,1]$, $n,k \in \N$. For $s \in [0,1]$, $n,k \in \N$, the set $\KK_\beta[s,n,k]$ contains the prefixes of length $\nlogbeta$ of those $\beta$-expansions of $s$ that display an algorithmic complexity that is higher than the algorithmic complexity of the corresponding greedy binary expansion by at most $k$ bits. We establish the following result.
	\begin{theorem}\label{thm:distribution of algorithmic complexity of beta expansions when beta is pisot}
		Let $\beta \in (1,2)$ be a Pisot number, $s \in [0,1]$ and $n,k \in \N$. Then, there exists $M \in \N$ such that
		\begin{equation}
			\# \KK_\beta[s,n,k] \leq 2^{k+1} n^M.
		\end{equation}
	\end{theorem}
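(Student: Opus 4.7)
My plan is to apply Kolmogorov symmetry of information to translate the complexity bound in the definition of $\KK_\beta[s,n,k]$ into a bound on the conditional complexity $K[u | \yf_{1:n}, \beta]$ for every $u \in \KK_\beta[s,n,k]$, and then invoke the standard counting bound: the number of strings $u$ with $K[u | \yf_{1:n}, \beta] \leq L$ is at most $2^{L+1}$.

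First, I extract from the strengthened form of Theorem~\ref{thm:binary expansions are at least as compressible as as beta expansions} recalled at the beginning of this section the estimate $K[\yf_{1:n} | u, \beta] = O(1)$, uniformly in $\xf \in \Omega_\beta$ with $u := \xf_{1:\nlogbeta}$. The underlying reason is that $\sum_{i=1}^{\nlogbeta} u_i \beta^{-i}$ approximates $s = \sum_i \xf_i \beta^{-i}$ to within $\beta^{-\nlogbeta}/(\beta-1) \leq 2^{-n}/(\beta-1)$, so that $\yf_{1:n}$ is pinned down among at most $\lceil 1/(\beta-1) \rceil + 1$ candidates, resolvable by a fixed computable function taking $u$ together with $O_\beta(1)$ hint bits.

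Next, I invoke Kolmogorov symmetry of information (see \cite[Thm.~2.8.2]{li2008introduction}) relativized to $\beta$: for any finite strings $u,y$,
\[
K[u | y, \beta] \leq K[u | \beta] - K[y | \beta] + K[y | u, \beta] + O(\log n),
\]
the $O(\log n)$ absorbing the standard logarithmic overhead (the joint complexity $K[\langle u, \yf_{1:n}\rangle | \beta]$ is $O(n)$). Setting $y = \yf_{1:n}$ and taking $u \in \KK_\beta[s,n,k]$, I combine $K[u | \beta] - K[\yf_{1:n} | \beta] \leq k$ (by definition of $\KK_\beta[s,n,k]$) with $K[\yf_{1:n} | u, \beta] = O(1)$ from the first step to obtain $K[u | \yf_{1:n}, \beta] \leq k + O(\log n)$.

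Finally, the standard counting argument then gives $\#\KK_\beta[s,n,k] \leq 2^{k + O(\log n) + 1} \leq 2^{k+1} n^M$ for some $M \in \N$ depending only on $\beta$. The main technical point requiring care is the uniformity of the $O(1)$ constant in the first step across $n$, $s$, and $\xf$, which rests on the explicit precision analysis sketched above; this part does not in fact use the Pisot assumption. The Pisot hypothesis enters only through the preceding Corollary~\ref{cor:upper bound on the relative compressibility of greedy beta expansions for pisot beta}, which ensures $\bar\Delta_\beta = 0$ and hence that $\KK_\beta[s,n,k]$ is populated for moderate $k$, making the present bound meaningful.
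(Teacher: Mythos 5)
Your proof is correct, and it takes a genuinely different --- and considerably shorter --- route than the paper's. The paper proves the bound by decomposing each prefix $u \in \Sigma_\beta(s,\nlogbeta)$ into the pair $\langle \yf_{1:n}, w_\beta(s,u)\rangle$, where $w_\beta(s,u)$ is the coin-toss word driving Algorithm \ref{alg:random algorithm beta expansion}; this requires constructing $f_{\beta,1 \to all}$ and $f_{1,all\to tosses}$, proving injectivity of $u \mapsto w_\beta(s,u)$ (Lemma \ref{lem:the set of preimages by the dajani algorithm is a cylinder}), and bounding $\#f_{\beta,1\to all}$ polynomially via the separation Lemma \ref{lem:generalization of the garsia separation lemma} --- which is exactly where the Pisot hypothesis enters the paper's argument. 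You bypass all of this by conditioning directly on $\yf_{1:n}$: the only nontrivial inputs are the uniform bound $K[\yf_{1:n}\mid u,\beta] = \bigOx[1]{n}$, already available from Lemma \ref{lem:upper bound on the cardinality of the beta to 2 conversion multifunction} together with Theorem \ref{thm:relatively computable multivalued functions and complexity}, the (relativized) symmetry of information, and the standard program-counting bound. Both proofs invoke \cite[Theorem 2.8.2]{li2008introduction} and end with the same $2^{L+1}$ count; the difference is what gets counted --- the prefixes $u$ themselves in your case, the coin-toss words $w_\beta(s,u)$ in the paper's. Your observation that Pisot is not needed for the counting bound itself is right, and your argument is in fact sharper than the paper's general algebraic version (Corollary \ref{cor:lower bound of kolmogorov complexity for beta expansions when beta is algebraic} and the unlabeled corollary following it), which must give away an additive $2n\log_\beta(L_\beta\Pi_\beta^+)$ precisely because $\#f_{\beta,1\to all}$ grows exponentially for non-Pisot algebraic $\beta$. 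What the paper's longer route buys is the structural identification of the excess conditional complexity with the complexity of the randomness consumed by the A/D converter, which is of independent interest but not needed for the stated bound. Two points you should make explicit in a final write-up: symmetry of information must be applied in its form relativized to $\yf_\beta$ (or note, as the paper does at the start of Section \ref{sec:distribution of the complexities of beta expansions}, that Pisot numbers are computable so the oracle can be dropped at $\bigOx[1]{n}$ cost), and the exponent $M$ must absorb both the $O(\log n)$ error of symmetry of information and the $O(1)$ from your first step.
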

	It was established in \cite[Theorem 1]{feng2010GrowthRatebetaexpansions} that if $\beta$ is a Pisot number, then for almost all $s \in [0,1]$, there exists a constant $\gamma > 0$ such that the set 
	\begin{equation}
		\Sigma_\beta[s,n] = \left\{ \xf_{1:\nlogbeta}: \xf \ \text{is a} \ \beta\text{-expansion of}\ s\right\}
	\end{equation}
	satisfies 
	\begin{equation}
		\#\Sigma_\beta[s,n] \geq 2^{\gamma n},
	\end{equation}
	for $n$ large enough. This means that for $k << \gamma n$, the set $\KK_\beta[s,n,k]$ contains a very small fraction of all the $\beta$-expansions of $s$. Concretely,
	\begin{equation}
		\frac{\# \KK_\beta[s,n,\alpha n]}{\#\Sigma_\beta[s,n]} \leq 2^{k+1 - \gamma n}n^M,
	\end{equation}
	for almost all $s \in [0,1]$, and $n,k \in \N$ with $n$ large enough. Then, for all $\alpha < \gamma$,
	\begin{equation}
		\lim_{n \to \infty} \frac{\# \KK_\beta[s,n,k]}{\#\Sigma_\beta[s,n]} = 0,
	\end{equation}
	for almost all $s \in [0,1]$. This means that asymptotically, the majority of the $\beta$-expansions of a typical $s \in [0,1]$ are more complex than the binary expansions. This yields a tradeoff between the complexity and the precision of the A/D conversion algorithm.

	However, we now show that we can break this tradeoff. By Corollary \ref{cor:upper bound on the relative compressibility of beta expansions for pisot beta}, we have shown that if $\beta \in (1,2)$ is a Pisot number, we have
	\begin{equation}
		\underline\Delta_\beta(M_\beta\xf) = \bar\Delta_\beta(M_\beta\xf) = 0,
	\end{equation}
	for all $\xf \in \Omega_\beta$. We then break the tradeoff by designing an effective algorithm to compute $M_\beta$ in linear time, i.e., there exists a constant $C >0$ such that the algorithm delivers $M_\beta(x)$ when presented with any input $x \in \{0,1\}^n$ in less than $Cn$ computation steps. The property of the algorithm computing in linear time crucially relies on $\beta$ being a Pisot number.

	\begin{figure}[H]
		\centering
		\begin{tikzpicture}[
			node distance=2cm and 3cm,
			box/.style = {draw, minimum width=3cm, minimum height=3cm},
			>=Stealth
			]
		
				\node[box] (A) at (0,0) {A/D converter};
				\node[box, right=of A, align=center] (B) {Algorithm to compute\\ $M_\beta$ in linear time};
		
				\draw[->] ([xshift=-1cm]A.west)node[left] {$s$} -- (A.west);
				\draw[->] (A.east) -- (B.west) node[midway, above] {$\xf$};
				\draw[->] (B.east) -- ([xshift=1cm]B.east) node[right] {$M_\beta \xf$};
		
			\end{tikzpicture}
			\caption{Pipeline of A/D conversion with controlled algorithmic complexity. Here $s \in [0,1]$ and $\xf$ is a $\beta$-expansion of $s$. $M_\beta \xf$ displays the same algorithmic complexity as the greedy binary expansion of $s$.}
	\end{figure}
	
	\subsection{Organisation of the paper}

	Section \ref{sec:multivalued functions and algorithmic complexity} is dedicated to introduce the concept of computable multivalued functions that will prove to be pivotal in most of the proofs. Section \ref{sec:lower bound on the complexity of beta expansions} is dedicated to introducing the philosophy of all the proofs of the paper, notably by showing how to exploit the concepts introduced in Section \ref{sec:multivalued functions and algorithmic complexity} to prove Theorem \ref{thm:binary expansions are at least as compressible as as beta expansions}. Section \ref{sec:upper bound on the complexity of beta expansions} is dedicated to establishing Theorems \ref{thm:upper bound on the relative compressibility of beta expansions for almost all beta in 1 sqrt 2} and \ref{thm:upper bound on the relative compressibility of beta expansions for algebraic beta}. In Section \ref{sec:a fast algorithm to control complexity} we construct the fast algorithm to compute $M_\beta$, that leads to breaking the tradeoff between complexity and robustness for A/D conversion. Section \ref{sec:distribution of the complexities of beta expansions} is the most technical, and proceeds to establish Theorem \ref{thm:distribution of algorithmic complexity of beta expansions when beta is pisot}. 

\endgroup

\begingroup
\let\clearpage\relax

\section{Multivalued functions and algorithmic complexity}

\label{sec:multivalued functions and algorithmic complexity}

In this section, we introduce multivalued functions, and we expose the pivotal result of this paper, that is seen to be a generalization of \cite[Theorem 2]{staiger2002kolmogorov}. We start by giving the definition of a multivalued function, and of a computable multivalued function.

\begin{definition}[Multivalued function]\label{def:multivalued_function}
    Let $X,Y$ be two sets. We write $f: X \rightrightarrows Y$ to denote the fact that $f$ is a multivalued function from $X$ to $Y$, i.e., for each $x \in X$, $f(x)$ is a subset of $Y$.
\end{definition}

\begin{definition}[Computable multivalued function]\label{def:computable_multivalued_function}
    Let $f: \{0,1\}^\ast \rightrightarrows \{0,1\}^\ast$ be a multivalued function. We say that $f$ is computable if there exists a computable function $g : \{0,1\}^\ast \to \{0,1\}^\ast$ such that 
    \begin{equation}
        g(x) = \langle  f(x) \rangle. 
    \end{equation}
\end{definition}

We now state the result on the algorithmic complexity of multivalued functions, which is a reformulation of classical results of nonprobabilistic statistics \cite[Section 5.5]{li2008introduction}.

\begin{theorem}[Computable multivalued functions and complexity]\label{thm:computable multivalued functions and complexity}
    Let $f: \{0,1\}^\ast \rightrightarrows \{0,1\}^\ast$ be a computable multivalued function and $x \in \{0,1\}^\ast$. For all $y \in f(x)$,
    \begin{equation}\label{eq:main:thm:computable multivalued functions and complexity}
        K[y] \leq K[x] + \log_2 \#f(x) + 2\log \log \#f(x) + \bigOx[1]{\len x}.
    \end{equation}
\end{theorem}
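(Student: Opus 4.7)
The plan is to construct, from a shortest description of $x$ together with the index of $y$ in the enumerated set $f(x)$, a self-delimiting program that reproduces $y$ under the reference universal machine $\Uref$. Because $f$ is computable in the sense of Definition \ref{def:computable_multivalued_function}, there is a fixed program $q_f$ that on input $x$ outputs the codeword $\langle f(x) \rangle$, from which one can parse the elements of $f(x)$ in, say, lexicographic order as $y_1, y_2, \ldots, y_m$, where $m := \#f(x)$. For the given $y \in f(x)$, let $i \in \{1, \ldots, m\}$ be the unique index with $y = y_i$, and write $i$ on exactly $\ell := \lceil \log_2 (m+1) \rceil$ bits as a string $\iota \in \{0,1\}^\ell$ (the $+1$ is a harmless convention to cover $m = 1$).

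First I would choose the following description of $y$:
\begin{equation}
    p_y := \overline{\bin(\ell)} \, x^\ast \, \iota,
\end{equation}
where $x^\ast$ is a shortest program for $x$, i.e., $\Uref(x^\ast, 0^\infty) = x$ and $|x^\ast| = K[x]$. The prefix $\overline{\bin(\ell)} = 1^{|\bin(\ell)|}0\,\bin(\ell)$ is self-delimiting, so a reader can first consume it to recover $\ell$, then peel off exactly the last $\ell$ bits of the remainder as $\iota$, leaving the middle block as $x^\ast$. Hence the triple $(\ell, x^\ast, \iota)$ is uniquely recoverable from $p_y$.

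Second, I would build a decoder $D$ that, on input $p_y$, performs the above parsing, runs $\Uref(x^\ast, 0^\infty)$ to recover $x$, simulates $q_f$ on $x$ to obtain $\langle f(x) \rangle$, parses out the ordered list $y_1, \ldots, y_m$, and outputs $y_i$. Since $D$ is computable and $\Uref$ is universal, there exists a fixed program $p_D$ (whose length depends on $f$ only through $|q_f|$) with $\Uref(\langle p_D, p_y \rangle, 0^\infty) = y$. By definition of algorithmic complexity,
\begin{equation}
    K[y] \leq |\langle p_D, p_y \rangle| = 2|p_D| + 1 + |p_y|.
\end{equation}
Now $|p_y| = |\overline{\bin(\ell)}| + |x^\ast| + |\iota| \leq 2 \log_2 \log_2 \#f(x) + K[x] + \log_2 \#f(x) + O(1)$, and absorbing $2|p_D| + 1$ together with the additive $O(1)$ constants into the $\bigOx[1]{|x|}$ term (all such constants are independent of $x$) yields (\ref{eq:main:thm:computable multivalued functions and complexity}).

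The main obstacle I anticipate is the choice of self-delimiting pairing. The natural pairing $\langle x^\ast, \iota \rangle$ would inflate the bound by $2K[x]$, which is fatal, while making $\iota$ itself self-delimiting via $\overline{\iota}$ would cost $2\log_2 \#f(x)$ instead of $\log_2 \#f(x)$. Prefixing the short quantity $\bin(\ell)$ exploits the fact that $\ell$ is exponentially smaller than both $K[x]$ and $\#f(x)$, so paying $2\log_2 \log_2 \#f(x)$ bits to declare $\ell$ up front is affordable and produces the claimed tight overhead. Once this encoding design is fixed, the rest reduces to standard universality of $\Uref$ and to the computability of $f$ guaranteed by Definition \ref{def:computable_multivalued_function}.
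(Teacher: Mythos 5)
Your proof is correct and follows essentially the same route as the paper: both identify $y$ by its index $i$ in the enumeration of $f(x)$ produced by the computable function $g$ with $g(x) = \langle f(x) \rangle$, and both pay $\log_2 \#f(x)$ bits for the index plus roughly $2\log_2\log_2\#f(x)$ bits to delimit it. The only difference is presentational: the paper factors the estimate through $K[\langle x, \bin(i)\rangle]$, Lemma \ref{lem:computable functions decrease complexity}, and the cited subadditivity bounds of Li--Vit\'anyi, whereas you inline the same encoding by exhibiting the self-delimiting program $\overline{\bin(\ell)}\, x^\ast\, \iota$ explicitly.
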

\begin{proof}
    Let $f: \{0,1\}^\ast \rightrightarrows \{0,1\}^\ast$ be a computable multivalued function. Then, by Definition \ref{def:computable_multivalued_function}, there exists $g : \{0,1\}^\ast \to \{0,1\}^\ast$ computable, such that 
    \begin{equation}
        g(x) = \langle  f(x) \rangle.
    \end{equation}
    For $x \in \{0,1\}^\ast$, we let $y_1(x), y_2(x), \ldots, y_{\#f(x)}$ be the list of all the elements of $f(x)$, such that 
    \begin{equation}
        g(x) = \langle y_1(x), \ldots, y_{\#f(x)}(x) \rangle.
    \end{equation}
    We can then define a computable function $h: \{0,1\}^\ast \to \{0,1\}^\ast$ such that
    \begin{equation}
        h(\langle x, \bin(i) \rangle) = y_i(x),
    \end{equation}
    through the following effective algorithm. On input $\langle x, \bin(i) \rangle$, use the function $g$ to generate the sequence 
    \begin{equation}
        y = \langle y_1(x), \ldots, y_{\#f(x)}(x) \rangle.
    \end{equation} 
    Then, use the sequence $y$ to enumerate the elements of $f(x)$ until reaching the $i$-th element, and output this element, which is $y_i(x)$. Hence, $h$ is computable and one has
    \begin{align}
        K[y_i(x)] = K[h(\langle x, \bin(i) \rangle)] &\osetlem{\ref{lem:computable functions decrease complexity}}{\leq} K[\langle x, \bin(i) \rangle] + \bigOx[1]{\len x}\\
        &\overset{(a)}{\leq} K[x] + K[i] + 2\log K[\bin(i)] + \bigOx[1]{\len x},\\
        &\overset{(b)}{\leq} K[x] + \len \bin(i) + 2 \log \len \bin(i) + \bigOx[1]{\len x}\\
        &\leq K[x] + \log i + 2 \log \log i + \bigOx[1]{\len x},\\
        &\leq K[x] + \log \#f(x) + 2 \log \log \#f(x) + \bigOx[1]{\len x},
    \end{align}
    for all $x \in \{0,1\}^\ast$ and $1 \leq i \leq \#f(x)$, where (a) and (b) follow from \cite[Example 2.1.5]{li2008introduction} and \cite[Theorem 2.1.2]{li2008introduction}, respectively.
\end{proof}

As mentionned in previous section, we will consider algorithmic complexity relative to $\beta \in (1,2)$, defined in (\ref{eq:algorithmic complexity relatively to beta}). Accordingly, we define functions that are computable relatively to $\beta \in (1,2)$ as being the functions that are computable relatively to $\yf_\beta$, where $\yf_\beta \in \{0,1\}^\N$ is the greedy binary expansion of $\beta - 1$. We further study the notion multivalued functions that are computable relatively to $\beta \in (1,2)$.
\begin{definition}[Relatively computable multivalued function]\label{def:relatively computable multivalued function}
    Let $f: \{0,1\}^\ast \rightrightarrows \{0,1\}^\ast$ be a multivalued function. We say that $f$ is computable relatively to $\yf\in \{0,1\}^\N$ (resp. $\beta \in (1,2)$) if there exists a function $g : \{0,1\}^\ast \to \{0,1\}^\ast$ computable relatively to $\yf$ (resp. $\beta$) such that 
    \begin{equation}
        g(x) = \langle  f(x) \rangle. 
    \end{equation}
\end{definition}
We can finally derive a relation for relative algorithmic complexity of multivalued functions.
\begin{theorem}[Relatively computable multivalued functions and complexity]\label{thm:relatively computable multivalued functions and complexity}
    Let $f: \{0,1\}^\ast \rightrightarrows \{0,1\}^\ast$ be a multivalued function computable relatively to $\yf \in \{0,1\}^\N$ and $x \in \{0,1\}^\ast$. Then, 
    \begin{equation}\label{eq:main:thm:relatively functions computable multivalued  and complexity}
        K[y|\yf] \leq K[x|\yf] + \log_2 \#f(x) + 2\log \log \#f(x) + \bigOx[1]{\len x},
    \end{equation}
    for all $x \in \{0,1\}^\ast$ and $y \in f(x)$. Moreover, (\ref{eq:main:thm:relatively functions computable multivalued  and complexity}) also holds if we replace $\yf \in \{0,1\}^\N$ by $\beta\in(1,2)$.
\end{theorem}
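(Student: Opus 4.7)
The plan is to mirror the proof of Theorem~\ref{thm:computable multivalued functions and complexity} verbatim, with every instance of ``computable'' replaced by ``computable relative to $\yf$''. The only genuinely new ingredient I need is a relativized version of Lemma~\ref{lem:computable functions decrease complexity}, namely: if $\varphi : \{0,1\}^\ast \to \{0,1\}^\ast$ is computable relative to $\yf$, then $K[\varphi(x)|\yf] \leq K[x|\yf] + \bigOx[1]{|x|}$. This is proved by repeating the argument of Lemma~\ref{lem:computable functions decrease complexity} with $\Uref(\cdot,\yf)$ playing the role of $\Uref(\cdot,0^\infty)$: the composition of two functions computable relative to $\yf$ is again computable relative to $\yf$, so there is a program $p$ with $\Uref(\langle p,x\rangle,\yf) = \varphi\circ\Uref(\cdot,\yf)(x)$; applied to a canonical sequence for $x$ with respect to $\yf$, this gives the bound.

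Next, given $f$ computable relative to $\yf$, Definition~\ref{def:relatively computable multivalued function} supplies a function $g : \{0,1\}^\ast \to \{0,1\}^\ast$ computable relative to $\yf$ with $g(x) = \langle f(x)\rangle$. As in the proof of Theorem~\ref{thm:computable multivalued functions and complexity}, enumerate the elements of $f(x)$ as $y_1(x),\ldots, y_{\#f(x)}(x)$ in the order in which they appear in $g(x)$, and define
\begin{equation}
    h(\langle x, \bin(i)\rangle) := y_i(x)
\end{equation}
by the effective procedure that, on input $\langle x,\bin(i)\rangle$, computes $g(x)$ and then decodes the $i$-th block. Since this procedure invokes $g$, which is computable relative to $\yf$, $h$ is itself computable relative to $\yf$.

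Applying the relativized Lemma to $h$ and then reproducing verbatim the chain of estimates in the proof of Theorem~\ref{thm:computable multivalued functions and complexity} (with \cite[Example 2.1.5]{li2008introduction} and \cite[Theorem 2.1.2]{li2008introduction} used in their relativized forms, which hold by the same proofs) yields
\begin{align}
    K[y_i(x)|\yf] &\leq K[\langle x,\bin(i)\rangle|\yf] + \bigOx[1]{|x|}\\
    &\leq K[x|\yf] + \log \#f(x) + 2\log\log \#f(x) + \bigOx[1]{|x|},
\end{align}
which is exactly \eqref{eq:main:thm:relatively functions computable multivalued  and complexity}.

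Finally, the extension from $\yf \in \{0,1\}^\N$ to $\beta \in (1,2)$ requires no work at all: computability relative to $\beta$ was defined (just before Definition~\ref{def:relatively computable multivalued function}) as computability relative to $\yf_\beta$, the greedy binary expansion of $\beta-1$, and $K[\cdot|\beta]$ was likewise defined as $K[\cdot|\yf_\beta]$ in \eqref{eq:algorithmic complexity relatively to beta}, so the statement for $\beta$ is literally the statement for $\yf = \yf_\beta$. I do not foresee any real obstacle; the whole proof is a careful bookkeeping exercise to check that each step of Theorem~\ref{thm:computable multivalued functions and complexity} relativizes, and the key point is simply that the fixed oracle $\yf$ is available to every effective algorithm invoked in the argument.
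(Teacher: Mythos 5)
Your proof is correct and is exactly the argument the paper has in mind: the paper omits the proof with the remark that it is the same as that of Theorem \ref{thm:computable multivalued functions and complexity} with every step relativized to $\yf$, which is precisely what you carry out (including the relativized version of Lemma \ref{lem:computable functions decrease complexity} and the observation that the $\beta$ case is by definition the case $\yf = \yf_\beta$).
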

We omit the proof, as it is essentially the same as the proof of Theorem \ref{thm:computable multivalued functions and complexity}, except that we need to relativize every equation with respect to $\yf \in \{0,1\}^\N$.

\endgroup

\begingroup
\let\clearpage\relax

\section{\texorpdfstring{Lower bound on the complexity of $\beta$-expansions}{}}

\label{sec:lower bound on the complexity of beta expansions}

This section is devoted to the proof of Theorem 1.2. Let $\beta \in (1,2]$. For $s \in [0,1]$, we denote by $\Sigma_\beta(s)$ the set of all $\beta$-expansions of $s$, i.e.,
\begin{equation}
    \Sigma_\beta(s) := \left\{ \xf \in \{0,1\}^\N : \sum_{i=1}^\infty \xf_i \beta^{-i} = s \right\}.
\end{equation}
Recall that, as explained in Sections \ref{sec:binary expansion} and \ref{sec:beta expansions}, $\# \Sigma_2(s) \leq 2$ and $\# \Sigma_\beta(s) = 2^{\aleph_0}$ if $\beta < \frac{1+\sqrt{5}}{2}$, for all $s \in [0,1]$. Further, we define $\Sigma_\beta(s,n)$ to be the set of sequences of length $n$ that are prefix of some $\beta$-expansion of $s$, i.e.,
\begin{equation}
    \Sigma_\beta(s,n) := \{ \xf_{1:n} : \xf \in \Sigma_\beta(s)\}.
\end{equation}

In order to prove Theorem \ref{thm:binary expansions are at least as compressible as as beta expansions}, we will construct a multivalued function $f_{\beta \to 2} : \{0,1\}^\ast  \rightrightarrows \{0,1\}^\ast$ that is computable relatively to $\beta$, for $\beta \in (1,2]$. This function $f_{\beta \to 2}$ will be designed so that when it is presented with an $\nlogbeta$-prefix $x$ of some $\beta$-expansion of a given real number $s \in [0,1]$, it outputs a set that contains all the $n$-prefixes of all the binary expansions of $s$. In mathematical symbols, this is expressed as
\begin{equation}
    \Sigma_2(s,n) \subseteq f_{\beta \to 2}(x), \ \forall x \in \Sigma_\beta(s,\nlogbeta), \ n \in \N.
\end{equation}
The function $f_{\beta \to 2}$ will be constructed so that $x \mapsto \# f_{\beta \to 2}(x)$ is bounded, in order to use Theorem \ref{thm:relatively computable multivalued functions and complexity}. 

\subsection{\texorpdfstring{Construction of a multivalued function to convert between bases $\beta$ and $2$}{}}
We now explain the heuristic that is at the origin of the contruction of the function $f_{\beta \to 2}$. This heuristic is inspired from the proof of \cite[Theorem 3]{staiger2002kolmogorov}. Let $\beta \in (1,2]$, $s \in [0,1]$ and $x \in \Sigma_\beta(s,\nlogbeta)$. Since $x \in \Sigma_\beta(s,\nlogbeta)$, then there exists $\xf \in \{0,1\}^\N$ such that $x\xf$ is a $\beta$-expansion of $s$, i.e.,
\begin{equation}\label{eq:1:construction of a multivalued function computable wrt beta that maps beta expansions to binary expansions of a real number}
    s = \sum_{i=1}^{\nlogbeta} x_i \beta^{-i} + \sum_{i=\nlogbeta+1}^{\infty} \xf_i \beta^{-i}.
\end{equation}
Observe that 
\begin{equation}\label{eq:2:construction of a multivalued function computable wrt beta that maps beta expansions to binary expansions of a real number}
    0 \leq \sum_{i=\nlogbeta+1}^{\infty} \xf_i \beta^{-i} \leq \sum_{i=\nlogbeta+1}^{\infty} \beta^{-i} \leq \frac{\beta^{-\nlogbeta}}{\beta-1} \leq \frac{2^{-n}}{\beta-1}\cdot
\end{equation}
Combining (\ref{eq:1:construction of a multivalued function computable wrt beta that maps beta expansions to binary expansions of a real number}) and (\ref{eq:2:construction of a multivalued function computable wrt beta that maps beta expansions to binary expansions of a real number}), we obtain
\begin{equation}\label{eq:3:construction of a multivalued function computable wrt beta that maps beta expansions to binary expansions of a real number}
    \sum_{i=1}^{\nlogbeta} x_i \beta^{-i} \leq s \leq \sum_{i=1}^{\nlogbeta} x_i \beta^{-i} + \frac{2^{-n}}{\beta-1}\cdot
\end{equation}
Therefore, from the knowledge of $x$ only, we can deduce that $s$ belongs to the interval $I(x)$ defined as 
\begin{equation}\label{eq:4:construction of a multivalued function computable wrt beta that maps beta expansions to binary expansions of a real number}
    I(x) :=  \left[\sum_{i=1}^{\nlogbeta} x_i \beta^{-i},  \sum_{i=1}^{\nlogbeta} x_i \beta^{-i}+ \frac{2^{-n}}{\beta-1} \right].
\end{equation}
Now, by combining again (\ref{eq:1:construction of a multivalued function computable wrt beta that maps beta expansions to binary expansions of a real number}) and (\ref{eq:2:construction of a multivalued function computable wrt beta that maps beta expansions to binary expansions of a real number}) for the specific case of $\beta = 2$, we get that
\begin{equation}\label{eq:5:construction of a multivalued function computable wrt beta that maps beta expansions to binary expansions of a real number}
    t - 2^{-n} \leq \sum_{i=1}^{n} y_i 2^{-i} \leq t,
\end{equation}
for all $t \in [0,1]$ and $y \in \Sigma_2(t,n)$. This means that if we know exactly the value of $t$, we can deduce that all the $n$-prefixes of the binary expansions of $t$ are exactly those sequences $y \in \{0,1\}^n$ that satisfy 
\begin{equation}\label{eq:6:construction of a multivalued function computable wrt beta that maps beta expansions to binary expansions of a real number}
    \sum_{i=1}^{n} y_i 2^{-i} \in [t-2^{-n}, t].
\end{equation}
However, if we do not know exactly $t$, but rather we know that $t \in [a,b] \subseteq [0,1]$, then we can deduce that all the $n$-prefixes $y$ of the binary expansions of $t$ satisfy
\begin{equation}\label{eq:7:construction of a multivalued function computable wrt beta that maps beta expansions to binary expansions of a real number}
    \sum_{i=1}^{n} y_i 2^{-i} \in [a-2^{-n}, b].
\end{equation}
Now, in light of (\ref{eq:7:construction of a multivalued function computable wrt beta that maps beta expansions to binary expansions of a real number}), (\ref{eq:4:construction of a multivalued function computable wrt beta that maps beta expansions to binary expansions of a real number}) can be reinterpreted as follows. From the knowledge of $x$, we can deduce that all the $n$-prefixes $y$ of the binary expansions of $s$ satisfy
\begin{equation}\label{eq:J x:construction of a multivalued function computable wrt beta that maps beta expansions to binary expansions of a real number}
    \sum_{i=1}^{n} y_i 2^{-i} \in \left[\sum_{i=1}^{\nlogbeta} x_i \beta^{-i} - 2^{-n},  \sum_{i=1}^{\nlogbeta} x_i \beta^{-i}+ \frac{2^{-n}}{\beta-1} \right] =: J(x).
\end{equation}
The above heuristic results in the following formal definition of a multivalued function $\hat f_{\beta \to 2}$, that however is still not the multivalued function $f_{\beta \to 2}$ we are looking for, for reasons we explain just next.
\begin{definition}\label{def:noncomputable multivalued function converting from base beta to base 2}
    Let $\hat f_{\beta \to 2} : \{0,1\}^\ast \rightrightarrows \{0,1\}^\ast$ be the multivalued function defined as 
\begin{equation}\label{eq:main:def:noncomputable multivalued function converting from base beta to base 2}
    \hat f_{\beta \to 2}(x) := \left\{ y \in \{0,1\}^n :  \sum_{i=1}^{n} y_i 2^{-i} \in J(x)\right\},
\end{equation}
for all $x \in \{0,1\}^{\nlogbeta}$ for some $n \in \N$, and $\hat f_{\beta \to 2}(x) = \epsilon$ for $x \in \{0,1\}^m$ where $m \neq \nlogbeta$ for all $n \in \N$.
\end{definition} 
However, the function $\hat f_{\beta \to 2}$ is clearly not computable (even relatively to $\beta$) since it makes use of comparisons between arbitrary real numbers to evaluate wether or not $\sum_{i=1}^{n} y_i 2^{-i} \in J(x)$, i.e., its output cannot be calculated from its input through an effective algorithm. The core of the problem is that the operation ``$\in$'' is not computable, as it relies on both the operations ``$\leq$'' and ``$\geq$''. This well-known issue can be fixed by using an approximate version of ``$\leq$''. Namely, following \cite[Section 4.1, p79-80]{brattka2018ComputabilityAnalysis}, there exists a computable function $\varphi : \N \times \{0,1\}^\N \to \{0,1\}$ such that
\begin{equation}
    \varphi(n, \langle \yf, \zf \rangle) = \begin{cases}
        0 \ \text{if} \ s < t\\
        0 \ \text{or} \ 1 \ \text{if} \ t \leq s \leq t + 2^{-n},\\
        1 \ \text{if} \ t > s + 2^{-n},
    \end{cases}
\end{equation}
for all $n \in \N$, $s,t \in [0,1]$, and where $\yf$ and $\zf$ are the respective greedy binary expansions of $s$ and $t$. For $n \in \N$, we define the binary relation $\leq_n$ by 
\begin{equation}
    s \leq_n t \iff \varphi(n , \langle \yf, \zf \rangle), \ \forall s,t \in [0,1],
\end{equation}
where $\yf,\zf$ are the respective greedy binary expansions of $s$ and $t$. Based on this relation, we can construct an ``approximate membership'' relation for $n \in \N$ by 
\begin{equation}
    s \in_n [a,b] \iff a \leq_n s - 2^{-n} \ \text{and} \ a\leq_n b, \ \forall s,a,b \in [0,1].
\end{equation}
Note that, in particular,
\begin{equation}
    s \in_n [a,b] \Rightarrow s \in [a - 2^{-n}, b + 2^{-n}], \ \ \forall s,a,b \in [0,1].
\end{equation}
In consequence, for an interval $I := [a,b] \subseteq [0,1]$, we define 
\begin{equation}\label{eq:definition of superset of size 2 -n more}
    I^{(n)} := [a - 2^{-n}, b + 2^{-n}], \ \text{for} \ n \in \N,
\end{equation}
and we get 
\begin{equation}\label{eq:finite precision membership up to 2 -n implies membership to a superset of size 2 -n more}
    s \in_n I \Rightarrow s \in I^{(n)}, \ \ \forall s \in [0,1].
\end{equation}
Now, the function $f_{\beta \to 2}$ can simply be defined with the usage of this approxiamte membership relation.
\begin{definition}\label{def:multivalued function converting from base beta to base 2}
    Let $f_{\beta \to 2} : \{0,1\}^\ast \rightrightarrows \{0,1\}^\ast$ be the multivalued function defined as 
\begin{equation}\label{eq:main:def:multivalued function converting from base beta to base 2}
    f_{\beta \to 2}(x) := \left\{ y \in \{0,1\}^n :  \sum_{i=1}^{n} y_i 2^{-i} \in_n J(x)\right\},
\end{equation}
for all $x \in \{0,1\}^{\nlogbeta}$ for some $n \in \N$, and $f_{\beta \to 2}(x) = \epsilon$ for $x \in \{0,1\}^m$ where $m \neq \nlogbeta$ for all $n \in \N$.
\end{definition}

\subsection{Consequences on algorithmic complexity}

In order to use Theorem \ref{thm:relatively computable multivalued functions and complexity}, there remains to establish an upper bound on the cardinality of $f_{\beta \to 2}(x)$, for all $x \in \{0,1\}^\ast$.
\begin{lemma}\label{lem:upper bound on the cardinality of the beta to 2 conversion multifunction}
    For $\beta \in (1,2]$ and $x \in \{0,1\}^\ast$,
    \begin{equation}
        \#f_{\beta \to 2}(x) \leq \frac{1}{\beta-1} + 3, \ \forall x \in \{0,1\}^\ast.
    \end{equation}
\end{lemma}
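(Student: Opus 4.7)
The plan is to reduce the cardinality bound to a counting argument for dyadic rationals inside a short interval. I would first handle the trivial case where $|x|$ is not of the form $n(\beta)$ for any $n \in \N$: by Definition~\ref{def:multivalued function converting from base beta to base 2}, $f_{\beta \to 2}(x) = \epsilon$ and the claim holds vacuously. So one may assume $|x| = n(\beta)$ for some $n \in \N$, and every $y \in f_{\beta \to 2}(x)$ then satisfies $y \in \{0,1\}^n$ and $\sum_{i=1}^{n} y_i 2^{-i} \in_n J(x)$.

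Second, I would invoke the implication (\ref{eq:finite precision membership up to 2 -n implies membership to a superset of size 2 -n more}) to upgrade approximate to genuine membership: each such $y$ satisfies $\sum_{i=1}^n y_i 2^{-i} \in J(x)^{(n)}$. Substituting the explicit form of $J(x)$ from (\ref{eq:J x:construction of a multivalued function computable wrt beta that maps beta expansions to binary expansions of a real number}) into (\ref{eq:definition of superset of size 2 -n more}) gives
\begin{equation*}
J(x)^{(n)} = \left[ A - 2 \cdot 2^{-n},\ A + \tfrac{2^{-n}}{\beta - 1} + 2^{-n} \right], \quad A := \sum_{i=1}^{\nlogbeta} x_i \beta^{-i},
\end{equation*}
whose length is exactly $\bigl(\tfrac{1}{\beta-1} + 3\bigr) 2^{-n}$.

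Third, the map $y \mapsto \sum_{i=1}^n y_i 2^{-i}$ is an injection of $\{0,1\}^n$ into the $2^{-n}$-spaced set $\{k 2^{-n} : 0 \leq k \leq 2^n - 1\}$. Thus $\# f_{\beta \to 2}(x)$ is bounded above by the number of points of this lattice lying in $J(x)^{(n)}$, and, using a half-open counting convention adapted to how $\in_n$ is assembled from the approximate inequalities $\leq_n$, this count is bounded by the length of $J(x)^{(n)}$ divided by the spacing $2^{-n}$, namely $\tfrac{1}{\beta-1} + 3$.

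The step I expect to require the most care is the last one, since the naive closed-interval lattice-point bound delivers only $\lfloor L/2^{-n}\rfloor + 1 = \tfrac{1}{\beta-1} + 4$. Obtaining the sharper constant $3$ hinges on scrutinizing the asymmetry built into the defining inequalities of $\in_n$ — in particular, that at most one of the two extremal lattice points of $J(x)^{(n)}$ can actually be realized as $\sum_{i=1}^n y_i 2^{-i}$ satisfying $\in_n J(x)$. This endpoint bookkeeping is the only non-routine ingredient; everything else is purely the explicit computation of interval lengths.
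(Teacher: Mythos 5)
Your proposal is correct and follows essentially the same route as the paper: reduce to genuine membership in $J(x)^{(n)}$ via (\ref{eq:finite precision membership up to 2 -n implies membership to a superset of size 2 -n more}), compute the length of that interval as $\left(\tfrac{1}{\beta-1}+3\right)2^{-n}$, and count $2^{-n}$-spaced dyadic points. The only difference is that you flag the closed-interval endpoint count (which naively gives $\tfrac{1}{\beta-1}+4$) while the paper simply asserts the bound $2^n(b-a)$ without addressing it; either constant is immaterial for the downstream use, where only boundedness independent of $n$ matters.
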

\begin{proof}
    Let $\beta \in (1,2]$ and $x \in \{0,1\}^\ast$. 
    \begin{enumerate}[label=(\alph*)]
        \item Suppose that there is no $n \in \N$ such that $\len x = \nlogbeta$. Then, $f(x) = \epsilon$, so $\#f(x) = 1 \leq \frac{1}{\beta-1} + 3$.
        \item Suppose that there exists $n \in \N$ such that $\len x = \nlogbeta$. By definition,
        \begin{align}
            f_{\beta \to 2}(x) &= \left\{ y \in \{0,1\}^n :  \sum_{i=1}^{n} y_i 2^{-i} \in_n J(x)\right\}\\
            &\subseteq \left\{ y \in \{0,1\}^n :  \sum_{i=1}^{n} y_i 2^{-i} \in J(x)^{(n)}\right\} =: A(x)
        \end{align}
        We give an upper bound on the cardinality of $A(x)$, which turns out to be an upper bound on the cardinality of $f_{\beta \to 2}(x)$. Note that for $y,z \in \{0,1\}^n$, satisfying $y \neq z$, we have
        \begin{equation}
            \left|\sum_{i=1}^{n} y_i 2^{-i} - \sum_{i=1}^{n} z_i 2^{-i}\right| \geq 2^{-n}.
        \end{equation}
        Therefore, for an interval $I = [a,b]$, there can be at most $2^n(b-a)$ sequences $y \in \{0,1\}^n$ that satisfy $\sum_{i=1}^{n} y_i 2^{-i} \in I$. 
        By (\ref{eq:J x:construction of a multivalued function computable wrt beta that maps beta expansions to binary expansions of a real number}) and (\ref{eq:definition of superset of size 2 -n more}),
        \begin{equation}
            J(x)^{(n)} = \left[\sum_{i=1}^{\nlogbeta} x_i \beta^{-i} - 2^{-n} - 2^{-n}, \sum_{i=1}^{\nlogbeta} x_i \beta^{-i}+ \frac{2^{-n}}{\beta-1} + 2^{-n}\right],
        \end{equation}
        so
        \begin{align}
            \#A(x) &\leq 2^n \left(\sum_{i=1}^{\nlogbeta} x_i \beta^{-i}+ \frac{2^{-n}}{\beta-1} + 2^{-n} - \left(\sum_{i=1}^{\nlogbeta} x_i \beta^{-i} - 2^{-n} - 2^{-n} \right)\right)\\
            &= 2^n \left(2^{-n}\frac{1}{\beta-1} + 3\cdot2^{-n}\right) = \frac{1}{\beta-1} + 3.
        \end{align}
        As $f_{\beta \to 2}(x) \subseteq A(x)$, we finally get
        \begin{equation}
            \#f_{\beta \to 2}(x) \leq \#A(x) \leq \frac{1}{\beta-1} + 3.
        \end{equation}
    \end{enumerate}
\end{proof}
\noindent Finally, we prove Theorem \ref{thm:binary expansions are at least as compressible as as beta expansions} as a corollary of Theorem \ref{thm:relatively computable multivalued functions and complexity} and Lemma \ref{lem:upper bound on the cardinality of the beta to 2 conversion multifunction}. 

\begin{corollary}\label{cor:binary expansions are at least as compressible as as beta expansions}
    Let $\beta \in (1,2)$. Then,
		\begin{equation}
			0 \leq \underline\Delta_\beta(\xf) \leq \bar\Delta_\beta(\xf),
		\end{equation}
		for all $\xf \in \Omega_\beta$.
\end{corollary}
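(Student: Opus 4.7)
The plan is to deduce the corollary as a direct consequence of Theorem~\ref{thm:relatively computable multivalued functions and complexity} applied to the multivalued function $f_{\beta \to 2}$, combined with the cardinality bound of Lemma~\ref{lem:upper bound on the cardinality of the beta to 2 conversion multifunction}. The inequality $\underline{\Delta}_\beta(\xf) \leq \bar{\Delta}_\beta(\xf)$ is automatic since $\liminf \leq \limsup$, so the entire content lies in the lower bound $\underline{\Delta}_\beta(\xf) \geq 0$. Unfolding the definition, this reduces to showing
\begin{equation}
    K[\yf_{1:n}|\beta] \leq K[\xf_{1:\nlogbeta}|\beta] + \bigOx[1]{\nlogbeta},
\end{equation}
where $\yf$ denotes the greedy binary expansion of $s = \sum_{i=1}^\infty \xf_i \beta^{-i}$.

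The first step is to verify that $\yf_{1:n}$ lies in $f_{\beta \to 2}(\xf_{1:\nlogbeta})$. Here the derivation leading to Definition~\ref{def:multivalued function converting from base beta to base 2} is reused: any $\beta$-expansion $\xf$ of $s$ forces $s \in I(\xf_{1:\nlogbeta})$, and any $n$-prefix of a binary expansion of $s$ satisfies $\sum_{i=1}^n \yf_i 2^{-i} \in [s - 2^{-n}, s] \subseteq J(\xf_{1:\nlogbeta})$. Converting this exact membership into the approximate membership $\in_n$ used in Definition~\ref{def:multivalued function converting from base beta to base 2} gives $\yf_{1:n} \in f_{\beta \to 2}(\xf_{1:\nlogbeta})$.

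The second step is to invoke Theorem~\ref{thm:relatively computable multivalued functions and complexity}. With $x := \xf_{1:\nlogbeta}$ and $y := \yf_{1:n} \in f_{\beta \to 2}(x)$, this theorem yields
\begin{equation}
    K[\yf_{1:n}|\beta] \leq K[\xf_{1:\nlogbeta}|\beta] + \log_2 \#f_{\beta \to 2}(x) + 2 \log \log \#f_{\beta \to 2}(x) + \bigOx[1]{\nlogbeta}.
\end{equation}
Lemma~\ref{lem:upper bound on the cardinality of the beta to 2 conversion multifunction} bounds $\#f_{\beta \to 2}(x)$ by the $\beta$-dependent constant $(\beta-1)^{-1} + 3$, so the middle two terms absorb into a constant and the whole remainder becomes $\bigOx[1]{\log n}$. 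Dividing by $n$ and taking the limit inferior,
\begin{equation}
    \underline{\Delta}_\beta(\xf) = \liminf_{n \to \infty} \frac{K[\xf_{1:\nlogbeta}|\beta] - K[\yf_{1:n}|\beta]}{n} \geq -\lim_{n \to \infty}\frac{\bigOx[1]{\log n}}{n} = 0.
\end{equation}

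The only non-mechanical step, and the one I expect to carry the subtlety, is the passage from the set-theoretic inclusion $\Sigma_2(s,n) \subseteq \hat f_{\beta \to 2}(\xf_{1:\nlogbeta})$ to its computable counterpart $\Sigma_2(s,n) \subseteq f_{\beta \to 2}(\xf_{1:\nlogbeta})$, because it requires careful tracking of the one-sided tolerance of the relation $\leq_n$ and a check that the binary partial sum remains sufficiently far inside the endpoints of $J(\xf_{1:\nlogbeta})$ for the approximate test to succeed. Everything else is a direct chaining of Theorem~\ref{thm:relatively computable multivalued functions and complexity} and Lemma~\ref{lem:upper bound on the cardinality of the beta to 2 conversion multifunction}.
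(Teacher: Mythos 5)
Your proposal is correct and follows essentially the same route as the paper: establish $\yf_{1:n} \in f_{\beta\to 2}(\xf_{1:\nlogbeta})$ (the paper asserts the stronger inclusion $\Sigma_2(s,n)\subseteq f_{\beta\to 2}(\xf_{1:\nlogbeta})$ from the construction of $f_{\beta\to 2}$), then chain Theorem \ref{thm:relatively computable multivalued functions and complexity} with the constant cardinality bound of Lemma \ref{lem:upper bound on the cardinality of the beta to 2 conversion multifunction} and divide by $n$. The subtlety you flag about passing from exact membership in $J(\xf_{1:\nlogbeta})$ to the approximate relation $\in_n$ is real but is treated just as implicitly in the paper, so there is no gap relative to the paper's own standard of rigor.
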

\begin{proof}
    Let $\beta \in (1,2)$, $\xf \in \Omega_\beta$, and $s := \sum_{i=1}^\infty \xf_i \beta^{-i}$. By Lemma \ref{lem:upper bound on the cardinality of the beta to 2 conversion multifunction}, $f_{\beta \to 2} : \{0,1\}^\ast  \rightrightarrows \{0,1\}^\ast$ satisfies
    \begin{equation}
        \#f(\xf_{1:\nlogbeta}) \leq \frac{1}{\beta-1} + 3.
    \end{equation}
    Moreover, as $f_{\beta \to 2}$ is computable relatively to $\beta$, Theorem \ref{thm:relatively computable multivalued functions and complexity} yields that 
    \begin{align}
        K[y|\beta] &\leq K[\xf_{1:\nlogbeta}|\beta] + \log \left(\frac{1}{\beta-1} + 3\right) + \log\log \left(\frac{1}{\beta-1} + 3\right) + \bigOx[1]{n}\\
        &= K[\xf_{1:\nlogbeta}|\beta] + \bigOx[1]{n},
    \end{align}
    for all $y \in f_{\beta \to 2}(x) \supseteq \Sigma_2(s,n)$. In particular, if $\yf \in \{0,1\}^\N$ is the greedy binary expansion of $s$, then 
    \begin{equation}
        K[\yf_{1:n}|\beta] \leq K[\xf_{1:\nlogbeta}|\beta] + \bigOx[1]{n}.
    \end{equation}
    Finally, this yields
    \begin{equation}
        \bar{\Delta}_{\beta}(\xf) \geq \underline{\Delta}_{\beta}(\xf) = \liminf_{n\to \infty} \frac{K[\xf_{1:\nlogbeta}|\beta] - K[\yf_{1:n}|\beta]}{n} \geq \liminf_{n\to \infty}\bigOx[n^{-1}]{n} = 0.
    \end{equation}
\end{proof}

\endgroup

\begingroup
\let\clearpage\relax

\section{\texorpdfstring{Upper bounds on the complexity of $\beta$-expansions}{}}

\label{sec:upper bound on the complexity of beta expansions}

In this section, we establish the non-trivial upper bounds on $\underline \Delta_\beta$ and $\bar \Delta_\beta$, namely we prove Theorems \ref{thm:upper bound on the relative compressibility of beta expansions for almost all beta in 1 sqrt 2} and \ref{thm:upper bound on the relative compressibility of beta expansions for algebraic beta}. 

The base of the incoming proofs is the construction of a multivalued function $f_{2 \to \beta}$ that is computable relatively to $\beta$, which is built to be a kind of inverse of the function $f_{\beta \to 2}$ introduced in previous section. The heuristic guiding the construction of $f_{2 \to \beta}$ is exactly analoguous to the one we used for $f_{\beta \to 2}$, therefore we skip this part and directly give the definition of $f_{2 \to \beta}$.
\begin{definition}\label{def:multivalued function converting from base 2 to base beta}
    Let $f_{2 \to\beta} : \{0,1\}^\ast \rightrightarrows \{0,1\}^\ast$ be the multivalued function defined as 
\begin{equation}\label{eq:main:def:multivalued function converting from base 2 to base beta}
    f_{2 \to \beta}(x) := \left\{ y \in \{0,1\}^{\nlogbeta} :  \sum_{i=1}^{\nlogbeta} y_i \beta^{-i} \in_n J(x)\right\},
\end{equation}
for all $x \in \{0,1\}^\ast$, where $n := |x|$ and
\begin{equation}\label{eq:J x:def:multivalued function converting from base 2 to base beta}
    J(x) := \left[\sum_{i=1}^{n} x_i 2^{-i} - 2^{-n},  \sum_{i=1}^{n} x_i 2^{-i}+ 2^{-n} \right].
\end{equation}
\end{definition}
\noindent By the same arguments that led to the definition of $f_{\beta \to 2}$, one can establish that
\begin{equation}\label{eq:multivalued function converting from base 2 to base beta is a supset of the beta prefixes}
    \Sigma_\beta(s,\nlogbeta) \subseteq f_{2 \to \beta}(\xf_{1:n}),
\end{equation}
for all $s \in [0,1]$ and $\xf \in \Sigma_2(s)$. The main difference with previous section, is that $\#f_{2 \to \beta}(x)$ is not necessarily bounded by a constant for all $x \in \{0,1\}^\ast$. For example, it follows from \cite[Theorem 1.5]{feng2010GrowthRatebetaexpansions} that there exists $\alpha > 0$ such that 
\begin{equation}
    \# \Sigma_{\beta}(s,n) \geq 2^{\alpha n},
\end{equation}
for all $\beta \in \left(1,\frac{1+\sqrt{5}}{2}\right)$, $s \in [0,1]$ and from some $n \in \N$ onward. Therefore, it follows from (\ref{eq:multivalued function converting from base 2 to base beta is a supset of the beta prefixes}) that
\begin{equation}
    \#f_{2 \to \beta}(x) \geq 2^{\alpha n(\beta)}
\end{equation}
for all $x \in \{0,1\}^n$, $\beta \in \left(1,\frac{1+\sqrt{5}}{2}\right)$ and from some $n \in \N$ onward. Also note that a trivial bound for $\#f(x)$ is is $2^{n(\beta)}$, for all $x \in \{0,1\}^\ast$ with $n := \len x$, since $f(x)$ is a subset of $\{0,1\}^{n(\beta)}$. However, this trivial bound only allows to recover the trivial upper bound on $\underline \Delta_\beta$ and $\bar \Delta_\beta$ established in Lemma \ref{lem:trivial upper bound on the relative compressibility of beta expansions}. In the sequel, we show that we can improve this trivial upper bound in two cases: for almost all $\beta \in (1,2)$, and for algebraic $\beta \in (1,2)$.


\subsection{\texorpdfstring{Upper bound for almost all $\beta \in (1,2)$}{}}

In this part, we consider $\R$ as equipped with the Borel $\sigma$-algebra $\BB$ generated from the Euclidean topology. We denote by $\lambda$ the Lebesgue measure on $\R$. Fix $\beta \in (1,2)$, $x \in \{0,1\}^\ast$ and let $n := \len x$. We are interested at establishing an upper bound on $\#f_{2\to \beta}(x)$, with $f_{2\to \beta}(x)$ being defined by (\ref{eq:main:def:multivalued function converting from base 2 to base beta}). We can reformulate the definition of $f_{2\to \beta}(x)$ in terms of an integral over a certain domain, with respect to a certain measure. For $t \in \R$, we denote by $\delta_t$ the Dirac measure centered in $t$, formally defined as
\begin{equation}\label{eq:0:introduction to the formalism of Bernoulli convolution}
    \delta_t(A) = \begin{cases}
        1 \text{ if } t \in A,\\
        0 \text{ if } t \notin A,
    \end{cases}
\end{equation}
for all $A \in \BB$. Hence, for all $y \in \{0,1\}^{\nlogbeta}$, we have
\begin{equation}\label{eq:1:introduction to the formalism of Bernoulli convolution}
    \delta_{\sum_{i=1}^{\len y} y_i \beta^{-i}}(J(x)^{(n)}) = \begin{cases}
        1 \text{ if } \sum_{i=1}^{\len y} y_i \beta^{-i} \in J(x)^{(n)},\\
        0 \text{ if } \sum_{i=1}^{\len y} y_i \beta^{-i} \notin J(x)^{(n)},
    \end{cases}
\end{equation}
By summing over all $y \in \{0,1\}^{\nlogbeta}$, and by definition of $f_{\beta \to 2}(x)$ in (\ref{eq:main:def:multivalued function converting from base 2 to base beta}), we get
\begin{equation}\label{eq:2:introduction to the formalism of Bernoulli convolution}
    \sum_{y \in \{0,1\}^{\nlogbeta}}\delta_{\sum_{i=1}^{\len y} y_i \beta^{-i}}(J(x)^{(n)}) \geq \#f(x).
\end{equation}
This framework matches exactly that of the Bernoulli convolution, introduced in \cite[Section 6]{jessenDistributionFunctionsRiemann}. Define $C(I_\beta)$ to be the set of continuous functions $f : I_\beta \to \R$, and $\MM$ to be the set of regular Borel measures on $I_\beta$ (see Appendix \ref{sec:app:weak star limits}). Consider the sequence of regular Borel measures $(\nu_{\beta,m})_{m \in \N}$ defined by 
\begin{equation}\label{eq:3:introduction to the formalism of Bernoulli convolution}
    \nu_{\beta,m} := \frac{1}{2^m} \sum_{u \in \{0,1\}^{m}} \delta_{\sum_{i=1}^{m} u_i \beta^{-i}}, \ \forall m \in \N.
\end{equation}
Note that (\ref{eq:2:introduction to the formalism of Bernoulli convolution}) can be reformulated as
\begin{equation}\label{eq:4:introduction to the formalism of Bernoulli convolution}
    2^{\nlogbeta} \nu_{\beta,\nlogbeta}(J(x)) \geq \#f_{\beta \to 2}(x).
\end{equation}
The Bernoulli convolution $\nu_\beta$ is defined as being the weak limit in $\MM$ of the sequence $(\nu_{\beta,m})_{m \in \N}$, i.e., the unique regular Borel measure satisfies
\begin{equation}\label{eq:5:introduction to the formalism of Bernoulli convolution}
    \int_{I_\beta} f d\nu_\beta = \lim_{m \to \infty}\int_{I_\beta} f d\nu_{\beta,m},\ \forall f \in C(I_\beta).
\end{equation}
The fact that there is indeed such a regular Borel measure $\mu$ is a standard result of measure theory, see Appendix \ref{sec:app:weak star limits} for greater details. The Bernoulli convolution has been widely studied in the literature. In particular, it was shown in \cite{solomyak1995random} that for $\lambda$-almost all $\beta \in (1,2)$, $\nu_\beta$ is absolutely continuous with respect to the Lebesgue measure $\lambda$, i.e., for all $A \in \BB$, $\lambda(A) = 0 \Rightarrow \nu_\beta(A) = 0$. In particular, the Radon-Nykodym theorem \cite[Section 31, Theorem B]{halmos1950MeasureTheory} states that if a measure $\nu$ is absolutely continuous with respect to the Lebesgue measure $\lambda$, there exists a function $h : I_\beta \to \R$, called \newname{Radon-Nikodym derivative of $\nu$}, such that 
\begin{equation}
    \nu\left([a,b]\right) = \int_{a}^{b} h(x) dx,
\end{equation}
for all $a, b \in I_\beta$, $a < b$. We can exploit this to study the asymptotic behavior of $\#f_{2 \to \beta}(x)$.
\begin{theorem}\label{thm:asymptotic behavior of card f x when the Bernoulli convolution is absolutely continuous}
    Let $\beta \in (1,2)$ such that $\nu_\beta$ is absolutely continuous with respect to $\lambda$, and let $h_\beta : I_\beta \to \R$ be the Radon-Nikodym derivative of $\nu_\beta$. Then,
    \begin{equation}
        \limsup_{n\to\infty} 2^{-n\log_\beta(2/\beta)} \#f_{2 \to \beta}(\yf_{1:n}) \leq 2\left(4 + \frac{1}{\beta-1}\right) h_\beta\left(\sum_{i=1}^\infty \yf_i 2^{-i}\right),
    \end{equation}
    for all $\yf \in \{0,1\}^\N$.
\end{theorem}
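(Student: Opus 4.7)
The plan is to invoke the Bernoulli-convolution estimate obtained as in (\ref{eq:2:introduction to the formalism of Bernoulli convolution})-(\ref{eq:4:introduction to the formalism of Bernoulli convolution}), applied to $f_{2\to\beta}$ rather than $f_{\beta\to 2}$, namely
\begin{equation*}
\#f_{2\to\beta}(\yf_{1:n}) \le 2^{\nlogbeta}\,\nu_{\beta,\nlogbeta}\!\left(J(\yf_{1:n})^{(n)}\right),
\end{equation*}
which follows directly from Definition \ref{def:multivalued function converting from base 2 to base beta}, (\ref{eq:finite precision membership up to 2 -n implies membership to a superset of size 2 -n more}), and a Dirac-counting argument completely analogous to the one giving (\ref{eq:4:introduction to the formalism of Bernoulli convolution}). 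The task then reduces to upper-bounding $\nu_{\beta,\nlogbeta}(J(\yf_{1:n})^{(n)})$ by a quantity of order $2^{-n}\,h_\beta(s)$ with $s := \sum_{i=1}^\infty \yf_i\,2^{-i}$, and absorbing the prefactor $2^{\nlogbeta}$.

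The first step is to pass from the discrete $\nu_{\beta,m}$ to the limit measure $\nu_\beta$. Writing $Y_m := \sum_{i=1}^m X_i \beta^{-i}$ and $Y := Y_m + T_m$ for iid Bernoulli$(1/2)$ digits $X_i$, the tail $T_m = \sum_{i>m} X_i \beta^{-i}$ lies deterministically in $[0,\beta^{-m}/(\beta-1)]$, so for every interval $[a,b]$,
\begin{equation*}
\nu_{\beta,m}([a,b]) \;\le\; \nu_\beta\!\left([a,\, b+\beta^{-m}/(\beta-1)]\right).
\end{equation*}
Taking $m=\nlogbeta$ and using $\beta^{-m}\le 2^{-n}$, together with the explicit shape $J(\yf_{1:n})^{(n)}=[s_n-2\cdot 2^{-n},\,s_n+2\cdot 2^{-n}]$ (where $s_n := \sum_{i=1}^n \yf_i 2^{-i}$), the target is controlled by $\nu_\beta(I_n)$, where $I_n$ is an interval of length $(4+(\beta-1)^{-1})\,2^{-n}$ whose center tends to $s$.

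The second step uses the Radon-Nikodym representation $\nu_\beta(I_n) = \int_{I_n} h_\beta(t)\,dt$ together with the Lebesgue differentiation theorem to conclude
\begin{equation*}
\limsup_{n\to\infty} 2^n\,\nu_\beta(I_n) \;\le\; \left(4+\tfrac{1}{\beta-1}\right) h_\beta(s),
\end{equation*}
where $h_\beta(s)$ is interpreted as the everywhere-defined upper derivative $\limsup_{r\to 0^+}\nu_\beta([s-r,s+r])/(2r)$ (which agrees $\lambda$-a.e.\ with the Radon-Nikodym derivative), so that the pointwise evaluation makes sense for every $\yf$. Finally, the prefactor is absorbed via the ceiling estimate $\nlogbeta \le n\log_\beta(2)+1$, giving $2^{\nlogbeta-n}\le 2\cdot 2^{n\log_\beta(2/\beta)}$, and assembling the three bounds produces exactly the claimed inequality.

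The main obstacle is the measure-theoretic one of controlling $h_\beta$ at the specific point $s$, since $h_\beta$ is a priori defined only $\lambda$-almost everywhere; without fixing a representative one cannot even state a pointwise bound, and this is why the upper-derivative interpretation above is needed. A secondary bookkeeping point is tracking the three successive enlargements of the base interval $J(\yf_{1:n})$ carefully: by $2^{-n}$ on each side from the approximate membership relation $\in_n$ (yielding $J(\yf_{1:n})^{(n)}$), and then a further $2^{-n}/(\beta-1)$ on the right from the Bernoulli tail bound; together these produce the final factor $4 + 1/(\beta-1)$, the outer factor $2$ in the theorem coming solely from the ceiling in the definition of $\nlogbeta$.
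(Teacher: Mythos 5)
Your proposal follows the same skeleton as the paper's proof -- enlarge $J(\yf_{1:n})$ to $J(\yf_{1:n})^{(n)}$, count Dirac masses to get $2^{-\nlogbeta}\#f_{2\to\beta}(\yf_{1:n})\leq \nu_{\beta,m}(\cdot)$, pass to $\nu_\beta$, apply Radon--Nikodym and a differentiation argument, and absorb the ceiling in $\nlogbeta$ -- but it differs in one step. Where you use the monotone coupling $Y=Y_m+T_m$ with $T_m\in[0,\beta^{-m}/(\beta-1)]$ to get $\nu_{\beta,m}([a,b])\leq\nu_\beta([a,b+\beta^{-m}/(\beta-1)])$ directly at $m=\nlogbeta$, the paper instead fixes $n$, extends each prefix in $A(\yf_{1:n})$ to all $2^{m-\nlogbeta}$ continuations of length $m$, lets $m\to\infty$, and invokes the weak-convergence upper-semicontinuity lemma (Lemma \ref{app:lem:sequence of weakly convergent measures can be upper bounded by the limit on compact sets}) on the closed interval. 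Your route is more elementary and avoids the appendix on weak-star limits entirely; its only additional requirement is the identification of the weak limit $\nu_\beta$ with the law of the infinite sum $\sum_i X_i\beta^{-i}$, which follows from almost-sure convergence and dominated convergence and should be stated. Both routes land on the same interval of length $\left(4+\tfrac{1}{\beta-1}\right)2^{-n}$, and your final bookkeeping reproduces the stated constant.

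One caveat concerns your patch for the pointwise evaluation of $h_\beta$. You are right that this is the delicate point (the paper silently writes $\lim_n|I_n|^{-1}\int_{I_n}h_\beta=h_\beta(\lim_n a_n)$, which is only guaranteed at Lebesgue points of $h_\beta$). However, replacing $h_\beta(s)$ by the symmetric upper derivative $\limsup_{r\to0^+}\nu_\beta([s-r,s+r])/(2r)$ does not quite deliver the stated constant for arbitrary $\yf$: the intervals $I_n$ contain $s$ but are neither centered at $s$ nor symmetric, so the smallest centered interval $[s-r_n,s+r_n]\supseteq I_n$ has $2r_n/|I_n|>1$ (one checks $2r_n/|I_n|\leq 2\max\{3,\,2+\tfrac{1}{\beta-1}\}/(4+\tfrac{1}{\beta-1})$), and the eccentricity factor inflates the constant beyond $4+\tfrac{1}{\beta-1}$. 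So with that representative you prove the theorem only up to a bounded multiplicative constant -- harmless for Corollary \ref{cor:upper bound on the relative compressibility of beta expansions for almost all beta in 1 2}, where only the exponential rate matters, but not literally the stated inequality. To get the exact statement for every $\yf$ you would need either to define $h_\beta(s)$ as an upper derivative adapted to the actual shrinking family $I_n$, or to restrict to Lebesgue points of $h_\beta$ (i.e.\ $\lambda$-almost every $s$), which is the reading under which the paper's own proof is correct.
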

\begin{proof}
    The proof follows that of \cite[Lemma 3.4]{kempton2013counting}. Let $\beta \in (1,2)$ such that $\nu_\beta$ is absolutely continuous with respect to $\lambda$, let $h_\beta : I_\beta \to \R$ be the Radon-Nikodym derivative of $\nu_\beta$, and fix $\yf \in \{0,1\}^\N$. For $n \in \N$, define $a_n$ and $b_n$ so that $J(\yf_{1:n}) = [a_n,b_n]$, according to (\ref{eq:J x:def:multivalued function converting from base 2 to base beta}). By definition,
    \begin{align}
        f_{2 \to \beta}(\yf_{1:n}) &=  \left\{ y \in \{0,1\}^{\nlogbeta} :  \sum_{i=1}^{\nlogbeta} y_i \beta^{-i} \in_{n} J(\yf_{1:n})\right\}\\
        &\subseteq \left\{ y \in \{0,1\}^{\nlogbeta} :  \sum_{i=1}^{\nlogbeta} y_i \beta^{-i} \in J(\yf_{1:n})^{(n)}\right\} =: A(\yf_{1:n}).
    \end{align}
    We study the cardinality of $A(\yf_{1:n})$, which turns out to be an upper bound to the cardinality of $f_{2 \to \beta}(\yf_{1:n})$. Fix $n \in \N$, and define, for all $m \geq \nlogbeta$, then set 
    \begin{equation}
        A_{m,n} := \{ z \in \{0,1\}^{m} : z_{1:\nlogbeta} \in A(\yf_{1:n}) \}.
    \end{equation}
    We make the following two remarks:
    \begin{enumerate}[label=(\alph*)]
        \item For all $y \in A(\yf_{1:n})$, there are $2^{m-\nlogbeta}$ elements in $A_{m,n}$ that are suffixes of $y$. This implies that
        \begin{equation}
            \# A_{m,n} \geq 2^{m - \nlogbeta} \# A(\yf_{1:n}).
        \end{equation}
        \item For every $z \in A_{m,n}$, since $z_{1:\nlogbeta} \in A(\yf_{1:n})$, we have that
        \begin{align}
            a_n - 2^{-n} &\overset{(a)}{\leq} \sum_{i=1}^{\nlogbeta} z_i \beta^{-i} \leq \sum_{i=1}^{m} z_i \beta^{-i} \leq \sum_{i=1}^{\infty} z_i \beta^{-i}\\
            & = \sum_{i=1}^{\nlogbeta} z_i \beta^{-i} + \sum_{i=\nlogbeta+1}^{\infty} z_i \beta^{-i}\\
            & \leq \sum_{i=1}^{\nlogbeta} z_i \beta^{-i} + \sum_{i=\nlogbeta+1}^{\infty} \beta^{-i}\\
            & = \sum_{i=1}^{\nlogbeta} z_i \beta^{-i} + \frac{\beta^{-\nlogbeta}}{\beta-1}\\
            & \overset{(b)}{\leq} b_n + 2^{-n} + \frac{\beta^{-\nlogbeta}}{\beta-1}\\
            & \leq b_n + 2^{-n} + \frac{2^{-n}}{\beta-1},
        \end{align}
        where (a) and (b) follow from the definition of $A(\yf_{1:n})$ and (\ref{eq:definition of superset of size 2 -n more}). Hence, for every $z \in A_{m,n}$,
        \begin{equation}
            \sum_{i=1}^{m} z_i \beta^{-i} \in \left[a_n - 2^{-n}, b_n + 2^{-n} + \frac{2^{-n}}{\beta-1}\right],
        \end{equation}
        which translates to
        \begin{equation}
            \#A_{m,n} \leq 2^m\nu_{\beta,m}\left(\left[a_n - 2^{-n}, b_n + 2^{-n} + \frac{2^{-n}}{\beta-1}\right]\right).
        \end{equation}
    \end{enumerate}
    Combining these two remarks, we get that 
    \begin{equation}
        2^{-\nlogbeta} \# A(\yf_{1:n}) \leq \nu_{\beta,m}\left(\left[a_n - 2^{-n}, b_n + 2^{-n} + \frac{2^{-n}}{\beta-1}\right]\right),
    \end{equation}
    which then translates to 
    \begin{equation}
        2^{-\nlogbeta} \# f_{2 \to \beta}(\yf_{1:n}) \leq \nu_{\beta,m}\left(\left[a_n - 2^{-n}, b_n + 2^{-n} + \frac{2^{-n}}{\beta-1}\right]\right),
    \end{equation}
    since $f_{2 \to \beta}(\yf_{1:n}) \subseteq A(\yf_{1:n})$.
    By taking the limit superior when $m \to \infty$, Lemma \ref{app:lem:sequence of weakly convergent measures can be upper bounded by the limit on compact sets} delivers
    \begin{equation}\label{eq:0:proof:thm:asymptotic behavior of card f x when the Bernoulli convolution is absolutely continuous}
        2^{-\nlogbeta} \# f_{2 \to \beta}(\yf_{1:n}) \leq \nu_{\beta}\left(\left[a_n - 2^{-n}, b_n + 2^{-n} + \frac{2^{-n}}{\beta-1}\right]\right).
    \end{equation}
    We assumed that $\nu_\beta$ is absolutely continuous and of Radon-Nikodym derivative $h_\beta$. Hence, by the Radon-Nikodym theorem, we have
    \begin{equation}\label{eq:1:proof:thm:asymptotic behavior of card f x when the Bernoulli convolution is absolutely continuous}
        \nu_\beta\left(\left[a_n - 2^{-n}, b_n + 2^{-n} + \frac{2^{-n}}{\beta-1}\right]\right) = \int_{a_n - 2^{-n}}^{b_n + 2^{-n} + \frac{2^{-n}}{\beta-1}} h_\beta(x) dx.
    \end{equation}
    By definition of $J(\yf_{1:n}) = [a_n,b_n]$, one has $b_n - a_n = 2\cdot 2^{-n}$, and $\lim_{n \to \infty} a_n = \lim_{n \to \infty} b_n = \sum_{i=1}^\infty \yf_i 2^{-i}$. Incorporating this fact in (\ref{eq:1:proof:thm:asymptotic behavior of card f x when the Bernoulli convolution is absolutely continuous}) yields
    \begin{align}
        \lim_{n \to \infty} \frac{\nu_\beta\left(\left[a_n- 2^{-n}, b_n + 2^{-n} + \frac{2^{-n}}{\beta-1}\right]\right)}{ 2^{-n}\left(4+\frac{1}{\beta-1}\right)}& = \lim_{n \to \infty} \frac{\nu_\beta\left(\left[a_n - 2^{-n}, b_n + 2^{-n} + \frac{2^{-n}}{\beta-1}\right]\right)}{b_n + 2^{-n} + \frac{2^{-n}}{\beta-1} - \left(a_n - 2^{-n}\right)}\\
        & = \lim_{n \to \infty} \frac{\int_{a_n - 2^{-n}}^{b_n + 2^{-n} + \frac{2^{-n}}{\beta-1}} h_\beta(x) dx}{b_n + 2^{-n} + \frac{2^{-n}}{\beta-1} - a_n + 2^{-n}}\\
        &= h_\beta\left( \lim_{n \to \infty} a_n \right)\\
        &= h_\beta\left( \sum_{i=1}^\infty \yf_i 2^{-i}\right).
    \end{align}
    Combining this result with (\ref{eq:0:proof:thm:asymptotic behavior of card f x when the Bernoulli convolution is absolutely continuous}) delivers
    \begin{align}
        \limsup_{n\to \infty}2^{-(n(\beta)-n)} \# f_{2 \to \beta}(\yf_{1:n}) &\leq  \limsup_{n \to \infty}\frac{\nu_\beta\left(\left[a_n - 2^{-n}, b_n + 2^{-n} + \frac{2^{-n}}{\beta-1}\right]\right)}{ 2^{-n}}\\
        &= \lim_{n \to \infty}\frac{\nu_\beta\left(\left[a_n - 2^{-n}, b_n + 2^{-n} + \frac{2^{-n}}{\beta-1}\right]\right)}{ 2^{-n}}\\
        &= \left(4+ \frac{1}{\beta-1}\right)h_\beta\left( \sum_{i=1}^\infty \yf_i \beta^{-i}\right).
    \end{align}
    We conclude the proof by noting that 
    \begin{equation}
        2^{-(n(\beta)-n)} \geq 2^{-(n\log_\beta(2) + 1 -n)} = 2 \cdot 2^{- n \log_\beta(2/\beta)}, \ \forall n \in \N. 
    \end{equation}
\end{proof}
\noindent As a direct Corollary, we get Theorem \ref{thm:upper bound on the relative compressibility of beta expansions for almost all beta in 1 sqrt 2}.
\begin{corollary}\label{cor:upper bound on the relative compressibility of beta expansions for almost all beta in 1 2}
    For almost all $\beta \in (1,2)$,
		\begin{equation}
			0 \leq \underline\Delta_\beta(\xf) \leq \bar\Delta_\beta(\xf) \leq \log_\beta\left(\frac{2}{\beta}\right),
		\end{equation}
		for all $\xf \in \Omega_\beta$.
\end{corollary}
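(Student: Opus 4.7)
The plan is to chain Theorem \ref{thm:asymptotic behavior of card f x when the Bernoulli convolution is absolutely continuous} with Theorem \ref{thm:relatively computable multivalued functions and complexity}, exactly mirroring the structure of the proof of Corollary \ref{cor:binary expansions are at least as compressible as as beta expansions}. First I would invoke the result of \cite{solomyak1995random} recalled just above the corollary: for $\lambda$-almost all $\beta \in (1,2)$, the Bernoulli convolution $\nu_\beta$ is absolutely continuous with respect to Lebesgue measure and therefore admits a Radon--Nikodym derivative $h_\beta$. I then fix such a $\beta$, pick any $\xf \in \Omega_\beta$, set $s := \sum_{i=1}^\infty \xf_i \beta^{-i}$, and let $\yf$ denote the greedy binary expansion of $s$. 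By (\ref{eq:multivalued function converting from base 2 to base beta is a supset of the beta prefixes}), we have $\xf_{1:\nlogbeta} \in f_{2 \to \beta}(\yf_{1:n})$ for every $n \in \N$, which is exactly the inclusion needed to feed the multivalued-function machinery.

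Next, Theorem \ref{thm:asymptotic behavior of card f x when the Bernoulli convolution is absolutely continuous} delivers a constant $C(\beta,s)$, finite whenever $h_\beta(s)<\infty$, such that
\begin{equation}
    \#f_{2\to\beta}(\yf_{1:n}) \leq C(\beta,s)\, 2^{n\log_\beta(2/\beta)}
\end{equation}
for all sufficiently large $n$. Since $f_{2\to\beta}$ is computable relatively to $\beta$, Theorem \ref{thm:relatively computable multivalued functions and complexity} then yields
\begin{equation}
    K[\xf_{1:\nlogbeta}|\beta] \leq K[\yf_{1:n}|\beta] + \log_2 \#f_{2\to\beta}(\yf_{1:n}) + 2\log\log \#f_{2\to\beta}(\yf_{1:n}) + \bigOx[1]{n}.
\end{equation}
Substituting the cardinality bound collapses the right-hand side to $K[\yf_{1:n}|\beta] + n\log_\beta(2/\beta) + \bigOx[1]{n}$. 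Subtracting $K[\yf_{1:n}|\beta]$, dividing by $n$, and taking $\limsup$ produces $\bar\Delta_\beta(\xf) \leq \log_\beta(2/\beta)$; the inequality $0 \leq \underline\Delta_\beta(\xf)$ is already contained in Corollary \ref{cor:binary expansions are at least as compressible as as beta expansions}.

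The one technical subtlety I anticipate is the exceptional set $\{s : h_\beta(s) = +\infty\}$, on which the constant $C(\beta,s)$ above blows up and the estimate becomes vacuous. Since $h_\beta \in L^1(\lambda)$ this set is Lebesgue-null, so the argument above immediately delivers the desired inequality for a full-measure family of $\xf \in \Omega_\beta$ without further effort. Extending the statement to \emph{every} $\xf \in \Omega_\beta$, as the corollary asserts, requires handling this null set separately, either by appealing to a pointwise-everywhere density estimate or by invoking that, after possibly passing to a slightly thinner full-measure set of $\beta$, the density $h_\beta$ can be taken essentially bounded. This measure-theoretic point is the only place where I expect to spend effort beyond the mechanical composition of the two preceding theorems.
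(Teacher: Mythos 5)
Your proposal follows the paper's proof essentially step for step: invoke Solomyak's theorem to restrict to those $\beta$ for which $\nu_\beta$ is absolutely continuous, apply Theorem \ref{thm:asymptotic behavior of card f x when the Bernoulli convolution is absolutely continuous} to obtain $\#f_{2\to\beta}(\yf_{1:n}) \leq C\, 2^{n\log_\beta(2/\beta)}$ for large $n$, note $\xf_{1:\nlogbeta} \in f_{2\to\beta}(\yf_{1:n})$, and feed the cardinality bound into the multivalued-function complexity theorem before taking the $\limsup$. The measure-theoretic caveat you flag (finiteness of $h_\beta(s)$, i.e.\ $s$ being a Lebesgue point of the density) is legitimate, but the paper does not address it either --- Theorem \ref{thm:asymptotic behavior of card f x when the Bernoulli convolution is absolutely continuous} is stated and applied for \emph{all} $\yf$ even though its proof rests on Lebesgue differentiation --- so you are, if anything, being more careful than the source.
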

\begin{proof}
    Let $\beta \in (1,2)$ such that $\nu_\beta$ is absolutely continuous with respect to $\lambda$, and let $h_\beta : I_\beta \to \R$ be the Radon-Nikodym derivative of $\nu_\beta$. Fix $\xf \in \Omega_\beta$, define $s := \sum_{i=1}^\infty \xf_i\beta^{-i} \in [0,1]$ and let $\yf$ be the greedy binary expansion of $s$. By Theorem \ref{thm:asymptotic behavior of card f x when the Bernoulli convolution is absolutely continuous}, we have
    \begin{equation}
        \limsup_{n\to\infty} 2^{-n\log_\beta(2/\beta)} \#f_{2 \to \beta}\yf_{1:n}) \leq 2\left(4 + \frac{1}{\beta-1}\right) h_\beta\left(\sum_{i=1}^\infty \yf_i 2^{-i}\right).
    \end{equation}
    Therefore, there exists $N > 0$ such that for all $n \geq N$, 
    \begin{equation}
        2^{-n\log_\beta(2/\beta)} \#f_{2 \to \beta}(\yf_{1:n}) \leq 2\left(4 + \frac{1}{\beta-1}\right) h_\beta\left(\sum_{i=1}^\infty \yf_i 2^{-i}\right) + 1 =: C,
    \end{equation}
    so 
    \begin{equation}
        \#f_{2 \to \beta}(\yf_{1:n}) \leq C \cdot 2^{n\log_\beta(2/\beta)},
    \end{equation}
    for large enough $n \in \N$. This results in 
    \begin{align}
        &\limsup_{n \to \infty} \frac{\log\#(f_{2 \to \beta}(\yf_{1:n}))}{n} \leq \log_\beta(2/\beta)\label{eq:log f:cor:upper bound on the relative compressibility of beta expansions for almost all beta in 1 2}\\
        &\limsup_{n \to \infty} \frac{\log\log\#(f_{2 \to \beta}(\yf_{1:n}))}{n} = 0.\label{eq:log log f:cor:upper bound on the relative compressibility of beta expansions for almost all beta in 1 2}
    \end{align}
    Finally, note that by definition of $f_{2 \to \beta}$, $\xf_{1:\nlogbeta} \in f_{2 \to \beta}(\yf_{1:n})$ for all $n \in \N$. Hence, by Theorem \ref{thm:computable multivalued functions and complexity}, one has
    \begin{align}
        K[\xf_{1:\nlogbeta}] &\leq K[\yf_{1:n}] + \log\#(f_{2 \to \beta}(\yf_{1:n})) + \log\log\#(f_{2 \to \beta}(\yf_{1:n})) + \bigOx[1]{n}.
    \end{align}
    By incorporation of (\ref{eq:log f:cor:upper bound on the relative compressibility of beta expansions for almost all beta in 1 2}) and (\ref{eq:log log f:cor:upper bound on the relative compressibility of beta expansions for almost all beta in 1 2}), we get
    \begin{equation}
        \bar\Delta_\beta(\xf) = \limsup_{n\to\infty} \frac{K[\xf_{1:\nlogbeta}]-K[\yf_{1:n}]}{n} \leq \log_\beta(2/\beta).
    \end{equation}
\end{proof}





\subsection{\texorpdfstring{Upper bound when $\beta$ is algebraic}{}}

In this section, we work with $\beta \in (1,2)$ algebraic, and we establish Theorem \ref{thm:upper bound on the relative compressibility of beta expansions for algebraic beta}. Recall that for a fixed number $\beta \in (1,2)$, we have defined the equivalence relationship $\sim_\beta$ over $\{0,1\}^\ast$ by 
\begin{equation}
    x \sim_\beta y \iff n := |x| = |y| \ \text{and} \ \sum_{i=1}^n x_i \beta^{-i} = \sum_{i=1}^n y_i \beta^{-i}, \ \forall x,y \in \{0,1\}^\ast.
\end{equation}
Moreover, we have defined the function $M_\beta : \{0,1\}^\ast \to \{0,1\}^\ast$ such that $M_\beta(x)$ is the lexicographically maximal element of the equivalence class $[x]_\beta := \{y \in \{0,1\}^\ast : x \sim_\beta y\}$, for all $x \in \{0,1\}^\ast$. The pivotal result allowing to establish Theorem \ref{thm:upper bound on the relative compressibility of beta expansions for algebraic beta} is a mere generalization of the famous ``separation lemma'' \cite[Lemma 1.51]{garsia1962ArithmeticPropertiesBernoulli}, which establishes a lower bound on the distance between the numbers of the form $\sum_{i=1}^n x_i \beta^{-i}$, $x \in \{0,1\}^n$, if $\beta \in (1,2)$ belongs to a certain class of algebraic numbers. We adapt the proof of this Lemma to get a similar result for $\beta \in (1,2)$ being any algebraic number. For an algebraic number $\beta \in (1,2)$, we denote by $L_\beta$ the leading coefficient of its minimal polynomial, and by $G_\beta$ the set of its Galois conjugates. We further define 
\begin{align}
    G_\beta^+ &:= \{ z \in G_\beta : |z| > 1\},\\
    G_\beta^1 &:= \{ z \in G_\beta : |z| = 1\},
\end{align}
and
\begin{equation}\label{eq:definition of pi beta - pi beta + and k beta}
    \Pi_\beta := \prod_{z \in G_\beta} |1-|z||, \ \ \ \ \ \ \Pi_\beta^+ := \prod_{z \in G_\beta^+} |z|, \ \ \ \ \ \ \text{and} \ \ \ \ \ \ k_\beta := \# G_\beta^1.
\end{equation}
The result is expressed as follows.
\begin{lemma}\label{lem:generalization of the garsia separation lemma}
    Let $\beta \in (1,2)$ be an algebraic number, $n \in \N$, and let $x,y \in \{0,1\}^n$ such that $x \not\sim_\beta y$. Then,
    \begin{equation}
       \left|\sum_{i=1}^n x_i \beta^{-i} - \sum_{i=1}^n y_i \beta^{-i}\right| \geq \frac{L_\beta\Pi_\beta}{\displaystyle  n^{k_\beta}\left(\beta L_\beta \Pi_\beta^+ \right)^n}\cdot
    \end{equation}
\end{lemma}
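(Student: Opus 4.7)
The plan is to translate the hypothesis $x \not\sim_\beta y$ into the nonvanishing of an integer polynomial at $\beta$, and then produce a lower bound by computing a resultant with the minimal polynomial $P_\beta$. Set $d_i := x_i - y_i \in \{-1,0,1\}$ and define
\[
Q(X) := \sum_{i=1}^n d_i X^{n-i} \in \Z[X],
\]
so that $\deg Q \leq n-1$ and
\[
Q(\beta) = \beta^n\left(\sum_{i=1}^n x_i \beta^{-i} - \sum_{i=1}^n y_i \beta^{-i}\right).
\]
The hypothesis $x \not\sim_\beta y$ is equivalent to $Q(\beta) \neq 0$, so the lemma reduces to lower-bounding $|Q(\beta)|/\beta^n$.

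Since $P_\beta$ is irreducible in $\Q[X]$ and does not divide $Q$ (because $Q(\beta) \neq 0$), the two polynomials are coprime in $\Q[X]$. Both lie in $\Z[X]$, so their resultant is a nonzero integer, hence of absolute value at least $1$. The standard product formula
\[
\mathrm{Res}(P_\beta, Q) = L_\beta^{\deg Q}\, Q(\beta) \prod_{z \in G_\beta} Q(z)
\]
then yields the key inequality
\[
|Q(\beta)| \geq \frac{1}{L_\beta^{\deg Q} \displaystyle\prod_{z \in G_\beta} |Q(z)|}.
\]

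To finish, bound $|Q(z)| \leq \sum_{j=0}^{n-1} |z|^j$ and split $G_\beta = G_\beta^+ \cup G_\beta^1 \cup G_\beta^-$, where $G_\beta^- := \{z \in G_\beta : |z| < 1\}$. The closed forms of the three geometric sums give $|Q(z)| \leq |z|^n/(|z|-1)$, $|Q(z)| \leq n$, and $|Q(z)| \leq 1/(1-|z|)$ respectively, which multiply to
\[
\prod_{z \in G_\beta} |Q(z)| \leq \frac{(\Pi_\beta^+)^n\, n^{k_\beta}}{\displaystyle\prod_{z \in G_\beta^+ \cup G_\beta^-} |1 - |z||}.
\]
Inserting this bound into the resultant estimate, using $L_\beta \geq 1$ together with $\deg Q \leq n - 1$ to replace $L_\beta^{\deg Q}$ by $L_\beta^{n-1} = L_\beta^n/L_\beta$, and then dividing through by $\beta^n$ produces the claimed lower bound on $|\Delta|$ in the case $k_\beta = 0$ (where the remaining denominator product equals $\Pi_\beta$). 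The case $k_\beta > 0$ is trivial: then $\Pi_\beta = 0$ by definition, so the right-hand side of the lemma vanishes and the inequality is automatic.

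The main conceptual step is the observation that the resultant identity delivers an integrality lower bound essentially for free; the remainder is routine bookkeeping with geometric series. The one point that requires a little care is noting that the leading factor $L_\beta$ in the numerator of the stated bound arises precisely from the tight estimate $\deg Q \leq n-1$ rather than the cruder $\deg Q \leq n$.
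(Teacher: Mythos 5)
Your proof is correct and follows essentially the same route as the paper's: encode the difference as an integer polynomial $Q$ of degree at most $n-1$, use irreducibility of $P_\beta$ to get $|\mathrm{Res}(P_\beta,Q)|\geq 1$, apply the product formula for the resultant, and bound $|Q(z)|$ by the three geometric-series estimates according to whether $|z|$ is less than, equal to, or greater than $1$. If anything you are slightly more careful than the paper on two points it glosses over, namely that $\deg Q$ may be strictly less than $n-1$ (handled via $L_\beta\geq 1$) and that the stated bound is vacuous when $k_\beta>0$ because $\Pi_\beta=0$.
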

As this result is tied to algebraic considerations that deviate a lot from the message of this paper, we postpone the proof to Appendix \ref{sec:app:separation lemma}. This Lemma has two important consequences, that are essential to establish \ref{thm:upper bound on the relative compressibility of beta expansions for algebraic beta}. First, this implies that we can lower bound the distance between numbers of the form $\sum_{i=1}^n (Mx)_i \beta^{-i}$, $x \in \{0,1\}^n$, and second, we can show that $M_\beta$ is computable.
\begin{corollary}\label{cor: separation lemma for lexicographically maximal sequences}
    Let $\beta \in (1,2)$ be an algebraic number and $n \in \N$. Then, for all $x,y \in M_\beta\left(\{0,1\}^n\right)$ such that $x \neq y$,
    \begin{equation}\label{eq:main:cor: separation lemma for lexicographically maximal sequences}
        \left|\sum_{i=1}^n x_i \beta^{-i} - \sum_{i=1}^n y_i \beta^{-i}\right| \geq \frac{L_\beta\Pi_\beta}{\displaystyle  n^{k_\beta}\left(\beta L_\beta \Pi_\beta^+ \right)^n}\cdot
    \end{equation}
\end{corollary}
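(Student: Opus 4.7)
The plan is to derive this corollary directly from Lemma \ref{lem:generalization of the garsia separation lemma} by showing that the hypothesis $x \neq y$ with $x, y \in M_\beta(\{0,1\}^n)$ already forces $x \not\sim_\beta y$, which is the hypothesis of the underlying lemma. So the entire content of the corollary is a one-step reduction.

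The key observation is that $M_\beta$ picks a \emph{canonical representative} of each equivalence class of $\sim_\beta$: by definition, $M_\beta(x)$ is the lexicographically maximal element of $[x]_\beta$, and the lexicographic order is a total order, so this maximum is unique. Consequently, for any $u \in \{0,1\}^n$, $M_\beta(u)$ depends only on the equivalence class $[u]_\beta$, and if $u \sim_\beta v$ then $M_\beta(u) = M_\beta(v)$.

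The argument then runs as follows. Suppose $x, y \in M_\beta(\{0,1\}^n)$ with $x \neq y$. Since $x \in M_\beta(\{0,1\}^n)$, there is some $u \in \{0,1\}^n$ with $x = M_\beta(u)$, and similarly $y = M_\beta(v)$ for some $v \in \{0,1\}^n$. In particular, $M_\beta(x) = M_\beta(M_\beta(u)) = M_\beta(u) = x$, since $M_\beta(u) \in [u]_\beta$; the same reasoning gives $M_\beta(y) = y$. If we had $x \sim_\beta y$, then $[x]_\beta = [y]_\beta$, and the uniqueness of the lexicographic maximum would force $x = M_\beta(x) = M_\beta(y) = y$, contradicting $x \neq y$. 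Therefore $x \not\sim_\beta y$.

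At this point, the hypotheses of Lemma \ref{lem:generalization of the garsia separation lemma} are satisfied for $x$ and $y$, and applying it immediately delivers
\begin{equation}
    \left|\sum_{i=1}^n x_i \beta^{-i} - \sum_{i=1}^n y_i \beta^{-i}\right| \geq \frac{L_\beta \Pi_\beta}{n^{k_\beta}\left(\beta L_\beta \Pi_\beta^+\right)^n},
\end{equation}
which is the desired bound. There is no real obstacle here; the only thing to verify carefully is the idempotence property $M_\beta \circ M_\beta = M_\beta$ and the fact that $M_\beta$ is constant on equivalence classes, both of which follow directly from the definition of $M_\beta$ as ``lexicographically maximal element of $[x]_\beta$.''
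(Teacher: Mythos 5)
Your proposal is correct and takes essentially the same approach as the paper: both reduce the corollary to showing $x \not\sim_\beta y$ and then invoke Lemma \ref{lem:generalization of the garsia separation lemma}, using the fact that $M_\beta$ is constant on equivalence classes so that two distinct elements of $M_\beta(\{0,1\}^n)$ cannot be $\sim_\beta$-equivalent. Your variant via idempotence of $M_\beta$ is an immaterial rephrasing of the paper's contradiction argument.
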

\begin{proof}
    Let $\beta \in (1,2)$ be an algebraic number, $n \in \N$ and $x,y \in M_\beta\left(\{0,1\}^n\right)$ such that $x \neq y$. The result is established by showing that $x \not\sim_\beta y$. Indeed, if we prove that $x \not\sim_\beta y$, then we can immediately apply Lemma \ref{lem:generalization of the garsia separation lemma} to get (\ref{eq:main:cor: separation lemma for lexicographically maximal sequences}).
    
    We now proceed to prove that $x \not\sim_\beta y$. Since $x,y \in M_\beta(\{0,1\}^n)$, there exists $u,v \in \{0,1\}^n$ such that $x = M_\beta(u)$ and $y = M_\beta(v)$. By definition of $M_\beta$, $x = M_\beta(u) \in [u]_\beta$ and $y = M_\beta(v) \in [v]_\beta$, so $x \sim_\beta u$ and $y \sim_\beta v$. We now show that $u \not\sim_\beta v$, yielding $x \not\sim_\beta y$. By means of contradiction, suppose that $u \sim_\beta v$. Then, $[u]_\beta = [v]_\beta$ and hence $x = M_\beta(u) = M_\beta(v) = y$. This is in contradiction with $x \neq y$.
\end{proof}
\begin{lemma}\label{lem:bounding the size of the output set of the multivalued function projected on lexicographically maximal sequences}
    Let $\beta \in (1,2)$ be an algebraic number and $n \in \N$. Then,
    \begin{equation}\label{eq:main:lem:bounding the size of the output set of the multivalued function projected on lexicographically maximal sequences}
        \# (M_\beta \circ f_{2 \to \beta}(\yf_{1:n})) \leq \frac{4}{L_\beta\Pi_\beta}\nlogbeta^{k_\beta}\left(L_\beta \Pi_\beta^+ \right)^{\nlogbeta}\cdot
    \end{equation}
\end{lemma}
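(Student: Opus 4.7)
The plan is to combine two facts: (i) the multivalued function $f_{2\to\beta}$ returns sequences whose $\beta$-representation value lies in a short interval around $\sum_i (\yf_{1:n})_i 2^{-i}$; and (ii) by Corollary~\ref{cor: separation lemma for lexicographically maximal sequences}, distinct elements of $M_\beta(\{0,1\}^{\nlogbeta})$ have $\beta$-representation values that are well separated. The bound on the cardinality will then follow from a standard interval-packing count.

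First, I would observe that by Definition~\ref{def:multivalued function converting from base 2 to base beta} together with (\ref{eq:finite precision membership up to 2 -n implies membership to a superset of size 2 -n more}), every $y \in f_{2 \to \beta}(\yf_{1:n})$ satisfies $\sum_{i=1}^{\nlogbeta} y_i \beta^{-i} \in J(\yf_{1:n})^{(n)}$. The crucial point is that $M_\beta$ preserves the $\beta$-representation value: by definition of $M_\beta$ one has $y \sim_\beta M_\beta(y)$, hence $\sum_{i=1}^{\nlogbeta} (M_\beta(y))_i \beta^{-i} = \sum_{i=1}^{\nlogbeta} y_i \beta^{-i}$. Consequently, every element of $M_\beta \circ f_{2 \to \beta}(\yf_{1:n})$ also has its $\beta$-representation value lying in $J(\yf_{1:n})^{(n)}$. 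From (\ref{eq:J x:def:multivalued function converting from base 2 to base beta}) and (\ref{eq:definition of superset of size 2 -n more}), the length of $J(\yf_{1:n})^{(n)}$ is exactly $4\cdot 2^{-n}$.

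Next, I would apply the separation estimate. Every element of $M_\beta \circ f_{2 \to \beta}(\yf_{1:n})$ belongs to $M_\beta(\{0,1\}^{\nlogbeta})$, so Corollary~\ref{cor: separation lemma for lexicographically maximal sequences} yields that the $\beta$-representations of any two distinct such elements differ by at least
\begin{equation}
    d := \frac{L_\beta \Pi_\beta}{\nlogbeta^{k_\beta}\left(\beta L_\beta \Pi_\beta^+\right)^{\nlogbeta}}.
\end{equation}
Since at most $1 + L/d$ points can fit in an interval of length $L$ with pairwise separation $\geq d$, taking $L = 4\cdot 2^{-n}$ gives
\begin{equation}
    \#\bigl(M_\beta \circ f_{2 \to \beta}(\yf_{1:n})\bigr) \leq \frac{4\cdot 2^{-n}\,\nlogbeta^{k_\beta}\left(\beta L_\beta \Pi_\beta^+\right)^{\nlogbeta}}{L_\beta \Pi_\beta} + 1 = \frac{4\cdot 2^{-n}\beta^{\nlogbeta}}{L_\beta \Pi_\beta}\,\nlogbeta^{k_\beta}\left(L_\beta \Pi_\beta^+\right)^{\nlogbeta} + 1.
\end{equation}
Using $\nlogbeta = \lceil n\log_\beta 2\rceil \leq n\log_\beta 2 + 1$ gives $\beta^{\nlogbeta}\leq \beta\cdot 2^n$, so $2^{-n}\beta^{\nlogbeta}\leq \beta < 2$. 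After this simplification, the right-hand side takes exactly the form displayed in (\ref{eq:main:lem:bounding the size of the output set of the multivalued function projected on lexicographically maximal sequences}), modulo absorption of the $+1$ and of the factor $\beta$ into the prefactor $\tfrac{4}{L_\beta\Pi_\beta}$.

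The main obstacle is not conceptual but arithmetic: tracking the three sources of ``slack''—the double enlargement of $J$ to $J^{(n)}$, the factor $\beta$ appearing in the denominator of the separation bound, and the ceiling in the definition of $\nlogbeta$—so as to arrive at the precise prefactor $\tfrac{4}{L_\beta\Pi_\beta}$ rather than a slightly larger constant. Once the packing inequality is in place, this is pure bookkeeping, and the key non-trivial ingredients are the invariance of the $\beta$-value under $M_\beta$ and the separation estimate of Corollary~\ref{cor: separation lemma for lexicographically maximal sequences}.
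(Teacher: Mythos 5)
Your proposal is correct and follows essentially the same route as the paper's proof: bound the image by the set of sequences whose $\beta$-value lies in $J(\yf_{1:n})^{(n)}$ (using that $M_\beta$ preserves the value $\sum_i y_i\beta^{-i}$), invoke Corollary \ref{cor: separation lemma for lexicographically maximal sequences} for the minimal spacing, and finish with an interval-packing count over an interval of length $4\cdot 2^{-n}$. The residual factor $\beta$ and the $+1$ you flag cannot literally be ``absorbed'' into the fixed prefactor $\tfrac{4}{L_\beta\Pi_\beta}$, but the paper's own proof incurs exactly the same slack by writing $\beta^{-\nlogbeta}=2^{-n}$ where only $\beta^{-\nlogbeta}\leq 2^{-n}$ holds, and this constant-level discrepancy is immaterial for every downstream use of the lemma.
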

\begin{proof}
    Let $\beta \in (1,2)$ be an algebraic number. Define $F :=  M_\beta \circ f_{2 \to \beta}$, and let $\yf \in \{0,1\}^\N$ and $n \in \N$. We will proceed to find an upper bound for $\# F(\yf_{1:n})$. 
    \begin{enumerate}[label=(\alph*)]
        \item Similary as in the proof of Theorem \ref{thm:asymptotic behavior of card f x when the Bernoulli convolution is absolutely continuous}, we define a set $A(\yf_{1:n}) \supseteq f_{2 \to \beta}(\yf_{1:n})$ by the observation that
        \begin{align}
            f_{2 \to \beta}(\yf_{1:n}) &=  \left\{ y \in \{0,1\}^{\nlogbeta} :  \sum_{i=1}^{\nlogbeta} y_i \beta^{-i} \in_{n} J(\yf_{1:n})\right\}\\
            &\subseteq \left\{ y \in \{0,1\}^{\nlogbeta} :  \sum_{i=1}^{\nlogbeta} y_i \beta^{-i} \in J(\yf_{1:n})^{(n)}\right\} =: A(\yf_{1:n}).
        \end{align}
        We define $\hat F(\yf_{1:n}) := M_\beta(A(\yf_{1:n}))$. Note that $F(\yf_{1:n}) \subseteq \hat F(\yf_{1:n})$. We hence study the cardinality of $\hat F(\yf_{1:n})$, which will deliver an upper bound on the cardinality of $F(\yf_{1:n})$.
        \item First note that $\hat F(\yf_{1:n}) \subseteq A(\yf_{1:n})$. Indeed, let $x \in \hat F(\yf_{1:n})$. Then, there exists $u \in A(\yf_{1:n})$ such that $M_\beta(u) = x$. Since $M_\beta^{-1}(\{x\}) = [x]_\beta$, then we deduce that $u \sim_\beta x$, i.e.,
        \begin{equation}\label{eq:0:proof:cor:upper bound on the relative compressibility of beta expansions for algebraic beta}
            \sum_{i=1}^{\nlogbeta} x_i \beta^{-i} = \sum_{i=1}^{\nlogbeta} u_i \beta^{-i} \overset{(a)}{\in} J(\yf_{1:n})^{(n)},
        \end{equation}
        where (a) follows from the definition of $A(\yf_{1:n})$. We conclude that, indeed, $x \in A(\yf_{1:n})$, so $\hat F(\yf_{1:n}) \subseteq A(\yf_{1:n})$.
        \item As a direct consequence, remark that by denoting $m_n$ to be the smallest distance between two different numbers of the form $\sum_{i=1}^{\nlogbeta} x_i \beta^{-i}$, $x \in \hat F(\yf_{1:n})$, we have that 
        \begin{equation}\label{eq:1:proof:cor:upper bound on the relative compressibility of beta expansions for algebraic beta}
           m_n \#\hat  F(\yf_{1:n}) \leq |J(\yf_{1:n})|^{(n)} \oseteq{\ref{eq:definition of superset of size 2 -n more}}{=} |J(\yf_{1:n})| + 2\cdot 2^{-n} \oseteq{\ref{eq:J x:def:multivalued function converting from base 2 to base beta}}{=} 4 \cdot 2^{-n}.
        \end{equation}
        \item Finally, since $A(\yf_{1:n}) \subseteq \{0,1\}^{\nlogbeta}$, then $\hat F(\yf_{1:n}) \subseteq M_\beta(\{0,1\}^{\nlogbeta})$. By Corollary \ref{cor: separation lemma for lexicographically maximal sequences}, this shows that 
        \begin{equation}\label{eq:2:proof:cor:upper bound on the relative compressibility of beta expansions for algebraic beta}
            m_n \geq \frac{L_\beta\Pi_\beta}{\displaystyle  \nlogbeta^{k_\beta}\left(\beta L_\beta \Pi_\beta^+\right)^{\nlogbeta}} = \frac{2^{-n}L_\beta\Pi_\beta}{\displaystyle \nlogbeta^{k_\beta}\left(L_\beta \Pi_\beta^+\right)^{\nlogbeta}}\cdot
        \end{equation}
    \end{enumerate}
    Combining (\ref{eq:1:proof:cor:upper bound on the relative compressibility of beta expansions for algebraic beta}) and (\ref{eq:2:proof:cor:upper bound on the relative compressibility of beta expansions for algebraic beta}), we get 
    \begin{equation}
       \#F(\yf_{1:n}) \leq \# \hat F(\yf_{1:n}) \leq \frac{\displaystyle  4\nlogbeta^{k_\beta}\left(L_\beta \Pi_\beta^+ \right)^{\nlogbeta}}{L_\beta \Pi_\beta}\cdot
    \end{equation}
\end{proof}
\noindent We further show that $M_\beta \circ f_{\beta \to 2}$ is computable relatively to $\beta$.
\begin{lemma}\label{cor:M is computable for algebraic beta}
    Let $\beta \in (1,2)$. Then, $M_\beta \circ f_{2 \to \beta}$ is computable relatively to $\beta$.
\end{lemma}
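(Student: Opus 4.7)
The plan is to exhibit an effective algorithm, with oracle access to $\yf_\beta$, that on input $x \in \{0,1\}^\ast$ outputs the encoded set $\langle (M_\beta \circ f_{2 \to \beta})(x) \rangle$. This decomposes naturally into (i) enumerating $f_{2 \to \beta}(x)$ and (ii) replacing each $y$ in that list by its class representative $M_\beta(y)$, then deduplicating and concatenating.

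The first stage is essentially routine. In Definition \ref{def:multivalued function converting from base 2 to base beta} the only nonelementary operation is the approximate membership $\sum_{i=1}^{\nlogbeta} y_i \beta^{-i} \in_n J(x)$, which reduces to finitely many evaluations of the approximate comparison $\leq_n$ from the previous section. The underlying function $\varphi$ is computable and takes as inputs the greedy binary expansions of the reals to be compared; since $\beta$ is computable to arbitrary precision from $\yf_\beta$, all the required sums and endpoints are computable reals relative to $\beta$, so their greedy expansions can be produced on demand. Iterating $y$ through $\{0,1\}^{\nlogbeta}$ and collecting those that pass the test therefore yields an effective enumeration of $f_{2 \to \beta}(x)$.

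The second stage is the substantive one. Given $y \in \{0,1\}^{\nlogbeta}$, the algorithm scans $z \in \{0,1\}^{\nlogbeta}$ in decreasing lexicographic order and returns the first $z$ with $z \sim_\beta y$, which is $M_\beta(y)$ by definition. The obstacle is that deciding $z \sim_\beta y$ amounts to deciding the exact equality $\sum_{i=1}^{\nlogbeta} (z_i - y_i) \beta^{-i} = 0$, and exact equality of two computable reals is in general undecidable. Here Lemma \ref{lem:generalization of the garsia separation lemma} is exactly what is required: it provides an a priori gap $\varepsilon_n := \frac{L_\beta \Pi_\beta}{\nlogbeta^{k_\beta}(\beta L_\beta \Pi_\beta^+)^{\nlogbeta}}$ below which no nonzero difference can lie. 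The quantities $L_\beta, \Pi_\beta, \Pi_\beta^+, k_\beta$ are algebraic invariants of $\beta$, and in the transcendental case $\sim_\beta$ collapses to string equality so $M_\beta$ is the identity and this step is trivial; in the algebraic case they can be packaged as fixed advice relative to $\beta$. The algorithm then computes the difference to precision below $\varepsilon_n / 2$ using the oracle's access to $\beta$-powers and concludes equality precisely when the approximation lies within $\varepsilon_n / 2$ of zero.

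Composing stages (i) and (ii), deduplicating, and encoding the result through $\langle \cdot \rangle$ yields the desired effective procedure; the hinge of the whole argument is the quantitative separation of Lemma \ref{lem:generalization of the garsia separation lemma}, without which the equality test at the heart of stage (ii) could not be guaranteed to terminate.
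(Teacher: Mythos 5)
Your proposal is correct and follows essentially the same route as the paper: both reduce the problem to deciding $\sim_\beta$ effectively, and both do so by computing the relevant differences to a precision determined by the gap $\frac{L_\beta\Pi_\beta}{n^{k_\beta}(\beta L_\beta \Pi_\beta^+)^n}$ from Lemma \ref{lem:generalization of the garsia separation lemma}, with the transcendental case handled trivially and the algebraic invariants hard-wired as advice. The only cosmetic difference is that the paper computes the full class $[x]_\beta$ and then takes its lexicographic maximum, whereas you scan candidates in decreasing lexicographic order and stop at the first equivalent one.
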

\begin{proof}
    The proof relies on showing that $M_\beta$ is computable. Indeed, we already know that $f_{2\to \beta}$ is computable relatively to $\beta$, so $M_\beta$ being computable implies that $M_\beta \circ f_{2\to \beta}$ is computable relatively to $\beta$.

    Note that the equivalence classes $[x]_\beta$, $x \in \{0,1\}^\ast$ can be seen as the following multivalued function 
    \begin{equation}\label{eq:definition of equivalence class as a multivalued function:cor:M is computable for algebraic beta}
        \begin{array}{rccc}
            [\cdot]_\beta : &\{0,1\}^\ast &\rightrightarrows &\{0,1\}^\ast\\
            & x & \mapsto & [x]_\beta.
        \end{array}
    \end{equation}
    Note that $M_\beta = \max_L \circ [\cdot]_\beta$, hence we simply have to prove that $[\cdot]_\beta$ is computable to prove that $M_\beta$ is computable.
   
    Let $\beta \in (1,2)$. If $\beta$ is not algebraic, then $[x]_\beta = {x}$ for all $x\in \{0,1\}^\ast$, so $[\cdot]_\beta$ is computable.

    Now, fix $\beta \in (1,2)$ algebraic, and $x \in \{0,1\}^\ast$. We define $n := |x|$. Let $P_\beta$ be the minimal polynomial of $\beta$ of leading coefficient $L_\beta$, of degree $d \in \N$, and denote by $\alpha_1 < \alpha_2 < \ldots < \alpha_k$ its real roots, for some $k \leq d$. Note that $\beta = \alpha_{\ell_\beta}$ for some $\ell_\beta \in \{1,\ldots, k\}$. Recall the definition of $\Pi_\beta$, $\Pi_\beta^+$ and $k_\beta$ in (\ref{eq:definition of pi beta - pi beta + and k beta}). Define two rational numbers $q,q_+$ such that $0 < q \leq L_\beta \Pi_\beta$ and $q_+ \geq \beta L_\beta \Pi_\beta^+$. We construct Algorithm \ref{alg:algorithm that computes M beta} that computes $x \mapsto \langle [x]_\beta \rangle$.
    \begin{algorithm}[ht]
        \caption{Algorithm for computing $x \mapsto \langle [x]_\beta \rangle$}
        \label{alg:algorithm that computes M beta}
        \begin{algorithmic}[1]
            \Require $x \in \{0,1\}^\ast$
            \State $n \gets \len x$.
            \State $\varepsilon \gets \frac{q}{n^{k_\beta} q_+^n}\cdot$. \label{line:definition epsilon:alg:algorithm that computes M beta}
            \State Find a rational approximation $q_i$ of each real root $\alpha_i$ of $P_\beta$ up to precision $\varepsilon/(8n)$.\label{line:approximation roots of P beta:alg:algorithm that computes M beta}
            \State $q_\beta \gets q_{\ell_\beta}$.\label{line:definition of q beta:alg:algorithm that computes M beta}
            \State Find the set $X$ of all the sequences $u \in \{0,1\}^n$ that satisfy \label{line:definition of X:alg:algorithm that computes M beta}
            \begin{equation}\label{eq:definition of X:alg:algorithm that computes M beta}
                \left| \sum_{i=1}^n x_i q_\beta^{-i} - \sum_{i=1}^n u_i q_\beta^{-i} \right| \leq \varepsilon/4.
            \end{equation}
            \State \Return $\langle X \rangle$.
        \end{algorithmic}
        \end{algorithm}

        We prove that Algorithm \ref{alg:algorithm that computes M beta} indeed computes $x \mapsto \langle [x]_\beta \rangle$. The crucial part is to prove that the set $X$ defined in line \ref{line:definition of X:alg:algorithm that computes M beta} is equal to $[x]_\beta$, for all $x \in \{0,1\}^\ast$, which relies deeply on the separation lemma \ref{lem:generalization of the garsia separation lemma}.
        \begin{enumerate}[label=(\alph*)]
            \item First, we show a result on the regularity of $\beta$-expansions. Let $\beta_1,\beta_2 \in (1,2)$ and $n \in \N$. Then, for all $x \in \{0,1\}^n$,
            \begin{align}
                \left| \sum_{i=1}^n x_i \beta_1^{-i} - \sum_{i=1}^n x_i \beta_2^{-i}\right|& \leq  \sum_{i=1}^n x_i\left| \beta_1^{-i} - \beta_2^{-i}\right| \leq  \sum_{i=1}^n x_i\left| \beta_1^{-1} - \beta_2^{-1}\right|\\
                & \leq n \left| \beta_1^{-1} - \beta_2^{-1}\right| = \frac{n}{\beta_1\beta_2} \left| \beta_2 - \beta_1\right| \leq n \left| \beta_2 - \beta_1\right|.
            \end{align}
            This regularity condition can be further extended as follows.
            \begin{align}
                &\left| \sum_{i=1}^n x_i \beta_1^{-i} - \sum_{i=1}^n u_i \beta_1^{-i} \right| \\
                &= \left| \sum_{i=1}^n x_i \beta_1^{-i} - \sum_{i=1}^n x_i \beta_2^{-i} + \sum_{i=1}^n x_i \beta_2^{-i} - \sum_{i=1}^n u_i \beta_2^{-i} + \sum_{i=1}^n u_i \beta_2^{-i} - \sum_{i=1}^n u_i \beta_1^{-i} \right|\\
                &\leq \left| \sum_{i=1}^n x_i \beta_1^{-i} - \sum_{i=1}^n x_i \beta_2^{-i}\right| +  \left|\sum_{i=1}^n x_i \beta_2^{-i} - \sum_{i=1}^n u_i \beta_2^{-i}\right| +  \left|\sum_{i=1}^n u_i \beta_2^{-i} - \sum_{i=1}^n u_i \beta_1^{-i} \right|\\
                &\leq n |\beta_2 - \beta_1| + \left|\sum_{i=1}^n x_i \beta_2^{-i} - \sum_{i=1}^n u_i \beta_2^{-i}\right| + n |\beta_2 - \beta_1|\\
                &\leq 2n |\beta_2 - \beta_1| + \left|\sum_{i=1}^n x_i \beta_2^{-i} - \sum_{i=1}^n u_i \beta_2^{-i}\right|, \label{eq:bounding the error with q beta by the error with beta}
            \end{align}
            for all $x,u \in \{0,1\}^n$.
            \item We now use the above inequation to show that $X = [x]_\beta$, for all $x \in \{0,1\}^\ast$. Recall that $q_\beta$ denotes an approximation of $\beta$ up to precision $\varepsilon/(8n)$, as defined in line \ref{line:definition of q beta:alg:algorithm that computes M beta}. Let $x \in \{0,1\}^\ast$, define $n \in \N$, and let $u \in \{0,1\}^n$.
            \begin{enumerate}[label = (\roman*)]
                \item Suppose that $u \in [x]_\beta$. Then,
                \begin{equation}
                    \left| \sum_{i=1}^n x_i q_\beta^{-i} - \sum_{i=1}^n u_i q_\beta^{-i} \right| \oseteq{\ref{eq:bounding the error with q beta by the error with beta}}{\leq} \varepsilon/4 + \left|\sum_{i=1}^n x_i \beta^{-i} - \sum_{i=1}^n u_i \beta^{-i}\right| \overset{(a)}{=} \varepsilon/4,
                \end{equation}
                where (a) follows from $u \in [x]_\beta$. Hence, $u \in X$.
                \item Let $u \notin [x]_\beta$. Then, by Lemma \ref{lem:generalization of the garsia separation lemma},
                \begin{equation}
                    \left|\sum_{i=1}^n x_i \beta^{-i} - \sum_{i=1}^n u_i \beta^{-i}\right| \geq \frac{L_\beta\Pi_\beta}{\displaystyle  n^{k_\beta}\left(\beta L_\beta \Pi_\beta^+ \right)^n} \geq \frac{q}{n^{k_\beta} q_+^n} \overset{(a)}{=} \varepsilon.
                \end{equation}
                where (a) is by definition of $\varepsilon$ in line \ref{line:definition epsilon:alg:algorithm that computes M beta}. It follows that 
                \begin{equation}
                    \left| \sum_{i=1}^n x_i q_\beta^{-i} - \sum_{i=1}^n u_i q_\beta^{-i} \right| \oseteq{\ref{eq:bounding the error with q beta by the error with beta}}{\geq} -\varepsilon/4 + \left|\sum_{i=1}^n x_i \beta^{-i} - \sum_{i=1}^n u_i \beta^{-i}\right| \geq 3\varepsilon/4 > \varepsilon/4.
                \end{equation}
                Therefore, $u \notin X$.
            \end{enumerate}
            This concludes the proof that $X = [x]_\beta$.
        \end{enumerate}
\end{proof}
\noindent As a final corollary, we establish Theorem \ref{thm:upper bound on the relative compressibility of beta expansions for algebraic beta}.
\begin{corollary}\label{cor:upper bound on the relative compressibility of beta expansions for algebraic beta}
    Let $\beta \in (1,2)$ be an algebraic number. Then,
		\begin{equation}
			0 \leq \underline\Delta_\beta(M_\beta\xf) \leq \bar\Delta_\beta(M_\beta\xf) \leq \log_\beta\left(L_\beta \Pi_\beta^+\right),
		\end{equation}
		for all $\xf \in \Omega_\beta$.
\end{corollary}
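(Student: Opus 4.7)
The plan is to assemble the final bound from the two technical ingredients already in place: the cardinality estimate $\#(M_\beta \circ f_{2\to\beta})(\yf_{1:n}) \leq 4(L_\beta\Pi_\beta)^{-1} \nlogbeta^{k_\beta}(L_\beta\Pi_\beta^+)^{\nlogbeta}$ from Lemma \ref{lem:bounding the size of the output set of the multivalued function projected on lexicographically maximal sequences}, and the relative computability of $M_\beta \circ f_{2\to\beta}$ from Lemma \ref{cor:M is computable for algebraic beta}. Structurally the argument is parallel to the derivation of Corollary \ref{cor:upper bound on the relative compressibility of beta expansions for almost all beta in 1 2}; the only substantive difference is the source of the cardinality bound, which now comes from the separation-lemma argument rather than from a Bernoulli-convolution density.

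For the lower bound on $\underline\Delta_\beta(M_\beta\xf)$, I would first verify that $M_\beta\xf$ is itself a $\beta$-expansion of $s := \sum_{i=1}^\infty \xf_i \beta^{-i}$. This is immediate from the definition $(M_\beta\xf)_{1:n} := M_\beta(\xf_{1:n}) \sim_\beta \xf_{1:n}$, which forces the two sequences to have identical partial sums in base $\beta$ and hence the same limit $s$. Consequently $M_\beta\xf \in \Omega_\beta$, and Corollary \ref{cor:binary expansions are at least as compressible as as beta expansions} applied to $M_\beta\xf$ in place of $\xf$ delivers $0 \leq \underline\Delta_\beta(M_\beta\xf)$.

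For the upper bound I would set $F := M_\beta \circ f_{2\to\beta}$ and let $\yf$ be the greedy binary expansion of $s$. The inclusion (\ref{eq:multivalued function converting from base 2 to base beta is a supset of the beta prefixes}) gives $\xf_{1:\nlogbeta} \in f_{2\to\beta}(\yf_{1:n})$, so $(M_\beta\xf)_{1:\nlogbeta} = M_\beta(\xf_{1:\nlogbeta}) \in F(\yf_{1:n})$. Theorem \ref{thm:relatively computable multivalued functions and complexity} then bounds $K[(M_\beta\xf)_{1:\nlogbeta}|\beta]$ by $K[\yf_{1:n}|\beta]$ plus $\log_2 \#F(\yf_{1:n})$ plus a double-log correction. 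Substituting the cardinality bound produces a leading term $\nlogbeta \log_2(L_\beta\Pi_\beta^+)$ and a sub-leading $O(\log n)$ remainder coming from the factors $4(L_\beta\Pi_\beta)^{-1}$ and $\nlogbeta^{k_\beta}$, the double-log term, and the tail contribution of Theorem \ref{thm:relatively computable multivalued functions and complexity}.

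Dividing by $n$ and passing to $\limsup$, the sub-leading terms disappear and $\nlogbeta/n \to \log_\beta 2 = 1/\log_2\beta$ converts the coefficient $\log_2(L_\beta\Pi_\beta^+)$ into $\log_\beta(L_\beta\Pi_\beta^+)$, yielding the claimed upper bound on $\bar\Delta_\beta(M_\beta\xf)$. I do not expect any real obstacle at this assembly stage: all delicate content — the generalized separation estimate of Lemma \ref{lem:generalization of the garsia separation lemma}, its spacing consequence for $M_\beta$-maximal sums in Corollary \ref{cor: separation lemma for lexicographically maximal sequences}, and the computability of $M_\beta$ in Lemma \ref{cor:M is computable for algebraic beta} — has already been handled upstream. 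The single small point that needs to be articulated cleanly is that prefix-wise maximization commutes with evaluation of $F$, so that $(M_\beta\xf)_{1:\nlogbeta}$ genuinely lies in $F(\yf_{1:n})$; this is an immediate consequence of the definition of $M_\beta\xf$ fixed in the introduction.
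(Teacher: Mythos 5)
Your proposal is correct and follows essentially the same route as the paper's proof: the inclusion $(M_\beta\xf)_{1:\nlogbeta} \in M_\beta \circ f_{2\to\beta}(\yf_{1:n})$ via (\ref{eq:multivalued function converting from base 2 to base beta is a supset of the beta prefixes}), the relative computability from Lemma \ref{cor:M is computable for algebraic beta}, the cardinality bound from Lemma \ref{lem:bounding the size of the output set of the multivalued function projected on lexicographically maximal sequences}, and Theorem \ref{thm:relatively computable multivalued functions and complexity} followed by dividing by $n$ and taking the limit superior. Your explicit justification that $M_\beta\xf \in \Omega_\beta$ (so that the lower bound from Corollary \ref{cor:binary expansions are at least as compressible as as beta expansions} applies) is a detail the paper leaves implicit, and is a welcome addition.
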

\begin{proof}
    Let $\beta \in (1,2)$ be an algebraic number and $\xf \in \Omega_\beta$. Define $s := \sum_{i=1}^\infty \xf_i \beta^{-i}$, and $\yf \in \{0,1\}^\N$ to be the greedy binary expansion of $s$. Recall that by (\ref{eq:multivalued function converting from base 2 to base beta is a supset of the beta prefixes}),
    \begin{equation}
        \Sigma_\beta(s,\nlogbeta) \subseteq f_{2 \to \beta}(\yf_{1:n}),
    \end{equation}
    which implies that 
    \begin{equation}
        M_\beta(\Sigma_\beta(s,\nlogbeta)) \subseteq M_\beta\circ f_{2 \to \beta}(\yf_{1:n}).
    \end{equation}
    Since $\xf_{1:\nlogbeta} \in \Sigma_\beta(s,\nlogbeta)$, then 
    \begin{equation}
        (M_\beta\xf)_{1:\nlogbeta} \in M_\beta(\Sigma_\beta(s,\nlogbeta)) \subseteq M_\beta\circ f_{2 \to \beta}(\yf_{1:n}).
    \end{equation}
    By Corollary \ref{cor:M is computable for algebraic beta}, $F := M_\beta \circ f_{2 \to \beta}$ is computable relatively to $\beta$. By Theorem \ref{thm:computable multivalued functions and complexity}, this yields
    \begin{align}
        K[(M_\beta\xf)_{1:\nlogbeta}|\beta]\leq K[\yf_{1:n}|\beta] + \log \# F(\yf_{1:n}) + \log\log \# F(\yf_{1:n}) + \bigOx[1]{n}.
    \end{align}
    By Lemma \ref{lem:bounding the size of the output set of the multivalued function projected on lexicographically maximal sequences}, 
    \begin{equation}
        \# (F(\yf_{1:n})) \leq \frac{4}{L_\beta\Pi_\beta}  \nlogbeta^{k_\beta}\left(L_\beta \Pi_\beta^+ \right)^{\nlogbeta}\cdot
    \end{equation}
    Note that then,
    \begin{equation}
        \limsup_{n \to \infty} \frac{\log_2\# (F(\yf_{1:n}))}{n} \leq \log_\beta\left(L_\beta \Pi_\beta^+\right),
    \end{equation}
    and 
    \begin{equation}
        \limsup_{n \to \infty} \frac{\log_2 \log_2\# (F(\yf_{1:n}))}{n} = 0.
    \end{equation}
    We conclude that
    \begin{equation}
        \bar\Delta_\beta(M_\beta\xf) = \limsup_{n\to \infty} \frac{K[(M_\beta\xf)_{1:\nlogbeta}|\beta] - K[\yf_{1:n}|\beta]}{n} \leq \log_\beta\left(L_\beta \Pi_\beta^+\right).
    \end{equation}
\end{proof}

\endgroup

\begingroup
\let\clearpage\relax

\section{A fast algorithm to control complexity}

\label{sec:a fast algorithm to control complexity}

Recall that for $\beta \in (1,2)$, we have defined an equivalence relationship as follows
\begin{equation}
    x \sim_\beta y \iff n := \len x = \len y \ \text{and} \ \sum_{i=1}^n x_i \beta^{-i} = \sum_{i=1}^n y_i \beta^{-i},
\end{equation}
for all $x \in \{0,1\}^\ast$. The subsequent equivalence classes are denoted $[x]_\beta \in \{0,1\}^\ast / \sim_\beta$. We also have proven that the function $M_\beta$ that maps any $x \in \{0,1\}^\ast$ to the lexicographically maximal element of $[x]_\beta$ is computable, by showing that Algorithm \ref{alg:algorithm that computes M beta} delivers $\langle [x]_\beta \rangle$ on input $x \in \{0,1\}^\ast$, which we can then use to find the lexicographically maximal element of $[x]_\beta$. However, we can be convinced that this algorithm might run for a very long time before delivering its input. The main bottleneck is the line \ref{line:definition of X:alg:algorithm that computes M beta} of Algorithm \ref{alg:algorithm that computes M beta}. As stated, the algorithm evaluates, for every $u \in \{0,1\}^{|x|}$, if 
\begin{equation}
    \left| \sum_{i=1}^n x_i \beta^{-i} - \sum_{i=1}^n u_i \beta^{-i}\right| \leq \varepsilon/4
\end{equation}
is satisfied. This line hence requires $2^{|x|}$ steps, making this algorithm of exponential complexity. In this section, we show that if $\beta \in (1,2)$ is a Pisot number, we can construct another algorithm to compute $M_\beta$, that computes in linear time, i.e., on input $x \in \{0,1\}^\ast$, the algorithm delivers its output in $\bigOw[|x|]{x}$ steps.
We first make the observation that $\Sigma_\beta(s,n)$ contains the equivalence classes generated by its elements, for all $s \in I_\beta$ and $n \in \N$.
\begin{lemma}\label{lem:sigma beta s n contains the equivalence classes}
    Let $\beta \in (1,2)$, $s \in I_\beta$, and $n \in \N$. Then, for all $x \in \Sigma_\beta(s,n)$, $[x]_\beta \subseteq \Sigma_\beta(s,n)$.
\end{lemma}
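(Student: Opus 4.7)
The plan is to show directly that if $x$ is an $n$-prefix of some $\beta$-expansion of $s$, then swapping $x$ for any $\sim_\beta$-equivalent $y$ produces another $\beta$-expansion of $s$ whose $n$-prefix is $y$.

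First I would unpack the definitions. Given $x \in \Sigma_\beta(s,n)$, by definition there exists $\xf \in \{0,1\}^\N$ with $\xf_{1:n} = x$ and $\sum_{i=1}^\infty \xf_i \beta^{-i} = s$. Splitting this sum at index $n$ gives
\begin{equation}
    s = \sum_{i=1}^{n} x_i \beta^{-i} + \sum_{i=n+1}^{\infty} \xf_i \beta^{-i}.
\end{equation}
Now take any $y \in [x]_\beta$. By the definition of $\sim_\beta$, we have $|y| = n$ and $\sum_{i=1}^{n} y_i \beta^{-i} = \sum_{i=1}^{n} x_i \beta^{-i}$.

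Next I would construct the candidate infinite expansion $\yf \in \{0,1\}^\N$ by setting $\yf_i := y_i$ for $1 \le i \le n$ and $\yf_i := \xf_i$ for $i > n$, and verify the required identity:
\begin{equation}
    \sum_{i=1}^\infty \yf_i \beta^{-i} = \sum_{i=1}^n y_i \beta^{-i} + \sum_{i=n+1}^\infty \xf_i \beta^{-i} = \sum_{i=1}^n x_i \beta^{-i} + \sum_{i=n+1}^\infty \xf_i \beta^{-i} = s.
\end{equation}
Hence $\yf \in \Sigma_\beta(s)$, and since $\yf_{1:n} = y$ by construction, we conclude $y \in \Sigma_\beta(s,n)$. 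As $y$ was an arbitrary element of $[x]_\beta$, this gives $[x]_\beta \subseteq \Sigma_\beta(s,n)$.

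There is essentially no obstacle here: the statement is a direct consequence of the fact that $\sim_\beta$ only identifies finite prefixes whose partial $\beta$-sums agree, so the tail of any valid $\beta$-expansion can be reattached to any equivalent prefix without altering the total sum. The only thing worth being careful about is to check that the rearranged sequence $\yf$ still belongs to $\{0,1\}^\N$ (which is immediate since both $y$ and $\xf$ take values in $\{0,1\}$) and that the infinite series rearrangement is valid (which it is, since both series converge absolutely as sums of nonnegative terms bounded by the convergent series $\sum_{i=1}^\infty \beta^{-i}$).
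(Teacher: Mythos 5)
Your proof is correct, and it takes a genuinely different (if only slightly) route from the paper's. The paper's proof observes that $y \sim_\beta x$ forces $\sum_{i=1}^n y_i\beta^{-i} = \sum_{i=1}^n x_i\beta^{-i} \in \left[ s - \beta^{-n}(\beta-1)^{-1}, s\right]$ and then invokes the interval characterization of $\Sigma_\beta(s,n)$ established at the start of Section 3 to conclude $y \in \Sigma_\beta(s,n)$. That last step implicitly relies on the \emph{converse} of those displayed inequalities, namely that any length-$n$ word whose partial sum leaves a residual in $[0,\beta^{-n}(\beta-1)^{-1}]$ can be completed to a full $\beta$-expansion of $s$ --- which in turn rests on the fact that every element of $I_\beta$ admits a $\beta$-expansion. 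Your splicing argument sidesteps this entirely: you exhibit the completing tail explicitly by reattaching the tail of the expansion witnessing $x \in \Sigma_\beta(s,n)$ to the equivalent prefix $y$, so the only ingredients are the definition of $\sim_\beta$ and the (trivial) convergence of the nonnegative series. This makes your version marginally more self-contained and constructive; the paper's version is shorter given that the interval characterization is already set up and reused throughout that section. Either proof is acceptable.
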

\begin{proof}
    Let $\beta \in (1,2)$, $s \in I_\beta$, $n \in \N$, $x \in \Sigma_\beta(s,n)$ and let $y \in [x]_\beta$. Then, $y \sim_\beta x$, so 
    \begin{equation}
        \sum_{i=1}^n y_i \beta^{-i} = \sum_{i=1}^n x_i \beta^{-i} \in \left[ s - \frac{\beta^{-n}}{\beta-1}, s\right],
    \end{equation}
    which by combination of (\ref{eq:1:construction of a multivalued function computable wrt beta that maps beta expansions to binary expansions of a real number}) and (\ref{eq:2:construction of a multivalued function computable wrt beta that maps beta expansions to binary expansions of a real number}) implies that $y \in \Sigma_\beta(s,n)$.
\end{proof}
\noindent For $s \in I_\beta$, we define
\begin{equation}\label{eq:definition set of beta expansions wrt to a binary sequence}
    \Sigma_\beta(x) := \Sigma_\beta\left(\sum_{i=1}^{\len x} x_i \beta^{-i}, \len x\right),
\end{equation}
and
\begin{equation}\label{eq:definition set of lexicographically maximal elements of equivalence classes in beta expansions of sum x beta -i}
    \hat \Sigma_\beta(x) := \left\{ \lexmax (u) : u \in \Sigma_\beta(x) / \sim_\beta\right\}.
\end{equation}
\begin{lemma}
    Let $\beta \in (1,2)$, $x \in \{0,1\}^\ast$, and $n := \len x$. Then,
    \begin{equation}
        M_\beta(x) = \underset{u \in \hat \Sigma_\beta(x)}{\arg\max} \sum_{i=1}^n u_i \beta^{-i}.
    \end{equation}
\end{lemma}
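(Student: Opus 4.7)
The plan is to compare the values $u \mapsto \sum_{i=1}^n u_i \beta^{-i}$ across the representatives collected in $\hat\Sigma_\beta(x)$, and to argue that the equivalence class $[x]_\beta$ sits strictly on top of this ordering. First I would observe that $x$ itself lies in $\Sigma_\beta(x)$: indeed $x 0^\infty$ is a valid $\beta$-expansion of $s := \sum_{i=1}^n x_i \beta^{-i} \in I_\beta$ (since $s \leq \sum_{i=1}^\infty \beta^{-i} = (\beta-1)^{-1}$), and $x$ is its $n$-prefix. Combined with Lemma \ref{lem:sigma beta s n contains the equivalence classes}, this forces $[x]_\beta \subseteq \Sigma_\beta(x)$, so $[x]_\beta$ occurs as one of the classes in $\Sigma_\beta(x)/\sim_\beta$ and therefore $M_\beta(x) = \lexmax([x]_\beta) \in \hat\Sigma_\beta(x)$. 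This already shows that the candidate on the right-hand side is competing among a set containing $M_\beta(x)$.

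The heart of the argument is the elementary inequality I would establish next: for every $u \in \Sigma_\beta(x)$,
\[
\sum_{i=1}^n u_i \beta^{-i} \leq s,
\]
with equality if and only if $u \sim_\beta x$. By definition, $u$ is the $n$-prefix of some $\beta$-expansion $\xf$ of $s$, so
\[
s = \sum_{i=1}^n u_i \beta^{-i} + \sum_{i=n+1}^\infty \xf_i \beta^{-i},
\]
and the tail is nonnegative. It vanishes precisely when $\xf_i = 0$ for every $i > n$, which means $\sum_{i=1}^n u_i \beta^{-i} = s = \sum_{i=1}^n x_i \beta^{-i}$, equivalently $u \in [x]_\beta$.

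To finish, I would use that the value $\sum_{i=1}^n u_i \beta^{-i}$ is constant on each $\sim_\beta$-class, so the inequality above says that among the classes making up $\Sigma_\beta(x)/\sim_\beta$, the class $[x]_\beta$ is the unique one attaining the maximal value $s$. Hence the unique $v \in \hat\Sigma_\beta(x)$ maximizing $\sum_{i=1}^n v_i \beta^{-i}$ is the representative of $[x]_\beta$, which by definition is $M_\beta(x)$.

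There is no real obstacle here: the lemma reduces to the fact that any truncation of a $\beta$-expansion of $s$ under-approximates $s$, achieving $s$ exactly when its tail is $0^\infty$. The only subtle point worth flagging is that the $\arg\max$ has to be interpreted as a singleton, which is guaranteed by the strict version of the inequality for $u \notin [x]_\beta$ together with the fact that $\hat\Sigma_\beta(x)$ contains at most one representative per class.
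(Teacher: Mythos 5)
Your proof is correct and follows essentially the same route as the paper's: first show $M_\beta(x) \in \hat\Sigma_\beta(x)$ via the fact that $[x]_\beta \subseteq \Sigma_\beta(x)$, then observe that every $u \in \Sigma_\beta(x)$ satisfies $\sum_{i=1}^n u_i \beta^{-i} \leq s_x$ because truncating a $\beta$-expansion under-approximates. Your additional observation that equality holds exactly for the class $[x]_\beta$, making the $\arg\max$ a genuine singleton, is a small but worthwhile sharpening that the paper's proof leaves implicit.
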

\begin{proof}
    Let $\beta \in (1,2)$, $x \in \{0,1\}^\ast$, and $n := \len x$. We first prove that $M_\beta(x) \in \hat \Sigma_\beta(x)$, and then that $M_\beta(x)$ is the element of $\hat \Sigma_\beta(x)$ that maximizes the function $u \mapsto \sum_{i=1}^n u_i \beta^{-i}$.
    \begin{enumerate}[label = (\alph*)]
        \item Let $s_x := \sum_{i=1}^n x_i \beta^{-i}$. By (\ref{eq:definition set of beta expansions wrt to a binary sequence}), $x \in \Sigma_\beta(x) = \Sigma_\beta(s_x, n)$. By Lemma \ref{lem:sigma beta s n contains the equivalence classes}, $[x]_\beta \subseteq \Sigma_\beta(s_x,n)$, which implies in particular that $M_\beta(x) \in \Sigma_\beta(s_x,n) = \Sigma_\beta(x)$, since by definition $M_\beta(x) \in [x]_\beta$. Furthermore, by definition, $M_\beta(x) = \lexmax ([x]_\beta) = \lexmax ([M_\beta(x)]_\beta)$, and we get by (\ref{eq:definition set of lexicographically maximal elements of equivalence classes in beta expansions of sum x beta -i}) that $M_\beta(x) \in \hat \Sigma_\beta(x)$.
        \item Let $u \in \hat \Sigma_\beta(x) \subseteq \Sigma_\beta(x) = \Sigma_\beta(s_x, |x|)$. Then, by (\ref{eq:3:construction of a multivalued function computable wrt beta that maps beta expansions to binary expansions of a real number}), 
        \begin{equation}
            \sum_{i=1}^n u_i \beta^{-i} \leq s_x = \sum_{i=1}^n x_i \beta^{-i} = \sum_{i=1}^n M_\beta(x)_i \beta^{-i}.
        \end{equation}
        Therefore, $M_\beta(x)$ maximizes the function $u \mapsto \sum_{i=1}^n u_i \beta^{-i}$ over $\hat \Sigma_\beta(x)$.
    \end{enumerate}
\end{proof}

We now proceed to establish an iterative constructive method to calculate the set $\hat \Sigma_\beta(x)$. For $\beta \in (1,2)$ and $x \in \{0,1\}^\ast$, define 
\begin{equation}\label{eq:definition of pi beta, the transition set between sigma hat beta n and sigma hat beta n 1}
    \Pi_\beta(x) := \left\{u \in \Sigma_\beta(x) : u_{1:n-1} \in \hat \Sigma_\beta(x_{1:n-1})\right\},
\end{equation}
where $n := \len x$.
\begin{theorem}
    Let $\beta \in (1,2)$, and $\xf \in \{0,1\}^\N$. Then,
    \begin{equation}
        \hat\Sigma_\beta(\xf_{1:n}) = \left\{ \lexmax (u) : u \in \Pi_\beta(\xf_{1:n}) / \sim_\beta\right\},
    \end{equation}
    for all $n \in \N$.
\end{theorem}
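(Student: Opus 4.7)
The plan is to verify both inclusions of the claimed equality.

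For the inclusion $\supseteq$, take any $v \in \Pi_\beta(\xf_{1:n})$. Since $\Pi_\beta(\xf_{1:n}) \subseteq \Sigma_\beta(\xf_{1:n})$, Lemma \ref{lem:sigma beta s n contains the equivalence classes} gives $[v]_\beta \subseteq \Sigma_\beta(\xf_{1:n})$, so $\lexmax([v]_\beta)$ is the lex-max of a class of $\Sigma_\beta(\xf_{1:n})/\sim_\beta$ and therefore lies in $\hat\Sigma_\beta(\xf_{1:n})$.

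For the inclusion $\subseteq$, take $w \in \hat\Sigma_\beta(\xf_{1:n})$, i.e.\ $w = \lexmax([w]_\beta)$ with $[w]_\beta \subseteq \Sigma_\beta(\xf_{1:n})$. The goal is to produce some $v \in [w]_\beta$ lying in $\Pi_\beta(\xf_{1:n})$, equivalently with $v_{1:n-1} \in \hat\Sigma_\beta(\xf_{1:n-1})$.

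The pivotal step is a \emph{heredity} lemma: whenever $w = \lexmax([w]_\beta)$ and $w_{1:n-1} \in \Sigma_\beta(\xf_{1:n-1})$, one automatically has $w_{1:n-1} = \lexmax([w_{1:n-1}]_\beta)$, hence $w_{1:n-1} \in \hat\Sigma_\beta(\xf_{1:n-1})$ and $w \in \Pi_\beta(\xf_{1:n})$ directly. I would prove this by contradiction: any lex-larger representative $v' \in [w_{1:n-1}]_\beta$ would produce $v' w_n \in [w]_\beta$ that is lex-larger than $w$, violating the maximality of $w$.

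The main obstacle is the remaining case $w_{1:n-1} \notin \Sigma_\beta(\xf_{1:n-1})$. The interval characterization of $\Sigma_\beta$ used in Lemma \ref{lem:sigma beta s n contains the equivalence classes} (compare equations (\ref{eq:1:construction of a multivalued function computable wrt beta that maps beta expansions to binary expansions of a real number})--(\ref{eq:2:construction of a multivalued function computable wrt beta that maps beta expansions to binary expansions of a real number})) pins this situation down: it forces $\xf_n = 1$, $w_n = 0$, and $\sum_{i=1}^{n-1} w_i \beta^{-i} > \sum_{i=1}^{n-1} \xf_i \beta^{-i}$. In this case I would exhibit another representative $v \sim_\beta w$ with $v_n = 1$; a direct comparison of partial sums then shows $v_{1:n-1} \in \Sigma_\beta(\xf_{1:n-1})$ automatically, and applying heredity to $v$ yields $M_\beta(v_{1:n-1})\, v_n \in \Pi_\beta(\xf_{1:n}) \cap [w]_\beta$, completing the proof. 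Establishing the existence of this alternative representative $v$ is the delicate step, since it must be deduced from the algebraic relations among the $\beta^{-i}$ that allow one to ``trade'' the pattern producing $w_n=0$ for one producing $v_n=1$ within a single equivalence class.
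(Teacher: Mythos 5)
Your decomposition is sound up to the last step, and your analysis is in fact sharper than the paper's own: the published proof of the inclusion $\hat\Sigma_\beta(\xf_{1:n}) \subseteq \Pi_\beta(\xf_{1:n})$ passes directly from ``$x_{1:n-1}\notin\hat\Sigma_\beta(\xf_{1:n-1})$'' to ``there exists a lexicographically larger element of $[x_{1:n-1}]_\beta$'', which tacitly assumes $x_{1:n-1}\in\Sigma_\beta(\xf_{1:n-1})$ --- precisely the case you isolate as problematic. Your $\supseteq$ direction, your heredity lemma, and your characterization of the failure case ($\xf_n=1$, $w_n=0$, overshoot of the truncated sum) are all correct.

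The gap is that the ``delicate step'' cannot be carried out: the alternative representative $v\sim_\beta w$ with $v_n=1$ need not exist, because $[w]_\beta$ may be a singleton. Concretely, take $\beta=19/10$ (for which every class is a singleton: reduce $\sum_i c_i\,19^{\,n-i}\,10^{\,i-1}=0$ modulo $19$ and induct), $\xf_{1:5}=01111$ and $w=10000$. Then $s_5\approx 0.5399$, $\sum_i w_i\beta^{-i}=\beta^{-1}\approx 0.5263$ and $s_5-\beta^{-1}\approx 0.0136\in[0,\beta^{-5}/(\beta-1)]$, so $w\in\Sigma_\beta(\xf_{1:5})$ and, being alone in its class, $w\in\hat\Sigma_\beta(\xf_{1:5})$; but $\sum_{i\le 4}w_i\beta^{-i}=\beta^{-1}>s_4\approx 0.4995$, so $w_{1:4}\notin\Sigma_\beta(\xf_{1:4})$, hence $[w]_\beta=\{w\}$ is disjoint from $\Pi_\beta(\xf_{1:5})$ and $w$ does not lie in the right-hand side. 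So the statement is false as literally written and no completion of your argument (or of the paper's) exists; a pigeonhole count of the $2^{n-1}$ subset sums $\sum_{i\le n-1}v_i\beta^{-i}$ shows the same phenomenon for every non-algebraic $\beta\in(1,2)$ once $n$ is large. The statement becomes true --- and your heredity lemma alone proves it, the problematic case never arising --- if all occurrences of $\Sigma_\beta(\xf_{1:i})$ are anchored to a single fixed value $s$, i.e., one works with $\Sigma_\beta(s,i)$ throughout, so that a prefix of a valid prefix of an expansion of $s$ is automatically valid; this is what Algorithm \ref{alg:fast algorithm for computing M beta} effectively does by testing candidates against $s$ rather than against the truncated sums.
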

\begin{proof}
    Let $\beta \in (1,2)$, $\xf \in \{0,1\}^\N$, and $n \in \N$. We first prove that $\hat\Sigma_\beta(\xf_{1:n}) \subseteq \Pi_\beta(\xf_{1:n})$, and then that $\Pi_\beta(\xf_{1:n})/ \sim_\beta = \Sigma_\beta(\xf_{1:n})/ \sim_\beta$.
    \begin{enumerate}[label=(\alph*)]
        \item Let $x \in \hat\Sigma_\beta(\xf_{1:n})$, and suppose by contradiction that $x \notin \Pi_\beta(\xf_{1:n})$. Then, by definition (\ref{eq:definition of pi beta, the transition set between sigma hat beta n and sigma hat beta n 1}) of $\Pi_\beta(\xf_{1:n})$, either $x \notin \Sigma_\beta(\xf_{1:n})$ or $x_{1:n-1} \notin  \hat \Sigma_\beta(\xf_{1:n-1})$. Since $\hat\Sigma_\beta(\xf_{1:n}) \subseteq \Sigma_\beta(\xf_{1:n})$, we know that $x \in \Sigma_\beta(\xf_{1:n})$, and hence $x_{1:n-1} \notin  \hat \Sigma_\beta(\xf_{1:n-1})$. Then, there exists a $y \in [x_{1:n-1}]_\beta$ such that $y >_L x_{1:n-1}$, which further implies that $yx_n >_L x$. Moreover, since $y \in [x_{1:n-1}]_\beta$, we have 
        \begin{align}
            \sum_{i=1}^{n-1} y_i \beta^{-i} &= \sum_{i=1}^{n-1} x_i \beta^{-i}\\
            \Rightarrow \sum_{i=1}^{n-1} y_i \beta^{-i} + x_n \beta^{-n} &= \sum_{i=1}^{n} x_i \beta^{-i}.
        \end{align}
        Therefore, $yx_n \sim_\beta x$, so $yx_n \in [x]_\beta$ and $yx_n >_L x$. Hence, $x$ is not the lexicographically maximal element of $[x]_\beta$, i.e., $x \notin \hat\Sigma_\beta(\xf_{1:n})$, which is a contradiction. This concludes the proof that $\hat\Sigma_\beta(\xf_{1:n}) \subseteq \Pi_\beta(\xf_{1:n})$.
        \item Since $\hat \Sigma_\beta(\xf_{1:n}) \subseteq \Pi_\beta(\xf_{1:n}) \subseteq \Sigma_\beta(\xf_{1:n})$, then 
        \begin{equation}
            \hat \Sigma_\beta(\xf_{1:n}) / \sim_\beta \, \subseteq \Pi_\beta(\xf_{1:n}) / \sim_\beta \subseteq \Sigma_\beta(\xf_{1:n}) / \sim_\beta.
        \end{equation}
        Moreover, by definition, $\hat \Sigma_\beta(\xf_{1:n})$ consists of the elements of $\Sigma_\beta(\xf_{1:n})$ that are the lexicographically element of their equivalence class. In particular,
        \begin{equation}
            \hat \Sigma_\beta(\xf_{1:n}) / \sim_\beta \, = \Sigma_\beta(\xf_{1:n})/ \sim_\beta.
        \end{equation}
        Therefore,
        \begin{equation}
            \hat \Sigma_\beta(\xf_{1:n}) / \sim_\beta \, = \Pi_\beta(\xf_{1:n}) / \sim_\beta = \Sigma_\beta(\xf_{1:n}) / \sim_\beta.
        \end{equation}
    \end{enumerate}
    We can finally conclude the proof with
    \begin{align}
        \hat \Sigma_\beta(\xf_{1:n}) &= \left\{ \lexmax (u) : u \in \Sigma_\beta(\xf_{1:n}) / \sim_\beta\right\}\\
        & = \left\{ \lexmax (u) : u \in \Pi_\beta(\xf_{1:n}) / \sim_\beta\right\}.
    \end{align}
\end{proof}

The algorithm we will design is hence based on the following heuristic.
\begin{algorithm}[H]
    \caption{Heuristic for computing $M_\beta$}
    \label{alg:Heuristic for computing M beta}
    \begin{algorithmic}[1]
        \Require $x \in \{0,1\}^\ast$
        \State $n \gets \len x$.
        \State $\hat \Sigma_0 \gets \{\varepsilon\}$
        \For {$i = 1, \ldots, n$}
            \State $\Pi_i \gets \{ u \in \Sigma_\beta(x_{1:i-1}) : u_{1:i-1} \in \hat \Sigma_{i-1}\}$
            \State $\hat \Sigma_i \gets \{ \lexmax(u) : u \in \Pi_i/ \sim_\beta\}$
        \EndFor
    \end{algorithmic}
    \Return $\underset{u \in \hat \Sigma_n}{\arg\max} \sum_{i=1}^n u_i \beta^{-i}$
\end{algorithm}

When $\beta$ is an algebraic number, this heuristic can be turned into an effective algorithm, and further if $\beta$ is Pisot, this algorithm require only a linear number of steps.
\begin{theorem}
    Let $\beta \in (1,2)$ be an algebraic number. Then, $M_\beta$ is computable. Moreover, if $\beta$ is a Pisot number, then $M_\beta$ is computable in linear time.
\end{theorem}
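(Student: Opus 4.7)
My plan is to run Algorithm \ref{alg:Heuristic for computing M beta}, whose correctness is already granted by the preceding theorem, and to make each iteration effective when $\beta$ is algebraic and of constant cost when $\beta$ is additionally Pisot. For algebraic $\beta$, the body of the \textbf{for} loop only needs (i) to form the at most $2|\hat\Sigma_{i-1}|$ candidate extensions $ub$ with $b \in \{0,1\}$, (ii) to discard those with $\sum_{j=1}^{i} (ub)_j \beta^{-j} \notin [s_i - \beta^{-i}/(\beta-1),\, s_i]$, where $s_i := \sum_{j=1}^i \xf_j \beta^{-j}$ (by (\ref{eq:3:construction of a multivalued function computable wrt beta that maps beta expansions to binary expansions of a real number}) this is exactly the membership criterion for $\Sigma_\beta(\xf_{1:i})$), and (iii) to group the survivors under $\sim_\beta$ keeping the lex-max representative of each class. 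Both the interval test and the $\sim_\beta$-decision reduce to exact comparisons between numbers of the form $\sum_j c_j \beta^{-j}$ with $c_j \in \Z$; exactly as in Algorithm \ref{alg:algorithm that computes M beta}, these can be carried out effectively by approximating the roots of $P_\beta$ to sufficient precision and invoking Lemma \ref{lem:generalization of the garsia separation lemma} to separate distinct values by an explicit gap $\varepsilon > 0$.

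For the Pisot case, my plan is to encode each state $u \in \hat\Sigma_i$ by the algebraic integer
\begin{equation}
D_i(u) := \sum_{j=1}^i (\xf_j - u_j)\, \beta^{\, i-j} \in \Z[\beta].
\end{equation}
Three easy identities drive the construction: $u \sim_\beta u'$ iff $D_i(u) = D_i(u')$; the recursion $D_{i+1}(ub) = \beta\, D_i(u) + (\xf_{i+1} - b)$ holds on defects; and the constraint $ub \in \Sigma_\beta(\xf_{1:i+1})$ translates to $D_{i+1}(ub) \in [0,\, \beta/(\beta-1)]$. Each iteration thereby becomes a purely algebraic update of a short list of $\Z[\beta]$-elements (plus one bookkeeping tag per state), and the residual problem is to show that the length of this list stays bounded in $i$.

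The crux, and the only place where the Pisot hypothesis is essential, is to bound $|\hat\Sigma_i|$ by a constant $C_\beta$ independent of $i$. I would argue this via Galois conjugates: for any conjugate $\alpha$ of $\beta$ one has $|\alpha|<1$ by the Pisot property, so any conjugate of $D_i(u)$ satisfies
\begin{equation}
\left|\sum_{j=1}^i (\xf_j - u_j)\, \alpha^{\,i-j}\right| \;\leq\; \sum_{k=0}^{\infty} |\alpha|^{k} \;=\; \frac{1}{1-|\alpha|},
\end{equation}
while the membership constraint yields $|D_i(u)| \leq \beta/(\beta-1)$ on the real embedding of $\beta$. Hence every $D_i(u)$ is an algebraic integer whose conjugates all lie in an explicit bounded region of $\C$, and the classical finiteness theorem for algebraic integers with bounded conjugates forces only finitely many possible values; so $|\hat\Sigma_i| \leq C_\beta$. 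The subtle point I expect to be the main obstacle is the lex-max bookkeeping, since lex-comparison in principle depends on full prefixes rather than on defects. The remedy is that the defect state space is finite and the collision rule $\beta(D_i(u) - D_i(u')) = b' - b \in \{-1,0,1\}$ is extremely rigid, so a constant-size tag per state suffices to record the lex-maximal witness; the whole update becomes a finite-state transduction performing $O(1)$ exact operations in a fixed $\Z[\beta]$-lattice per input symbol, delivering a total cost of $O(|x|)$.
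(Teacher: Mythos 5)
Your proposal is correct and follows the same skeleton as the paper (iterate the heuristic of Algorithm \ref{alg:Heuristic for computing M beta}, make each round effective for algebraic $\beta$ via root approximation plus the separation Lemma \ref{lem:generalization of the garsia separation lemma}, and show the state set $\hat\Sigma_i$ has constant size when $\beta$ is Pisot), but the decisive step is argued by a genuinely different route. The paper bounds $\#\hat\Sigma_i$ metrically: distinct $\sim_\beta$-classes have representatives whose values $\sum_j u_j\beta^{-j}$ are separated by at least $m_i = L_\beta\Pi_\beta/\bigl(i^{k_\beta}(\beta L_\beta\Pi_\beta^+)^i\bigr)$ while all lying in an interval of length $O(\beta^{-i})$, so the count is at most $O\bigl(i^{k_\beta}(L_\beta\Pi_\beta^+)^i\bigr)$, which degenerates to a constant exactly when $\beta$ is Pisot; this has the advantage of giving a uniform quantitative bound for every algebraic $\beta$, with the Pisot case as a specialization. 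You instead pass to the defects $D_i(u)\in\Z[\beta]$ and invoke the finiteness of algebraic integers of bounded degree with all conjugates bounded --- the classical Frougny-style argument --- which buys a stronger structural conclusion (the whole computation is a finite-state transduction over a fixed finite alphabet of defects, with exact ring arithmetic in place of the paper's $\varepsilon$-precision root approximations as in Algorithm \ref{alg:algorithm that computes M beta}), at the price of saying nothing quantitative in the non-Pisot algebraic case. Two harmless slips: the membership window for $D_{i+1}(ub)$ should be $[0,(\beta-1)^{-1}]$ rather than $[0,\beta(\beta-1)^{-1}]$, and the collision rule reads $\beta\bigl(D_i(u)-D_i(u')\bigr)=b-b'$; neither affects the finiteness argument. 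Your worry about the lexicographic bookkeeping is legitimate but resolves more simply than you suggest: since the children $u0,u1$ of a lexicographically sorted list $\hat\Sigma_{i-1}$ inherit the order of their parents, a sorted list of the (full-prefix or tagged) representatives can be maintained in $O(\#\hat\Sigma_i)$ work per round, which is $O(1)$ in the Pisot case.
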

\begin{proof}
    Let $\beta \in (1,2)$ be an algebraic number. Define 
    \begin{equation}
        \varepsilon := \frac{L_\beta\Pi_\beta}{\displaystyle  4n^{k_\beta}\left(\beta L_\beta \Pi_\beta^+ \right)^n}
    \end{equation}
    \begin{algorithm}[t]
        \caption{Fast algorithm for computing $M_\beta$}
        \label{alg:fast algorithm for computing M beta}
        \begin{algorithmic}[1]
            \Require $x \in \{0,1\}^\ast$
            \State $n \gets \len x$. 
            \State $s \gets \sum_{i=1}^nx_i \beta^{-i}$
            \State $\hat \Sigma_0 \gets \{\epsilon\}$
            \For {$i = 1, \ldots, n$} \label{line:4:alg:fast algorithm for computing M beta}
                \State $\Pi_i \gets \{\epsilon\}$
                \For {$u \in \hat \Sigma_{i-1}$} \label{line:6:alg:fast algorithm for computing M beta}
                    \State $s_u \gets \sum_{j=1}^{i-1}u_j \beta^{-j}$
                    \If {$s - \beta^{-i}/(\beta-1) - \varepsilon \leq s_x$}
                        \State $\Pi_i \gets \Pi_i \cup {u0}$
                    \EndIf
                    \If {$s_x + \beta^{-i} \leq s + \varepsilon$}
                        \State $\Pi_i \gets \Pi_i \cup {u1}$
                    \EndIf
                \EndFor
                \State $\hat \Sigma_i \gets \{\epsilon\}$
                \While {$\Pi_i \neq \varnothing$} \label{line:16:alg:fast algorithm for computing M beta}
                    \State Take any $v \in \Pi_i$\label{line:17:alg:fast algorithm for computing M beta}
                    \State Create the set $X_v$ of all the elements $u$ in $\Pi_i$ that satisfy \label{line:18:alg:fast algorithm for computing M beta}
                    \begin{equation}
                        \left| \sum_{i=1}^n v_i \beta^{-i} - \sum_{i=1}^n u_i \beta^{-i}\right| \leq \varepsilon
                    \end{equation}
                    \State Find the lexicographically maximal element $v^\ast$ of $X_v$. \label{line:19:alg:fast algorithm for computing M beta}
                    \State $\hat \Sigma_i \gets \hat \Sigma_i \cup \{v^\ast\}$ \label{line:20:alg:fast algorithm for computing M beta}
                    \State $\Pi_i \gets \Pi_i \backslash X_v$ \label{line:21:alg:fast algorithm for computing M beta}
                \EndWhile
            \EndFor
            \State \Return $\underset{u \in \hat \Sigma_n}{\arg\max} \sum_{i=1}^n u_i \beta^{-i}$
        \end{algorithmic}
    \end{algorithm}
    We establish Algorithm \ref{alg:fast algorithm for computing M beta} to computes $M_\beta$. We consider that each elementary operation $\gets$, $+$, $-$, $\times$, $/$, $\leq$, $\cup$ and $\backslash$ to use one time step. Let $n \in \N$, and $i \in \{1,\ldots, n \}$. Each pass of the For loop defined in line \ref{line:6:alg:fast algorithm for computing M beta} hence consumes $5$ steps. In total, this For loop requires $5 \# \hat \Sigma_i$ steps. Then, for each pass of the while loop defined in line \ref{line:16:alg:fast algorithm for computing M beta}, lines \ref{line:17:alg:fast algorithm for computing M beta}, \ref{line:20:alg:fast algorithm for computing M beta} and \ref{line:21:alg:fast algorithm for computing M beta} consume each one time step, line \ref{line:18:alg:fast algorithm for computing M beta} consumes $\# \Pi_i$ time steps, and line \ref{line:19:alg:fast algorithm for computing M beta} at most $\# \Pi_i$ time steps. Hence, since the while loop runs for at most $\# \Pi_i$ iterations, we get that the entire while loop consumes $3\#\Pi_i + \#\Pi_i^3$ steps. In total, each pass of the For loop defined in line \ref{line:4:alg:fast algorithm for computing M beta} requires at most $2 + 5 \#\hat \Sigma_{i-1} + 3 \#\Pi_i + \#\Pi_i^3$. Note that, by (\ref{eq:definition of pi beta, the transition set between sigma hat beta n and sigma hat beta n 1}), $\#\Pi_i \leq 2 \# \hat \Sigma_{i-1}$. Moreover, by the same arguments as in the proof of Lemma \ref{lem:bounding the size of the output set of the multivalued function projected on lexicographically maximal sequences}, one has 
    \begin{equation}
        \# \hat \Sigma_{i} \leq \frac{\beta^{-i}}{\beta-1} \frac{i^{k_\beta}(\beta L_\beta \Pi_\beta^+)^i}{L_\beta\Pi_\beta} = \frac{i^{k_\beta}(L_\beta \Pi_\beta^+)^i}{(\beta-1)L_\beta\Pi_\beta}\cdot
    \end{equation}
    Suppose that $\beta$ is a Pisot number, then $L_\beta = \Pi_\beta^+ = 1$, and $k_\beta =0$, so
    \begin{equation}
        \# \hat \Sigma_{i} \leq \frac{1}{(\beta-1)\Pi_\beta} =: C_\beta\cdot
    \end{equation}
    Therefore, the total number of steps required for the For loop defined in line \ref{line:4:alg:fast algorithm for computing M beta} satisfies
    \begin{align}
        t_n &\leq \sum_{i=1}^n \left(2 + 11\#\hat \Sigma_{i-1} + 8 \#\hat \Sigma_{i-1}^3\right)\\
        &\leq \sum_{i=1}^n \left(2 + 11 C_\beta + 8 C_\beta^3\right)\\
        &\leq \left(2 + 11 C_\beta + 8 C_\beta^3\right) n.
    \end{align}
\end{proof}

\endgroup

\begingroup
\let\clearpage\relax

\section{\texorpdfstring{Distribution of the complexities of $\beta$-expansions}{}}

\label{sec:distribution of the complexities of beta expansions}

In Section \ref{sec:upper bound on the complexity of beta expansions}, we established an upper bound on the algorithmic complexity of the lexicographically largest elements of the equivalence classes generated by the equivalence relationship $\sim_\beta$ that naturally appears when $\beta$ is algebraic. In this section, we further study the algorithmic complexity of the remaining elements of those equivalence classes. When $\beta$ is Pisot, this allows us to characterize precisely the distribution of the algorithmic complexities of all the $\beta$-expansions of a given real number $s \in [0,1]$, in function of the algorithmic complexity of its binary expansions. We start by building a multivalued function that is computable relatively to $\beta$, that generates the set of all $\beta$-expansions. Later, we will use this function to link the algorithmic complexity of a given $\beta$-expansion to the algorithmic complexity of the sequence of coin tosses that is effected in line 6 of Algorithm \ref{alg:random algorithm beta expansion}. 

\subsection{\texorpdfstring{A computable function to generate all the $\beta$-expansions from a fixed $\beta$-expansion}{}}

Recall that for a fixed number $\beta \in (1,2)$, we have defined the equivalence relationship $\sim_\beta$ over $\{0,1\}^\ast$ by 
\begin{equation}
    x \sim_\beta y \iff n := \len x = |y| \ \text{and} \ \sum_{i=1}^n x_i \beta^{-i} = \sum_{i=1}^n y_i \beta^{-i}, \ \forall x,y \in \{0,1\}^\ast.
\end{equation}
Moreover, we have defined the equivalence class $[x]_\beta := \{y \in \{0,1\}^\ast : x \sim_\beta y\}$, for all $x \in \{0,1\}^\ast$, which, seen as a multivalued function $[\cdot]_\beta : \{0,1\}^\ast \rightrightarrows \{0,1\}^\ast$, is proven to be computable in the proof of Corollary \ref{cor:M is computable for algebraic beta}. 

We now construct a multivalued function $f_{\beta, 1 \to all} : \{0,1\}^\ast \rightrightarrows \{0,1\}^\ast$ that is computable relatively to $\beta$, such that
\begin{equation}
    \langle \Sigma_\beta(s,n) \rangle \in f_{\beta, 1 \to all}\left(x\right), \ \forall x \in \Sigma_\beta(s,n), \ s \in I_\beta, \ \text{and} \ n \in \N.
\end{equation}
To build $f_{\beta, 1 \to all}$, we first follow the same intuitions that lead to the construction of $f_{\beta \to 2}$ in Definition \ref{def:multivalued function converting from base beta to base 2} and $f_{2 \to \beta}$ in Definition \ref{def:multivalued function converting from base 2 to base beta}. Namely, the knowledge of $x \in \Sigma_\beta(s,n)$ allows to infer that
\begin{equation}
    s \in I(x) := \left[\sum_{i=1}^{n} x_i \beta^{-i},  \sum_{i=1}^{n} x_i \beta^{-i}+ \frac{\beta^{-n}}{\beta-1} \right],
\end{equation}
and further deduce that 
\begin{equation}
    \sum_{i=1}^n y_i \beta^{-i} \in J(x) := \left[\sum_{i=1}^{n} x_i \beta^{-i} - \frac{\beta^{-n}}{\beta-1},  \sum_{i=1}^{n} x_i \beta^{-i}+ \frac{\beta^{-n}}{\beta-1} \right],
\end{equation}
for all $y \in \Sigma_\beta(s,n)$. This leads to the definition of the following function.
\begin{definition}\label{def:definition of g beta to be all the beta sequences that are inside of an interval}
    For $\beta \in (1,2)$, define $g_\beta : \{0,1\}^\ast \rightrightarrows \{0,1\}^\ast$ as 
\begin{equation}
    g_\beta(x) := \left\{ y \in \{0,1\}^n : n = \len x,  \sum_{i=1}^n y_i \beta^{-i} \in_n J(x)\right\},
\end{equation}
for all $x \in \{0,1\}^\ast$.
\end{definition}
\noindent By the preceding discussion, we have
\begin{equation}\label{eq:Sigma beta s n is included in g beta x}
    \Sigma_\beta(s,n) \subseteq g_\beta(x), \ \forall x \in \Sigma_\beta(s,n).
\end{equation}
Now, define $\tilde g_\beta :  \{0,1\}^\ast \to \{0,1\}^\ast / \sim_\beta$ by
\begin{equation}
    \tilde g_\beta(x) := g_\beta(x)/ \sim_\beta, \ \forall x \in \{0,1\}^\ast,
\end{equation}
and
\begin{equation}
    \tilde \NN_{g_\beta}(x) := \# \tilde g_\beta(x), \ \forall x \in \{0,1\}^\ast.
\end{equation}
Note that the function defined by
\begin{equation}
    \begin{array}{rccc}
        \varphi_\beta : & \{0,1\}^n / \sim_\beta & \to & I_\beta\\
        & [x]_\beta & \mapsto & \sum_{i=1}^n x_i \beta^{-i}
    \end{array}
\end{equation}
is injective into $I_\beta$, since by definiton of $\sim_\beta$, $[x]_\beta \neq [y]_\beta \iff \sum_{i=1}^n x_i \beta^{-i} \neq \sum_{i=1}^n x_i \beta^{-i}$, for all $x,y \in \{0,1\}^n$. We define an ordering $\prec$ on $\{0,1\}^n / \sim_\beta$ by 
\begin{equation}
    u \prec v \iff \varphi_\beta(u) < \varphi_\beta(v), \ \forall u,v \in \{0,1\}^n / \sim_\beta.
\end{equation}
For $x \in \{0,1\}^\ast$, we enumerate the elements of $\tilde g_\beta(x)$ as follows. For $i \in \{1,\ldots, \tilde \NN_{g_\beta}(x)\}$, we let $\tilde g_\beta(x)|_i \in \tilde g_\beta(x)$, such that
\begin{equation}
    \tilde g_\beta(x)|_i < \tilde g_\beta(x)|_{i+1}, \ \forall i \in \{1,\ldots, \tilde \NN_{g_\beta}(x) - 1\}.
\end{equation}
We can then express $\Sigma_\beta(s,n)$ simply in function of the elements of $\tilde g_\beta(x)$. 
\begin{lemma}\label{lem:the set of beta expansion is expressed as a union of consecutive equivalence classes}
    Let $\beta \in (1,2)$, $s \in [0,1]$, $n \in \N$ and $x \in \Sigma_\beta(s,n)$. Then, there exists $i,j \in \{1, \ldots, \tilde \NN_{g_\beta}(x)\}$, $i \leq j$, such that 
    \begin{equation}
        \Sigma_\beta(s,n) = \bigcup_{\ell=i}^j\tilde g_\beta(x)|_\ell.
    \end{equation}
\end{lemma}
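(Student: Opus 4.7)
The plan is to observe that $\Sigma_\beta(s,n)$ decomposes as a union of whole $\sim_\beta$-equivalence classes, and that, once these classes are mapped to their $\varphi_\beta$-values, they form the preimage of a real interval. Since the enumeration $\tilde g_\beta(x)|_1, \tilde g_\beta(x)|_2, \ldots$ orders classes by strictly increasing $\varphi_\beta$, this preimage will be a $\prec$-consecutive block of $\tilde g_\beta(x)$.

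First, I would invoke Lemma \ref{lem:sigma beta s n contains the equivalence classes}: if $y \in \Sigma_\beta(s,n)$ then $[y]_\beta \subseteq \Sigma_\beta(s,n)$, so $\Sigma_\beta(s,n)$ is partitioned by the $\sim_\beta$-classes it meets, i.e. $\Sigma_\beta(s,n) = \bigcup_{u \in \mathcal U} u$ where $\mathcal U := \Sigma_\beta(s,n)/\sim_\beta$. Combined with (\ref{eq:Sigma beta s n is included in g beta x}), which gives $\Sigma_\beta(s,n) \subseteq g_\beta(x)$, this shows that $\mathcal U$ is a nonempty subset of $\tilde g_\beta(x)$; nonemptiness is guaranteed since $[x]_\beta \in \mathcal U$ by hypothesis.

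Next, I would characterize $\mathcal U$ inside $\tilde g_\beta(x)$ through $\varphi_\beta$. Combining the inequalities (\ref{eq:1:construction of a multivalued function computable wrt beta that maps beta expansions to binary expansions of a real number}) and (\ref{eq:2:construction of a multivalued function computable wrt beta that maps beta expansions to binary expansions of a real number}) specialized to base $\beta$, one checks that for every $y \in \{0,1\}^n$, $y \in \Sigma_\beta(s,n)$ holds if and only if $\sum_{i=1}^n y_i \beta^{-i} \in [s - \beta^{-n}/(\beta-1),\, s]$, equivalently $\varphi_\beta([y]_\beta) \in [s - \beta^{-n}/(\beta-1),\, s]$. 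Consequently, a class $u \in \tilde g_\beta(x)$ belongs to $\mathcal U$ precisely when $\varphi_\beta(u) \in [s - \beta^{-n}/(\beta-1),\, s]$.

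Finally, because $\prec$ is defined by $u \prec v \iff \varphi_\beta(u) < \varphi_\beta(v)$, the enumeration $\tilde g_\beta(x)|_1 \prec \tilde g_\beta(x)|_2 \prec \cdots$ has strictly increasing $\varphi_\beta$-values. The preimage of a real interval under the strictly monotone map $\ell \mapsto \varphi_\beta(\tilde g_\beta(x)|_\ell)$ is itself an interval of integers, so there exist $i \leq j$ in $\{1,\ldots,\tilde \NN_{g_\beta}(x)\}$ with $\mathcal U = \{\tilde g_\beta(x)|_i,\ldots,\tilde g_\beta(x)|_j\}$, and taking the union over this range yields the claimed equality. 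There is no substantial obstacle here: the argument reduces to the fact that preimages of intervals under monotone maps are intervals, and the only technical care needed is the precise identification of $\Sigma_\beta(s,n)$ as the slice $\varphi_\beta^{-1}([s - \beta^{-n}/(\beta-1), s])$ of $\tilde g_\beta(x)$, which was already implicit in the derivation of $g_\beta$ in Definition \ref{def:definition of g beta to be all the beta sequences that are inside of an interval}.
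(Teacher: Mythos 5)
Your proposal is correct and follows essentially the same route as the paper: first use Lemma \ref{lem:sigma beta s n contains the equivalence classes} together with (\ref{eq:Sigma beta s n is included in g beta x}) to write $\Sigma_\beta(s,n)$ as a union of classes from $\tilde g_\beta(x)$, then use the characterization $\varphi_\beta(u)\in[s-\beta^{-n}/(\beta-1),s]$ to get consecutiveness. The paper phrases the last step as an explicit three-index betweenness argument ($i\le j\le k$ with endpoints in $\Sigma_\beta(s,n)$ forces the middle class in), which is exactly your "preimage of an interval under the monotone enumeration is an interval of integers" observation.
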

\begin{proof}
    Let $\beta \in (1,2)$, $s \in [0,1]$, $n \in \N$, and $x \in \Sigma_\beta(s,n)$. The proof is in two parts: we first prove that $x \in \Sigma_\beta(s,n) \Rightarrow [x]_\beta \subseteq \Sigma_\beta(s,n)$, and further that $[x]_\beta \prec [y]_\beta \prec [z]_\beta$ together with $x,z \in \Sigma_\beta(s,n)$ implies $y \in \Sigma_\beta(s,n)$. 
    \begin{enumerate}[label = (\alph*)]
        \item Let $x \in \Sigma_\beta(s,n)$. By Lemma \ref{lem:sigma beta s n contains the equivalence classes}, $[x]_\beta \in \Sigma_\beta(s,n)$. Hence, $\Sigma_\beta(s,n)$ is expressed as 
        \begin{equation}
            \Sigma_\beta(s,n) = \bigcup_{u \in X} u,
        \end{equation}
        for some subset $X \subseteq \{0,1\}^n/ \sim_\beta$. Since by (\ref{eq:Sigma beta s n is included in g beta x}), $\Sigma_\beta(s,n) \subseteq g_\beta(x)$, we get that $u \subseteq g_\beta(x) \Rightarrow u \in g_\beta(x)/ \sim_\beta \ = \tilde g_\beta(x)$, so there exists $\ell \in \{1,\ldots, \tilde\NN_{g_\beta} (x)\}$ such that $u = \tilde g_\beta(x)|_\ell$, for all $u \in X$. Therefore,
        \begin{equation}
            \Sigma_\beta(s,n) = \bigcup_{\ell \in L} \tilde g_\beta(x)|_\ell,
        \end{equation}
        for some $L \subseteq \{1, \ldots, \tilde \NN_{g_\beta}(x)\}$.
        \item We now end the proof by showing that $L$ is made of consecutive elements. Let $i,j,k \in \{1, \ldots, \tilde \NN_{g_\beta}(x)\}$, such that $i \leq j \leq k$ and suppose that $\tilde g_\beta(x)|_i, \tilde g_\beta(x)|_k \subseteq \Sigma_\beta(s,n)$. Then, for all $y \in u_j(x)$, all $u \in \tilde g_\beta(x)|_i$ and all $v \in \tilde g_\beta(x)|_k$, we have 
        \begin{equation}
            \sum_{m=1}^n u_m \beta^{-m} \leq \sum_{m=1}^n y_m \beta^{-m} \leq \sum_{m=1}^n v_m \beta^{-m}.
        \end{equation}
        Since $\tilde g_\beta(x)|_i, \tilde g_\beta(x)|_k \subseteq \Sigma_\beta(s,n)$, we further get
        \begin{equation}
            s - \frac{\beta^{-n}}{\beta-1} \leq \sum_{m=1}^n u_m \beta^{-m} \leq \sum_{m=1}^n y_m \beta^{-m} \leq \sum_{m=1}^n v_m \beta^{-m} \leq s.
        \end{equation}
        By (\ref{eq:1:construction of a multivalued function computable wrt beta that maps beta expansions to binary expansions of a real number}) and (\ref{eq:2:construction of a multivalued function computable wrt beta that maps beta expansions to binary expansions of a real number}), this implies that $y \in \Sigma_\beta(s,n)$, so $\tilde g_\beta(x)|_j \subseteq \Sigma_\beta(s,n)$. We have proven that $i,k \in L \Rightarrow j \in L$ for all $i \leq j \leq k$, i.e., $L$ is made of consecutive elements.
    \end{enumerate}
\end{proof}
\noindent For every $M \in \N$, we define $\CC(M)$ to be the set of all subset of consecutive elements in $\{1,\ldots,M\}$. In mathematical symbols, this is expressed as
\begin{equation}
    \CC(M) := \{ \{i, \ldots, j\} : i,j \in \{1,\ldots,M\}, i \leq j\}.
\end{equation}
We are ready to state the definition of the computable function $f_{\beta, 1 \to all}$. 
\begin{definition}\label{def:definition of f beta 1 all}
    Let $\beta \in (1,2)$. Define $\iota_\beta(x)$ to be the unique natural number such that 
    \begin{equation}
        x \in \tilde g_\beta(x)|_{\iota_\beta(x)}.
    \end{equation}
    Let $f_{\beta, 1 \to all} : \{0,1\}^\ast \rightrightarrows \{0,1\}^\ast$ be defined as
    \begin{equation}\label{eq:main:def:definition of f beta 1 all}
        f_{\beta, 1 \to all}(x) := \left\{ \left\langle \bigcup_{i \in A} \tilde g_{\beta}(x)|_i \right\rangle : A \in \CC\left(\tilde \NN_{g_\beta}(x)\right), \iota_\beta(x) \in A\right\},
    \end{equation}
    for all $x \in \{0,1\}^\ast$.
\end{definition}
\begin{lemma}
    Let $\beta \in (1,2)$, $s \in I_\beta$ and $n \in \N$. Then,
    \begin{equation}
        \langle \Sigma_\beta(s,n) \rangle \in f_{\beta, 1 \to all}(x), \ \forall x \in \Sigma_\beta(s,n).
    \end{equation}
\end{lemma}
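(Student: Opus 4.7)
The plan is to read off the conclusion by combining the preceding lemma (which expresses $\Sigma_\beta(s,n)$ as a union of consecutive equivalence classes from $\tilde g_\beta(x)$) with the definition of $f_{\beta, 1 \to all}$. Fix $\beta \in (1,2)$, $s \in I_\beta$, $n \in \N$, and $x \in \Sigma_\beta(s,n)$. First I would invoke Lemma \ref{lem:the set of beta expansion is expressed as a union of consecutive equivalence classes} to obtain indices $i \leq j$ in $\{1, \ldots, \tilde \NN_{g_\beta}(x)\}$ such that
\begin{equation}
    \Sigma_\beta(s,n) = \bigcup_{\ell=i}^{j} \tilde g_\beta(x)|_\ell.
\end{equation}
Set $A := \{i, i+1, \ldots, j\}$, which by construction lies in $\CC(\tilde \NN_{g_\beta}(x))$.

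Next I would verify that $\iota_\beta(x) \in A$. Since $x \in \Sigma_\beta(s,n)$, the displayed union forces $x \in \tilde g_\beta(x)|_{\ell_0}$ for some $\ell_0 \in \{i, \ldots, j\}$; uniqueness of the index characterizing $\iota_\beta(x)$ (which follows from the fact that the classes $\tilde g_\beta(x)|_\ell$ partition $g_\beta(x)$, and $x \in g_\beta(x)$ because $\sum_{m=1}^n x_m \beta^{-m}$ is the midpoint of $J(x)$, hence $\in_n J(x)$) then gives $\iota_\beta(x) = \ell_0 \in A$.

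Finally, plugging this $A$ into (\ref{eq:main:def:definition of f beta 1 all}) yields
\begin{equation}
    \left\langle \bigcup_{\ell \in A} \tilde g_\beta(x)|_\ell \right\rangle = \langle \Sigma_\beta(s,n) \rangle \in f_{\beta, 1 \to all}(x),
\end{equation}
which is the desired conclusion. The argument is essentially a sequence of definition unfoldings, with all substantive work already packed into Lemma \ref{lem:the set of beta expansion is expressed as a union of consecutive equivalence classes}; the only mildly nontrivial point to check in passing is the well-definedness of $\iota_\beta(x)$, which I expect to be the main (and still very minor) obstacle, and which is handled by the observation that $x \in g_\beta(x)$ together with the fact that $\sim_\beta$-classes are disjoint.
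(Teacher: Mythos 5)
Your proof is correct and follows the same route as the paper: invoke Lemma \ref{lem:the set of beta expansion is expressed as a union of consecutive equivalence classes} to write $\Sigma_\beta(s,n)$ as $\bigcup_{\ell \in A}\tilde g_\beta(x)|_\ell$ for some $A \in \CC(\tilde\NN_{g_\beta}(x))$, then read off membership from Definition \ref{def:definition of f beta 1 all}. You are in fact slightly more careful than the paper, which asserts the conclusion "follows immediately" without checking that $\iota_\beta(x) \in A$; your verification of this (via $x \in g_\beta(x)$, since $\sum_m x_m\beta^{-m}$ is the midpoint of $J(x)$ at distance greater than $2^{-n}$ from its endpoints, together with disjointness of the $\sim_\beta$-classes) is a welcome addition.
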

\begin{proof}
    Let $\beta \in (1,2)$, $s \in I_\beta$, $n \in \N$ and $x \in \Sigma_\beta(s,n)$. Then, by Lemma \ref{lem:the set of beta expansion is expressed as a union of consecutive equivalence classes}, there exists $A \in \CC(\tilde \NN_{g_\beta}(x))$ such that 
    \begin{equation}
        \Sigma_\beta(s,n) = \bigcup_{i \in A} \tilde g_\beta(x)|_i.
    \end{equation}
    The Lemma follows immediately by Definition \ref{def:definition of f beta 1 all} of $f_{\beta, 1 \to all}$.
\end{proof}
\begin{lemma}\label{lem:size of the multifunction that generates all beta expansions from one}
    Let $\beta \in (1,2)$ be an algebraic number and $n \in \N$. Then,
    \begin{equation}
        \#f_{\beta, 1 \to all}(x) \leq \left(\frac{1}{\beta-1} + 2\right)^2\frac{n^{2k_\beta}}{(L_\beta\Pi_\beta)^2}\left(L_\beta \Pi_\beta^+ \right)^{2n},
    \end{equation}
    for all $x \in \{0,1\}^n$.
\end{lemma}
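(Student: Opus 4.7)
The plan is to bound the cardinality of $f_{\beta, 1 \to all}(x)$ in two independent steps: first a combinatorial step that reduces the count to the number of ``intervals'' of consecutive equivalence classes through a distinguished index, and then an analytic step that bounds the total number of equivalence classes via a volume-over-spacing argument driven by the separation lemma (Lemma \ref{lem:generalization of the garsia separation lemma}).

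For the combinatorial step, I observe from Definition \ref{def:definition of f beta 1 all} that each element of $f_{\beta, 1 \to all}(x)$ corresponds to a set $A \in \CC(\tilde \NN_{g_\beta}(x))$ containing the fixed index $\iota_\beta(x)$. A set of consecutive integers in $\{1, \ldots, M\}$ is determined by its endpoints $i \leq j$, so $\#\CC(M) \leq M^2$; in particular, the number of such intervals that contain a prescribed index is at most $M^2$. This immediately gives
\begin{equation}
    \# f_{\beta, 1 \to all}(x) \leq \bigl(\tilde \NN_{g_\beta}(x)\bigr)^2.
\end{equation}

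For the analytic step, I need a bound of the form $\tilde \NN_{g_\beta}(x) \leq \bigl(\tfrac{1}{\beta-1} + 2\bigr) \cdot n^{k_\beta}\bigl(L_\beta \Pi_\beta^+\bigr)^n / (L_\beta \Pi_\beta)$. By definition, $\tilde \NN_{g_\beta}(x)$ is the number of distinct values of $\sum_{i=1}^n y_i \beta^{-i}$ for $y \in g_\beta(x)$, since the equivalence classes of $\sim_\beta$ are exactly the fibres of $u \mapsto \sum u_i \beta^{-i}$. Each of these distinct values lies in $J(x)^{(n)}$, an interval of width at most $\bigl(\tfrac{2}{\beta-1} + 2\bigr)\beta^{-n}$ (using $2^{-n} \leq \beta^{-n}$ for $\beta \in (1,2)$). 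By the separation lemma applied to any two inequivalent sequences in $\{0,1\}^n$, consecutive distinct values are at least $L_\beta\Pi_\beta / \bigl(n^{k_\beta}(\beta L_\beta \Pi_\beta^+)^n\bigr)$ apart. The standard ``width-over-spacing'' estimate then yields
\begin{equation}
    \tilde \NN_{g_\beta}(x) \leq \bigl(\tfrac{1}{\beta-1} + 2\bigr)\cdot \frac{n^{k_\beta}\bigl(L_\beta \Pi_\beta^+\bigr)^n}{L_\beta \Pi_\beta}
\end{equation}
(the factor $\beta^n$ from the spacing cancels with $\beta^{-n}$ from the width of $J(x)^{(n)}$). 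Squaring this and combining with the combinatorial bound from step one delivers the claim.

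I do not expect any real obstacle in this proof: the combinatorial observation on $\CC(M)$ is elementary, and the analytic bound is an immediate packing argument whose only non-trivial ingredient, the lower bound on spacing, is exactly Lemma \ref{lem:generalization of the garsia separation lemma}, already invoked the same way in the proof of Lemma \ref{lem:bounding the size of the output set of the multivalued function projected on lexicographically maximal sequences}. The only bookkeeping to watch is (i) absorbing the $2^{-n}$ slack introduced by the approximate membership $\in_n$ into the constant in front of $\beta^{-n}$, and (ii) matching the constants so that the final expression takes the stated form.
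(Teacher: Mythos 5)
Your overall strategy is exactly the paper's: bound $\tilde\NN_{g_\beta}(x)$ by a width-over-spacing argument using Lemma \ref{lem:generalization of the garsia separation lemma}, bound the number of admissible consecutive-index sets $A$ through $\iota_\beta(x)$, and multiply. The structure is sound and the only non-trivial ingredient is used the same way.

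However, as written your constants do not close. You correctly compute the width of $J(x)^{(n)}$ as at most $\left(\tfrac{2}{\beta-1}+2\right)\beta^{-n}$, but dividing this by the spacing $m_n = L_\beta\Pi_\beta/\bigl(n^{k_\beta}(\beta L_\beta\Pi_\beta^+)^n\bigr)$ gives $\tilde\NN_{g_\beta}(x) \leq \left(\tfrac{2}{\beta-1}+2\right)n^{k_\beta}(L_\beta\Pi_\beta^+)^n/(L_\beta\Pi_\beta)$, not the bound with prefactor $\tfrac{1}{\beta-1}+2$ that you assert; the two differ by a factor $\bigl(\tfrac{2}{\beta-1}+2\bigr)/\bigl(\tfrac{1}{\beta-1}+2\bigr)$, which lies strictly between $4/3$ and $2$ for $\beta\in(1,2)$. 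Squaring and combining with your crude combinatorial count $\#\{A\in\CC(M):i\in A\}\leq M^2$ then yields a final constant $\bigl(\tfrac{2}{\beta-1}+2\bigr)^2 > \bigl(\tfrac{1}{\beta-1}+2\bigr)^2$, i.e.\ a bound strictly weaker than the one stated. The repair is exactly what the paper does: count the intervals through a fixed index sharply. An interval $\{i',\ldots,j'\}\subseteq\{1,\ldots,M\}$ containing $i$ is determined by $i'\leq i$ and $j'\geq i$, so there are at most $i(M-i+1)\leq\tfrac14(M+1)^2$ of them (the paper records this as $\leq\tfrac14 M^2$); this factor of $\tfrac14$ absorbs the squared factor of $2$ you are carrying from the analytic step, since $\tfrac14\bigl(\tfrac{2}{\beta-1}+2\bigr)^2=\bigl(\tfrac{1}{\beta-1}+1\bigr)^2\leq\bigl(\tfrac{1}{\beta-1}+2\bigr)^2$. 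With that one adjustment your argument coincides with the paper's proof.
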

\begin{proof}
    Let $\beta \in (1,2)$ be an algebraic number, $n \in \N$ and $x \in \{0,1\}^n$. By Lemma \ref{lem:generalization of the garsia separation lemma}, we get that 
    \begin{equation}
        |\varphi_\beta(u) - \varphi_\beta(v)| \geq \frac{L_\beta\Pi_\beta}{n^{k_\beta}\left(\beta L_\beta  \Pi_\beta^+ \right)^n} =: m_n,
    \end{equation}
    for all $u,v \in \{0,1\}^n / \sim_\beta$ with $u \neq v$. Then,
    \begin{align}
        \# \left(\varphi_\beta(\{0,1\}^n / \sim_\beta) \cap J(x)^{(n)}\right) &\leq \frac{|J(x)^{(n)}|}{m_n}\\
        & = \frac{2}{m_n}\left(\frac{\beta^{-n}}{\beta-1} + 2 \cdot 2^{-n}\right)\\
        & \leq \frac{2\beta^{-n}}{m_n}\left(\frac{1}{\beta-1} + 2\right) \cdot
    \end{align}
    By injectivity of $\varphi_\beta$,
    \begin{align}
        \# \varphi_\beta^{-1} \left(\varphi_\beta(\{0,1\}^n / \sim_\beta) \cap J(x)^{(n)}\right) &= \# \left(\varphi_\beta(\{0,1\}^n / \sim_\beta) \cap J(x)^{(n)}\right)\\ 
        &\leq \frac{2\beta^{-n}}{m_n}\left(\frac{1}{\beta-1} + 2\right) \cdot
    \end{align}
    Hence, since by Definition \ref{def:definition of g beta to be all the beta sequences that are inside of an interval}, $\tilde g_\beta(x) \subseteq \varphi_\beta^{-1} \left(\varphi_\beta(\{0,1\}^n / \sim_\beta) \cap J(x)^{(n)}\right)$ for $x \in \{0,1\}^n$, we get that 
    \begin{equation}
        \tilde \NN_{g_\beta}(x) \leq  \frac{2\beta^{-n}}{m_n}\left(\frac{1}{\beta-1} + 2\right) = \left(\frac{1}{\beta-1} + 2\right)\frac{2n^{k_\beta}}{L_\beta \Pi_\beta} \left(L_\beta \Pi_\beta^+ \right)^{n}.
    \end{equation}
    
    \noindent Now, there remains to establish a bound on $\#f_{\beta, 1 \to all}(x) = \#\{ A : A \in \CC(\tilde \NN_{g_\beta}(x)), \iota_\beta(x) \in A\}$, for $x \in \{0,1\}^n$. Let $M \in \N$, $i \in \{1,\ldots, i\}$, and define the set $\DD(M,i) := \{ A : A \in \CC(M), i \in A\}$. We will establish a general bound on $\#\DD(M,i)$, which as a corollary will give a bound on $\#\DD(\tilde \NN_{g_\beta}(x),\iota_\beta(x))$. By symmetry, we can restrict ourselves to $i \leq M/2$ without loss of generality. Define the sets $\DD_k(M,i) = \{ A \in \DD(M,i) : \#A = k\}$, for $k \in \{1, \ldots, M\}$. Note that 
    \begin{equation}
        \#\DD_k(M,i) = \min\{k, i, M-k\}.
    \end{equation}
    Then,
    \begin{align}
        \# \DD(M,i) &= \sum_{k = 1}^M \#\DD_k(M,i) = \sum_{k = 1}^M \min\{k, i, M-k\}\\
        &= \sum_{k=1}^{i-1} k + \sum_{k=i}^{M-i} i + \sum_{k=M - i +1}^{M} (M -k)\\
        & = 2 \sum_{k=1}^{i-1} + (M - 2i + 1)i = i(i-1) + (M - 2i + 1)i\\
        &= Mi - i^2.
    \end{align}
    Note that $Mi - i^2$ is maximal for $i = M/2$. Therefore,
    \begin{equation}
        \# \DD(M,i) \leq \frac14 M^2.
    \end{equation}
    We conclude that 
    \begin{align}
        \#f_{\beta, 1 \to all}(x) &= \#\DD(\tilde \NN_{g_\beta}(x),\iota_\beta(x)) \leq \frac14 \tilde \NN^2_{g_\beta}(x)\\
        & \leq \left(\frac{1}{\beta-1} + 2\right)^2\frac{n^{2k_\beta}}{((\beta-1)L_\beta\Pi_\beta)^2}\left(L_\beta \Pi_\beta^+ \right)^{2n},
    \end{align}
    for all $x \in \{0,1\}^n$.
\end{proof}

\subsection{\texorpdfstring{An algorithm to extract the coin tosses from a $\beta$-expansion}{}}

In this part, we show that every $\beta$-expansion of a given number $s \in I_\beta$ can be associated with a sequence of coin tosses generated by Algorithm \ref{alg:random algorithm beta expansion}. We will then study the algorithmic complexity of $\beta$-expansions through the study of this sequence of coin tosses. Following \cite[Section 1.2]{dajani2003random}, we first establish the equations that govern Algorithm \ref{alg:random algorithm beta expansion}, given fixed input $\beta \in (1,2)$, $s \in [0,1]$, and $n \in \N$. First remark that Algorithm \ref{alg:random algorithm beta expansion} generates two sequences $(b_i)_{i \in \{1,\ldots,n\}}$ and $(r_i)_{i \in \{0,\ldots,n\}}$, that are recursively defined according to the following pattern. By lines 1 and 9 in Algorithm \ref{alg:random algorithm beta expansion}, we get that 
\begin{equation}
    r_0 = s, \ \ r_i = \beta r_{i-1} - b_i, \ \forall i \in \{1, \ldots, n\}.
\end{equation}
Further, lines 4-6 deliver
\begin{equation}
    b_i = \begin{cases}
        0 \ \text{if} \ r_{i-1} \in [0, \beta^{-1}) =: E_\beta^{(0)}\\
        \text{Randomly} \ 0 \text{ or } 1 \ \text{if} \  r_{i-1}  \in [\beta^{-1}, ((\beta-1)\beta)^{-1}] =: S_\beta\\
        1 \ \text{if} \ r_{i-1} \in (((\beta-1)\beta)^{-1}, (\beta-1)^{-1}] =: E_\beta^{(1)}
    \end{cases} =: T_\beta(r_{i-1}).
\end{equation}
Note that the recursion generating $(r_i)_{i \in \{1, \ldots , n\}}$ can be simply expressed by 
\begin{equation}\label{eq:definition of K beta, simple recursive relation for r i}
    r_i = \beta r_{i-1} - T_\beta(r_{i-1}) =: K_\beta(r_{i-1}), \ \forall i \in \{1, \ldots, n\}.
\end{equation}
The output $\AM_\beta (s,n)$ of the algorithm is given in line 8 by 
\begin{equation}
    \AM_\beta (s,n) = \coprod_{i=1}^n b_i = \coprod_{i=1}^n T_\beta(r_{i-1}) =  \coprod_{i=0}^{n-1} T_\beta(r_i).
\end{equation}
In order to formalize better the problem, we can define $T_\beta(r)$ formally as a random variable on $\{0,1\}$ for $r \in I_\beta$, i.e., as a function $T_\beta(r) : \{0,1\} \to \{0,1\}$ satisfying
\begin{equation}
    T_\beta(r)(x) := \begin{cases}
        0, \ \text{if} \ s \in  E_\beta^{(0)},\\
        x, \ \text{if} \ s \in  S_\beta,\\
        1, \ \text{if} \ s \in  E_\beta^{(1)},
    \end{cases}
    \ \forall x \in \{0,1\}, \ r \in I_\beta.
\end{equation}
Since the generation of the sequence $(r_i)_{i \in \N}$ might rely on an arbitrarily many realisations of the above random variable, we define an extension $\tilde T_\beta$ of $T_\beta$, that will be useful to consider Algorithm \ref{alg:random algorithm beta expansion} as a whole to be a random variable. Namely, we define $\tilde T_\beta(r) : \{0,1\}^\N \to \{0,1\}$ by 
\begin{equation}\label{eq:definition of the generalized random bit generator}
    \tilde T_\beta(r)(\xf) := \begin{cases}
        0, \ \text{if} \ s \in  E_\beta^{(0)},\\
        \xf_1, \ \text{if} \ s \in  S_\beta,\\
        1, \ \text{if} \ s \in  E_\beta^{(1)},
    \end{cases}
    \ \forall \xf \in \{0,1\}^\N, \ r \in I_\beta.
\end{equation}
Thanks to the formulation above, we can see Algorithm \ref{alg:random algorithm beta expansion} as first generating an infinite sequence $\xf \in \{0,1\}^\N$ out of infinitely many Bernoulli experiments, and then reading successively the outcomes of this experiment, i.e., at each time step, the algorithm reads $\xf_1$, and then discards it, such that the sequence $\xf$ is transformed into the sequence $\xf' := \xf_2\xf_3\ldots$. This is usually denoted by $\xf' = \sigma(\xf)$, where $\sigma : \{0,1\}^\N \to \{0,1\}^\N$ is called the left-shift. Then, in the next time step, the algorithm will read $\xf'_1 = \xf_2$ to generate the next bit of the $\beta$-expansion of $s$.  Note that when $r_i \notin S_\beta$, then $\tilde T_\beta(r_i)(\xf)$ does not depend on $\xf \in \{0,1\}^\N$, as it is generated deterministically. Hence, the Algorithm \ref{alg:random algorithm beta expansion} does not need to discard $\xf_1$, and wait until the next time step $i$ for which $r_i \in S_\beta$ to discard $\xf_1$. Accordingly, we define $\sigma_r : \{0,1\}^\N \to \{0,1\}^\N$ by 
\begin{equation}
    \sigma_r(\xf) := \begin{cases}
        \xf \ \text{if} \ r \notin S_\beta,\\
        \sigma(\xf), \ \text{if} \ r \in S_\beta,
    \end{cases} \ \forall r \in I_\beta, \ \xf \in \{0,1\}^\N.
\end{equation}
We are now ready to state formally the definition of the Algorithm \ref{alg:random algorithm beta expansion} as a random variable. First, we model $K_\beta(r)$ defined in (\ref{eq:definition of K beta, simple recursive relation for r i}) as a random variable on $\{0,1\}^\N$ for $r \in I_\beta$, i.e., as a function $K_\beta(r) : \{0,1\}^\N \to I_\beta \times \{0,1\}^\N$,  defined by 
\begin{equation}
    K_\beta(r)(\xf) := (\beta r - \tilde T_\beta(r)(\xf_1), \sigma_r(\xf)),
    \ \forall \xf \in \{0,1\}^\N.
\end{equation}
Note that thanks to the formalism introduced, we can also consider the sequences $(b_i)_{i \in \{1, \ldots, n\}}$ and $(r_i)_{i \in \{0, \ldots, n\}}$ as random variables, i.e., each elements is a function from $\{0,1\}^\N$ to $\{0,1\}$ and $I_\beta$, respectively. We denote their respective elements $b_i(\xf)$ and $r_i(\xf)$, for $\xf \in \{0,1\}^\N$. In order to simplify the notations, we might consider $\tilde T_\beta$ as a function from $I_\beta \times \{0,1\}^\N$ into $\{0,1\}$ and $K_{\beta}$ as a function from $I_\beta \times \{0,1\}^\N$ into itself, by making the confusion between $\tilde T_\beta(s)(\xf)$ and $\tilde T_\beta(s,\xf)$, and the confusion between $K_{\beta}(s)(\xf)$ and $K_{\beta}(s,\xf)$. Note that this way, we get a simple equation for $r_i(\xf)$ in function of $\xf$:
\begin{equation}
    r_i(\xf) = \pi_1 K^i_\beta(s,\xf), \ \forall \xf \in \{0,1\}^\N,
\end{equation}
where $\pi_1$ is the projection onto the first coordinate. This simpler formalism allows to define the output of the Algorithm \ref{alg:random algorithm beta expansion} as a random variable as well, i.e., as a function $\AM_\beta(s,n) : \{0,1\}^\N \to \{0,1\}^n$, by 
\begin{equation}\label{eq:definition of the random beta expansion algorithm as a random variable}
    \AM_\beta(s,n)(\xf) = \coprod_{i=0}^{n-1} \tilde T_\beta\left(K_{\beta}^i(s,\xf)\right), \ \forall \xf \in \{0,1\}^\N.
\end{equation}

We repeat that Algorithm \ref{alg:random algorithm beta expansion} can generate different outputs of the same input $s \in [0,1]$. \cite[Theorem 1]{dajani2007invariant} establishes that infact, every $\beta$-expansion of $s$ can actually be generated by this Algorithm. This is formally expressed by 
\begin{equation}
    \AM_\beta(s,n) \left( \{0,1\}^\N\right) = \Sigma_\beta(s,n).
\end{equation}
We let also $\NN_\beta(s,n) :=  \# \left(\Sigma_\beta(s,n) \right)$.
We now leverage the formalism introduced above to express every $\beta$-expansion as a decomposition into a binary expansion and a sequence of coin tosses. We start by showing that for every $\beta$-expansion of $s$, we can associate a sequence of coin tosses that generated it. First, define
\begin{equation}
    \SM_\beta(s,\xf) := \{i \in \N_0 : \pi_1 K_\beta^i(s,\xf) \in S_\beta\},
\end{equation}
\begin{equation}
    \SM_\beta(s,n,\xf) := \SM_\beta(s,\xf) \cap \{0,\ldots,n-1\},
\end{equation} 
and
\begin{equation}
    h_\beta(s,n,\xf) := \# \SM_\beta(s,n,\xf) \leq n,
\end{equation}
for all $\xf \in \{0,1\}^\N$. We first note a simple technical Lemma.
\begin{lemma}\label{lem:recursive relation for determining h beta}
    Let $\beta \in (1,2)$, $s \in I_\beta$, $n \in \N$, and $\xf \in \{0,1\}^\N$. Then,
    \begin{equation}\label{eq:main1:lem:recursive relation for determining h beta}
        h_\beta(s,n+1,\xf) = h_\beta(s,n, \xf) + 1_{n \in \SM_\beta(s,\xf)},
    \end{equation}
    and therefore,
    \begin{equation}\label{eq:main2:lem:recursive relation for determining h beta}
        \bigcup_{i \in \SM_\beta(s,n,\xf)} \{h_\beta(s,i+1,\xf)\} = \{1, \ldots, h_\beta(s,n,\xf)\}.
    \end{equation}
\end{lemma}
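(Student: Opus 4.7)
The plan is to prove both identities by unpacking the definitions and doing elementary set-theoretic bookkeeping; there is no serious obstacle since $h_\beta(s,n,\xf)$ is defined simply as the cardinality of the set $\SM_\beta(s,\xf) \cap \{0,\ldots,n-1\}$.

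For the first identity, I would write
\[
\SM_\beta(s,n+1,\xf) \;=\; \SM_\beta(s,\xf)\cap\{0,\ldots,n\} \;=\; \bigl(\SM_\beta(s,\xf)\cap\{0,\ldots,n-1\}\bigr)\,\sqcup\,\bigl(\SM_\beta(s,\xf)\cap\{n\}\bigr),
\]
where the union is disjoint because the two index sets on the right are disjoint. Taking cardinalities and noting that $\#(\SM_\beta(s,\xf)\cap\{n\}) = 1_{n \in \SM_\beta(s,\xf)}$ immediately gives $h_\beta(s,n+1,\xf) = h_\beta(s,n,\xf) + 1_{n\in\SM_\beta(s,\xf)}$.

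For the second identity, I would enumerate $\SM_\beta(s,n,\xf)$ in increasing order as $i_1 < i_2 < \cdots < i_k$, where $k := h_\beta(s,n,\xf)$. For each $j \in \{1,\ldots,k\}$, I compute
\[
h_\beta(s,i_j+1,\xf) \;=\; \#\bigl(\SM_\beta(s,\xf)\cap\{0,\ldots,i_j\}\bigr) \;=\; \#\{i_1,\ldots,i_j\} \;=\; j,
\]
using that the elements of $\SM_\beta(s,\xf)$ lying in $\{0,\ldots,i_j\}$ are precisely $i_1,\ldots,i_j$ (elements $i_\ell$ with $\ell > j$ exceed $i_j$ by the strict ordering). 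Hence
\[
\bigcup_{i \in \SM_\beta(s,n,\xf)}\{h_\beta(s,i+1,\xf)\} \;=\; \bigcup_{j=1}^{k}\{j\} \;=\; \{1,\ldots,k\} \;=\; \{1,\ldots,h_\beta(s,n,\xf)\},
\]
which is the desired equality. As a small sanity check, the first identity can be used to derive the second inductively: adding a new index $n$ to $\SM_\beta(s,\xf)$ increments $h_\beta$ by one, so the values $h_\beta(s,i+1,\xf)$ visited as $i$ ranges over $\SM_\beta(s,n,\xf)$ in increasing order run through $1,2,\ldots,k$ without gaps, which is exactly what the identity states.
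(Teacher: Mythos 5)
Your proof is correct and follows essentially the same route as the paper: part (1) via splitting $\SM_\beta(s,n+1,\xf)$ into the contribution up to $n-1$ plus the indicator of $n$, and part (2) via enumerating $\SM_\beta(s,n,\xf)$ in increasing order and showing the $j$-th element satisfies $h_\beta(s,i_j+1,\xf)=j$. The only (harmless) difference is that you obtain $h_\beta(s,i_j+1,\xf)=j$ by directly counting $\#\bigl(\SM_\beta(s,\xf)\cap\{0,\ldots,i_j\}\bigr)$, whereas the paper derives it by induction from the recursion in part (1); both are valid.
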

\begin{proof}
    Let $\beta \in (1,2)$, $s \in I_\beta$, $n \in \N$, and $\xf \in \{0,1\}^\N$. Suppose that $n \notin S_\beta(s,\xf)$. Then,
    \begin{equation}
        \SM_\beta(s,n+1,\xf) = \SM_\beta(s,\xf) \cap \{0,\ldots, n\} = \SM_\beta(s,\xf) \cap \{0,\ldots, n-1\} = \SM_\beta(s,n,\xf),
    \end{equation}
    therefore $h_\beta(s,n+1,\xf) = h_\beta(s,n, \xf)$. Suppose now that $n \in S_\beta(s,\xf)$. Then,
    \begin{align}
        \SM_\beta(s,n+1,\xf) &= \SM_\beta(s,\xf) \cap \{0,\ldots, n\}\\
        & = \left(\SM_\beta(s,\xf) \cap \{0,\ldots, n-1\}\right) \cup \{n\}\\
        & = \SM_\beta(s,n,\xf) \cup \{n\}\,
    \end{align}
    therefore $h_\beta(s,n+1,\xf) = h_\beta(s,n, \xf) + 1$. This yields (\ref{eq:main1:lem:recursive relation for determining h beta}). Let $i_1 < i_2 < \ldots < i_{h_\beta(s,n, \xf)}$ be the elements of $\SM_\beta(s,n,\xf)$. For all $j \leq i_1$, $h_\beta(s, j, \xf) = h_\beta(s, 0, \xf) = 0$, so $h_\beta(s, i_1+1, \xf) = h_\beta(s, i_1, \xf) + 1 = 1$. Moreover, for all $k \in \{2, \ldots, h_\beta(s,n, \xf)\}$,
    \begin{equation}
        h_\beta(s, i_k+1, \xf) = h_\beta(s, i_k, \xf) + 1 = h_\beta(s, i_{k-1}+1, \xf) + 1,
    \end{equation}
    which by induction on $k$, yields
    \begin{equation}
        h_\beta(s, i_k+1, \xf) = k.
    \end{equation}
    Then,
    \begin{align}
        \bigcup_{i \in \SM_\beta(s,n,\xf)} \{h_\beta(s,i+1,\xf)\} &= \bigcup_{k = 1}^{h_\beta(s,n,\xf)} \{h_\beta(s,i_k+1,\xf)\}\\
        & = \bigcup_{k = 1}^{h_\beta(s,n,\xf)} \{k\} \\
        &= \{1, \ldots, h_\beta(s,n,\xf)\}.
    \end{align}
\end{proof}
\noindent $h_\beta(s,n-1,\xf)$ is exactly the number of bits of $\xf$ that Algorithm \ref{alg:random algorithm beta expansion} had to read before delivering its output. Then, every bit of $\xf$ that is localed after the rank $h_\beta(s,n,\xf)$ is not read by the algorithm to produce its output. This is formally expressed in the following Lemma and the subsequent Corollary.
\begin{lemma}\label{lem:the random beta expansion algorithm reads only the first bits of its input}
    Let $\beta \in (1,2)$, $s \in I_\beta$, $n \in \N$. Then,
    \begin{equation}
        \begin{cases}
            \SM_\beta(s,n,\xf) = \SM_\beta(s,n,\xf_{1:h_\beta(s,n,\xf)}\yf),\\
            \displaystyle \pi_1 K^n_\beta(s,\xf) = \pi_1 K^n_\beta\left(s,\xf_{1:h_\beta(s,n,\xf)}\yf\right),\\
            \displaystyle \pi_2 K^n(s,\xf) = \sigma^{h_\beta(s,n,\xf)}(\xf),
        \end{cases}
    \end{equation}
    for all $\xf, \yf \in \{0,1\}^\N$, where $\pi_1$ (resp. $\pi_2$) is the projection on the first coordinate (resp. second coordinate).
\end{lemma}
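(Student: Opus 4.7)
The plan is to prove all three equations simultaneously by induction on $n$, exploiting the fact that $K_\beta$ only consumes a bit of its second coordinate (via $\sigma_r$) when $r \in S_\beta$. The base case $n=0$ is immediate since $h_\beta(s,0,\xf) = 0$, $K^0_\beta$ is the identity, and $\sigma^0$ is the identity, so all three equalities hold trivially.

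For the inductive step, assume the three equalities hold at rank $n$, and let $h := h_\beta(s,n,\xf)$ and $r := \pi_1 K^n_\beta(s,\xf)$. By the inductive hypothesis, $\pi_2 K^n_\beta(s,\xf) = \sigma^h(\xf)$, so the first entry of this shifted sequence is $\xf_{h+1}$. I will split on whether $r \in S_\beta$. If $r \notin S_\beta$, then Lemma 6.7 gives $h_\beta(s,n+1,\xf) = h$; moreover $\tilde T_\beta(r)(\cdot)$ is constant and $\sigma_r$ is the identity, so one step of $K_\beta$ produces the same next state regardless of the second coordinate. Applying the inductive hypothesis to the prefix $\xf_{1:h}\yf$ then immediately yields the three equalities at rank $n+1$. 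If instead $r \in S_\beta$, then $h_\beta(s,n+1,\xf) = h+1$; here I would invoke the inductive hypothesis with $\yf$ replaced by $\xf_{h+1}\yf$, so that $\xf_{1:h}(\xf_{h+1}\yf) = \xf_{1:h+1}\yf$. The inductive hypothesis then shows that at time $n$, both $\xf$ and $\xf_{1:h+1}\yf$ yield the same value of $r$ and the same shifted tail $\sigma^h(\cdot)$, whose first entry is $\xf_{h+1}$ in both cases. Consequently the Bernoulli bit read at step $n$, namely $\xf_{h+1}$, coincides, producing the same $r_{n+1}$; adjoining $\{n\}$ to the equal sets $\SM_\beta(s,n,\xf)$ and $\SM_\beta(s,n,\xf_{1:h+1}\yf)$ gives the first equation, and $\sigma_r \circ \sigma^h = \sigma^{h+1}$ gives the third.

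The main obstacle will be the bookkeeping in Case 2, where the prefix length $h_\beta(s,\cdot,\xf)$ grows by one: the inductive hypothesis, as stated, controls only the prefix of length $h$, so I need to re-apply it after ``absorbing'' the bit $\xf_{h+1}$ into the arbitrary tail. This is essentially the observation that the inductive hypothesis quantifies over all $\yf \in \{0,1\}^\N$, which allows the substitution $\yf \leftarrow \xf_{h+1}\yf$. Once this rewriting is made explicit, the three equations at rank $n+1$ follow by combining the recursion $K^{n+1}_\beta = K_\beta \circ K^n_\beta$, the identity $\sigma_r \circ \sigma^h = \sigma^{h + 1_{r \in S_\beta}}$, and Lemma 6.7.
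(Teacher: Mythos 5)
Your proposal is correct and follows essentially the same route as the paper's proof: simultaneous induction on $n$, a case split on whether $r_n \in S_\beta$, the use of Lemma \ref{lem:recursive relation for determining h beta} to track $h_\beta(s,n+1,\xf)$, and the re-application of the universally quantified inductive hypothesis with the tail $\xf_{h+1}\yf$ in the switch-region case. The "bookkeeping obstacle" you flag is handled exactly as you describe (and is the same implicit move the paper makes), so there is no gap.
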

\begin{proof}
    Let $\beta \in (1,2)$, $s \in I_\beta$. We prove the result by induction. Obviously,
    \begin{equation}
        \SM_\beta(s,0,\xf) = \varnothing = \SM_\beta(s,0,\xf_{1:h_\beta(s,0,\xf)}\yf),
    \end{equation}
    \begin{equation}
        \pi_1 K^0_\beta(s,\xf) = s =  \pi_1 K^0_\beta(s,\xf_{1:h_\beta(s,0,\xf)}\yf),
    \end{equation}
    and
    \begin{equation}
        \pi_2 K^i(s,\xf) = \xf = \sigma^{0}(\xf) = \sigma^{h_\beta(s,0,\xf)}(\xf),
    \end{equation}
    for all $\xf,\yf \in \{0,1\}^\N$.
    Now, suppose that there exists $n \in \N$ such that
    \begin{equation}\tag{$H_n$}
        \begin{cases}
            \SM_\beta(s,n,\xf) = \SM_\beta(s,n,\xf_{1:h_\beta(s,n,\xf)}\yf) =: \SM_n,\\
            \displaystyle \pi_1 K^n_\beta(s,\xf) = \pi_1 K^n_\beta\left(s,\xf_{1:h_\beta(s,n,\xf)}\yf\right) =: r_n,\\
            \displaystyle \pi_2 K^n(s,\xf) = \sigma^{h_\beta(s,n,\xf)}(\xf) = \sigma^{h_n}(\xf),
        \end{cases}
    \end{equation}
    where $h_n := \# \SM_n$, for all $\xf, \yf \in \{0,1\}^\N$. We will show that $(H_{n+1})$ holds for all $\xf, \yf \in \{0,1\}^\N$. Fix $\xf, \yf \in \{0,1\}^\N$. First, note that 
    \begin{align}
        K^{n+1}_\beta(s,\xf) &= K_\beta \left( K^n_\beta(s,\xf)\right)\\
        &= K_\beta\left( \pi_1 K^n_\beta(s,\xf), \pi_2 K^n_\beta(s,\xf)\right)\\
        &\overset{(a)}{=} K_\beta\left( r_n, \ \sigma^{h_n}(\xf)\right),
    \end{align}
    and similarly
    \begin{align}
        K^{n+1}_\beta(s,\xf_{1:h_\beta(s,n+1,\xf)}\yf) &= K_\beta \left( K^n_\beta(s,\xf_{1:h_n}\yf)\right)\\
        &= K_\beta\left( \pi_1 K^n_\beta(s,\xf_{1:h_n}\yf), \pi_2 K^n_\beta(s,\xf_{1:h_n}\yf)\right)\\
        &\overset{(b)}{=} K_\beta\left(r_n, \ \sigma^{h_\beta(s,n,\xf_{1:h_n}\yf)}(\xf_{1:h_n}\yf)\right),\\
        &\overset{(c)}{=} K_\beta\left(r_n, \ \sigma^{h_n}(\xf_{1:h_n}\yf)\right),
    \end{align}
    where (a), (b) and (c) follow from $(H_n)$.
    \begin{enumerate}[label = (\alph*)]
        \item Suppose that $r_n \in E_\beta^{(0)} \cup E_\beta^{(1)}$, and define $b \in \{0,1\}$ such that $r_n \in E_\beta^{(b)}$. Then, 
            \begin{enumerate}[label=(\roman*)]
                \item Since $r_n \notin S_\beta$, then $n \notin \SM_\beta(s,\xf)$, so
                \begin{align}
                    \SM_\beta(s,n+1,\xf) &= \SM_\beta(s,\xf) \cap \{0, \ldots, n\}\\
                    & =  \SM_\beta(s,\xf) \cap \{0, \ldots, n-1\}\\
                    & = \SM_\beta(s,n,\xf) = \SM_n,
                \end{align}
                and similarly, $n \notin \SM_\beta(s,\xf_{1:h_n}\yf)$, so
                \begin{align}
                    \SM_\beta(s,n+1,\xf_{1:h_n}\yf) &= \SM_\beta(s,\xf_{1:h_n}\yf) \cap \{0, \ldots, n\}\\
                    & =  \SM_\beta(s,\xf_{1:h_n}\yf) \cap \{0, \ldots, n-1\}\\
                    & = \SM_\beta(s,n,\xf_{1:h_n}\yf) = \SM_n,
                \end{align}
                which yields
                \begin{equation}
                    \SM_\beta(s,n+1,\xf) = \SM_\beta(s,n+1,\xf_{1:h_\beta(s,n,\xf)}\yf) =: \SM_{n+1},
                \end{equation}
                and note that $\SM_{n+1} = \SM_n$, and hence $h_{n+1} = h_n$.
                \item Note that
                \begin{equation}
                    \pi_1 K_\beta^{n+1}(s,\xf) = \pi_1 K_\beta(r_n, \sigma^{h_n}(\xf)) = \beta r_n - b,
                \end{equation}
                and 
                \begin{align}
                    \pi_1 K_\beta^{n+1}(s,\xf_{1:h_n}\yf) = \pi_1 K_\beta(r_n, \sigma^{h_n}(\xf_{1:h_n}\yf)) = \beta r_n - b,
                \end{align}
                so 
                \begin{equation}
                    r_{n+1} := \pi_1 K_\beta^{n+1}(s,\xf) = \pi_1 K_\beta^{n+1}(s,\xf_{1:h_n}\yf),
                \end{equation}
                and note that $r_{n+1} = \beta r_n - b$.
                \item Since $r_n \notin S_\beta$, then 
                \begin{equation}
                    \pi_2 K_\beta^{n+1}(s,\xf) = \pi_2 K_\beta(r_n, \sigma^{h_n}(\xf)) = \sigma^{h_n}(\xf) = \sigma^{h_{n+1}}(\xf).
                \end{equation}
            \end{enumerate}
                \item Suppose that $r_n \in S_\beta$. Then,
                \begin{enumerate}[label=(\roman*)]
                    \item Since $r_n \in S_\beta$, then $n \in \SM_\beta(s,\xf)$, so
                    \begin{align}
                        \SM_\beta(s,n+1,\xf) &= \SM_\beta(s,\xf) \cap \{0, \ldots, n\}\\
                        & =  \left(\SM_\beta(s,\xf) \cap \{0, \ldots, n-1\}\right) \cup \{n\}\\
                        & = \SM_\beta(s,n,\xf) \cup \{n\} = \SM_n \cup \{n\},
                    \end{align}
                    and similarly, $n \notin \SM_\beta(s,\xf_{1:h_n}\yf)$, so
                    \begin{align}
                        \SM_\beta(s,n+1,\xf_{1:h_n}\yf) &= \SM_\beta(s,\xf_{1:h_n}\yf) \cap \{0, \ldots, n\}\\
                        & =  \left(\SM_\beta(s,\xf_{1:h_n}\yf) \cap \{0, \ldots, n-1\} \right) \cup \{n\}\\
                        & = \SM_\beta(s,n,\xf_{1:h_n}\yf) \cup \{n\} = \SM_n \cup \{n\},
                    \end{align}
                    which yields
                    \begin{equation}
                        \SM_\beta(s,n+1,\xf) = \SM_\beta(s,n+1,\xf_{1:h_n}\yf) =: \SM_{n+1},
                    \end{equation}
                    and note that $\SM_{n+1} = \SM_n \cup \{n\}$, and hence $h_{n+1} = h_n + 1$.
                    \item Note that
                    \begin{align}
                        \pi_1 K_\beta^{n+1}(s,\xf) &= \pi_1 K_\beta(r_n, \sigma^{h_n}(\xf))\\
                        & = \beta r_n - \left[\sigma^{h_n}(\xf)\right]_1\\
                        & = \beta r_n - \xf_{h_n+1},
                    \end{align}
                    and 
                    \begin{align}
                        \pi_1 K_\beta^{n+1}(s,\xf_{1:h_{n+1}}\yf) &= \pi_1 K_\beta(r_n, \sigma^{h_n}(\xf_{1:h_{n+1}}\yf))\\
                        & = \beta r_n - \left[\sigma^{h_n}(\xf_{1:h_{n+1}}\yf)\right]_1\\
                        & = \beta r_n - \left[\sigma^{h_n}(\xf_{1:h_n+1}\yf)\right]_1\\
                        & = \beta r_n - \xf_{h_n+1},
                    \end{align}
                    so 
                    \begin{equation}
                        r_{n+1} := \pi_1 K_\beta^{n+1}(s,\xf) = \pi_1 K_\beta^{n+1}(s,\xf_{1:h_{n+1}}\yf),
                    \end{equation}
                    and note that $r_{n+1} = \beta r_n - \xf_{h_n+1}$.
                    \item Since $r_n \in S_\beta$, then 
                    \begin{align}
                        \pi_2 K_\beta^{n+1}(s,\xf) &= \pi_2 K_\beta(r_n, \sigma^{h_n}(\xf)) = \sigma\left(\sigma^{h_{n}}(\xf)\right)\\
                        & = \sigma^{h_{n}+1}(\xf) = \sigma^{h_{n+1}}(\xf).
                    \end{align}
                \end{enumerate}
    \end{enumerate}
    We hence have proven that $(H_n) \Rightarrow (H_{n+1})$. We have completed the induction, thereby finishing the proof.
\end{proof}

\begin{corollary}Let $\beta \in (1,2)$, $s \in I_\beta$, $n \in \N$. Then,
    \begin{equation}\label{eq:dajani algorithm only depends on the bits that were read by the algorithm}
        \AM_\beta(s,n)(\xf_{1:h_\beta(s,n,\xf)}\{0,1\}^\N) = \{\AM_\beta(s,n)(\xf)\}
    \end{equation}
    for all $\xf \in \{0,1\}^\N$.
\end{corollary}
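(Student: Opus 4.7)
The plan is to reduce the corollary to Lemma \ref{lem:the random beta expansion algorithm reads only the first bits of its input} applied iteratively at every index $i \in \{0,\ldots,n-1\}$. Abbreviate $h := h_\beta(s,n,\xf)$, fix an arbitrary $\yf \in \{0,1\}^\N$, and define $\xf' := \xf_{1:h}\yf$. The inclusion $\AM_\beta(s,n)(\xf) \in \AM_\beta(s,n)(\xf_{1:h}\{0,1\}^\N)$ is immediate upon choosing $\yf := \sigma^h(\xf)$, which makes $\xf' = \xf$. The substantive part is therefore to show $\AM_\beta(s,n)(\xf') = \AM_\beta(s,n)(\xf)$ for every $\yf \in \{0,1\}^\N$.

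By the definition (\ref{eq:definition of the random beta expansion algorithm as a random variable}) of $\AM_\beta$, this reduces to verifying $\tilde T_\beta(K_\beta^i(s,\xf)) = \tilde T_\beta(K_\beta^i(s,\xf'))$ for every $i \in \{0,\ldots,n-1\}$. Put $h_i := h_\beta(s,i,\xf)$. From $\SM_\beta(s,i,\xf) \subseteq \SM_\beta(s,n,\xf)$ we get $h_i \leq h$, so $\xf'_{1:h_i} = \xf_{1:h_i}$, and consequently $\xf_{1:h_i}\sigma^{h_i}(\xf') = \xf'$. Invoking Lemma \ref{lem:the random beta expansion algorithm reads only the first bits of its input} at index $i$ with auxiliary sequence $\sigma^{h_i}(\xf')$ yields $\pi_1 K_\beta^i(s,\xf) = \pi_1 K_\beta^i(s,\xf')$ and $\SM_\beta(s,i,\xf) = \SM_\beta(s,i,\xf')$, so in particular $h_\beta(s,i,\xf') = h_i$; the third identity of the same Lemma, applied once to $\xf$ and once to $\xf'$, further gives $\pi_2 K_\beta^i(s,\xf) = \sigma^{h_i}(\xf)$ and $\pi_2 K_\beta^i(s,\xf') = \sigma^{h_i}(\xf')$.

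It remains to substitute these identities into the definition (\ref{eq:definition of the generalized random bit generator}) of $\tilde T_\beta$. If $\pi_1 K_\beta^i(s,\xf) \notin S_\beta$, then $\tilde T_\beta$ returns the same constant on both arguments and the step-$i$ equality holds. If instead $\pi_1 K_\beta^i(s,\xf) \in S_\beta$, i.e.\ $i \in \SM_\beta(s,n,\xf)$, then $\tilde T_\beta$ reads the first bit of its second slot, namely $\xf_{h_i+1}$ versus $\xf'_{h_i+1}$. The main (and only) obstacle is to verify that these bits are equal, i.e.\ that $h_i+1 \leq h$ so that the queried position lies in the shared prefix $\xf_{1:h}$. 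This is exactly the content of equation (\ref{eq:main2:lem:recursive relation for determining h beta}) in Lemma \ref{lem:recursive relation for determining h beta}: for $i \in \SM_\beta(s,n,\xf)$ one has $h_i + 1 = h_\beta(s,i+1,\xf) \in \{1,\ldots,h\}$, hence $\xf_{h_i+1} = \xf'_{h_i+1}$. Concatenating the bit-by-bit equalities over $i = 0,\ldots,n-1$ delivers $\AM_\beta(s,n)(\xf) = \AM_\beta(s,n)(\xf')$, completing the plan.
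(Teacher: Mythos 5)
Your proposal is correct and follows essentially the same route as the paper: fix $i$, apply Lemma \ref{lem:the random beta expansion algorithm reads only the first bits of its input} at index $i$ to equate the states $r_i$ and identify the shifted tails, then use the recursion $h_\beta(s,i+1,\xf) = h_\beta(s,i,\xf)+1$ for $i$ in the switch set to see that the bit queried, $\xf_{h_i+1}$, sits inside the shared prefix of length $h_\beta(s,n,\xf)$. If anything, your explicit verification that $\xf' = \xf_{1:h_i}\sigma^{h_i}(\xf')$ is in the form required to invoke that lemma at index $i$ is slightly more careful than the paper's write-up.
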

\begin{proof}
    Let $\beta \in (1,2)$, $s \in I_\beta$, $n \in \N$, $\xf\in \{0,1\}^\N$ and $\zf \in \xf_{1:h_\beta(s,i,\xf)} \{0,1\}^\N$, Note that there exists $\yf \in \{0,1\}^\N$ such that $\zf := \xf_{1:h_\beta(s,i,\xf)}\yf$. Fix $i \in \{0, \ldots, n-1\}$. By Lemma \ref{lem:the random beta expansion algorithm reads only the first bits of its input}, we have 
    \begin{equation}
        \begin{cases}
            h_\beta(s,i,\xf) = h_\beta(s,i,\zf) =: h_i,\\
            \displaystyle \pi_1 K^n_\beta(s,\xf) = \pi_1 K^n_\beta\left(s,\zf\right) =: r_i,\\
            \displaystyle \pi_2 K^i(s,\xf) = \sigma^{h_\beta(s,i,\xf)}(\xf) = \sigma^{h_i}(\xf),\\
            \displaystyle \pi_2 K^i(s,\zf) = \sigma^{h_\beta(s,i,\zf)}(\zf) = \sigma^{h_i}(\zf),
        \end{cases}
    \end{equation}
    Then,
    \allowdisplaybreaks
    \begin{align}
        \tilde T_\beta \left( K_\beta^i(s,\xf) \right) &= \tilde T_\beta \left(  \pi_1 K_\beta^i(s,\xf), \pi_2 K_\beta^i(s,\xf)\right) = \tilde T_\beta \left(  r_i, \sigma^{h_i}(\xf)\right)\\
        &= \nonumber \begin{cases}
            0 \text{ if } r_i \in E_\beta^{(0)}\\
            \left[\sigma^{h_i}(\xf)\right]_1 \text{ if } r_i \in S_\beta\\
            1 \text{ if } r_i \in E_\beta^{(1)}
        \end{cases} = \begin{cases}
            0 \text{ if } r_i \in E_\beta^{(0)}\\
            \xf_{h_i + 1} \text{ if } r_i \in S_\beta\\
            1 \text{ if } r_i \in E_\beta^{(1)}
        \end{cases} \overset{(a)}{=} \begin{cases}
            0 \text{ if } r_i \in E_\beta^{(0)}\\
            \xf_{h_{i+1}}\text{ if } r_i \in S_\beta\\
            1 \text{ if } r_i \in E_\beta^{(1)}
        \end{cases}
    \end{align}
    where (a) follows from (\ref{eq:main1:lem:recursive relation for determining h beta}). Note that $h_{i+1} = h_\beta(s,i+1,\xf) \leq h_\beta(s,n,\xf)$, so $\xf_{h_{i+1}} = \zf_{h_{i+1}}$. Therefore,
    \begin{align}
        \tilde T_\beta \left( K_\beta^i(s,\xf) \right) = \begin{cases}
        0 \text{ if } r_i \in E_\beta^{(0)}\\
        \zf_{h_{i+1}}\text{ if } r_i \in S_\beta\\
        1 \text{ if } r_i \in E_\beta^{(1)}
    \end{cases} = \tilde T_\beta \left(  r_i, \sigma^{h_i}(\zf)\right) = \tilde T_\beta \left( K_\beta^i(s,\zf) \right).
    \end{align}
    We conclude the proof by noting that 
    \begin{equation}
        \AM_\beta(s,n)(\xf) = \coprod_{i=0}^{n-1} \tilde T_\beta \left( K_\beta^i(s,\xf) \right) = \coprod_{i=0}^{n-1} \tilde T_\beta \left( K_\beta^i(s,\zf) \right) = \AM_\beta(s,n)(\zf),
    \end{equation}
    for all $\zf \in \xf_{1:h_\beta(s,n,\xf)}\{0,1\}^\N$.
\end{proof}
We now show that for every prefix $x$ of some $\beta$-expansion of $s$, is generated by Algorithm \ref{alg:random algorithm beta expansion} through a unique sequence of coin tosses denoted $w_\beta(s,x)$.
\begin{lemma}\label{lem:the set of preimages by the dajani algorithm is a cylinder}
    Let $\beta \in (1,2)$, $s \in I_\beta$ and $n \in \N$. Then, for every $x \in \Sigma_\beta(s,n)$, there exists a unique $w_\beta(s, x) \in \{0,1\}^{\leq n}$, such that 
    \begin{equation}
        \AM_\beta(s,n)^{-1}(x) = w_\beta(s, x) \{0,1\}^\N.
    \end{equation}
    Moreover, 
    \begin{equation}
        \xf_{1:h_\beta(s,n,\xf)} = w_\beta(s,x), \ \forall \xf \in \AM_\beta(s,n)^{-1}(x).
    \end{equation}
\end{lemma}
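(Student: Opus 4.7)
The plan is to fix any $\xf^\ast \in \AM_\beta(s,n)^{-1}(x)$, which exists because $\AM_\beta(s,n)(\{0,1\}^\N) = \Sigma_\beta(s,n)$ as recalled just before the lemma, set $h := h_\beta(s,n,\xf^\ast)$ and define $w_\beta(s,x) := \xf^\ast_{1:h}$, then establish the two inclusions $w_\beta(s,x)\{0,1\}^\N \subseteq \AM_\beta(s,n)^{-1}(x)$ and $\AM_\beta(s,n)^{-1}(x) \subseteq w_\beta(s,x)\{0,1\}^\N$. The bound $|w_\beta(s,x)| = h \leq n$ is immediate from the definition of $h_\beta$, and the moreover clause will fall out of the reverse inclusion.

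The forward inclusion is a direct application of (\ref{eq:dajani algorithm only depends on the bits that were read by the algorithm}): any $\zf \in w_\beta(s,x)\{0,1\}^\N$ is of the form $\xf^\ast_{1:h}\yf$ for some $\yf \in \{0,1\}^\N$, hence $\AM_\beta(s,n)(\zf) = \AM_\beta(s,n)(\xf^\ast) = x$.

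The reverse inclusion is the technical core. The key observation I will use is that the sequence of residuals $(r_i)_{i=0}^{n-1}$ produced by Algorithm \ref{alg:random algorithm beta expansion} is a function of $s$ and the output word $x$ alone: indeed, $r_0 = s$ and $r_i = \beta r_{i-1} - x_i$ for $i \geq 1$. Consequently, for any $\xf' \in \AM_\beta(s,n)^{-1}(x)$ the set $\SM_\beta(s,n,\xf') = \{i \in \{0,\ldots,n-1\} : r_i \in S_\beta\}$ depends only on $(s,x)$, so $\SM_\beta(s,n,\xf') = \SM_\beta(s,n,\xf^\ast)$ and in particular $h_\beta(s,n,\xf') = h$. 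For each $i \in \SM_\beta(s,n,\xf')$ the algorithm outputs $x_{i+1}$ by reading the bit of $\xf'$ at position $h_\beta(s,i+1,\xf')$, which equals $h_\beta(s,i,\xf')+1$ by (\ref{eq:main1:lem:recursive relation for determining h beta}); moreover, by (\ref{eq:main2:lem:recursive relation for determining h beta}) the map $i \mapsto h_\beta(s,i+1,\xf')$ is a bijection from $\SM_\beta(s,n,\xf')$ onto $\{1,\ldots,h\}$. Thus for every $j \in \{1,\ldots,h\}$ there is a unique $i \in \SM_\beta(s,n,\xf')$ with $h_\beta(s,i+1,\xf') = j$, and the identity $\xf'_j = x_{i+1}$ determines $\xf'_{1:h}$ purely from $(s,x)$. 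Applying the same reasoning to $\xf^\ast$ yields $\xf'_{1:h} = \xf^\ast_{1:h} = w_\beta(s,x)$, proving both $\xf' \in w_\beta(s,x)\{0,1\}^\N$ and the moreover clause.

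Uniqueness of $w_\beta(s,x)$ as an element of $\{0,1\}^{\leq n}$ follows from a standard cylinder argument: if $w\{0,1\}^\N = w'\{0,1\}^\N$ then examining the sequences $w 0^\infty$ and $w 1^\infty$ forces $w = w'$. I expect the only delicate part of the argument to be the bookkeeping between the read-position counter $h_\beta(s,\cdot,\xf')$ and the actual indices of $\xf'$, but Lemma \ref{lem:recursive relation for determining h beta} — especially (\ref{eq:main2:lem:recursive relation for determining h beta}), which shows that the read positions exhaust $\{1,\ldots,h\}$ without gaps or repeats — makes this accounting transparent.
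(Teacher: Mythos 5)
Your proof is correct, and its core step differs from the paper's in a way worth noting. The paper first writes $\AM_\beta(s,n)^{-1}(x)=A_x\{0,1\}^\N$ with $A_x$ the set of all read-prefixes $\xf_{1:h_\beta(s,n,\xf)}$, then proves by contradiction (locating the first disagreement and invoking Lemma \ref{lem:the random beta expansion algorithm reads only the first bits of its input} at the corresponding switch time) that any two elements of $A_x$ are prefix-comparable, and finally takes the minimal-length element. You instead observe that the residual trajectory satisfies $r_0=s$, $r_i=\beta r_{i-1}-x_i$, hence is determined by $(s,x)$ alone; consequently the switch set $\SM_\beta(s,n,\cdot)$, the counter $h_\beta(s,\cdot,\cdot)$, and — via the bijection of (\ref{eq:main2:lem:recursive relation for determining h beta}) and the identity $\xf'_{h_\beta(s,i+1,\xf')}=x_{i+1}$ for $i$ in the switch set — the entire read-prefix $\xf'_{1:h}$ are forced by $(s,x)$, so $A_x$ is a singleton outright. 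Your route is shorter and avoids the contradiction argument; it also yields, as a free by-product, the invariance of the switch set across preimages, which the paper only establishes afterwards as Lemma \ref{lem:every preimage of a beta expansion throught the dajani algorithm yields the occurences in the switch set} (deduced there \emph{from} the cylinder lemma). The paper's approach, by contrast, does not need the explicit determinism of $(r_i)$ and leans only on the already-proved Lemmas \ref{lem:recursive relation for determining h beta} and \ref{lem:the random beta expansion algorithm reads only the first bits of its input}, at the price of a longer case analysis. Both arguments are sound; yours is the more economical of the two.
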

\begin{proof}
    Let $\beta \in (1,2)$, $s \in I_\beta$, $n \in \N$ and $x \in \Sigma_\beta(s,n)$. We first prove that $\AM_\beta(s,n)^{-1}(x) = A_x \{0,1\}^\N$, where $A_x \subseteq \{0,1\}^{\leq n}$, and then that $A_x$ consists of a unique element. 
    \begin{enumerate}[label = (\alph*)]
        \item Let $\xf \in \AM_\beta(s,n)^{-1}(x)$. Then, by (\ref{eq:dajani algorithm only depends on the bits that were read by the algorithm}), every $\yf \in \xf_{1:h_\beta(s,n,\xf)} \{0,1\}^\N$ satisfies $\yf \in \AM_\beta(s,n)^{-1}(x)$. It follows that
        \begin{equation}\label{eq:1:proof:lem:the set of preimages by the dajani algorithm is a cylinder}
            \AM_\beta(s,n)^{-1}(x) = \bigcup_{\xf \in \AM_\beta(s,n)^{-1}(x)} \xf_{1:h_\beta(s,n,\xf)} \{0,1\}^\N.
        \end{equation}
        By defining $A_x := \bigcup_{\xf \in \AM_\beta(s,n)^{-1}(x)} \{\xf_{1:h_\beta(s,n,\xf)}\}$, we then get 
        \begin{equation}
            \AM_\beta(s,n)^{-1}(x) = A_x \{0,1\}^\N.
        \end{equation}
        Since $h_\beta(s,n,\xf) \leq n$ for all $\xf \in \{0,1\}^\N$, we get that $A_x \subseteq \{0,1\}^{\leq n}$.
        \item Let $u, v \in A_x$, and assume that $\len u \leq \len v$. We will show by contradiction that $u \sqsubset v$. By definition of $A_x$, there exists $\uf,\vf \in \AM_\beta(s,n)^{-1}(x)$ such that $\uf_{1:h_\beta(s,n,\uf)} = u$ and $\vf_{1:h_\beta(s,n,\vf)} = v$. By sake of contradiction, assume that $u \not\sqsubset v$, i.e. there exists $i \in \{1,\ldots, \len u\}$, such that $u_{1:i-1} = v_{1:i-1}$, and $u_i \neq v_i$. Denote by $n_i$ the $i$-th element of $\SM_\beta(s,\uf)$ in increasing order. Then,
        \begin{equation}
            h_\beta(s,n_i,\uf) = \#\SM_\beta(s,n_i,\uf) = \# \left(\SM_\beta(s,\uf) \cap \{0, \ldots, n_i-1\} \right) = i-1,
        \end{equation}
        since by definition there are exactly $i$ elements in $\SM_\beta(s,\uf)$ that are not larger than $n_i$. Further note that
        \begin{equation}
            \uf_{1:h_\beta(s,n_i,\uf)} = \uf_{1:i-1} = u_{1:i-1} = v_{1:i-1} = \vf_{1:i-1} = \vf_{1:h_\beta(s,n_i,\uf)}.
        \end{equation}
        Therefore, $\vf$ is such that there exists $\yf \in \{0,1\}^\N$ satisfying $\vf = \uf_{1:h_\beta(s,n_i,\uf)} \yf$, so we can apply Lemma \ref{lem:the random beta expansion algorithm reads only the first bits of its input}. In particular,
        \begin{align}
            i-1 \overset{(a)}{=} h_\beta(s,n_i,\uf) &= h_\beta(s,n_i,\vf)\\
            r_i := \pi_1 K_\beta^{n_i}(s,\uf) &= \pi_1 K_\beta^{n_i}(s,\vf) \in S_\beta\\
            \pi_2 K_\beta^{n_i}(s,\uf) &= \sigma^{i-1}(\uf)\\
            \pi_2 K_\beta^{n_i}(s,\vf) &= \sigma^{i-1}(\vf).
        \end{align}
        where (a) follows by $n_i \in \SM_\beta(s,n_i,\uf)$. This further implies that 
        \begin{equation}
            \left[ \AM_\beta(s,n)(\uf)\right]_{n_i} = \tilde T_\beta\left( K_\beta^{n_i}(s,\uf)\right) = \tilde T_\beta \left( r_i, \sigma^{i-1}(\uf) \right) = \left[\sigma^{i-1}(\uf) \right]_i = \uf_i,
        \end{equation}
        and 
        \begin{equation}
            \left[ \AM_\beta(s,n)(\vf)\right]_{n_i} = \tilde T_\beta\left( K_\beta^{n_i}(s,\vf)\right) = \tilde T_\beta \left( r_i, \sigma^{i-1}(\vf) \right) = \left[\sigma^{i-1}(\vf) \right]_i = \vf_i.
        \end{equation}
        Since $\uf_i \neq \vf_i$, we get $\AM_\beta(s,n)(\uf) \neq \AM_\beta(s,n)(\vf)$, which is a contraction, since $\uf, \vf \in \AM_\beta(s,n)^{-1}(x)$. Therefore, $u \sqsubset v$.
        \item Since $A_x$ is a finite set, there exists an element of minimal length $u^\ast \in A_x$. Then, for every $v \in A_x$, $\len u^\ast \leq \len v$, so by the previous result, $u^\ast \sqsubset v$, which further implies $v \{0,1\}^\N \subseteq u^\ast \{0,1\}^\N$. This yields
        \begin{equation}
            \AM_\beta(s,n)^{-1}(x) = A_x \{0,1\}^\N = \bigcup_{v \in A_x} v \{0,1\}^\N = u^\ast \{0,1\}^\N.
        \end{equation}
        We define $w_\beta(s,x) := u^\ast$. Note that by definition of $A_x$, there exists $\xf^\ast \in \AM_\beta(s,n)^{-1}(x)$ such that $\xf^\ast_{1:h_\beta(s,n,\xf^\ast)} = w_\beta(s,x)$. Moreover, for every $\xf \in \AM_\beta(s,n)^{-1}(x)$, $\xf_{1:h_\beta(s,n,\xf)} \in A_x$. By the preceding point ,we get that $\xf^\ast_{1:h_\beta(s,n,\xf^\ast)} = w_\beta(s,x) \sqsubset \xf_{1:h_\beta(s,n,\xf)}$, i.e., there exists $\yf \in \{0,1\}^\N$ such that $\xf = \xf^\ast_{1:h_\beta(s,n,\xf^\ast)} \yf$. By Lemma \ref{lem:the random beta expansion algorithm reads only the first bits of its input},
        \begin{equation}
            h_\beta(s,n,\xf) = h_\beta(s,n,\xf^\ast) = |w_\beta(s,x)|.
        \end{equation}
        Finally, since $\AM_\beta(s,n)^{-1}(x) = w_\beta(s,x) \{0,1\}^\N$, then 
        \begin{equation}
            \xf_{1:h_\beta(s,n,\xf)} = \xf_{1:|w_\beta(s,x)|} = w_\beta(s,x).
        \end{equation}
        This concludes the proof.
    \end{enumerate}
\end{proof}

\noindent We now show how to infer this sequence of coin tosses from $x$ and $s$. We define
\begin{equation}
    \tilde \SM_\beta(s,x) = \bigcap_{\xf \in \AM_\beta(s,\len x)^{-1}(x)} \SM_\beta(s,\len x,\xf),
\end{equation}
and 
\begin{equation}
    \tilde h_\beta(s,x) := \# \tilde \SM_\beta(s,x).
\end{equation}
\begin{lemma}\label{lem:every preimage of a beta expansion throught the dajani algorithm yields the occurences in the switch set}
    Let $\beta \in (1,2)$, $s \in I_\beta$, and $x \in \{0,1\}^\ast$. Then, for all $\xf,\yf \in \AM_\beta(s,n)^{-1}(x)$,
    \begin{equation}
        \SM_\beta(s,\len x,\xf) = \SM_\beta(s,\len x,\yf).
    \end{equation}
    Consequently,
    \begin{equation}
        \tilde \SM_\beta(s,x) = \SM_\beta(s,\len x,\xf),
    \end{equation}
    for all $\xf \in  \AM_\beta(s,\len x)^{-1}(x)$.
\end{lemma}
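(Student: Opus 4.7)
The plan is to reduce both assertions to the cylinder description of $\AM_\beta(s,n)^{-1}(x)$ provided by Lemma \ref{lem:the set of preimages by the dajani algorithm is a cylinder}, combined with the ``only-the-first-bits-matter'' property of Lemma \ref{lem:the random beta expansion algorithm reads only the first bits of its input}. The point is that although different $\xf \in \AM_\beta(s,n)^{-1}(x)$ may differ arbitrarily in their tails, all of them share the same initial segment $w_\beta(s,x)$ of length exactly $h_\beta(s,n,\xf)$, and $\SM_\beta(s,n,\cdot)$ only depends on that initial segment.

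First I would fix $n := \len x$ and two arbitrary $\xf,\yf \in \AM_\beta(s,n)^{-1}(x)$. By Lemma \ref{lem:the set of preimages by the dajani algorithm is a cylinder}, the set of preimages equals $w_\beta(s,x)\{0,1\}^\N$, and moreover the same lemma asserts $\xf_{1:h_\beta(s,n,\xf)} = \yf_{1:h_\beta(s,n,\yf)} = w_\beta(s,x)$, in particular forcing $h_\beta(s,n,\xf) = h_\beta(s,n,\yf) = |w_\beta(s,x)|$. Writing $\yf = w_\beta(s,x)\zf$ for the appropriate $\zf \in \{0,1\}^\N$, I would then apply Lemma \ref{lem:the random beta expansion algorithm reads only the first bits of its input} to $\xf$ with the auxiliary sequence $\zf$, which yields
\begin{equation}
    \SM_\beta(s,n,\xf) = \SM_\beta\!\left(s,n,\xf_{1:h_\beta(s,n,\xf)}\zf\right) = \SM_\beta(s,n, w_\beta(s,x)\zf) = \SM_\beta(s,n,\yf).
\end{equation}
This delivers the first equation of the lemma.

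For the consequence, once all sets $\SM_\beta(s,n,\xf)$ indexed by $\xf \in \AM_\beta(s,n)^{-1}(x)$ coincide, their intersection $\tilde\SM_\beta(s,x)$ is trivially equal to any one of them, which is precisely the second equation.

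I do not foresee a real obstacle: the content of the result is already packaged in the two lemmas cited, and the proof is essentially a two-line unfolding. The only point that requires a modicum of care is to invoke Lemma \ref{lem:the random beta expansion algorithm reads only the first bits of its input} with the correct instantiation of its dummy variables, namely with $\xf$ playing the role of the input sequence and $\zf$ (coming from the cylinder decomposition of $\yf$) playing the role of the auxiliary sequence whose bits are concatenated after the prefix $w_\beta(s,x)$.
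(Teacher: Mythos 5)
Your proof is correct and follows essentially the same route as the paper: invoke Lemma \ref{lem:the set of preimages by the dajani algorithm is a cylinder} to identify the common prefix $w_\beta(s,x)$ of all preimages, write $\yf = \xf_{1:h_\beta(s,\len x,\xf)}\zf$, and apply Lemma \ref{lem:the random beta expansion algorithm reads only the first bits of its input}. The observation that the intersection of identical sets is any one of them is exactly how the second claim follows in the paper as well.
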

\begin{proof}
    Let $\beta \in (1,2)$, $s \in I_\beta$, $x \in \{0,1\}^\ast$ and $\xf,\yf \in \AM_\beta(s,n)^{-1}(x)$. Then, by Lemma \ref{lem:the set of preimages by the dajani algorithm is a cylinder}, $\xf_{1:h_\beta(s,|x|,\xf)} = \yf_{1:h_\beta(s,|x|,\yf)} = w_\beta(s,x)$. Hence, there exists $\zf \in \{0,1\}^\N$ such that $\yf = w_\beta(s,x) \zf = \xf_{1:h_\beta(s,|x|,\xf)} \zf$. Then, by Lemma \ref{lem:the random beta expansion algorithm reads only the first bits of its input},
    \begin{equation}
        \SM_\beta(s,\len x,\xf) = \SM_\beta(s,\len x,\yf).
    \end{equation}
\end{proof}
The set $\tilde \SM_\beta(s,x)$ can be exploited to infer $w_\beta(s,x)$.
\begin{lemma}\label{lem:the noise can be expressed from the time steps for which the trajectory is in switch set}
    Let $\beta \in (1,2)$, $s \in I_\beta$ and $x \in \{0,1\}^\ast$. 
    \begin{equation}
        w_\beta(s,x) = \coprod \left\{ x_{i+1} : i \in \tilde \SM_\beta(s,x)\right\}.
    \end{equation}
\end{lemma}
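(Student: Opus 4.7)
The plan is to reduce the identity to a bit-by-bit reconstruction of $w_\beta(s,x)$ via a concrete $\xf$ in the preimage. Assuming $x \in \Sigma_\beta(s,\len x)$ (otherwise both sides are vacuous), pick any $\xf \in \AM_\beta(s,\len x)^{-1}(x)$. Lemma~\ref{lem:the set of preimages by the dajani algorithm is a cylinder} gives $\xf_{1:h_\beta(s,\len x,\xf)} = w_\beta(s,x)$, and Lemma~\ref{lem:every preimage of a beta expansion throught the dajani algorithm yields the occurences in the switch set} gives $\tilde\SM_\beta(s,x) = \SM_\beta(s,\len x,\xf)$. Enumerate this common set in increasing order as $n_1 < n_2 < \cdots < n_k$, where $k = h_\beta(s,\len x,\xf) = \len{w_\beta(s,x)}$. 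Interpreting $\coprod\{x_{i+1} : i \in \tilde\SM_\beta(s,x)\}$ as concatenation in increasing order of $i$ (consistent with the convention for $\coprod$ in the paper), it suffices to prove $\xf_j = x_{n_j+1}$ for each $j \in \{1,\ldots,k\}$.

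The bridge between the time step $n_j$ in Algorithm~\ref{alg:random algorithm beta expansion} and the coin-toss index $j$ in $\xf$ is the identity $h_\beta(s,n_j,\xf) = j-1$, which follows immediately from Lemma~\ref{lem:recursive relation for determining h beta} together with the fact that $n_j$ is by construction the $j$-th element of $\SM_\beta(s,\xf)$: there are exactly $j-1$ elements of $\SM_\beta(s,\xf)$ strictly below $n_j$. The (inductive) computation in the proof of Lemma~\ref{lem:the random beta expansion algorithm reads only the first bits of its input} then yields $\pi_2 K_\beta^{n_j}(s,\xf) = \sigma^{h_\beta(s,n_j,\xf)}(\xf) = \sigma^{j-1}(\xf)$, while by definition of $\SM_\beta$ we have $\pi_1 K_\beta^{n_j}(s,\xf) \in S_\beta$. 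Plugging into the definition~\eqref{eq:definition of the generalized random bit generator} of $\tilde T_\beta$, the $(n_j+1)$-th bit of the output is
\begin{equation*}
    x_{n_j+1} = \bigl[\AM_\beta(s,\len x)(\xf)\bigr]_{n_j+1} = \tilde T_\beta\bigl(K_\beta^{n_j}(s,\xf)\bigr) = \bigl[\sigma^{j-1}(\xf)\bigr]_1 = \xf_j.
\end{equation*}

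Concatenating over $j=1,\ldots,k$ (in increasing order) delivers
\begin{equation*}
    w_\beta(s,x) \;=\; \xf_{1:k} \;=\; x_{n_1+1}\,x_{n_2+1}\cdots x_{n_k+1} \;=\; \coprod\bigl\{x_{i+1} : i \in \tilde\SM_\beta(s,x)\bigr\},
\end{equation*}
as claimed. The main subtlety in the argument is bookkeeping of two parallel indexings simultaneously — the time axis of the algorithm, on which $n_j$ lives, and the randomness tape, on which $\xf_j$ lives — and the key step is the clean identity $h_\beta(s,n_j,\xf) = j-1$ which synchronizes them; once this is in hand, everything else is a direct unfolding of the definitions and the structural lemmas already proved.
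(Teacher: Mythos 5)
Your proof is correct and follows essentially the same route as the paper's: identify $\tilde\SM_\beta(s,x)$ with $\SM_\beta(s,\len x,\xf)$ for a chosen preimage $\xf$, unfold $\AM_\beta$ at the switch times via $\tilde T_\beta$ and the shift count $h_\beta$, and conclude with the cylinder lemma. Your explicit synchronization $h_\beta(s,n_j,\xf)=j-1$ is exactly the content of equation (\ref{eq:main2:lem:recursive relation for determining h beta}) that the paper invokes, and your ordered enumeration even handles the concatenation-of-a-set bookkeeping a bit more carefully than the paper does.
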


\begin{proof}
    Let $\beta \in (1,2)$, $s \in I_\beta$, $x \in \{0,1\}^\ast$ and $\xf \in \AM_\beta(s,n)^{-1}(x)$. Then, 
    \begin{align}
        \left\{ x_{i+1} : i \in \tilde \SM_\beta(s,x)\right\} &= \left\{ x_{i+1} : i \in  \SM_\beta(s,|x|,\xf)\right\}\\
        &= \left\{\left[\AM_\beta(s,n)(\xf)\right]_{i+1} : i \in  \SM_\beta(s,|x|,\xf) \right\}\\
        &\oseteq{\ref{eq:definition of the random beta expansion algorithm as a random variable}}{=} \left\{\tilde T_\beta\left(K_\beta^i(s,\xf) \right) : i \in  \SM_\beta(s,|x|,\xf) \right\}\\
        &=\left\{\tilde T_\beta\left(\pi_1K_\beta^i(s,\xf), \pi_2K_\beta^i(s,\xf) \right) : i \in  \SM_\beta(s,|x|,\xf) \right\}\\
        &\overset{(a)}{=} \left\{\left[\pi_2K_\beta^i(s,\xf)\right]_1 : i \in  \SM_\beta(s,|x|,\xf) \right\}\\
        &\osetlem{\ref{lem:the random beta expansion algorithm reads only the first bits of its input}}{=} \left\{\left[\sigma^{h_\beta(s,i,\xf)}(\xf)\right]_1 : i \in  \SM_\beta(s,|x|,\xf) \right\}\\
        &= \left\{\xf_{h_\beta(s,i,\xf)+1} : i \in  \SM_\beta(s,|x|,\xf) \right\}\\
        &\oseteq{\ref{eq:main1:lem:recursive relation for determining h beta}}{=} \left\{\xf_{h_\beta(s,i+1,\xf)} : i \in  \SM_\beta(s,|x|,\xf) \right\}\\
        &\oseteq{\ref{eq:main2:lem:recursive relation for determining h beta}}{=} \left\{\xf_j : j \in \{1,\ldots, h_\beta(s,|x|,\xf)\} \right\},
    \end{align}
    where (a) follows from $i \in \SM_\beta(s,|x|,\xf) \Rightarrow \pi_1K_\beta^i(s,\xf) \in S_\beta$, together with (\ref{eq:definition of the generalized random bit generator}).Hence,
    \begin{equation}
        \coprod \left\{ x_{i+1} : i \in \tilde \SM_\beta(s,|x|,\xf)\right\} = \coprod \left\{\xf_j : j \in \{1,\ldots, h_\beta(s,|x|,\xf)\} \right\} = \xf_{1:h_\beta(s,|x|,\xf)}.
    \end{equation}
    By Lemma \ref{lem:the set of preimages by the dajani algorithm is a cylinder}, $\xf_{1:h_\beta(s,|x|,\xf)} = w_\beta(s,x)$, which concludes the proof.
    
    

\end{proof}

We now show how to derive $\tilde \SM_\beta(s,x)$ from $\Sigma_\beta(s,\len x)$ and $x$.
\begin{lemma}\label{lem:characterization of the set of indices in the switch set by divergences in the tree of beta expansions}
    Let $\beta \in (1,2)$, $s \in I_\beta$ and $x \in \{0,1\}^\ast$. Then,
    \begin{equation}
        \tilde \SM_\beta(s,x) = \left\{ i \in \{1,\ldots, \len x\}: \exists y \in \Sigma_\beta(s,\len x) \ s.t. \ y_{1:i-1} = x_{1:i-1}, y_i \neq x_i \right\}.
    \end{equation}
\end{lemma}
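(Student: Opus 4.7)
The plan is to fix any preimage $\xf \in \AM_\beta(s,n)^{-1}(x)$, where $n := \len x$. By Lemma \ref{lem:every preimage of a beta expansion throught the dajani algorithm yields the occurences in the switch set}, $\tilde \SM_\beta(s,x) = \SM_\beta(s,n,\xf)$, so (modulo the shift between algorithm time steps in $\{0,\ldots,n-1\}$ and bit positions in $\{1,\ldots,n\}$) it suffices to show that an index $j = i-1$ belongs to $\SM_\beta(s,n,\xf)$ exactly when $x$ admits a competing $\beta$-expansion $y \in \Sigma_\beta(s,n)$ that branches off $x$ at position $i$.

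For the inclusion $(\supseteq)$, I would suppose such a competing $y$ exists and pick $\yf \in \AM_\beta(s,n)^{-1}(y)$. A straightforward induction on $j$, using the recursion $\pi_1 K_\beta^{j+1}(s,\cdot) = \beta\, \pi_1 K_\beta^{j}(s,\cdot) - b_{j+1}$ together with the fact that $\AM_\beta$ produces the same first $i-1$ bits on inputs $\xf$ and $\yf$ (both equal to $x_{1:i-1} = y_{1:i-1}$), would give $\pi_1 K_\beta^{i-1}(s,\xf) = \pi_1 K_\beta^{i-1}(s,\yf) =: r$. The definition of $\tilde T_\beta$ then forces $r \notin E_\beta^{(0)} \cup E_\beta^{(1)}$, for otherwise both sequences would deterministically output the same bit at position $i$, contradicting $x_i \neq y_i$; hence $r \in S_\beta$, so $i-1 \in \SM_\beta(s,n,\xf)$.

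For the inclusion $(\subseteq)$, I would suppose $j := i-1 \in \SM_\beta(s,n,\xf)$, set $h := h_\beta(s,j,\xf)$, and define $\yf \in \{0,1\}^\N$ by flipping only the $(h+1)$-th coordinate of $\xf$: $\yf_{h+1} := 1 - \xf_{h+1}$ and $\yf_k := \xf_k$ otherwise. The crucial observation is that by \eqref{eq:main2:lem:recursive relation for determining h beta}, as $i'$ ranges over $\SM_\beta(s,j,\xf)$, the read-positions $h_\beta(s,i',\xf)+1 = h_\beta(s,i'+1,\xf)$ exhaust $\{1,\ldots,h\}$; hence during steps $0,\ldots,j-1$ the algorithm reads only bits $\xf_1,\ldots,\xf_h$, which coincide with $\yf_1,\ldots,\yf_h$. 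An inductive application of Lemma \ref{lem:the random beta expansion algorithm reads only the first bits of its input} would then yield that the first $i-1$ outputs agree (both equal $x_{1:i-1}$) and that $\pi_1 K_\beta^{j}(s,\yf) = \pi_1 K_\beta^{j}(s,\xf) \in S_\beta$. At step $j$ the read bit becomes $\yf_{h+1} = 1 - \xf_{h+1}$, so the $i$-th output flips to $1 - x_i$. Setting $y := \AM_\beta(s,n)(\yf) \in \Sigma_\beta(s,n)$ supplies the required branching.

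The hard part will be the bookkeeping in $(\subseteq)$: ensuring that flipping a single tape bit of $\xf$ does not perturb the algorithm's trajectory at any step strictly before $j$. This boils down to \eqref{eq:main2:lem:recursive relation for determining h beta}, which pins down precisely the set of tape positions read during the first $j$ steps, together with a careful tracking of the $0$-vs-$1$ indexing convention between $\SM_\beta$-coordinates and bit positions.
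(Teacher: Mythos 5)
Your proof is correct and follows essentially the same route as the paper's: both inclusions are handled the same way, namely by flipping the single coin-toss bit at position $h_\beta(s,j,\xf)+1$ to manufacture the branching expansion $y$ (for $\subseteq$), and by showing the trajectories of two preimages coincide at step $j$ and deriving $r_j \in S_\beta$ by contradiction from the determinism of $\tilde T_\beta$ on $E_\beta^{(0)} \cup E_\beta^{(1)}$ (for $\supseteq$). The only cosmetic difference is that in the $\supseteq$ direction the paper aligns the trajectories of $\xf$ and $\yf$ through the common prefix $w_\beta(s,x_{1:j})$ and Lemma \ref{lem:the random beta expansion algorithm reads only the first bits of its input}, whereas you note directly that $r_j$ is determined by $s$ and the agreeing output bits $x_{1:i-1}=y_{1:i-1}$ via the recursion $r_{j+1}=\beta r_j - b_{j+1}$; both arguments are sound.
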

\begin{proof}
    Let $\beta \in (1,2)$, $s \in I_\beta$ and $x \in \{0,1\}^\ast$. Let 
    \begin{equation}
        A(s,x) := \left\{ i \in \{0,\ldots, \len x -1\}: \exists y \in \Sigma_\beta(s,\len x) \ s.t. \ y_{1:i} = x_{1:i}, y_{i+1} \neq x_{i+1} \right\}.
    \end{equation}
    We prove that $\tilde \SM_\beta(s,x) = A(s,x)$ by double inclusion.
    \begin{enumerate}[label = (\alph*)]
        \item We show that $\tilde \SM_\beta(s,x) \subseteq  A(s,x)$. Let $j \in \tilde \SM_\beta(s,x)$ and $\xf \in \AM_\beta(s,\len x)^{-1}(x)$. By Lemma \ref{lem:every preimage of a beta expansion throught the dajani algorithm yields the occurences in the switch set}, $\SM_\beta(s,\len x, \xf) = \tilde \SM_\beta(s,x)$, so $j \in \SM_\beta(s,\len x, \xf)$. Let $\yf \in \{0,1\}^\N$, such that $\yf_{1:h_\beta(s,j,\xf)} = \xf_{1:h_\beta(s,j,\xf)}$ and $\yf_{1:h_\beta(s,j,\xf)+1} \neq \xf_{1:h_\beta(s,j,\xf)+1}$, and define $y := \AM_\beta(s,\len x)(\yf)$. Note that $y \in \Sigma_\beta(s,|x|)$. We will prove that such a $y$ satisfies the condition in the definition of $A(s,x)$. By Lemma \ref{lem:the random beta expansion algorithm reads only the first bits of its input}, 
        \begin{align}
            \SM_\beta(s,i, \xf) &= \SM_\beta(s,i, \yf) =: \SM_i\\
            h_\beta(s,i,\xf) &= h_\beta(s,i,\yf) =: h_i\\
            \pi_1 K_\beta^i(s,\xf) &= \pi_1 K_\beta^i(s,\yf) =: r_i\\
            \pi_2 K_\beta^i(s,\xf) &= \sigma^{h_\beta(s,i,\xf)}(\xf) = \sigma^{h_i}(\xf)\\
            \pi_2 K_\beta^i(s,\yf) &= \sigma^{h_\beta(s,i,\yf)}(\yf) = \sigma^{h_i}(\yf),
        \end{align}
        for all $i \leq j$. Then,we have 
        \begin{align}
            x_{i+1} = \left[\AM_\beta(s,\len x)(\xf)\right]_{i+1} &= \tilde T_\beta \left(K_\beta^i(s,\xf) \right)\\
            &= \tilde T_\beta \left(r_i, \sigma^{h_i}(\xf)\right)\\
            &= \begin{cases}
                b \ \text{if} \ r_i \in E_\beta^{(b)}, \ b \in \{0,1\},\\
                \left[\sigma^{h_i}(\xf)\right]_1 \ \text{if} \ r_i \in S_\beta
            \end{cases}\\
            &= \label{eq:0:proof:lem:characterization of the set of indices in the switch set by divergences in the tree of beta expansions} \begin{cases}
                b \ \text{if} \ r_i \in E_\beta^{(b)}, \ b \in \{0,1\},\\
                \xf_{h_i+1} \ \text{if} \ r_i \in S_\beta,
            \end{cases}
        \end{align}
        for all $i \leq j$. By similar calculations, we obtain 
        \begin{equation}
            y_{i+1} = \begin{cases}
                b \ \text{if} \ r_i \in E_\beta^{(b)}, \ b \in \{0,1\},\\
                \yf_{h_i+1} \ \text{if} \ r_i \in S_\beta,
            \end{cases}
        \end{equation}
        for all $i \leq j$. Then, for $i \in \{0, \ldots, j-1\}$ such that $i \notin \SM_j$, we get 
        \begin{equation}\label{eq:1:proof:lem:characterization of the set of indices in the switch set by divergences in the tree of beta expansions}
            x_{i+1} = y_{i+1}.
        \end{equation}
        Moreover, for $i \in \{0, \ldots, j\}$ such that $i \in \SM_j$, we have
        \begin{equation}\label{eq:2:proof:lem:characterization of the set of indices in the switch set by divergences in the tree of beta expansions}
            x_{i+1} = \xf_{h_i+1} = \xf_{h_{i+1}}
        \end{equation}
        and 
        \begin{equation}\label{eq:3:proof:lem:characterization of the set of indices in the switch set by divergences in the tree of beta expansions}
            y_{i+1} = \yf_{h_i+1} = \yf_{h_{i+1}}.
        \end{equation}
        Since by assumption, $\xf_{h_{i+1}} = \yf_{h_{i+1}}$ for all $i < j$, and $\xf_{h_{j}+1} \neq \yf_{h_{j}+1}$, we get by combining (\ref{eq:1:proof:lem:characterization of the set of indices in the switch set by divergences in the tree of beta expansions}), (\ref{eq:2:proof:lem:characterization of the set of indices in the switch set by divergences in the tree of beta expansions}) and (\ref{eq:3:proof:lem:characterization of the set of indices in the switch set by divergences in the tree of beta expansions}) that
        \begin{equation}
            x_{1:j} = y_{1:j}, \ \text{and} \ x_{j+1} \neq y_{j+1}.
        \end{equation}
        Then, $x \in A(s,x)$. This concludes the proof that $\tilde \SM_\beta(s,x) \subseteq  A(s,x)$.
        \item We show that $\tilde \SM_\beta(s,x) \supseteq  A(s,x)$. Let $j \in A(s,x)$. Then, there exists $y \in \Sigma_\beta(s,\len x)$ such that 
        \begin{equation}
            x_{1:j} = y_{1:j}, \ \text{and} \ x_{j+1} \neq y_{j+1}.
        \end{equation}
        We prove that $j \in \tilde \SM_\beta(s,x)$ which is equivalent to prove that $j \in \SM_\beta(s,\len x, \xf)$ for some $\xf \in \{0,1\}^\N$. Let $\xf \in \AM_\beta(s,\len x)^{-1}(x)$ and $\yf \in \AM_\beta(s,\len x)^{-1}(y)$. In particular, $\xf \in \AM_\beta(s,j)^{-1}(x_{1:j})$ and $\yf \in \AM_\beta(s,j)^{-1}(y_{1:j}) = \AM_\beta(s,j)^{-1}(x_{1:j})$, so $w_\beta(s,x_{1:j})$ is a prefix of both $\xf$ and $\yf$, and by Lemma \ref{lem:the set of preimages by the dajani algorithm is a cylinder}, we have that 
        \begin{equation}
            \xf_{1:h_\beta(s,j,\xf)} = \yf_{1:h_\beta(s,j,\yf)} = w_\beta(s,x).
        \end{equation}
        By Lemma \ref{lem:the random beta expansion algorithm reads only the first bits of its input}, we have that
        \begin{align}
            h_\beta(s,j,\xf) &= h_\beta(s,j,\yf) =: h_j\\
            \pi_1 K_\beta^j(s,\xf) &= \pi_1 K_\beta^j(s,\yf) =: r_j\\
            \pi_2 K_\beta^j(s,\xf) &= \sigma^{h_\beta(s,j,\xf)}(\xf) = \sigma^{h_j}(\xf)\\
            \pi_2 K_\beta^j(s,\yf) &= \sigma^{h_\beta(s,j,\yf)}(\yf) = \sigma^{h_j}(\yf),
        \end{align}
        Then, following the calculations that lead to (\ref{eq:0:proof:lem:characterization of the set of indices in the switch set by divergences in the tree of beta expansions}),
        \begin{align}
            x_{j+1} = \begin{cases}
                b \ \text{if} \ r_j \in E_\beta^{(b)}, \ b \in \{0,1\},\\
                \xf_{h_j+1} \ \text{if} \ r_j \in S_\beta,
            \end{cases}
        \end{align}
        and 
        \begin{align}
            y_{j+1} = \begin{cases}
                b \ \text{if} \ r_j \in E_\beta^{(b)}, \ b \in \{0,1\},\\
                \yf_{h_j+1} \ \text{if} \ r_j \in S_\beta,
            \end{cases}
        \end{align}
    \end{enumerate}
    Suppose that $r_j \notin S_\beta$. Then, $x_{j+1} = y_{j+1}$, which is a contradiction. Then, $r_j \in S_\beta$, so $j \in \SM_\beta(s,\len x, \xf)$. This concludes the proof.
\end{proof}

We finally define a computable function that allows to calculate $w_\beta(s,x)$, given $x$ and the set of all $\beta$-expansions of $s$.
\begin{definition}
    Let $f_{1,all \to tosses} : \{0,1\}^\ast \to \{0,1\}^\ast$ to be defined such that 
    \begin{equation}
        f_{1,all \to tosses}\left(\langle x, \langle \Sigma_\beta(s,\len x) \rangle \rangle\right) = w_\beta(s,x),
    \end{equation}
    for every $\beta \in (1,2)$, $s \in I_\beta$ and $x \in \{0,1\}^\ast$.
\end{definition}
\begin{lemma}\label{lem:f 1 all to noise is computable}
    $f_{1,all \to tosses}$ is computable.
\end{lemma}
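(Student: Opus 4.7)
The plan is to turn the characterizations given by Lemmas~\ref{lem:the noise can be expressed from the time steps for which the trajectory is in switch set} and~\ref{lem:characterization of the set of indices in the switch set by divergences in the tree of beta expansions} into an effective algorithm. The key observation is that the input of $f_{1,all \to tosses}$ is $\langle x, \langle \Sigma_\beta(s,\len x) \rangle \rangle$, and this data alone---without any reference to $\beta$ or $s$---is shown by those lemmas to determine $w_\beta(s,x)$ uniquely, via purely combinatorial operations on finite binary strings.

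First I would parse the input using the standard decoder for the pairing function $\langle \cdot,\cdot\rangle$ introduced in Section~1.1, so as to recover $x \in \{0,1\}^\ast$ and the codeword $\langle \Sigma_\beta(s, \len x) \rangle$. Since the iterated pairing $\langle A \rangle$ of a finite set $A \subseteq \{0,1\}^\ast$ is defined as a prefix-free concatenation of its elements, I can in turn extract the explicit finite list of strings forming $\Sigma_\beta(s, \len x) \subseteq \{0,1\}^{\len x}$. Both decoding steps are routine computable operations requiring only length-counting, copying, and bit comparisons.

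Next, for each $i \in \{1, \ldots, \len x\}$, I would test whether there exists $y \in \Sigma_\beta(s,\len x)$ such that $y_{1:i-1} = x_{1:i-1}$ and $y_i \neq x_i$. Because $\Sigma_\beta(s,\len x)$ is now available as an explicit finite list, this amounts to finitely many prefix equality tests and single-bit comparisons, which terminate in a bounded number of steps. By Lemma~\ref{lem:characterization of the set of indices in the switch set by divergences in the tree of beta expansions}, the collection of indices $i$ for which this test succeeds is exactly $\tilde \SM_\beta(s,x)$.

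Finally, I would apply Lemma~\ref{lem:the noise can be expressed from the time steps for which the trajectory is in switch set}: enumerating the elements of $\tilde \SM_\beta(s,x)$ in increasing order and concatenating the corresponding bits $x_{i+1}$ yields $w_\beta(s,x)$, which is returned as output. All primitives used---decoding the pairing function, enumerating a finite list of binary strings, prefix and bit comparison, and concatenation---are plainly computable, so the overall procedure is an effective algorithm and $f_{1,all \to tosses}$ is computable. There is no genuine obstacle here: the work has been done by Lemmas~\ref{lem:the noise can be expressed from the time steps for which the trajectory is in switch set} and~\ref{lem:characterization of the set of indices in the switch set by divergences in the tree of beta expansions}, whose point is precisely to eliminate any need for real-number arithmetic in base $\beta$.
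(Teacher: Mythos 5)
Your proposal is correct and follows essentially the same route as the paper: both decode the input to recover $x$ and the explicit finite list $\Sigma_\beta(s,\len x)$, compute the set of branching indices by finitely many prefix and bit comparisons (which equals $\tilde\SM_\beta(s,x)$ by Lemma \ref{lem:characterization of the set of indices in the switch set by divergences in the tree of beta expansions}), and then read off $w_\beta(s,x)$ via Lemma \ref{lem:the noise can be expressed from the time steps for which the trajectory is in switch set}. The only difference is that you spell out the decoding of the pairing function, which the paper leaves implicit.
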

\begin{proof}
    We construct an algorithm that computes $f_{1,all \to tosses}$ as follows.
    \begin{algorithm}[ht]
        \caption{Algorithm for computing $f_{1,all \to tosses}$}
        \label{alg:algorithm that computes f 1 all noise}
        \begin{algorithmic}[1]
            \Require $x \in \{0,1\}^\ast$, $\langle \Sigma_\beta(s, \len x) \rangle$
            \State $n \gets \len x$. \label{line:1:alg:algorithm that computes f 1 all noise}
            \State Use $\langle \Sigma_\beta(s, n) \rangle$ to compute $k = \NN_\beta(s,n)$.\label{line:2:alg:algorithm that computes f 1 all noise}
            \State Use $\langle \Sigma_\beta(s, n) \rangle$ to generate all the elements $y \in \Sigma_\beta(s, n)$.\label{line:3:alg:algorithm that computes f 1 all noise}
            \State $S \gets \{\Sigma_\beta(s, n)\}$\label{line:4:alg:algorithm that computes f 1 all noise}
            \For{$i = 1, \ldots ,n$}\label{line:5:alg:algorithm that computes f 1 all noise}
                \If {there is $y \in \Sigma_\beta(s,n)$ such that $y_{1:i} = x_{1:i}$ and $y_{i+1} \neq x_{i+1}$}\label{line:7:alg:algorithm that computes f 1 all noise}
                    \State $S \gets S \cup \{i\}$.\label{line:8:alg:algorithm that computes f 1 all noise}
                \EndIf
            \EndFor
            \State \Return $\coprod S$.\label{line:12:alg:algorithm that computes f 1 all noise}
        \end{algorithmic}
    \end{algorithm}

    \noindent Note that the set $S$ constructed by the algorithm exactly satisfies 
    \begin{equation}
        S = \left\{ i \in \{1,\ldots, \len x\}: \exists y \in \Sigma_\beta(s,\len x) \ s.t. \ y_{1:i-1} = x_{1:i-1}, y_i \neq x_i \right\}.
    \end{equation}
    By combining Lemma \ref{lem:the noise can be expressed from the time steps for which the trajectory is in switch set} and Lemma \ref{lem:characterization of the set of indices in the switch set by divergences in the tree of beta expansions}, we get that indeed
    \begin{equation}
        w_\beta(s,x) = \coprod S.
    \end{equation}
\end{proof}

\subsection{\texorpdfstring{Distribution of algorithmic complexity of $\beta$-expansions}{}}
We are now ready to move on to the proofs on algorithmic complexity. We combine the different functions defined previously to build a new multivalued function.
\begin{definition}\label{def:f beta to 2 + noise}
    Let $\beta \in (1,2)$, $f_{\beta \to tosses} : \{0,1\}^\ast \rightrightarrows \{0,1\}^\ast$ be defined as 
    \begin{equation}
        f_{\beta \to tosses}(x) = \left\{ f_{\beta,1+all\to tosses}(\langle x, y\rangle) : y \in f_{\beta, 1 \to all}(x)\right\},
    \end{equation}
    and $f_{\beta \to 2+tosses} : \{0,1\}^\ast \rightrightarrows \{0,1\}^\ast$ be defined by 
    \begin{equation}\label{eq:main:def:f beta to 2 + noise}
        f_{\beta \to 2+tosses}(x) = \left\{ \left\langle y,z \right\rangle : (y,z) \in f_{\beta \to 2}(x) \times f_{\beta \to tosses}(x) \right\},
    \end{equation}
    for all $x \in \{0,1\}^\ast$.
\end{definition}
\begin{lemma}
    Let $\beta \in (1,2)$, $s \in [0,1]$, $n \in \N$ and $\xf \in \Sigma_\beta(s)$. Then,
    \begin{equation}
        \langle \yf_{1:n}, w_\beta(s,\xf_{1:\nlogbeta})\rangle \in f_{\beta \to 2+tosses}(\xf_{1:\nlogbeta}),
    \end{equation}
    for all $n \in \N$, where $\yf$ is the greedy binary expansion of $s$.
\end{lemma}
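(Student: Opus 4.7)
The plan is to unpack Definition \ref{def:f beta to 2 + noise} of $f_{\beta \to 2+tosses}$ and verify that both coordinates of $\langle \yf_{1:n}, w_\beta(s,\xf_{1:\nlogbeta}) \rangle$ lie in the appropriate component sets; once these memberships are in hand, the claim will follow immediately from equation (\ref{eq:main:def:f beta to 2 + noise}).

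For the first coordinate, I would argue that $\yf_{1:n} \in f_{\beta \to 2}(\xf_{1:\nlogbeta})$. Since $\xf \in \Sigma_\beta(s)$, its prefix $\xf_{1:\nlogbeta}$ belongs to $\Sigma_\beta(s,\nlogbeta)$; since $\yf$ is the greedy binary expansion of $s$, the prefix $\yf_{1:n}$ lies in $\Sigma_2(s,n)$. The defining property of $f_{\beta \to 2}$ established in Section \ref{sec:lower bound on the complexity of beta expansions}, namely the inclusion $\Sigma_2(s,n) \subseteq f_{\beta \to 2}(x)$ for every $x \in \Sigma_\beta(s,\nlogbeta)$, then yields the desired membership.

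For the second coordinate, I would show $w_\beta(s,\xf_{1:\nlogbeta}) \in f_{\beta \to tosses}(\xf_{1:\nlogbeta})$. Applying the property of $f_{\beta,1\to all}$ established just after Definition \ref{def:definition of f beta 1 all}, namely that $\langle \Sigma_\beta(s,n)\rangle \in f_{\beta,1\to all}(x)$ whenever $x \in \Sigma_\beta(s,n)$, we obtain $\langle \Sigma_\beta(s,\nlogbeta)\rangle \in f_{\beta,1\to all}(\xf_{1:\nlogbeta})$. Combining this with the defining identity $f_{1,all\to tosses}(\langle x, \langle \Sigma_\beta(s,\len x)\rangle \rangle) = w_\beta(s,x)$ from Lemma \ref{lem:f 1 all to noise is computable} and the definition of $f_{\beta \to tosses}$ in Definition \ref{def:f beta to 2 + noise}, we conclude that $w_\beta(s,\xf_{1:\nlogbeta})$ belongs to $f_{\beta \to tosses}(\xf_{1:\nlogbeta})$.

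Bundling the two memberships through the pairing $\langle \cdot, \cdot \rangle$ and invoking (\ref{eq:main:def:f beta to 2 + noise}) delivers the conclusion. I do not anticipate any real obstacle: the lemma is a bookkeeping verification that the multivalued function $f_{\beta \to 2+tosses}$ has been assembled correctly, so that the pair consisting of the binary-expansion prefix and the coin-toss sequence attached to any $\beta$-expansion of $s$ is always realized in its image. This will be the basis for subsequently decomposing the algorithmic complexity of $\xf_{1:\nlogbeta}$ into the sum of the complexities of $\yf_{1:n}$ and $w_\beta(s,\xf_{1:\nlogbeta})$.
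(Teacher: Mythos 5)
Your proposal is correct and follows essentially the same route as the paper's own proof: establish $\yf_{1:n} \in f_{\beta \to 2}(\xf_{1:\nlogbeta})$ via the inclusion $\Sigma_2(s,n) \subseteq f_{\beta\to 2}(x)$, establish $w_\beta(s,\xf_{1:\nlogbeta}) \in f_{\beta\to tosses}(\xf_{1:\nlogbeta})$ via $\langle \Sigma_\beta(s,\nlogbeta)\rangle \in f_{\beta,1\to all}(\xf_{1:\nlogbeta})$ together with the defining identity of $f_{1,all\to tosses}$, and then combine the two memberships through the pairing in the definition of $f_{\beta\to 2+tosses}$. No gaps; your version is if anything slightly more explicit than the paper about where each ingredient comes from.
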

\begin{proof}
    Let $\beta \in (1,2)$, $s \in [0,1]$, $n \in \N$, $\xf \in \Sigma_\beta(s)$ and define $\yf$ to be the greedy binary expansion of $s$. Then,
    \begin{equation}
        \yf_{1:n} \in f_{\beta \to 2}(\xf_{1:\nlogbeta}),
    \end{equation}
    and 
    \begin{equation}
        \langle \Sigma_\beta(s, \nlogbeta) \rangle \in f_{\beta, 1 \to all}(\xf_{1:\nlogbeta}),
    \end{equation}
    so 
    \begin{equation}
        w_\beta(s, \xf_{1:\nlogbeta}) = f_{\beta,1+all\to tosses}(\langle x, \langle \Sigma_\beta(s,\nlogbeta)\rangle\rangle) \in f_{\beta \to tosses}(\xf_{1:\nlogbeta}).
    \end{equation}
    Then,
    \begin{equation}
        (\yf_{1:n}, w_\beta(s, \xf_{1:\nlogbeta})) \in f_{\beta \to 2}(\xf_{1:\nlogbeta}) \times f_{\beta \to tosses}(\xf_{1:\nlogbeta}),
    \end{equation}
    and the Lemma follows from (\ref{eq:main:def:f beta to 2 + noise}).
\end{proof}
\begin{lemma}\label{lem:size of the multifunction converting a beta expansion into its noise and the binary expansion}
    Let $\beta \in (1,2)$ be an algebraic number. Then, for all $n \in \N$, and all $x \in \{0,1\}^n$,
    \begin{equation}
        \# f_{\beta \to 2+tosses}(x) \leq \left(\frac{1}{\beta-1} + 3\right) \left(\frac{1}{\beta-1} + 2\right)^2\frac{n^{2k_\beta}}{(L_\beta\Pi_\beta)^2}\left(L_\beta \Pi_\beta^+ \right)^{2n}.
    \end{equation}
\end{lemma}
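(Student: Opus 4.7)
The plan is to exploit the two cardinality bounds already established in the excerpt, namely Lemma \ref{lem:upper bound on the cardinality of the beta to 2 conversion multifunction} for $f_{\beta \to 2}$ and Lemma \ref{lem:size of the multifunction that generates all beta expansions from one} for $f_{\beta, 1 \to \mathrm{all}}$, and feed them through the two-step composition used in Definition \ref{def:f beta to 2 + noise}. Since this is a pure counting argument, I do not expect any real obstacle; the only thing to check carefully is that the intermediate multivalued function $f_{\beta \to \mathrm{tosses}}$ does not blow up beyond $f_{\beta, 1 \to \mathrm{all}}$.

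First I would unfold the definition of $f_{\beta \to 2+\mathrm{tosses}}$ given in (\ref{eq:main:def:f beta to 2 + noise}). Since each element of $f_{\beta \to 2+\mathrm{tosses}}(x)$ is the encoding $\langle y, z\rangle$ of a pair $(y,z) \in f_{\beta \to 2}(x) \times f_{\beta \to \mathrm{tosses}}(x)$, and the encoding $\langle \cdot , \cdot \rangle$ is injective, we get
\begin{equation}
    \# f_{\beta \to 2+\mathrm{tosses}}(x) \leq \# f_{\beta \to 2}(x) \cdot \# f_{\beta \to \mathrm{tosses}}(x).
\end{equation}
Then I would observe that, by its definition, $f_{\beta \to \mathrm{tosses}}(x)$ is the image of $f_{\beta, 1 \to \mathrm{all}}(x)$ under the single-valued map $y \mapsto f_{\beta, 1+\mathrm{all} \to \mathrm{tosses}}(\langle x, y\rangle)$, and hence
\begin{equation}
    \# f_{\beta \to \mathrm{tosses}}(x) \leq \# f_{\beta, 1 \to \mathrm{all}}(x).
\end{equation}

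Finally I would apply the two known bounds. Lemma \ref{lem:upper bound on the cardinality of the beta to 2 conversion multifunction} gives
\begin{equation}
    \# f_{\beta \to 2}(x) \leq \tfrac{1}{\beta-1} + 3,
\end{equation}
and Lemma \ref{lem:size of the multifunction that generates all beta expansions from one} gives, for $x \in \{0,1\}^n$,
\begin{equation}
    \# f_{\beta, 1 \to \mathrm{all}}(x) \leq \left(\tfrac{1}{\beta-1} + 2\right)^{\!2} \frac{n^{2k_\beta}}{(L_\beta \Pi_\beta)^2} \left(L_\beta \Pi_\beta^+\right)^{2n}.
\end{equation}
Multiplying these two bounds together yields exactly the claim. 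The only mildly delicate point to double-check is the first inequality: the map $(y,z) \mapsto \langle y, z\rangle$ is injective (by the prefix-freeness property noted after the definition of $\langle \cdot , \cdot \rangle$ in the notation section), so nothing is lost in counting pairs versus their encodings, and the product bound is genuinely sharp for our purposes.
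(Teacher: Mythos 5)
Your proposal is correct and follows essentially the same route as the paper: bound $\# f_{\beta \to 2+\mathrm{tosses}}(x)$ by the product $\# f_{\beta \to 2}(x)\cdot \# f_{\beta \to \mathrm{tosses}}(x)$, control $\# f_{\beta \to \mathrm{tosses}}(x)$ by $\# f_{\beta, 1 \to \mathrm{all}}(x)$ since the former is an image of the latter, and then insert the two known cardinality lemmas. If anything, your write-up is slightly more careful than the paper's, which states the product as an equality and leaves the image-map step implicit.
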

\begin{proof}
    Let $\beta \in (1,2)$ be an algebraic number, and $x \in \{0,1\}^\ast$. First, by Lemma \ref{lem:upper bound on the cardinality of the beta to 2 conversion multifunction},
    \begin{equation}
        \#f_{\beta \to 2}(x) \leq \frac{1}{\beta-1} + 3.
    \end{equation}
    Second, by Lemma \ref{lem:size of the multifunction that generates all beta expansions from one},
    \begin{equation}
        \#f_{\beta, 1 \to all}(x) \leq \left(\frac{1}{\beta-1} + 2\right)^2\frac{n^{2k_\beta}}{(L_\beta\Pi_\beta)^2}\left(L_\beta \Pi_\beta^+ \right)^{2n},
    \end{equation}
    and therefore 
    \begin{equation}
        \#f_{\beta \to tosses}(x) \leq \left(\frac{1}{\beta-1} + 2\right)^2\frac{n^{2k_\beta}}{(L_\beta\Pi_\beta)^2}\left(L_\beta \Pi_\beta^+ \right)^{2n}.
    \end{equation}
    The proof is concluded by noting that
    \begin{align}
        \# f_{\beta \to 2+tosses}(x) &= \#f_{\beta \to 2}(x)  \times \#f_{\beta \to tosses}(x).
    \end{align}
\end{proof}
We now establish the subsequent results on algorithmic complexity. The equations are very involved, and we first make the following remark. Algebraic numbers are computable, i.e., for every algebraic number $\beta$, there exists a computable function $\varphi: \N \to \Q$ such that $\limi{n} \varphi(n) = \beta$. A consequence of this is that 
\begin{equation}
    K[x|\beta] = K[x] + \bigOx[1]{|x|}, \ \forall x \in \{0,1\}^\ast.
\end{equation}
In what follows, we focus on $\beta$ being algebraic, and we hence replace $K[\cdot |\beta]$ by $K[\cdot]$.
\begin{theorem}\label{thm:lower bound of kolmogorov complexity for beta expansions when beta is algebraic}
    Let $\beta \in (1,2)$ be an algebraic number, $s \in [0,1]$ and $n \in \N$. Denote by $\yf \in \{0,1\}^\N$ the greedy binary expansion of $s$. Then,
    \begin{align}\label{eq:algebraic:thm:lower bound of kolmogorov complexity for beta expansions when beta is algebraic}
        K[\langle \yf_{1:n}, w_\beta(s,\xf_{1:\nlogbeta})\rangle] &\leq K[\xf_{1:\nlogbeta}] + 2n \log_\beta\left(L_\beta \Pi_\beta^+\right)
        + \bigOx[\log n]{n},
    \end{align}
for all $\xf \in \Sigma_\beta(s)$. Moreover, if $\beta$ is a Pisot number,
\begin{equation}\label{eq:pisot:thm:lower bound of kolmogorov complexity for beta expansions when beta is algebraic}
    K[\langle \yf_{1:n}, w_\beta(s,\xf_{1:\nlogbeta})\rangle] \leq K[\xf_{1:\nlogbeta}] + \bigOx[1]{n},
\end{equation}
for all $\xf \in \Sigma_\beta(s)$. 
\end{theorem}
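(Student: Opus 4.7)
The plan is to apply Theorem \ref{thm:relatively computable multivalued functions and complexity} to the multivalued function $f_{\beta \to 2 + tosses}$ built in Definition \ref{def:f beta to 2 + noise}. The preceding Lemma already guarantees that the target object $\langle \yf_{1:n}, w_\beta(s,\xf_{1:\nlogbeta})\rangle$ sits inside $f_{\beta \to 2 + tosses}(\xf_{1:\nlogbeta})$, and Lemma \ref{lem:size of the multifunction converting a beta expansion into its noise and the binary expansion} gives a fully explicit upper bound on the cardinality of this output set. So the bulk of the proof is bookkeeping.

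First I would verify that $f_{\beta \to 2+tosses}$ is itself computable relatively to $\beta$. This reduces to the computability relative to $\beta$ of its three constituents: $f_{\beta \to 2}$ (Definition \ref{def:multivalued function converting from base beta to base 2}, which was already used in the proof of Corollary \ref{cor:binary expansions are at least as compressible as as beta expansions}), $f_{\beta, 1 \to all}$ (Definition \ref{def:definition of f beta 1 all}, whose computability uses Corollary \ref{cor:M is computable for algebraic beta} on the computability of $[\cdot]_\beta$), and $f_{1, all \to tosses}$ (Lemma \ref{lem:f 1 all to noise is computable}). Since the pairing $\langle \cdot, \cdot \rangle$ is computable, this gives computability of $f_{\beta \to 2 + tosses}$. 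Then Theorem \ref{thm:relatively computable multivalued functions and complexity} directly yields
\begin{equation*}
K[\langle \yf_{1:n}, w_\beta(s,\xf_{1:\nlogbeta})\rangle|\beta] \leq K[\xf_{1:\nlogbeta}|\beta] + \log_2 \#f_{\beta \to 2 + tosses}(\xf_{1:\nlogbeta}) + 2\log\log\#f_{\beta \to 2 + tosses}(\xf_{1:\nlogbeta}) + \bigOx[1]{n}.
\end{equation*}

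Next I would plug in Lemma \ref{lem:size of the multifunction converting a beta expansion into its noise and the binary expansion} and read off the asymptotics. Taking logarithms of the bound there, the term $\left(L_\beta \Pi_\beta^+\right)^{2\nlogbeta}$ contributes $2 \nlogbeta \log_2(L_\beta \Pi_\beta^+)$; using $\nlogbeta = \lceil n \log_\beta 2\rceil$ and the change-of-base identity $\log_\beta 2 \cdot \log_2 x = \log_\beta x$, this is $2 n \log_\beta(L_\beta \Pi_\beta^+) + \bigOx[1]{n}$. The factor $n^{2k_\beta}$ contributes $2 k_\beta \log_2 \nlogbeta = \bigOx[\log n]{n}$, and the remaining constants in $\beta$ contribute $\bigOx[1]{n}$. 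The double-log term $2 \log_2 \log_2 \#f_{\beta \to 2 + tosses}(\xf_{1:\nlogbeta})$ is also $\bigOx[\log n]{n}$. Finally, since every algebraic $\beta$ is computable (a finite description of $\beta$, e.g.\ its minimal polynomial plus an isolating interval for the relevant root, is enough to approximate $\yf_\beta$ to arbitrary precision), we have $K[x|\beta] = K[x] + \bigOx[1]{|x|}$, which lets us strip the conditioning on $\beta$. Summing these contributions gives exactly (\ref{eq:algebraic:thm:lower bound of kolmogorov complexity for beta expansions when beta is algebraic}).

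For the Pisot refinement (\ref{eq:pisot:thm:lower bound of kolmogorov complexity for beta expansions when beta is algebraic}), I would simply specialize the preceding computation. A Pisot number is by definition an algebraic integer all of whose Galois conjugates lie strictly inside the open unit disk; hence $L_\beta = 1$, $G_\beta^+ = \emptyset$ so $\Pi_\beta^+ = 1$, and $G_\beta^1 = \emptyset$ so $k_\beta = 0$. The cardinality bound of Lemma \ref{lem:size of the multifunction converting a beta expansion into its noise and the binary expansion} therefore collapses to a quantity depending only on $\beta$ (a constant in $n$), and the entire right-hand side becomes $K[\xf_{1:\nlogbeta}] + \bigOx[1]{n}$. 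The only subtlety worth double-checking is that the prefactor $1/(L_\beta \Pi_\beta)^2 = 1/\Pi_\beta^2$ is a finite constant: this is Garsia's observation that for an algebraic $\beta$, $\Pi_\beta > 0$ (no conjugate equals $1$ in modulus since there are none on the unit circle for Pisot $\beta$, and $\beta \neq 1$ because $\beta > 1$).

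The main obstacle I anticipate is not conceptual but notational: carefully keeping track of which errors are truly $\bigOx[1]{n}$ versus $\bigOx[\log n]{n}$, in particular the conversion between $K[\cdot|\beta]$ and $K[\cdot]$ and the rounding in $\nlogbeta = \lceil n \log_\beta 2\rceil$. Everything nontrivial has already been done in Lemmas \ref{lem:size of the multifunction converting a beta expansion into its noise and the binary expansion} and \ref{lem:f 1 all to noise is computable} and Theorem \ref{thm:relatively computable multivalued functions and complexity}; the present theorem is essentially their corollary.
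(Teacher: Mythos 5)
Your proposal is correct and follows essentially the same route as the paper: apply Theorem \ref{thm:relatively computable multivalued functions and complexity} to $f_{\beta \to 2+tosses}$, invoke the membership lemma and the cardinality bound of Lemma \ref{lem:size of the multifunction converting a beta expansion into its noise and the binary expansion}, and specialize $L_\beta = \Pi_\beta^+ = 1$, $k_\beta = 0$ in the Pisot case. The extra care you take with the computability of the three constituent functions and with stripping the conditioning on $\beta$ matches remarks the paper makes just before stating the theorem.
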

\begin{proof}
    Let $\beta \in (1,2)$ be an algebraic number, $s \in [0,1]$, $n \in \N$ and $\xf \in \Sigma_\beta(s)$. Then, by combination of 
    \begin{equation}
        \langle \yf_{1:n}, w_\beta(s,\xf_{1:\nlogbeta})\rangle \in f_{\beta \to 2+tosses}(\xf_{1:\nlogbeta}).
    \end{equation}
    Moreover, $f_{\beta \to 2+tosses}$ is computable. Hence, by Theorem \ref{thm:relatively computable multivalued functions and complexity},
    \begin{align}
        K[\langle \yf_{1:n}, w_\beta(s,\xf_{1:\nlogbeta})\rangle] &\leq K[\xf_{1:\nlogbeta}] + \log \#f_{\beta \to 2+tosses}(\xf_{1:\nlogbeta})\\
        & + \log \log \#f_{\beta \to 2+tosses}(\xf_{1:\nlogbeta}) + \bigOx[1]{n}.
    \end{align}
    Note that by Lemma \ref{lem:size of the multifunction converting a beta expansion into its noise and the binary expansion}, 
    \begin{equation}
        \log \#f_{\beta \to 2+tosses}(\xf_{1:\nlogbeta}) \leq 2n \log_\beta \left(L_\beta \Pi^+_\beta \right) + 2k_\beta \log_\beta n + \bigOx[1]{n}
    \end{equation}
    and 
    \begin{equation}
        \log \log \#f_{\beta \to 2+tosses}(\xf_{1:\nlogbeta}) \leq \log n + \log \log_\beta n + \bigOx[1]{n},
    \end{equation}
    hence (\ref{eq:algebraic:thm:lower bound of kolmogorov complexity for beta expansions when beta is algebraic}) follows. If specifically, $\beta$ is a Pisot number, then $L_\beta = \Pi^+_\beta = 1$ and $k_\beta = 0$, so 
    \begin{equation}
        \log \#f_{\beta \to 2+tosses}(\xf_{1:\nlogbeta}) \leq \bigOx[1]{n}
    \end{equation}
    and hence
    \begin{equation}
        \log \log \#f_{\beta \to 2+tosses}(\xf_{1:\nlogbeta})\leq \bigOx[1]{n}.
    \end{equation}
    (\ref{eq:pisot:thm:lower bound of kolmogorov complexity for beta expansions when beta is algebraic}) follows.
\end{proof}
\noindent In what follows, we use  
\begin{equation}
    K[x|y] := K[x|y0^\infty], \ \forall x,y \in \{0,1\}^\ast.
\end{equation}
\begin{corollary}\label{cor:lower bound of kolmogorov complexity for beta expansions when beta is algebraic}
    Let $\beta \in (1,2)$ be an algebraic number, $s \in [0,1]$ and $n \in \N$. Denote by $\yf \in \{0,1\}^\N$ the greedy binary expansion of $s$. Then,
    \begin{align}
        K[\xf_{1:\nlogbeta}] &\geq K[\yf_{1:n}] + K[w_\beta(s,\xf_{1:\nlogbeta})|\yf_{1:n}] - 2n \log_\beta\left(L_\beta \Pi_\beta^+\right) \nonumber\\ 
        &+ \bigOx[\log n]{n},\label{eq:main:cor:lower bound of kolmogorov complexity for beta expansions when beta is algebraic}
    \end{align}
    for all $\xf \in \Sigma_\beta(s)$.
\end{corollary}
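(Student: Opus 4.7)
The corollary follows immediately by combining Theorem \ref{thm:lower bound of kolmogorov complexity for beta expansions when beta is algebraic} with the standard chain rule (``symmetry of information'') for Kolmogorov complexity, so no genuinely new ideas are required; the content has all been done in proving the theorem.

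First, I would invoke the algebraic part of Theorem \ref{thm:lower bound of kolmogorov complexity for beta expansions when beta is algebraic} to write
\begin{equation}
    K[\langle \yf_{1:n}, w_\beta(s,\xf_{1:\nlogbeta})\rangle] \leq K[\xf_{1:\nlogbeta}] + 2n\log_\beta\left(L_\beta \Pi_\beta^+\right) + \bigOx[\log n]{n},
\end{equation}
and rearrange it as a lower bound on $K[\xf_{1:\nlogbeta}]$. This reduces the proof to lower bounding the joint complexity $K[\langle \yf_{1:n}, w_\beta(s,\xf_{1:\nlogbeta})\rangle]$ by $K[\yf_{1:n}] + K[w_\beta(s,\xf_{1:\nlogbeta})\,|\,\yf_{1:n}]$, up to an additive $O(\log n)$ term.

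Second, I would apply the standard chain rule for plain Kolmogorov complexity (see e.g.\ \cite[Theorem 2.8.2]{li2008introduction}), which asserts that for any $a,b\in\{0,1\}^\ast$ with $|a|,|b|\leq N$ one has $K[\langle a,b\rangle] \geq K[a] + K[b\,|\,a] - c\log N$ for an absolute constant $c>0$. Setting $a:=\yf_{1:n}$ and $b:=w_\beta(s,\xf_{1:\nlogbeta})$, both of which have length at most $\nlogbeta \leq n\log_\beta 2 + 1 \leq 2n$ (for $b$, the bound is guaranteed by Lemma \ref{lem:the set of preimages by the dajani algorithm is a cylinder}, which gives $w_\beta(s,x)\in\{0,1\}^{\leq |x|}$), delivers
\begin{equation}
    K[\langle \yf_{1:n}, w_\beta(s,\xf_{1:\nlogbeta})\rangle] \geq K[\yf_{1:n}] + K[w_\beta(s,\xf_{1:\nlogbeta})\,|\,\yf_{1:n}] - \bigOx[\log n]{n}.
\end{equation}

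Chaining the two bounds and absorbing the $\bigOx[\log n]{n}$ terms yields exactly (\ref{eq:main:cor:lower bound of kolmogorov complexity for beta expansions when beta is algebraic}). The only mild technicality worth flagging is the conditioning convention $K[x|y]:=K[x|y0^\infty]$ used in the paper, which should be checked to be compatible with the version of the chain rule one cites; this is immediate since the suffix $0^\infty$ acts as a fixed computable oracle in every program involved in the classical proof of the chain rule, so the inequality carries over without modification. No step here is a real obstacle: the heavy lifting was already done in Theorem \ref{thm:lower bound of kolmogorov complexity for beta expansions when beta is algebraic} (through the cardinality bound for $f_{\beta\to 2+tosses}$ of Lemma \ref{lem:size of the multifunction converting a beta expansion into its noise and the binary expansion}), and the chain rule is used as a black box.
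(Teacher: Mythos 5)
Your proposal is correct and follows essentially the same route as the paper: invoke the algebraic case of Theorem \ref{thm:lower bound of kolmogorov complexity for beta expansions when beta is algebraic} for the upper bound on the joint complexity, then apply the chain rule of \cite[Theorem 2.8.2]{li2008introduction} to $\langle \yf_{1:n}, w_\beta(s,\xf_{1:\nlogbeta})\rangle$ and combine, absorbing the logarithmic error since both strings have length $\bigOx[n]{n}$. Your extra remarks on the length bound from Lemma \ref{lem:the set of preimages by the dajani algorithm is a cylinder} and on the conditioning convention $K[x|y]:=K[x|y0^\infty]$ are correct and, if anything, slightly more careful than the paper's own write-up.
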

\begin{proof}
    Let $\beta \in (1,2)$ be an algebraic number, $s \in [0,1]$ and $n \in \N$.
    By \cite[Theorem 2.8.2]{li2008introduction}, 
    \begin{equation}
        K[\langle y,x\rangle] \geq K[y] + K[x|y] + \bigOw[\log |\langle y,x\rangle|]{|\langle y,x \rangle|},
    \end{equation}
    for all $x,y \in \{0,1\}^\ast$. Therefore, in particular,
    \begin{equation}\label{eq:0:proof:cor:lower bound of kolmogorov complexity for beta expansions when beta is algebraic}
        K[\langle y,x\rangle] \geq K[y] + K[x|y] + \bigOx[\log n]{n},
    \end{equation}
    for all $x \in \{0,1\}^n$ and $y \in \{0,1\}^{\leq \nlogbeta}$. (\ref{eq:main:cor:lower bound of kolmogorov complexity for beta expansions when beta is algebraic}) follows immediately by combining (\ref{eq:0:proof:cor:lower bound of kolmogorov complexity for beta expansions when beta is algebraic}) and Theorem \ref{thm:lower bound of kolmogorov complexity for beta expansions when beta is algebraic}.
\end{proof}
\noindent We define a distribution of the Kolmogorov complexity of $\beta$-expansions. Let 
\begin{equation}
    \KK_\beta[s,n,k] := \left\{ x \in \Sigma_\beta(s,\nlogbeta) : K[x] \leq K[\yf_{1:n}] + k - 2n \log_\beta\left(L_\beta \Pi_\beta^+\right)\right\}.
\end{equation}

\begin{corollary}
    Let $\beta \in (1,2)$ be an algebraic number, $s \in [0,1]$ and $n,k \in \N$. Then, there exists $M \in \N$ such that
    \begin{equation}
        \# \KK_\beta[s,n,k] \leq 2^{k+1} n^M.
    \end{equation}
\end{corollary}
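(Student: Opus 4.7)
The plan is to combine the lower bound of Corollary \ref{cor:lower bound of kolmogorov complexity for beta expansions when beta is algebraic} with injectivity of the coin-toss assignment $x \mapsto w_\beta(s,x)$, and then apply the standard counting of binary strings of bounded Kolmogorov complexity.

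First, I would fix $x \in \KK_\beta[s,n,k]$ and pair the upper bound
\[
K[x] \leq K[\yf_{1:n}] + k - 2n\log_\beta(L_\beta\Pi_\beta^+),
\]
which is the defining property of $\KK_\beta[s,n,k]$, with the matching lower bound supplied by Corollary \ref{cor:lower bound of kolmogorov complexity for beta expansions when beta is algebraic},
\[
K[x] \geq K[\yf_{1:n}] + K[w_\beta(s,x)\mid \yf_{1:n}] - 2n\log_\beta(L_\beta\Pi_\beta^+) + \bigOx[\log n]{n}.
\]
Cancelling the common terms yields $K[w_\beta(s,x)\mid \yf_{1:n}] \leq k + c\log n + c'$ for constants $c,c'$ depending only on $\beta$.

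Second, I would show that the map $\Phi_s : \Sigma_\beta(s,\nlogbeta) \to \{0,1\}^{\leq \nlogbeta}$ defined by $x \mapsto w_\beta(s,x)$ is injective. By Lemma \ref{lem:the set of preimages by the dajani algorithm is a cylinder}, $\AM_\beta(s,\nlogbeta)^{-1}(x) = w_\beta(s,x)\{0,1\}^\N$ for every $x \in \Sigma_\beta(s,\nlogbeta)$. Since $\AM_\beta(s,\nlogbeta)$ is a (single-valued) function, preimages of distinct elements are disjoint, so the corresponding cylinders are disjoint, which forces $w_\beta(s,x) \neq w_\beta(s,x')$ whenever $x \neq x'$.

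Third, I would invoke the standard inequality that the number of binary sequences $w$ with $K[w \mid \yf_{1:n}] \leq m$ is at most $\sum_{j=0}^{m} 2^j < 2^{m+1}$, since at most $2^j$ distinct strings admit a canonical description of length exactly $j$. Combining this with the injectivity of $\Phi_s$ and the complexity bound on $w_\beta(s,x)$ from the first step gives
\[
\#\KK_\beta[s,n,k] \leq \#\bigl\{w : K[w\mid \yf_{1:n}] \leq k + c\log n + c'\bigr\} < 2^{k+1}\cdot 2^{c'}\cdot n^{c}.
\]
Choosing $M \in \N$ large enough (depending only on $\beta$) to absorb $2^{c'} n^c$ into $n^M$ for all $n \geq 1$ delivers the claim; the trivial bound $\#\KK_\beta[s,n,k] \leq 2^{\nlogbeta}$ handles any small-$n$ edge case. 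The only nontrivial ingredient beyond bookkeeping is the injectivity of $\Phi_s$; the real work is already encoded in the preceding results, which convert a complexity bound on the long word $x$ of length $\nlogbeta$ into a tight complexity bound on the much shorter coin-toss record $w_\beta(s,x)$, and it is this compression of information that makes the count tractable.
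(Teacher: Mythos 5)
Your proposal is correct and follows essentially the same route as the paper's proof: combine the lower bound of Corollary \ref{cor:lower bound of kolmogorov complexity for beta expansions when beta is algebraic} with the defining inequality of $\KK_\beta[s,n,k]$ to bound $K[w_\beta(s,x)\mid\yf_{1:n}]$ by $k$ plus an $O(\log n)$ term, then use injectivity of $x\mapsto w_\beta(s,x)$ and the standard count of strings of bounded conditional complexity to get $2^{k+1}n^M$. Your explicit derivation of the injectivity from the cylinder structure of $\AM_\beta(s,n)^{-1}(x)$ in Lemma \ref{lem:the set of preimages by the dajani algorithm is a cylinder} is a welcome detail that the paper only asserts.
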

\begin{proof}
    Let $\beta \in (1,2)$ be an algebraic number, $s \in [0,1]$ and $n,k \in \N$. By Corollary \ref{cor:lower bound of kolmogorov complexity for beta expansions when beta is algebraic}, there exists $M \in \N$ such that 
    \begin{align}
        K[\xf_{1:\nlogbeta}] &\geq K[\yf_{1:n}] + K[w_\beta(s,\xf_{1:\nlogbeta})|\yf_{1:n}] - 2n \log_\beta\left(L_\beta \Pi_\beta^+\right)\\ 
        &- M\log n,
    \end{align}
    for all $\xf \in \Sigma_\beta(s)$. Define the set 
    \begin{equation}
        \WW_\beta[s,n,k] := \left\{ x \in \Sigma_\beta(s,\nlogbeta) : K[w_\beta(s,x)|\yf_{1:n}] \leq  k - M\log n\right\}.
    \end{equation}
    Then, by Corollary \ref{cor:lower bound of kolmogorov complexity for beta expansions when beta is algebraic}, we have 
    \begin{equation}
        \KK_\beta[s,n,k] \subseteq \WW_\beta[s,n,k].
    \end{equation}
    For $z \in \{0,1\}^\ast$, let $z^\ast$ be the canonical sequence for $z$ with respect to $\yf_{1:n}$, i.e., the lexicographically maximal sequence satisfying $|z^\ast| = K[z|\yf_{1:n}]$. Note that $z \mapsto z^\ast$ injective. Note that, by injectivity of $x\mapsto w_\beta(s,x)$, we have
    \begin{align}
        \#\WW_\beta[s,n,k] &= \#\left\{ x \in \Sigma_\beta(s,\nlogbeta) : K[w_\beta(s,x)|\yf_{1:n}] \leq  k + M\log n\right\}\\
        &\overset{(a)}{=} \#\left\{ w_\beta(s,x) : x \in \Sigma_\beta(s,\nlogbeta),\right.\\
        & \hspace{0.45in} \left. |K[w_\beta(s,x)|\yf_{1:n}] \leq  k + M\log n\right\}\\
        &\leq \#\left\{ z \in \{0,1\}^\ast : K[z|\yf_{1:n}] \leq k + M \log n\right\}\\
        &\overset{(b)}{\leq} \{ z^\ast \in  : z \in \{0,1\}^\ast,  |z^\ast| \leq k + M \log n\}\\
        &= \# \{0,1\}^{\leq k + M \log n} = \sum_{i=0}^{k + M\log n} 2^i\\
        & = 2^{k + M \log n+1} - 1 \leq 2^{k+1} n^M,
    \end{align}
    where (a) is by injectivity of $x \mapsto w_\beta(s,x)$, (b) follows by injectivity of $z \mapsto z^\ast$.
\end{proof}
\endgroup







\appendix 


\section{Appendix on weak-star limits}

\label{sec:app:weak star limits}

In this part, we collect useful statements that help the understanding of weak limit measures. Let $X$ be a locally compact Hausdorff topological space (see \cite[p31, p104]{lee2011IntroductionTopologicalManifolds}). We denote by $\BB$ the Borel $\sigma$-algebra on $X$, i.e., the $\sigma$-algebra generated by the open subsets of $X$ (see \cite[Section 15]{halmos1950MeasureTheory}). We recall the definition of a regular Borel measure.
\begin{definition}\cite[Section 52]{halmos1950MeasureTheory}
    A regular Borel measure $\mu : \BB \to [0, \infty]$ is a measure that is finite on compact sets, i.e., $\mu(C) < \infty$ for every compact set $C$, and that satisfies
        \begin{equation}
            \mu(A) = \inf \{\mu(U) : A \subseteq U, U \text{ is open}\} =  \sup \{\mu(U) : U \subseteq A, U \text{ is compact}\}.
        \end{equation}
\end{definition}

Regular Borel measures are a central element in the following Lemmata, that are key ingredients of the proof of Theorem \ref{thm:asymptotic behavior of card f x when the Bernoulli convolution is absolutely continuous}. Define the set of compactly supported continuous functions from $X$ to $\R$ as \cite[Section 55]{halmos1950MeasureTheory}
\begin{equation}
    \CC_c(X) := \left\{ f: X \to \R : f \text{ is continuous, and } \overline{\{x : f(x) \neq 0\}}\text{ is compact} \right\}.
\end{equation}
Note that if $X$ is compact, $\CC_c(X)$ is simply the set of continuous functions. 
\begin{lemma}
    Let $(\mu_n : \BB \to [0,\infty])_{n \in \N}$ be a sequence of regular Borel measures converging weakly, i.e., $\left(\int_X f d\mu_n\right)_{n \in \N}$ is converging in $\R$, for every $f \in \CC_c(X)$. Then, there exists a regular Borel measure $\mu : \BB \to [0,\infty]$ such that
    \begin{equation}
        \int_X f d\mu = \lim_{n\to \infty}\int_X f d\mu_n, \ \forall f \in \CC_c(X).
    \end{equation}
\end{lemma}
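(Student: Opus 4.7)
The plan is to reduce this to the Riesz--Markov--Kakutani representation theorem. First I would define a functional $\Lambda : \CC_c(X) \to \R$ by
\begin{equation}
    \Lambda(f) := \lim_{n\to\infty} \int_X f\, d\mu_n,
\end{equation}
which is well-defined since the hypothesis of weak convergence guarantees that the limit on the right exists in $\R$ for every $f \in \CC_c(X)$.

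Next I would check the two properties needed to invoke the Riesz--Markov--Kakutani theorem, namely linearity and positivity. Linearity of $\Lambda$ follows immediately from linearity of the integral together with the fact that limits commute with finite linear combinations in $\R$. For positivity, if $f \in \CC_c(X)$ satisfies $f \geq 0$, then each $\mu_n$ being a (positive) Borel measure gives $\int_X f\, d\mu_n \geq 0$ for all $n \in \N$, so by taking the limit $\Lambda(f) \geq 0$. Hence $\Lambda$ is a positive linear functional on $\CC_c(X)$.

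Since $X$ is a locally compact Hausdorff space, the Riesz--Markov--Kakutani representation theorem (see for instance the version in \cite[Section 56, Theorem D]{halmos1950MeasureTheory}) yields a unique regular Borel measure $\mu : \BB \to [0,\infty]$ such that
\begin{equation}
    \Lambda(f) = \int_X f\, d\mu, \quad \forall f \in \CC_c(X).
\end{equation}
Combining this with the definition of $\Lambda$ delivers the conclusion of the lemma. The main subtlety is merely a bookkeeping issue: one must verify that the hypothesis that $X$ is locally compact Hausdorff is exactly what is needed to apply the Riesz--Markov--Kakutani theorem in the form stated, and this was built into the setup at the start of the appendix. No additional hard analysis is required beyond invoking this classical representation result.
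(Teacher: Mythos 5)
Your proposal is correct and follows essentially the same route as the paper: define the limit functional $\Lambda$ on $\CC_c(X)$, verify positivity and linearity, and invoke the Riesz representation theorem to obtain the regular Borel measure $\mu$. No gaps; the only difference is that you spell out the positivity and linearity checks slightly more explicitly than the paper does.
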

\begin{proof}
    A map $\Lambda : \CC_c(X) \to \R$ is said to be positive if $\Lambda(f) \geq 0$ when $f \geq 0$ and linear if $\Lambda(\alpha f + g) = \alpha \Lambda(f) + \Lambda(g)$ for all $\alpha \in \R$, $f,g \in \CC_c(X)$. For every measure $\mu : \BB \to [0,\infty]$, the functional defined by $\Lambda_\mu : f \mapsto \int_X f d\mu$ is obviously positive and linear. Define the functional $\Lambda : \CC_c(X) \to \R$ as 
    \begin{equation}
        \Lambda (f) := \lim_{n\to \infty}\int_X f d\mu_n, \ \forall f \in \CC_c(X).
    \end{equation}
    $\Lambda$ is well-defined since $(\mu_n)_{n \in \N}$ converges weakly. Moreover, $\Lambda$ is obviously positive and linear, by positivity and linearity of the limit. Further, there is a standard result, known as Riesz representation theorem, that states that for every positive linear functional $\Lambda : \CC_c(X) \to \R$, there exists a unique regular Borel measure $\mu$ such that $\Lambda = \Lambda_\mu$ (see  \cite[Section 56, Theorems D and E]{halmos1950MeasureTheory}). This concludes the proof.
\end{proof}

\begin{lemma}\label{app:lem:sequence of weakly convergent measures can be upper bounded by the limit on compact sets}
    Suppose that $X$ is compact. Let $(\mu_n)_{n \in \N}$ be a sequence of regular Borel measures converging weakly to a Borel regular measure $\nu$, i.e.,
    \begin{equation}
        \int_Xfd\mu = \lim_{n \to \infty} \int_X fd\mu_n, \ \forall f \in \CC_c(X).
    \end{equation}
    Then, for every compact set $C$, we have
    \begin{equation}
        \limsup_{n \to \infty} \mu_n(C) \leq \mu(C).
    \end{equation}
\end{lemma}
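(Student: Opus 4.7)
The plan is to exploit outer regularity of $\mu$ together with Urysohn's lemma to sandwich the indicator function $\mathbf{1}_C$ between continuous functions, thereby converting the set-theoretic statement $\limsup_n \mu_n(C) \le \mu(C)$ into a statement about integrals of continuous functions, to which the weak convergence hypothesis directly applies.

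Concretely, I would fix $\varepsilon > 0$ and first invoke outer regularity of $\mu$ to find an open set $U \supseteq C$ with $\mu(U) \le \mu(C) + \varepsilon$. Since $X$ is compact Hausdorff, hence locally compact Hausdorff and normal, Urysohn's lemma (in its locally compact form) yields a function $f \in \CC_c(X)$ with $0 \le f \le 1$, $f \equiv 1$ on the compact set $C$, and $\operatorname{supp}(f) \subseteq U$. In particular, $\mathbf{1}_C \le f \le \mathbf{1}_U$ pointwise, which gives
\begin{equation}
    \mu_n(C) = \int_X \mathbf{1}_C \, d\mu_n \le \int_X f \, d\mu_n
\end{equation}
for every $n \in \N$.

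Taking the limit superior and applying the hypothesis of weak convergence (noting that $\left(\int_X f\,d\mu_n\right)_n$ has an actual limit because $f \in \CC_c(X)$),
\begin{equation}
    \limsup_{n \to \infty} \mu_n(C) \le \limsup_{n \to \infty} \int_X f \, d\mu_n = \lim_{n \to \infty} \int_X f \, d\mu_n = \int_X f \, d\mu \le \mu(U) \le \mu(C) + \varepsilon.
\end{equation}
Since $\varepsilon > 0$ was arbitrary, the conclusion follows by letting $\varepsilon \to 0$.

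The only delicate point is the invocation of Urysohn's lemma with a compactly supported separating function; this is standard in the locally compact Hausdorff setting (see, e.g., \cite[Section 55]{halmos1950MeasureTheory}) and costs nothing extra here because $X$ is assumed compact. All other steps — outer regularity, monotonicity of the integral, and the weak-convergence pass to the limit — are routine.
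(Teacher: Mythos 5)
Your proof is correct and follows essentially the same route as the paper's: outer regularity produces an open $U \supseteq C$, Urysohn's lemma gives a continuous function sandwiched between $\chi_C$ and $\chi_U$, and weak convergence plus monotonicity yields $\limsup_n \mu_n(C) \leq \mu(U)$. The only cosmetic difference is that you fix $\varepsilon$ and choose one $U$ up front, whereas the paper takes the infimum over all open $U \supseteq C$ at the end; these are interchangeable.
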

\begin{proof} 
    Let $(\mu_n)_{n \in \N}$ be a sequence of regular Borel measures converging weakly to the Borel regular measure $\mu$ and $C$ be a closed set. Let $U$ be an open set, so that $C \subseteq U$. Since $U$ is open, then $X \backslash U$ is closed and disjoint from $C$. Since $X$ is compact and Hausdorff, then it is normal (see \cite[p111]{lee2011IntroductionTopologicalManifolds} and \cite[Exercise 4.78]{lee2011IntroductionTopologicalManifolds}), so we can apply Urysohn's Lemma \cite[Lemma 4.82]{lee2011IntroductionTopologicalManifolds}: there exists a continuous function $f_U : X \to [0,1]$ such that $f(C) = \{1\}$ and $f(X\backslash U) = \{0\}$. Note that since 
    \begin{equation}
        \chi_C(x) \leq f_U(x) \leq \chi_U(x), \ \forall x \in X,
    \end{equation}
    where $\chi_A$ denotes the characteristic function of $A \in \BB$, then
    \begin{equation}
        \mu_n(C) = \int_X \chi_C d\mu_n \leq \int_Xf_Ud\mu_n,
    \end{equation}
    and therefore
    \begin{equation}
        \limsup_{n \to \infty}\mu_n(C) \leq \limsup_{n \to \infty}\int_Xf_Ud\mu_n = \int_Xf_Ud\mu \leq \int_X\chi_U d\mu = \mu(U).
    \end{equation}
    Since the left-hand side of the inequality does not depend on $U$, we get
    \begin{equation}
        \limsup_{n \to \infty}\mu_n(C) \leq \inf_{\substack{U \text{ open},\\ U \supseteq C}} \mu(U) = \mu(C),
    \end{equation}
    where the last equality follows by Borel regularity of $\mu$.
\end{proof}

Since $\R$ is a locally compact Hausdorff topological space, and every closed interval $[a,b]$, $a,b \in \R$ is a compact subset of $\R$, we can use the above Lemma to prove Theorem \ref{thm:asymptotic behavior of card f x when the Bernoulli convolution is absolutely continuous}. 


\section{Proof of the separation lemma}

\label{sec:app:separation lemma}

In this section, we prove Lemma \ref{lem:generalization of the garsia separation lemma}. We first introduce a succession classical algebraic results, that ultimately yield the Lemma. We assume that the reader is familiar to elementary notions of algebra, such as the definition of a field. We refer to the standard books of algebra, such as \cite{lang2002Algebra}.
\begin{lemma}\label{lem:from polynomials in q, one of the roots of the polynomial not being a root of another polynomial implies that all the roots are not the rroots of the other}
    Let $P,Q \in \Q[X]$, with $P$ irreducible in $\Q[X]$. Suppose that one of the roots of $P$ is not a root of $Q$. Then, none of the roots of $P$ is a root of $Q$.
\end{lemma}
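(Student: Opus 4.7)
The plan is to prove the contrapositive: if some root of $P$ is a root of $Q$, then every root of $P$ is a root of $Q$. This reduces the claim to showing the divisibility $P \mid Q$ in $\mathbb{Q}[X]$, after which the conclusion follows immediately because any root of $P$ is then automatically a root of $Q$.

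First, I would invoke that $\mathbb{Q}[X]$ is a Euclidean (hence principal ideal) domain, so that the greatest common divisor $D := \gcd(P,Q) \in \mathbb{Q}[X]$ exists and satisfies a B\'ezout identity $D = AP + BQ$ for some $A,B \in \mathbb{Q}[X]$. If $\beta$ is a common root of $P$ and $Q$, evaluating at $\beta$ yields $D(\beta) = A(\beta) P(\beta) + B(\beta) Q(\beta) = 0$, so $D$ vanishes at $\beta$ and in particular is not a nonzero constant polynomial.

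Next, I would exploit irreducibility of $P$ in $\mathbb{Q}[X]$: from $D \mid P$, the divisor $D$ must be either a nonzero constant of $\mathbb{Q}$ or an associate of $P$ (i.e., a nonzero rational multiple). The previous step rules out the constant case, so $D = cP$ for some $c \in \mathbb{Q}^\ast$. Combining this with $D \mid Q$ gives $P \mid Q$ in $\mathbb{Q}[X]$. Writing $Q = PR$ for some $R \in \mathbb{Q}[X]$, any root $\alpha$ of $P$ satisfies $Q(\alpha) = P(\alpha) R(\alpha) = 0$. This establishes the contrapositive and hence the lemma.

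The argument is entirely standard commutative algebra over a field, and there is no genuine obstacle. The only sanity check is to verify that the B\'ezout identity for $D = \gcd(P,Q)$ takes place in $\mathbb{Q}[X]$ and not in some larger ring; this is immediate from $\mathbb{Q}[X]$ being a PID, so the coefficients $A,B$ can be chosen in $\mathbb{Q}[X]$, and the evaluation at the common root $\beta \in \mathbb{C}$ remains valid since $\mathbb{Q}[X] \subseteq \mathbb{C}[X]$.
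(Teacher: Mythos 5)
Your proof is correct. Both you and the paper argue via the contrapositive and reduce the claim to showing $P \mid Q$ in $\Q[X]$, using irreducibility to force $\gcd(P,Q)$ to be an associate of $P$ once it is known to be nonconstant. The difference lies in how that nonconstancy is established. The paper passes to $\C[X]$, observes that $X-\beta$ is a common divisor there for the shared root $\beta$, and then invokes the (cited) fact that greatest common divisors are stable under field extension $\Q \subseteq \C$, so the gcd computed in $\Q[X]$ must also have positive degree. You instead stay inside $\Q[X]$ and use the B\'ezout identity $D = AP + BQ$, evaluating at the common root $\beta \in \C$ to conclude $D(\beta) = 0$, hence $D$ is a nonzero nonconstant polynomial. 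Your route is more self-contained: it replaces the external lemma on gcd's under field extension by the elementary observation that evaluation at $\beta$ is a ring homomorphism on $\C[X] \supseteq \Q[X]$. The paper's route, on the other hand, makes the degree bookkeeping slightly more explicit (the gcd has degree either $0$ or $\deg P$, and the extension argument pins it down). Both are standard and fully rigorous; the only point worth making explicit in your write-up is that $D \neq 0$ (which holds since $P \neq 0$), so that ``vanishes at $\beta$'' genuinely rules out the unit case.
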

\begin{proof}
    We show the contraposition of the statement. Let $P,Q \in \Q[X]$, with $P$ irreducible in $\Q[X]$. We will show that if a root of $P$ is also a root of $Q$, then all the roots of $P$ must be roots of $Q$. First, note that since $P$ is irreducible in $\Q[X]$, its dividors in $\Q[X]$ are either polynomials of the form $\alpha P$ with $\alpha \in \Q$, or constant polynomials in $\Q[X]$. Then, the greatest common dividors of $P$ and $Q$, by being dividors of $P$, must be also be either of the form $\alpha P$ or constant polynomials. In particular, this means that either all the greatest common dividors of $P$ and $Q$ in $\Q[X]$ are of degree $\deg P$, or they all are of degree $0$.

    Now, assume that one root $\alpha \in \C$ of $P$ is also a root of $Q$, then we know that the polynomial $X - \alpha$ divides both $P$ and $Q$ in $\C[X]$, hence the greatest common dividors of $P$ and $Q$ in $\C[X]$ are at least of degree $1$. There is a standard result in algebra that states that for any two fields $\K,\Lb$ satisfying $\K \subseteq \Lb$, the greatest common dividors of two polynomials in $\K[X]$ are also greastest common dividors in $\Lb[X]$ \cite[Section III-8, Proposition 3]{mazet1996capes}. In particular, this means that if the greatest common dividors of $P$ and $Q$ in $\Lb[X]$ are of degree $d$, then the greatest common dividors of $P$ and $Q$ in $\K[X]$ also are of degree $d$. Since $\Q$ and $\C$ are fields such that $\Q \subseteq \C$, then by the discussion above the degree of the greatest common dividors of $P$ and $Q$ in $\Q[X]$ are of degree at least $1$, which implies that they must be of the form $\alpha P$, $\alpha \in \Q$, i.e., there exists a polynomial $A \in \Q[X]$ such that $Q = AP$. Consequently, for any root $r$ of $P$, we have 
    \begin{equation}
        Q(r) = A(r) P(r) = 0,
    \end{equation}
    i.e., $r$ is a root of $Q$.
\end{proof}

The next statement emerges from the theory of polynomial resultants. We introduce this theory along the lines of \cite[Section IV.8]{lang2002Algebra}. Let $P,Q \in \C[X]$ of respective degrees $n$ and $m$. We denote by $p_0,\ldots,p_n \in \C$ the coefficients of $P$, and by $q_0,\ldots,q_m \in \C$ the coefficients of $Q$. The resultant $Res(P,Q)$ is defined to be the determinant of the matrix 

\begin{equation}
    \RR(P,Q) = 
    \begin{tikzpicture}[baseline=(current bounding box.center)]
        \matrix (m) [matrix of math nodes,left delimiter={(},right delimiter={)}] {
        p_n & p_{n-1} & \ldots & p_0 & 0 & 0 & 0 \\
        0 & p_n & p_{n-1} & \ldots & p_0 & 0 & 0 \\
        & \vdots & & & & &\\
        0& 0 & 0 & p_n & p_{n-1} & \ldots & p_0 \\
        q_m & \ldots & q_0 & 0 & 0 & 0 & 0 \\
        0 & q_m & \ldots & q_0 & 0 & 0 & 0 \\
        & \vdots & & & & &\\
        0& 0 & 0 & 0 & q_m & \ldots & q_0 \\
    };
    \draw[<->] ([xshift=-1.5em, yshift=-1em]m-8-1.south) -- node[below] {$m$} +(6.7,0);

    \draw[<->] ([xshift=1.5em, yshift=1em]m-1-7.east) -- node[right] {$n$} +(0,-5.7);
\end{tikzpicture}
\end{equation}
Further denote $\alpha_1,\alpha_2,\ldots,\alpha_{n} \in \C$ be the roots of $P$, counted with multiplicities. Then, following \cite[Section IV, Proposition 8.3]{lang2002Algebra}, we have
\begin{equation}\label{eq:polynomial resultant is equal to the product of one of the polynomials on the roots of the other}
    Res(P,Q) = p_0^m \prod_{i=1}^n Q(\alpha_i).
\end{equation}
The next lemma bounds the resultant of two polynomials with integer coefficients from below.
\begin{lemma}\label{lem:the polynomial result of integer polynomials is bigger than one}
    Let $P,Q \in \Z[X]$, with $P$ irreducible in $\Z[X]$. Suppose that one of the roots of $P$ is not a root of $Q$. Then,
    \begin{equation}
        |Res(P,Q)| \geq 1.
    \end{equation}
\end{lemma}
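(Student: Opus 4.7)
The key observation, which the paper already sets up, is that $Res(P,Q)$ is the determinant of an integer matrix (the Sylvester matrix $\RR(P,Q)$), hence $Res(P,Q) \in \Z$. So the statement $|Res(P,Q)| \geq 1$ is equivalent to $Res(P,Q) \neq 0$, and this is what I would aim to prove. Write $n := \deg P$ and $m := \deg Q$, and let $\alpha_1, \ldots, \alpha_n \in \C$ denote the roots of $P$ counted with multiplicity. I will implicitly assume $n \geq 1$ (otherwise $P$ has no roots, the hypothesis is vacuous, and $|Res(P,Q)| = |p_0|^m \geq 1$ directly since $P$ irreducible in $\Z[X]$ and constant forces $P = \pm p$ for some prime $p$).

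The first step is to upgrade irreducibility from $\Z[X]$ to $\Q[X]$. By Gauss's lemma, a positive-degree polynomial in $\Z[X]$ that is irreducible in $\Z[X]$ is primitive and remains irreducible in $\Q[X]$. Hence $P$, viewed in $\Q[X]$, satisfies the hypothesis of Lemma \ref{lem:from polynomials in q, one of the roots of the polynomial not being a root of another polynomial implies that all the roots are not the rroots of the other}. Combined with the hypothesis that at least one root of $P$ is not a root of $Q$, Lemma \ref{lem:from polynomials in q, one of the roots of the polynomial not being a root of another polynomial implies that all the roots are not the rroots of the other} yields
\begin{equation}
    Q(\alpha_i) \neq 0, \quad \forall i \in \{1, \ldots, n\}.
\end{equation}

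The second step is to plug this into formula (\ref{eq:polynomial resultant is equal to the product of one of the polynomials on the roots of the other}) quoted from \cite[Section IV, Proposition 8.3]{lang2002Algebra}, namely
\begin{equation}
    Res(P,Q) = p_0^m \prod_{i=1}^n Q(\alpha_i),
\end{equation}
where $p_0$ is the leading coefficient of $P$, which is nonzero because $\deg P = n$. Since every factor on the right-hand side is nonzero in $\C$, we obtain $Res(P,Q) \neq 0$.

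The final step is the concluding observation already made: $Res(P,Q) = \det \RR(P,Q)$ is a determinant of an integer matrix, so $Res(P,Q) \in \Z$; combined with $Res(P,Q) \neq 0$, this forces $|Res(P,Q)| \geq 1$. There is no real obstacle in this proof—every step is a direct invocation of a standard fact or of the preceding lemma. The only thing requiring care is the application of Gauss's lemma to pass from $\Z[X]$-irreducibility to $\Q[X]$-irreducibility, which is needed so that Lemma \ref{lem:from polynomials in q, one of the roots of the polynomial not being a root of another polynomial implies that all the roots are not the rroots of the other} applies.
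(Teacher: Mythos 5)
Your proof is correct and follows essentially the same route as the paper's: reduce to showing $Res(P,Q) \neq 0$ via integrality of the Sylvester determinant, apply Lemma \ref{lem:from polynomials in q, one of the roots of the polynomial not being a root of another polynomial implies that all the roots are not the rroots of the other} to get $Q(\alpha_i) \neq 0$ for all roots, and conclude from the product formula (\ref{eq:polynomial resultant is equal to the product of one of the polynomials on the roots of the other}). You are in fact slightly more careful than the paper, which passes from irreducibility in $\Z[X]$ to irreducibility in $\Q[X]$ as if it were immediate, whereas you correctly flag that this step rests on Gauss's lemma.
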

\begin{proof}
    Let $P,Q \in \Z[X]$, with $P$ irreducible in $\Z[X]$. Then, in particular, $P,Q \in \Q[X]$ and $P$ is irreducible in $\Q[X]$. Suppose that one of the roots of $P$ is not a root of $Q$. Then, by Lemma \ref{lem:from polynomials in q, one of the roots of the polynomial not being a root of another polynomial implies that all the roots are not the rroots of the other}, none of the roots of $P$ are roots of $Q$. By (\ref{eq:polynomial resultant is equal to the product of one of the polynomials on the roots of the other}), it follows that
    \begin{equation}\label{eq:0:proof:lem:the polynomial result of integer polynomials is bigger than one}
        Res(P,Q) = p_0^n \prod_{i=1}^n Q(\alpha_i) \neq 0.
    \end{equation}
    Moreover, since $Res(P,Q)$ is defined to be the determinant of the matrix $\RR(P,Q)$, and that this matrix has only coefficients in $\Z$, then $Res(P,Q) \in \Z$. Combined with (\ref{eq:0:proof:lem:the polynomial result of integer polynomials is bigger than one}), we get that $Res(P,Q) \in \Z \backslash \{0\}$, and hence that $|Res(P,Q)| \geq 1$.
\end{proof}

We are now ready to prove Lemma \ref{lem:generalization of the garsia separation lemma}. For an algebraic number $\beta \in (1,2)$, recall that $L_\beta$ denotes the leading coefficient of its minimal polynomial $P_\beta$. Fix $n \in \N$. For all $x,y \in \{0,1\}^n$, we can express the quantity $\left|\sum_{i=1}^n x_i \beta^{-i} - \sum_{i=1}^n y_i \beta^{-i}\right|$ as $|\beta^{-n}A_{xy}(\beta)|$, where $A_{xy}$ is a poynomial of degree $n-1$, and of coefficients $a_0 := x_n - y_n$, $a_1 := x_{n-1} - y_{n-1}, \ldots, a_{n-1} := x_{1} - y_{1}$. There are two cases.
\begin{enumerate}[label=(\alph*)]
    \item $x \sim_\beta y$, i.e., $A_{xy}(\beta) = 0$.
    \item $x \not\sim_\beta y$, so $A_{xy}(\beta) \neq 0$, i.e., $\beta$ is not a root of $A_{xy}$. Lemma \ref{lem:the polynomial result of integer polynomials is bigger than one} yields
\begin{equation}
    |Res(P_\beta,A_{xy})| \geq 1,
\end{equation}
which, combined with (\ref{eq:polynomial resultant is equal to the product of one of the polynomials on the roots of the other}), delivers
\begin{equation}\label{eq:0:proof:lem:generalization of the garsia separation lemma}
    \left|L_\beta^{n-1} A_{xy}(\beta) \prod_{z \in G_\beta} A_{xy}(z)\right| \geq 1 \iff |A_{xy}(\beta)| \geq \frac{1}{L_\beta^{n-1}} \prod_{z \in G_\beta} \frac{1}{\left|A_{xy}(z)\right|} \cdot
\end{equation}
We now give an upper bound on $|A_{xy}(z)|$, for $z \in \C$. First note that
\begin{equation}
    |A_{xy}(z)| = \left|\sum_{i=0}^{n-1} (x_i - y_i) z^i \right| \leq \sum_{i=0}^{n-1} |x_i-y_i| |z|^i \leq \sum_{i=0}^{n-1} |z|^i, \ \forall z \in \C.
\end{equation}
Then, we can split the study in three subcases.
\begin{enumerate}[label=(\roman*)]
    \item Suppose that $|z| < 1$. Then,
    \begin{equation}
        |A_{xy}(z)| \leq \sum_{i=0}^{n-1} |z|^i = \frac{1 - |z|^n}{1 - |z|} \leq \frac{1}{1 - |z|}.
    \end{equation}
    \item Suppose that $|z| = 1$. Then,
    \begin{equation}
        |A_{xy}(z)| \leq \sum_{i=0}^{n-1} |z|^i = n.
    \end{equation}
    \item Suppose that $|z| > 1$. Then,
    \begin{equation}
        |A_{xy}(z)| \leq \sum_{i=0}^{n-1} |z|^i = \frac{|z|^n-1}{|z|-1} \leq \frac{|z|^n}{|z|-1}.
    \end{equation}
\end{enumerate}
By incorporation of these results in (\ref{eq:0:proof:lem:generalization of the garsia separation lemma}), we get
\begin{equation}
    |A_{xy}(\beta)| \geq \frac{1}{L_\beta^{n-1}}  \frac{\prod_{z \in G_\beta} |1-|z||}{n^{k_\beta}\prod_{z \in G^+_\beta} |z|^n} = \frac{L_\beta\Pi_\beta}{n^{k_\beta} (L_\beta \Pi_\beta^+)^n}.
\end{equation}
Finally, as $\left|\sum_{i=1}^n x_i \beta^{-i} - \sum_{i=1}^n y_i \beta^{-i}\right| = |\beta^{-n}A_{xy}(\beta)|$, we get
\begin{equation}
    \left|\sum_{i=1}^n x_i \beta^{-i} - \sum_{i=1}^n y_i \beta^{-i}\right| \geq \frac{L_\beta\Pi_\beta}{n^{k_\beta} (L_\beta \beta \Pi_\beta^+)^n},
\end{equation}
thereby proving Lemma \ref{lem:generalization of the garsia separation lemma}.
\end{enumerate}


\bibliographystyle{IEEEtranS}

\end{document}